\theoremstyle{plain}
\newtheorem{thm}{Theorem}[section]
\newtheorem{lem}[thm]{Lemma}
\theoremstyle{definition}
\newtheorem{eg}[thm]{Example}
\newtheorem{defn}[thm]{Definition}
\mathchardef\mhyphen="2D
\newcommand{\ttime}{\tau}
\newcommand{\until}{\mathop{\mathrm{U}}}
\newcommand{\since}{\mathop{\mathrm{S}}}
\newcommand{\here}{{\downarrow}}
\newcommand{\diap}{\Zdiamond}
\newcommand{\boxp}{\Zsquare}
\newcommand{\diam}{\Zdiamondminus}
\newcommand{\boxm}{\Zsquareminus}
\newcommand{\globally}{\mathop{\text{\bf G}}}
\newcommand{\eventually}{\mathop{\text{\bf F}}}
\newcommand{\imp}{\mathrel{\supset}}
\newcommand{\conj}{\mathrel{\wedge}}
\newcommand{\disj}{\mathrel{\vee}}
\newcommand{\symt}{{\tt tt}}
\newcommand{\symf}{{\tt ff}}
\newcommand{\symu}{{\tt uu}}
\newcommand{\trans}[2]{(#2)^{#1}}
\newcommand{\pred}[1]{{\tt #1}}
\newcommand{\fv}{\pred{fv}}
\newcommand{\dom}{\pred{dom}}
\newcommand{\case}{\noindent \textbf{Case.}}
\newcommand{\subcase}{\noindent \textbf{Subcase.}}
\newcommand{\cut}[1]{}
\newcommand{\dual}[1]{\overline{#1}}
\newcommand{\attr}[1]{\text{\textit{#1}}}
\newcommand{\purpose}[1]{\text{\textit{#1}}}
\newcommand{\reduce}{{\sf reduce}}
\newcommand{\audit}[3]{\reduce(#1, #2) = #3}
\newcommand{\structure}{{\cal L}}
\newcommand{\domain}{\mathrm{D}}
\newcommand{\sat}{{\tt sat}}
\newcommand{\atoms}{{\tt atoms}}
\newcommand{\lift}[1]{\widehat{#1}}
\newcommand{\iterate}[1]{\mathrel{\xrightarrow{#1}}}
\newcommand{\rewrite}{\rightarrow}
\begin{document}

\title{A Logical Method for Policy Enforcement over Evolving Audit
  Logs\thanks{This work was partially supported by the U.S. Army
    Research Office contract "Perpetually Available and Secure
    Information Systems" (DAAD19-02-1-0389) to Carnegie Mellon CyLab,
    the NSF Science and Technology Center TRUST, the NSF CyberTrust
    grant “Privacy, Compliance and Information Risk in Complex
    Organizational Processes,” the AFOSR MURI “Collaborative Policies
    and Assured Information Sharing”, and HHS Grant no. HHS
    90TR0003/01. The views and conclusions contained in this document
    are those of the authors and should not be interpreted as
    representing the official policies, either expressed or implied, of
    any sponsoring institution, the U.S. government or any other
    entity.}  }

\author{Deepak Garg \and Limin Jia \and Anupam Datta}

\date{Technical Report CMU-CyLab-11-002 \\ ~\\ Revision of May 06, 2011}

\maketitle

\begin{abstract}
We present an iterative algorithm for enforcing policies represented
in a first-order logic, which can, in particular, express all
transmission-related clauses in the HIPAA Privacy Rule. The logic has
three features that raise challenges for enforcement --- uninterpreted
predicates (used to model subjective concepts in privacy policies),
real-time temporal properties, and quantification over infinite
domains (such as the set of messages containing personal
information). The algorithm operates over audit logs that are
inherently incomplete and evolve over time.  In each iteration, the
algorithm provably checks as much of the policy as possible over the
current log and outputs a residual policy that can only be checked
when the log is extended with additional information. We prove
correctness and termination properties of the algorithm. While these
results are developed in a general form, accounting for many different
sources of incompleteness in audit logs, we also prove that for the
special case of logs that maintain a complete record of all relevant
actions, the algorithm effectively enforces all safety and co-safety
properties. The algorithm can significantly help automate enforcement
of policies derived from the HIPAA Privacy Rule.
\end{abstract}

\maketitle

\section{Introduction}
\label{sec:intro}


Organizations, such as hospitals, banks, and universities, that
collect, use, and share personal information have to ensure that they
do so in a manner that respects the privacy of the information
subjects. In fact, designing effective processes to audit transmission
and access logs to ensure compliance with privacy regulations, such as
the Health Insurance Portability and Accountability Act
(HIPAA)~\cite{HIPAA}, has become one of the greatest challenges facing
organizations today (see, for example, a recent survey from Deloitte
and the Ponemon Institute~\cite{DP2007}).  State-of-the-art commercial
tools such as the FairWarning~\cite{fairwarning}  allow auditors to mine
access and transmission logs and flag potential violations of policy,
but do not help decide which flagged items are actual violations, even
though privacy legislation often lays down objective criteria to make
such decisions. We address this challenge by developing a novel,
logic-based method for computer-assisted enforcement of policies.
This method can be used to enforce a rich class of privacy and
security policies that include, in particular, real privacy
regulations like HIPAA.




\paragraph{Policy Specification} 
The first challenge for policy enforcement is formal specification of
real policies. This challenge was addressed in our prior work on
PrivacyLFP~\cite{DeYoung+10:privacylfp:wpes}, an expressive
first-order temporal logic, in which we represented formally all
transmission-related clauses of the HIPAA and GLBA Privacy
Laws. PrivacyLFP is more expressive than prior logics considered for
expressing policies, including propositional temporal
logics~\cite{Barth+06:paci:fap,Giblin+05:REALM} and first-order metric
temporal logic~\cite{Basin+2010:cav}.


Building on the prior work on specification of privacy laws in
PrivacyLFP, this paper presents an algorithm for enforcing policies
represented in the logic, through iterative analysis of audit logs,
which we assume are collected independently and provided to us. The
policy enforcement algorithm and the formulation and proof of its
properties are the main contribution of this paper.

Three concepts in privacy legislation (and PrivacyLFP) make mechanical
enforcement particularly difficult; we discuss these concepts briefly.
First, PrivacyLFP includes \emph{uninterpreted or subjective
  predicates} to model subjective parts of privacy laws. For example,
HIPAA allows transmission of protected health information about an
individual from a hospital to a law enforcement agency if the hospital
believes that the death of the individual was suspicious. Such beliefs
are represented using uninterpreted predicates because the truth value
of these predicates cannot, in general, be determined mechanically.



Second, PrivacyLFP allows first-order quantification over infinite
domains (e.g., the set of messages or the set of time points). For
example, many HIPAA clauses are of the form \linebreak[6]$\forall p_1,
p_2, m.(\pred{send}(p_1, p_2, m) \imp \phi)$ where $p_1$ and $p_2$ are
principals and $m$ is a message. Note that this formula quantifies
over the infinite set of messages, so if an enforcement algorithm were
to blindly instantiate the quantifiers with all possible values in the
domain, then it will not terminate.  However, only a finite number of
messages are relevant in determining the truth value of this
formula. This is because the number of messages transmitted from a
hospital is finite and hence the predicate $\pred{send}(p_1, p_2, m)$
is true for only a finite number of substitutions for the variable $m$
(and similarly for $p_1$ and $p_2$).  To ensure that the number of
relevant substitutions for every quantified variable is finite, we use
the idea of \emph{mode checking} from logic
programming~\cite{apt94:modes}, and restrict the syntax of quantifiers
in PrivacyLFP slightly. The finite substitution property for
quantified variables over infinite domains is defined in
Section~\ref{sec:enforcement}, and ensures termination of our policy
enforcement algorithm. The restriction on quantification does not
significantly limit representation of HIPAA clauses, a claim we
justify in Section~\ref{sec:hipaa-case}.

Third, the representation of one transmission-related clause --
Section 6802(c) -- of the GLBA Privacy Law forces PrivacyLFP to
include fixpoint operators. In this paper, we do not consider
fixpoints because the representation of most privacy legislation
including all of HIPAA does not require fixpoints. We note that
including the least fixpoint operator in our algorithm may not be
difficult, but supporting the greatest fixpoint may require a
substantial effort.

\paragraph{Audit logs}
Another significant challenge in mechanical enforcement of privacy
policies is that the logs maintained by organizations may be
incomplete, i.e., they may not contain enough information to decide
whether or not the policy has been violated. For instance, in the
absence of human input, a machine may not be able to decide whether
any instance of a predicate that refers to subjective beliefs is true
or not. Similarly, we may not be able to predict whether a predicate
holds in the future or not. As an important contribution, we observe
that such possibly incomplete logs can be abstractly represented as
three-valued, \emph{partial structures} that map each atomic formula
to either true, false, or
unknown~\cite{Bruns:2000:GMC,Godefroid:2005:MCV}. We define the
semantics of our logic over such structures. Further, by designing our
enforcement algorithm to work with partial structures in general, we
provide a uniform account of policy enforcement with different forms
of log incompleteness.

We explicitly discuss in Section~\ref{sec:instance:pc} a special case
of partial structures that are complete up to a point of time.  This
instance corresponds to the standard model of traces used in prior
work on enforcement of temporal privacy
properties~\cite{Basin+2010:cav}.  We show that on such structures,
our algorithm yields a method to find violations of safety
properties~\cite{alpern87:safety} and satisfactions of co-safety
properties~\cite{bauer10:rm} at the earliest possible time, as may be
expected.

A second important observation is that, in practice, structures evolve
over time by gathering more information. We formalize this growth as a
natural order, $\structure_1 \geq \structure_2$ (structure
$\structure_1$ \emph{extends} structure $\structure_2$), meaning that
$\structure_1$ has more information than $\structure_2$.  We present a
general definition of extension of partial structures, which
encompasses, in particular, notions of temporal (actions are added to
the end of a trace) and spatial (distributed logs are merged)
extensions.

\paragraph{Policy Enforcement}
As our central contribution, we propose an iterative process for
privacy policy enforcement.  At each iteration, our algorithm takes as
inputs a structure $\structure$ abstracting the then-current audit log
and a policy specification $\varphi$, verifies parts of the policy
that depend solely on the given structure, and outputs a residual
policy $\varphi'$ that contains all the conditions that need to be
verified when more information becomes available. We write
$\audit{\structure}{\varphi}{\varphi'}$ to denote one iteration of our
reduction algorithm.  The residual policy $\varphi'$ is checked on
extensions of $\structure$.

Our reduction algorithm has several desirable properties that we prove
formally. First, the algorithm always \emph{terminates}. As noted
earlier, the finite substitution property for variables quantified
over infinite domains is crucial for termination. Second, it is
\emph{correct}: given a structure $\structure$ and a policy $\varphi$,
any extension of $\structure$ satisfies the policy $\varphi$ if and
only if it satisfies the residual formula $\varphi'$. Third, it is
\emph{minimal}: the residual formula only contains atoms whose truth
value cannot be determined from the structure.

Our algorithm has been designed for after-the-fact (a-posteriori)
audit, not runtime verification. However, as shown in
Section~\ref{sec:instance:pc}, for the specific case of policies that
do not contain any subjective predicates or future obligations, the
algorithm may be executed at each privacy-relevant event to act as a
runtime monitor, if all relevant past system logs can be provided to
it.

\paragraph{Application to HIPAA}
Our technical results have important implications for enforcing
practical privacy policies, in particular, the HIPAA Privacy Rule.  As
discussed in Section~\ref{sec:hipaa-case}, not only can our algorithm
be used to automatically instantiate all quantifiers in all 84
transmission-related clauses of HIPAA, but it can also automatically
discharge the large percentage of non-subjective atoms in instantiated
clauses. For example, we estimate that in 17 of the 84 clauses, all
atoms can be discharged automatically, and in 24 other clauses, at
least 80\% of the atoms can be discharged automatically.


\paragraph{Summary of Contributions}
In summary, the contributions of this paper are:
\begin{itemize}
\item An iterative algorithm for enforcing policies represented in
  PrivacyLFP, a rich logic with quantification over infinite domains,
  and formulation and proofs of the algorithm's properties
  (Section~\ref{sec:enforcement})
\item Use of mode analysis from logic programming to ensure that
  infinite quantifiers result only in a finite number of relevant
  substitutions (Section~\ref{sec:enforcement})
\item A formal model of incomplete audit logs as three-valued
  structures (Section~\ref{sec:structures})
\end{itemize}

\paragraph{Organization}
In Section~\ref{sec:logic}, we review PrivacyLFP to the extent needed
for this paper. Section~\ref{sec:structures} presents partial
structures and defines the semantics of PrivacyLFP over
them. Section~\ref{sec:enforcement} presents our policy enforcement
algorithm and its properties. Section~\ref{sec:instances} discusses the
behavior of our algorithm on structures that are complete and those
that are complete up to a point of time. In the latter case, we also
present associated results about enforcement of safety and co-safety
properties. Section~\ref{sec:hipaa-case} describes how the work in
this paper applies to the HIPAA Privacy Rule.
Section~\ref{sec:related} provides a detailed comparison with related
work and Section~\ref{sec:conclusion} presents conclusions and
directions for future work.  


\section{Policy Logic}
\label{sec:logic}



We use PrivacyLFP~\cite{DeYoung+10:privacylfp:wpes} to represent
policies, but restrict the syntax of first-order quantifiers slightly
to facilitate enforcement and drop fixpoint operators. PrivacyLFP
consists of an outer policy logic with connectives of temporal logic
and an inner, equally expressive sublogic without connectives of
temporal logic to which the outer syntax is translated. Our
enforcement algorithm works only with the inner sublogic. In this
section we review both the outer syntax and the sublogic, as well as
the translation.



\subsection{Syntax of the Policy Logic}
\label{sec:syntax}

\begin{figure}
\[\begin{array}{lllll}
\mbox{Objective predicates} & p_O \\
\mbox{Subjective predicates} & p_S\\
\mbox{Objective atoms} & P_O & ::= & p_O(t_1,\ldots,t_n) \\
\mbox{Subjective atoms} & P_S & ::= & p_S(t_1,\ldots,t_n)\\

\mbox{Formulas} & \alpha, \beta & ::= & P_O ~|~ P_S ~|~ \top ~|~
\bot~| \\ & & & \alpha_1 \conj \alpha_2 ~|~ \alpha_1 \disj
\alpha_2 ~|~ \neg \alpha ~| \\ & & & \forall \vec{x}. (c
\imp \alpha) ~|~ \exists \vec{x}. (c \conj \alpha) ~| \\ & &
& \here x.\alpha ~|~ \alpha \since \beta ~|~ \alpha
\until\beta ~| \\ & & & \boxm \alpha~|~ \boxp \alpha\\
\mbox{Restrictions} & c & ::= & P_O ~|~ \top ~|~ \bot ~|~
c_1 \conj c_2 ~|\\
& & &  c_1 \disj c_2 ~|~ \exists x. c
\end{array}\]
\caption{Timed First-order Temporal Logic with Restricted Quantifiers}
\label{fig:syntax-temporal}
\end{figure}

The syntax of our policy logic is shown in
Figure~\ref{fig:syntax-temporal}. We distinguish two classes of
predicate symbols: 1)~\emph{objective predicates}, denoted $p_O$, that
can be decided automatically using information from logs or using
constraint solvers and 2)~\emph{subjective predicates}, denoted $p_S$,
that require human input to resolve. Both classes of predicates are
illustrated in examples later. An atom is a predicate applied to a
list of terms (terms are denoted $t$). Based on the class of its
predicate, an atom is also classified as either objective or
subjective, written $P_O$ and $P_S$, respectively.


Propositional connectives $\top$ (true), $\bot$ (false), $\conj$
(conjunction), $\disj$ (disjunction), and $\neg$ (negation) have their
usual meanings. Anticipating the requirements of the enforcement
algorithm of Section~\ref{sec:enforcement}, first-order quantifiers
$\forall \vec{x}. (c \imp \alpha)$ and $\exists \vec{x}. (c \conj
\alpha)$ in the logic are forced to include a formula $c$ called a
\emph{restriction}. By definition, $\forall x. (c \imp \alpha)$ is
true iff all instances of $\vec{x}$ that satisfy $c$, also satisfy
$\alpha$. ($\exists \vec{x}. (c \conj \alpha)$ has a similar
definition.) To make enforcement tractable, we require that the set of
instances of $\vec{x}$ satisfying $c$ be computable. This is ensured
by limiting $c$ to a reduced class of formulas that, in particular,
excludes subjective predicates (see the syntax of $c$ in
Figure~\ref{fig:syntax-temporal}), and through a static analysis that
we describe in Section~\ref{sec:enforcement}.


Further, our logic includes standard connectives of linear temporal
logic (LTL)~\cite{manna95:temporal} that provide quantification over
the sequence of states in a system, relative to a current state:
$\alpha \since \beta$ ($\beta$ holds at some state in the past and
$\alpha$ holds since then), $\alpha \until \beta$ ($\beta$ holds at
some state in the future and $\alpha$ holds until then), $\boxm
\alpha$ ($\alpha$ holds at all states in the past) and $\boxp\alpha$
($\alpha$ holds at all states in the future). Other temporal operators
can be defined, e.g., $\diam \alpha = \top \since \alpha$ ($\alpha$
holds at some state in the past) and $\diap \alpha = \top \until
\alpha$ ($\alpha$ holds at some state in the future).


Finally, to represent clock time, which often occurs in privacy
policies, we assume that each state of a system has a time point
associated with it. Time points, denoted $\ttime$, are elements of
$\mathrm{T} = \{x \in \mathrm{R} ~|~ x \geq 0 \} \cup
\{\infty\}$. They measure clock time elapsed from a fixed reference
point and order states linearly.  Relations between time points are
captured in logical formulas using the \emph{freeze} quantifier $\here
x. \alpha$ of timed propositional temporal logic
(TPTL)~\cite{AlurHenzinger:artl}, which means ``$\alpha$ holds with
the current time bound to $x$.'' (Examples below illustrate the
quantifier.) Since we have no occasion to reason explicitly about
states, we identify a state with the time point associated with it,
and use the letter $\ttime$ and any of the terms ``state'', ``time
point'', ``time'', and ``point'' to refer to both states and time
points. We make the assumption that on any trace there are only
finitely many time points between two given finite time points.



We illustrate the syntax of our logic through two examples that are
based on the formalization of HIPAA in PrivacyLFP. These examples are
also used later in the paper.

\begin{eg}\label{eg:past}
As a first example, we represent in our logic the following policy
about disclosure (transmission) of health information from one entity
(e.g., a hospital or doctor) to another.
\begin{quote}
An entity may send an individual's protected health information (phi)
to another entity only if the receiving entity is the patient's doctor
and the purpose of the transmission is treatment, or the individual
has previously consented to the transmission.
\end{quote}


Our formalization assumes that each transmitted message $m$
is \emph{tagged} by the sender (in a machine-readable format) with the
names of individuals whose information it carries as well the
attributes of information it carries (attributes include ``address'',
``social security number'', ``medications'', ``medical history'',
etc.). The predicate $\pred{tagged}(m,q,t)$ means that message $m$ is
tagged as carrying individual $q$'s attribute $t$. Tagging may or may
not reflect accurately the content of the message. Similarly, we
assume that each message $m$ is labeled in a machine readable format
with a purpose $u$ (e.g., ``treatment'', ``healthcare'', etc.). This
is represented by the predicate $\pred{purp}(m,u)$. Because we assume
that name and attribute tags as well as purpose labels are machine
readable, both $\pred{tagged}$ and $\pred{purp}$ are \emph{objective
predicates} -- their truth or falsity can be checked using a program.

Attributes are assumed to have a hierarchy, e.g., the attribute
``medications'' is contained in ``medical history''. This is
formalized as the predicate $\pred{attr\_in}(\mbox{medications},
\mbox{medical-history})$. We assume that the hierarchy can be
mechanically checked, so $\pred{attr\_in}$ is an objective
predicate. The predicate $\pred{purp\_in}(u,u')$ means that purpose
$u$ is a special case of purpose $u'$, e.g.,
$\pred{purp\_in}(\mbox{surgery}, \mbox{treatment})$. In contrast to
attributes, we assume that the purpose hierarchy cannot be computed,
so $\pred{purp\_in}$ is a subjective predicate. In an enforcement
system, it must be checked through human input.

Finally, each action or fact that can be recorded in a system log
(such as sending a message or that Alice is in role doctor) is
represented as an objective predicate. For this example we need three
objective predicates: $\pred{send}(p_1,p_2,m)$ meaning that entity
$p_1$ sends message $m$ to entity $p_2$, $\pred{consents}(q, a)$ which
means that individual $q$ consents to the action $a$, and
$\pred{inrole}(p,r)$ which means that principal $p$ is in role
$r$. Here, the only action consented to is
$\pred{sendaction}(p_1,p_2,(q,t))$, which corresponds to $p_1$ sending
to $p_2$ a message containing information about $q$'s attribute $t$.

The above policy can be formalized in our logic as follows.
\begin{tabbing}
\=$\alpha_{pol1}$ = \\
\>~$\forall p_1,$\=$ p_2, m,u,q,t.~($\=$\pred{send}(p_1, p_2,
m) \conj \pred{purp}(m,u) \conj$\\
\>\>\>$\pred{tagged}(m,q,t) \conj \pred{attr\_in}(t, \attr{phi}))$\\
\>\> $\imp$ \= $(\pred{inrole}(p_2, \pred{doc}(q)) \conj \pred{purp\_in}(u,\purpose{treatment}))$\\
\>\>\> $\disj\diam\pred{consents}(q, \pred{sendaction}(p_1, p_2, (q,t)))$
\end{tabbing}

In words, if entity $p_1$ sends to entity $p_2$ a message $m$, $m$ is
tagged as carrying attribute $t$ of individual $q$, where $t$ is a
form of $phi$ (protected health information), and $m$ is labeled with
purpose $u$, then either $p_2$ (the recipient) is a doctor of $q$
(atom $\pred{inrole}(p_2, \pred{doc}(q))$) and $u$ is a type of
treatment, or $q$ has consented to this transmission in the past (last
line of $\alpha_{pol1}$). The temporal operator $\diam$ is used to
indicate that the consent may have been given by $q$ in some earlier
state. Also, the universal quantifier in the formula above carries a
restriction $(\pred{send}(p_1, p_2, m) \conj \pred{purp}(m,u) \conj
\pred{tagged}(m,q,t) \conj \pred{attr\_in}(t, \attr{phi}))$, as
required by our syntax. The technical reason for including
restrictions is explained in Section~\ref{sec:enforcement}.
\end{eg}



\begin{eg}\label{eg:future}
Our next example is a policy governing entity response to an
individual's request for her own information.
\begin{quote}
If an individual requests her information from an entity, then some
administrator in the records department of the entity must respond to
the individual at the earliest feasible time, but not later than 30
days after the request.
\end{quote}

To represent this policy we need one more objective predicate,
$\pred{req}(p,t)$, which means that individual $p$ requests
information about attribute $t$ from her record. Further, we need two
new subjective predicates: $\pred{contains}(m,q,t)$ (message $m$
contains attribute $t$ of individual $q$) and $\pred{ftr}(p,t)$ (it is
feasible to respond to individual $p$ with attribute $t$ at the
current time). The latter clearly requires human input to resolve,
because ``feasibility'' cannot be defined mechanically, while the
former requires human input because we assume that message payloads
may contain natural language text.

The logical specification of this policy is shown below:
\begin{tabbing}
\=$\alpha_{pol2}$ = \\
\>~$\here \ttime. \forall p,$\=$t.~\pred{req}(p, t)$\\
\>\>$\imp$\=$ ~\neg\pred{ftr}(p,  t)$\\
\>\> \>$\until$ $\here\ttime'.$\=$ ~\pred{in}(\ttime',\ttime,\ttime+30)$ \\
\>\>\>\> $\conj\exists q,m.~ ($\= $\pred{inrole}(q,\mbox{records}) \conj \pred{send}(q, p,m) \conj $\\
\>\>\>\>\>$\pred{contains}(m, p,t))$
\end{tabbing}

The top-most quantifier $\here\ttime$ binds $\ttime$ to the time at
which a request occurs and, similarly, $\here \ttime'$ binds $\ttime'$
to the time at which a response is
sent. $\pred{in}(\ttime',\ttime,\ttime+30)$, formally explained in
Section~\ref{sec:translation}, implies that $\ttime' \leq \ttime+30$,
thus enforcing the constraint that the response be sent within 30 days
of the request, as required by the policy. The until operator $\until$
is used to include the obligation that it be infeasible to respond
until the response is actually sent.
\end{eg}

\subsection{Translation to a Smaller Syntax}
\label{sec:translation}

Policies expressed in PrivacyLFP's outer syntax can be translated into
a smaller sublogic without temporal connectives and negation. This
smaller syntax of formulas $\varphi,\psi$ of the sublogic is shown
below. Other syntactic categories such as restrictions $c$ are not
changed.
\[\begin{array}{@{}lllll}
\mbox{Formulas} & \varphi & ::= & P_O ~|~ P_S ~|~ \top ~|~ \bot ~|~
\varphi_1 \conj \varphi_2 ~|~ \varphi_1 \disj \varphi_2 ~| \\ & & &
\forall \vec{x}. (c \imp \varphi) ~|~ \exists \vec{x}. (c \conj
\varphi)
\end{array}\]

We surmount the absence of negation in the sublogic by \emph{defining}
for each formula $\varphi$ a dual $\dual{\varphi}$ that behaves
exactly as $\neg \varphi$ would. For defining duals of atoms, we
assume that each predicate $p$ has a dual $\dual{p}$ such that
$p(t_1,\ldots,t_n)$ is true iff $\dual{p}(t_1,\ldots,t_n)$ is false
(the relation between $p$ and $\dual{p}$ is formalized in
Section~\ref{sec:structures}). We define $\dual{\varphi}$ by induction
on $\varphi$, as in the representative clauses below (for the
remaining clauses see Appendix~\ref{app:logic}).
\[\begin{array}{ccc}
\dual{p_O(t_1,\ldots,t_n)} & = & \dual{p_O}(t_1,\ldots,t_n) \\
\dual{\varphi \conj \psi} & = & \dual{\varphi} \disj \dual{\psi} \\
\dual{\forall \vec{x}. (c \imp \varphi)} & = & \exists \vec{x}. (c \conj \dual{\varphi})\\
\dual{\exists \vec{x}. (c \conj \varphi)} & = & \forall \vec{x}. (c \imp \dual{\varphi})
\end{array}\]

Temporal connectives are translated to the sublogic by making time
points (states) and the ordering relation between them explicit in
first-order formulas in a standard way
(see~\cite{DeYoung+10:privacylfp:wpes}). Briefly, we assume that for
every predicate symbol in the logic there is a predicate of the same
name in the sublogic, but with one extra argument of type time:
$p(t_1,\ldots,t_n,\ttime)$ in the sublogic means that
$p(t_1,\ldots,t_n)$ holds at time $\ttime$ in the logic. Further,
assume that the new objective predicate $\pred{in}(\ttime, \ttime_1,
\ttime_2)$ means that $\ttime$ is an observed time point (in the trace
of interpretation) satisfying $\ttime_1 \leq \ttime \leq \ttime_2$.
Finally, let $\Xi [\vec{t}/\vec{x}]$ denote the result of substituting
the terms $\vec{t}$ for variables $\vec{x}$ in the syntactic entity
$\Xi$.  Then, representative clauses of the translation
$\trans{\ttime}{\bullet}$ of restrictions and formulas of the logic to
those of the sublogic, indexed by a ``current time'' $\ttime$, are
shown below (the full translation is shown in
Appendix~\ref{app:logic}):
\[\begin{array}{ccl}

\trans{\ttime}{p_O(t_1,\ldots,t_n)} & = &
p_O(t_1,\ldots,t_n,\ttime) \\

\trans{\ttime}{p_S(t_1,\ldots,t_n)} & = &
p_S(t_1,\ldots,t_n,\ttime) \\

\trans{\ttime}{\neg \alpha} & = & \dual{\trans{\ttime}{\alpha}}\\

\trans{\ttime}{\forall \vec{x}. (c \imp \alpha)} & =
& \forall \vec{x}. (\trans{\ttime}{c} \imp
\trans{\ttime}{\alpha})\\

\trans{\ttime}{\here x. \alpha} & = &
\trans{\ttime}{\alpha[\ttime/x]}\\

\trans{\ttime}{\alpha \since \beta} & = & \exists
\ttime'. (\pred{in}(\ttime', 0, \ttime) \conj
\trans{\ttime'}{\beta} 
\\ & & ~ \conj (\forall \ttime''. (({\tt
  in}(\ttime'',\ttime',\ttime) \conj \ttime' \not= \ttime'') 
\\& & \qquad\quad \imp
\trans{\ttime''}{\alpha})))\\

\trans{\ttime}{\alpha \until \beta} & = & \exists
\ttime'. (\pred{in}(\ttime', \ttime, \infty) \conj
\trans{\ttime'}{\beta} 
\\ & & ~ \conj (\forall \ttime''. (({\tt
  in}(\ttime'',\ttime,\ttime') \conj \ttime'' \not= \ttime') 
\\& & \qquad\quad\imp
  \trans{\ttime''}{\alpha})))\\


\end{array}\]

We briefly explain some of the clauses of the translation.  In
$\trans{\ttime}{\here x.\alpha}$, $x$ binds to the current time, which
is $\ttime$; therefore, $\ttime$ substitutes $x$ in $\alpha$ in the
translation. $\alpha\since\beta$ means that $\beta$ is true at some
time point in the past, which is captured by the existentially
quantified variable $\ttime'$ in the translation, and the restriction
that $\pred{in}(\ttime', 0, \ttime)$. Further, $\alpha$ should be true
at all time points between $\ttime'$ and now ($\ttime$); this is
encoded as $\forall \ttime''. ((\pred{in}(\ttime'',\ttime',\ttime)
\conj \ttime'' \not= \ttime')\imp \trans{\ttime''}{\alpha})$.


\begin{eg}\label{eg:policies:translated}
In Section~\ref{sec:syntax} we presented two sample policies,
$\alpha_{pol1}$ and $\alpha_{pol2}$. In general, we may wish to
enforce each of these policies in each state. To express the phrase
``in each state'', we define an abbreviation: $\globally \alpha =
\forall \ttime. (\pred{in}(\ttime,0,\infty) \imp
\trans{\ttime}{\alpha})$, which means that $\alpha$ holds at each time
point $\ttime$. Then, using the translation above and simplifying
slightly, we get:
\begin{tabbing}
\=$\globally \alpha_{pol1}$ = \\
\>~$\forall \ttime, $\=$ p_1, p_2, m, u, q, t.$ \\
\>\>$(\pred{in}(\ttime,0,\infty) \conj \pred{send}(p_1, p_2, m,\ttime) \conj \pred{purp}(m,u,\ttime) \conj$\\
\>\>$~\pred{tagged}(m,q,t,\ttime) \conj \pred{attr\_in}(t, \attr{phi},\ttime))$\\
\>\>$~\imp($\=$($\=$ \pred{inrole}(p_2, \pred{doc}(q),\ttime) \conj$\\
\>\>\>\> $\pred{purp\_in}(u,\purpose{treatment},\ttime)) \disj$\\
\>\>\>$(\exists \ttime'.~ ($\=$ \pred{in}(\ttime',0,\ttime) \conj$\\
\>\>\>\>$\pred{consents}(q, \pred{sendaction}(p_1, p_2, (q,t)),\ttime'))))$
\end{tabbing}

\begin{tabbing}
\=$\globally \alpha_{pol2}$ = \\
\>~$\forall \ttime$\=$,p,t. ~(\pred{in}(\ttime,0,\infty) \conj \pred{req}(p, t, \ttime))$\\
\>\> $\imp \exists 
\ttime'$\=$,q,m.$\\
\>\>\> $(($\=$\pred{in}(\ttime',\ttime,\ttime+30) \conj \pred{inrole}(q,
\mbox{records},\ttime') \conj$\\
\>\>\>\> $\pred{send}(q, p,  m,\ttime')) \conj \pred{contains}(m,p,t,\ttime') \conj$\\
\>\>\> $\forall \ttime'' $\=$.~(\pred{in}(\ttime'', \ttime, \ttime') \conj
\ttime'' \not= \ttime') $\\
\>\>\>\>$\imp \dual{\pred{ftr}}(p, t, \ttime''))$
\end{tabbing}



Note that all atoms, except those like $\pred{in}(\ldots)$ and
$\ttime'' \not= \ttime'$ that are introduced by the translation
itself, have a new last argument, which is a time point. For certain
predicates like $\pred{tagged}$, $\pred{attr\_in}$ and
$\pred{purp\_in}$, whose truth is independent of time, this last
argument is redundant. For instance, if $\pred{attr\_in}(t,t',\ttime)$
for some $\ttime$, then $\pred{attr\_in}(t,t',\ttime')$ for
all~$\ttime'$.
\end{eg}

\section{Partial Structures and Semantics}
\label{sec:structures}

Next, we define \emph{partial structures}, an abstraction of audit
logs over which our enforcement algorithm
(Section~\ref{sec:enforcement}) works. We call our structures partial
because they do not necessarily stipulate the truth or falsity of
every atom, thus accurately reflecting the fact that audit logs may be
incomplete in practice. We also illustrate, by virtue of example,
various kinds of audit log incompleteness that our partial structures
generalize. Finally, we define the \emph{semantics} (meanings) of
formulas of the sublogic on partial structures. This definition is
used in Section~\ref{sec:enforcement} to state the correctness of our
enforcement mechanism. Partial structures have been used, both
explicitly and implicitly, in prior work on policy enforcement; we
compare to such work in Section~\ref{sec:related}.

Fix a domain of individuals $\domain$. A \emph{partial structure}
(abbrev.\ structure) $\structure$ over $\domain$ consists of a total
function $\rho_\structure$ from ground (variable-free) atoms of the
logic to the \emph{three-value} set $\{\symt,\symf,\symu\}$. We say
that the atom $P$ is true, false, or unknown in the structure
$\structure$ if $\rho_\structure(P)$ is $\symt$, $\symf$, or $\symu$,
respectively. In practice, the structure $\structure$ may be defined
using system logs (hence the notation $\structure$), whence
for every subjective atom $P_S$, $\rho_\structure(P_S)$ would be
$\symu$.

The semantics of our sublogic lift the definition of \emph{truth} to
formulas $\varphi$ by induction on $\varphi$: we write $\structure
\models \varphi$ to mean that ``$\varphi$ is true in the structure
$\structure$''. Restrictions $c$ are a subsyntax of formulas
$\varphi$, so we do not define the relation separately for them.
\begin{itemize}
\item[-] $\structure \models P$ iff $\rho_\structure(P) = \symt$
\item[-] $\structure \models \top$
\item[-] $\structure \models \varphi \conj \psi$ iff $\structure
  \models \varphi$ and $\structure \models \psi$
\item[-] $\structure \models \varphi \disj \psi$ iff $\structure
  \models \varphi$ or $\structure \models \psi$
\item[-] $\structure \models \forall \vec{x}. (c \imp \varphi)$ iff
  for all $\vec{t} \in \domain$ either $\structure \models
  \dual{c}[\vec{t}/\vec{x}]$ or $\structure \models
  \varphi[\vec{t}/\vec{x}]$
\item[-] $\structure \models \exists \vec{x}. (c \conj \varphi)$ iff
  there exists $\vec{t} \in \domain$ such that $\structure
  \models c[\vec{t}/\vec{x}]$ and $\structure \models
  \varphi[\vec{t}/\vec{x}]$
\end{itemize}

For dual atoms, we define $\rho_\structure(\dual{P}) =
\dual{\rho_\structure(P)}$, where $\dual{\symt} = \symf$,
$\dual{\symf} = \symt$, and $\dual{\symu} = \symu$. We say that a
formula $\varphi$ is \emph{false} on the structure $\structure$ if
$\structure \models \dual{\varphi}$. The following two properties
hold:
\begin{enumerate}
\item Consistency: A formula $\varphi$ cannot be simultaneously true
  and false in the structure $\structure$, i.e., either $\structure
  \not \models \varphi$ or $\structure \not \models \dual{\varphi}$
\item Incompleteness: A formula $\varphi$ may be neither true nor
  false in a structure $\structure$, i.e., $\structure \not \models
  \varphi$ and $\structure \not \models \dual{\varphi}$ may both hold.
\end{enumerate}
The first property follows by induction on $\varphi$. The second
property follows from a simple example. Consider a structure
$\structure$ and an atom $P$ such that $\rho_{\structure}(P) =
\symu$. Then, $\structure \not \models P$ and $\structure \not \models
\dual{P}$.

\paragraph{Incompleteness in Practice}
We list below several ways in which system logs may be incomplete, and
describe how each can be modeled in partial structures by varying the
definition of~$\rho_\structure$.
\begin{itemize}
\item Subjective incompleteness: An audit log may not contain
  information about subjective predicates. This may be modeled by
  requiring that $\rho_\structure(P_S) = \symu$ for every subjective
  atom $P_S$. We revisit subjective incompleteness in the context of
  our enforcement algorithm in Section~\ref{sec:instance:oc}.
\item Future incompleteness: An audit log may not contain information
  about the future, which is necessary to enforce policies like that
  in Example~\ref{eg:future}. This may be modeled by assuming that for
  each time $\ttime$ greater than the last point observed in
  $\structure$, and for all $p$, $t_1,\ldots,t_n$,
  $\rho_\structure(p(t_1,\ldots,t_n,\ttime)) = \symu$. (Recall that in
  our translation of the outer logic, the last argument $\ttime$ is
  the time at which the predicate's truth is tested.) We revisit
  future incompleteness in the context of our enforcement algorithm in
  Section~\ref{sec:instance:pc}.
\item Spatial incompleteness: An audit log may not record all
  predicates. For instance, with reference to Example~\ref{eg:past},
  it is conceivable that the predicates $\pred{send}$ and
  $\pred{inrole}$ are stored on separate sites. If we audit at the
  first site, information about $\pred{inrole}$ may be unavailable.
  Such incompleteness is easily modeled like subjective
  incompleteness. For instance, we may assume that
  $\rho_\structure(\pred{inrole}(p,r,\ttime)) = \symu$ for all
  $p,r,\ttime$.
\item Past incompleteness: An audit log may not record the
  \emph{existence} of certain relevant states, even those in the
  past. This has implications for enforcing temporal operators, e.g.,
  we may be unable to check that $\boxm \alpha$ simply because we
  cannot determine what states existed in the past. This form of
  incompleteness can be formally modeled by assuming that if a time
  point $\ttime$ does not occur in an audit log $\structure$, then
  $\rho_\structure(\pred{in}(\ttime,\ttime',\ttime'')) = \symu$. In
  the special case where it is \emph{certain} that the time point
  $\ttime$ \emph{does not exist}, we would have
  $\rho_\structure(\pred{in}(\ttime,\ttime',\ttime'')) = \symf$.
\end{itemize}

Our enforcement algorithm (Section~\ref{sec:enforcement}) works with
partial structures in general and, hence, takes into account all these
forms of incompleteness. We comment on some specific instances in
Section~\ref{sec:instances}.

\paragraph{Structure Extension}
In practice, system logs evolve over time by gathering more
information. This leads to a natural order, $\structure_1 \geq
\structure_2$ on structures ($\structure_1$ \emph{extends}
$\structure_2$), meaning that $\structure_1$ has more information than
$\structure_2$. Formally, $\structure_1 \geq \structure_2$ for all
ground atoms $P$, $\rho_{\structure_2}(P) \in \{\symt, \symf\}$
implies $\rho_{\structure_1}(P) = \rho_{\structure_2}(P)$. Thus, as
structures extend, the valuation of an atom may change from $\symu$ to
either $\symt$ or $\symf$, but cannot change once it is either $\symt$
or $\symf$. The following property follows by induction on $\varphi$:
\begin{itemize}
\item Monotonicity: $\structure_1 \geq \structure_2$ and $\structure_2
  \models \varphi$ imply $\structure_1 \models \varphi$.
\end{itemize}
Replacing $\varphi$ with $\dual{\varphi}$, we also obtain that
$\structure_1 \geq \structure_2$ and $\structure_2 \models
\dual{\varphi}$ imply $\structure_1 \models \dual{\varphi}$. Hence, if
$\structure_1 \geq \structure_2$ then $\structure_1$ preserves both
the $\structure_2$-truth and $\structure_2$-falsity of every formula
$\varphi$.

In the next section, we use this order between structures to both
explain and prove formal properties of our enforcement algorithm.


\section{Policy Enforcement}
\label{sec:enforcement}

Our main technical contribution is an iterative process for enforcing
policies written in the sublogic. Through the translation of
Section~\ref{sec:translation}, the same process applies to policies
written in the entire policy logic. At each iteration, our algorithm
takes as input a policy $\varphi$ and the available audit log
abstracted as a partial structure $\structure$, and outputs a residual
policy $\psi$ that contains exactly the parts of $\varphi$ that could
not be verified due to lack of information in $\structure$. Such an
iteration is written $\audit{\structure}{\varphi}{\psi}$. In practice,
$\psi$ may contain subjective predicates and future obligations. Once
more information becomes available, extending $\structure$ to
$\structure'$ ($\structure' \geq \structure$), another iteration of
the algorithm can be used with inputs $\psi$ and $\structure'$ to
obtain a new formula $\psi'$. This process can be continued till a
formula trivially equivalent to $\top$ or $\bot$ is obtained, or the
truth or falsity of the remaining formula is decided by human
intervention.  By design, our algorithm satisfies three important
properties:
\begin{itemize}
\item Termination: Each iteration terminates.
\item Correctness: If $\audit{\structure}{\varphi}{\psi}$, then for
  all extensions $\structure'$ of $\structure$, $\structure' \models
  \varphi$ iff $\structure' \models \psi$.
\item Minimality: If $\audit{\structure}{\varphi}{\psi}$, then an
  atom occurs in $\psi$ only if it occurs in $\varphi$ and its
  valuation on $\structure$ is $\symu$.
\end{itemize}

The technically difficult part of the algorithm is its treatment of
quantifiers $\forall x. \varphi$ and $\exists x.\varphi$ in the
input. Indeed, for \emph{propositional logic} (logic without
quantifiers), an algorithm satisfying the three properties above can
be constructed trivially: define $\reduce(\structure,\varphi)$ to be
the formula obtained by replacing each atom $P$ in $\varphi$ with
$\top$ if $\rho_\structure(P) = \symt$, with $\bot$ if
$\rho_\structure(P) = \symf$, and with $P$ itself if
$\rho_\structure(P) = \symu$. This algorithm terminates because
formulas are finite, its correctness can be proved by a simple
induction on $\varphi$, and minimality is obvious from the definition
of $\reduce$.

However, as the reader may already anticipate, this simple idea does
not extend to quantifiers. Consider, for instance, the behavior of the
algorithm on inputs $\forall x.\varphi$ and $\structure$. Because the
output must be minimal, in order to reduce $\forall x.\varphi$, the
algorithm must instantiate $x$ with each possible element of the
domain $\domain$ and check the truth or falsity of $\varphi$ for that
instance on $\structure$. This immediately leads to non-termination
because in models of realistic privacy policies the domain $\domain$
must be infinite, e.g., permissible time points and transmitted
messages (which may contain free-text in natural language) are both
infinite sets.

Given the need for an infinite domain, something intrinsic in
$\varphi$ must \emph{limit} the number of \emph{relevant instances} of
$x$ that need to be checked to a finite number. This is precisely what
our restricted form of universal quantification, $\forall \vec{x}. (c
\imp \varphi)$, accomplishes. Through syntactic restrictions of
Figure~\ref{fig:syntax-temporal} and other static checks described
later, we ensure that there are only a finite number of instances of
$\vec{x}$ for which $c$ is true on the given structure
$\structure$. Further, all such instances can be mechanically computed
from $\structure$. Although fulfilling these requirements is
non-trivial, given that they hold, the rest of the algorithm is
natural and syntax-directed.


Briefly, our enforcement regime contains the following components:
\begin{itemize}
\item An efficiently checkable relation $\vdash \varphi$ on policies,
  called a \emph{mode analysis} (borrowing the term from logic
  programming~\cite{apt94:modes}), which ensures that the relevant
  instances of each quantified variable in $\varphi$ are finite and
  computable.
\item A function $\lift{\sat}(\structure, c)$ that computes all
  satisfying instances of the restriction $c$.
\item The function $\reduce(\structure, \varphi)$ that codifies a
  single iteration of enforcement. The definition of
  $\reduce(\structure, \varphi)$ relies on $\lift{\sat}(\structure,
  c)$ and assumes that $\vdash \varphi$.
\end{itemize}

In the following, we explain each of these three components, starting
with the main algorithm $\reduce$ (Section~\ref{sec:reduce}). After
proving its correctness and minimality
(Section~\ref{sec:correctness:minimality}), we proceed to define
$\lift{\sat}$ and the relation $\vdash \varphi$
(Section~\ref{sec:mode}).




\subsection{Iterative Enforcement Algorithm}
\label{sec:reduce}

The core of our enforcement regime is a computable function
$\reduce(\structure, \varphi) = \psi$, that discharges obligations
from the prevalent policy $\varphi$ using information from the extant
structure $\structure$ to obtain a residual policy $\psi$. Given an
initial policy $\varphi_0$ and a sequence of structures $\structure_1
\leq \structure_2 \leq \ldots \leq \structure_n$, the reduction
algorithm can be applied repeatedly to obtain
$\varphi_1,\ldots,\varphi_n$ such that $\reduce(\structure_i,
\varphi_{i-1}) = \varphi_i$. We write this process in symbols as
$\varphi_0 \iterate{\structure_1} \varphi_1 \ldots
\iterate{\structure_n} \varphi_n$. Correctness
(Theorem~\ref{thm:correctness}) guarantees that $\varphi_n$ is
\emph{equivalent} to $\varphi_0$ in all extensions of $\structure_n$,
while minimality (Theorem~\ref{thm:minimality}) certifies that
$\varphi_n$ contains only those atoms of $\varphi_0$ that could not be
discharged using the information in $\structure_n$ (by definition,
$\structure_n$ subsumes the information in $\structure_1, \ldots,
\structure_{n-1}$). We note that our correctness and minimality
results are independent of the frequency or scheme used for
application of $\reduce$. 


The definition of $\reduce(\structure, \varphi)$ has two dependencies,
whose formal definitions are postponed to
Section~\ref{sec:mode}. First, the function assumes that its input
$\varphi$ is \emph{well-moded}, formally written $\vdash
\varphi$. Well-modedness is a static check, linear in the size of
$\varphi$, which \emph{ensures} that the satisfying instances of each
restriction $c$ in each quantifier in $\varphi$ are \emph{finite and
  computable}. Second, $\reduce(\structure, \varphi)$ assumes a
function $\lift{\sat}(\structure, c)$ that computes all satisfying
instances of restriction $c$ in structure $\structure$. The output of
$\lift{\sat}(\structure,c)$ is a finite set of substitutions
$\{\sigma_1, \ldots, \sigma_n\}$, where each substitution $\sigma_i$
is a finite map from free variables of $c$ to ground
terms. $\lift{\sat}(\structure, c)$ satisfies the following condition:
$\structure \models c \sigma$ iff $\sigma \in \lift{\sat}(\structure,
c)$.

\begin{figure}
\[\begin{array}{@{}lll}

\reduce(\structure, P) & = & \left\lbrace
\begin{array}{ll}
\top & \mbox{ if $\rho_{\structure}(P) = \symt$}\\
\bot & \mbox{ if $\rho_{\structure}(P) = \symf$}\\
P & \mbox{ if $\rho_{\structure}(P) = \symu$}
\end{array} \right. \\

\\

\reduce(\structure, \top) & = & \top \\

\reduce(\structure, \bot) & = & \bot \\

\reduce(\structure, \varphi_1 \conj \varphi_2) & = &
\reduce(\structure, \varphi_1) \conj \reduce(\structure, \varphi_2) \\

\reduce(\structure, \varphi_1 \disj \varphi_2) & = &
\reduce(\structure, \varphi_1) \disj \reduce(\structure, \varphi_2) \\

\\

\reduce(\structure, \forall \vec{x}.(c \imp \varphi)) & = & 
\begin{array}[t]{@{}l}
\mbox{let}\\
~\{\sigma_1,\ldots,\sigma_n\} \leftarrow \lift{\sat}(\structure,c)\\
~\{ \vec{t_i} \leftarrow \sigma_i(\vec{x}) \}_{i=1}^n\\
~S \leftarrow \{\vec{t_1},\ldots,\vec{t_n}\}\\
~\{ \psi_i \leftarrow \reduce(\structure,
\varphi[\vec{t_i}/\vec{x}]) \}_{i=1}^n\\
~\psi' \leftarrow \forall \vec{x}.((c \conj \vec{x} \not \in S) \imp \varphi)\\
\mbox{return } \\
~~~ \psi_1 \conj \ldots \conj \psi_n \conj \psi'
\end{array} \\
\\

\reduce(\structure, \exists \vec{x}.(c \conj \varphi)) & = & 
\begin{array}[t]{@{}l}
\mbox{let}\\
~\{\sigma_1,\ldots,\sigma_n\} \leftarrow \lift{\sat}(\structure,c)\\
~\{ \vec{t_i} \leftarrow \sigma_i(\vec{x}) \}_{i=1}^n\\
~S \leftarrow \{\vec{t_1},\ldots,\vec{t_n}\}\\
~\{ \psi_i \leftarrow \reduce(\structure,
\varphi[\vec{t_i}/\vec{x}]) \}_{i=1}^n\\
~\psi' \leftarrow \exists \vec{x}.((c \conj \vec{x} \not \in S) \conj \varphi)\\
\mbox{return } \\
~~~ \psi_1 \disj \ldots \disj \psi_n \disj \psi'
\end{array}

\end{array}\]
\caption{Definition of $\reduce(\structure, \varphi)$}
\label{fig:reduce}
\end{figure}

The function $\reduce(\structure,\varphi)$ is defined by induction on
$\varphi$ in Figure~\ref{fig:reduce}.  For atoms $P$,
$\reduce(\structure, P)$ equals $\top$, $\bot$, or $P$, according to
whether $\rho_{\structure}(P)$ equals $\symt$, $\symf$, or $\symu$. In
particular, in the absence of human input $\rho_{\structure}(P_S) =
\symu$ for a subjective atom $P_S$ and hence, in the absence of human
input, $\reduce(\structure, P_S) = P_S$. The clauses for the
connectives $\top$, $\bot$, $\conj$, and $\disj$ are
straightforward. To evaluate $\reduce(\structure, \forall \vec{x}.(c
\imp \varphi))$, we first determine the set of instances of $\vec{x}$
that satisfy $c$ by calling $\lift{\sat}(\structure, c)$. For each
such instance $\vec{t_1},\ldots,\vec{t_n}$, we reduce
$\varphi[\vec{t_i}/\vec{x}]$ to $\psi_i$ through a recursive call to
$\reduce$. Because all instances of $\varphi$ must hold in order for
$\forall \vec{x}.(c\imp \varphi)$ to be true, the output is $\psi_1
\conj \ldots \conj \psi_n \conj \psi'$, where the last conjunct
$\psi'$ records the fact that instances of $\vec{x}$ \emph{other than}
$\vec{t_1},\ldots,\vec{t_n}$ have not been considered. The latter is
necessary because there may be instances of $\vec{x}$ satisfying $c$
in extensions of $\structure$, but not $\structure$ itself. Precisely,
we define $S = \{\vec{t_1}, \ldots, \vec{t_n}\}$ and $\psi' = \forall
\vec{x}.((c \conj \vec{x} \not \in S) \imp \varphi)$.  The new
conjunct $\vec{x} \not \in S$ prevents the instances $\vec{t_1},
\ldots, \vec{t_n}$ from being checked again in subsequent
iterations. Formally, $\vec{x} \not \in S$ is an objective predicate
that encodes the negation of usual finite-set membership.  The
treatment of $\exists \vec{x}.(c \conj \varphi)$ is dual; in that
case, the output contains disjunctions because the truth of any one
instance of $\varphi$ suffices for the formula to hold.

\begin{eg}\label{eg:reduce}
We illustrate iterative enforcement on the policy $\varphi_0 =
{\globally \alpha_{pol2}}$ that we obtained via translation in
Example~\ref{eg:policies:translated}. The policy requires that the
recipient of a request for information respond within 30 days with the
information. We advise the reader to revisit the example for the
definition of $\varphi_0$. For the purpose of explanation, let us
define $\varphi(\ttime,p,t)$ by pattern matching to be the formula
satisfying $\varphi_0 = \forall
\ttime,p,t. ~(\pred{in}(\ttime,0,\infty) \conj
\pred{req}(p,t,\ttime)) \imp \varphi(\ttime,p,t)$. Informally,
$\varphi(\ttime,p,t)$ is the obligation that must be satisfied if
principal $p$ requests information about attribute $t$ from her record
at time $\ttime$.

Suppose that we first run $\reduce(\structure, \varphi_0)$ in a
structure $\structure$ which has the states $1,3,7$, only one request
--- $\mbox{Alice}$'s request for her medical record (attribute $mr$)
at time $3$, and no other information. Intuitively, this information
implies that $\lift{\sat}(\structure, \pred{in}(\ttime,0,\infty) \conj
\pred{req}(p,t,\ttime)) = \{(\ttime,p,t) \mapsto (3,\mbox{Alice},mr)
\}$. (We check formally in Example~\ref{eg:lift:sat} that this is
actually the case.) Hence, by the definition of $\reduce$, we know
that $\reduce(\structure, \varphi_0) = \psi_1 \conj \varphi_0'$, where
$\psi_1 = \reduce(\structure,
\varphi[(3,\mbox{Alice},mr)/(\ttime,p,t)])$ and $\varphi_0' = \forall
\ttime,p,t. ~(\pred{in}(\ttime,0,\infty) \conj \pred{req}(p,t,\ttime)
\conj (\ttime,p,t) \not \in \{(3,\mbox{Alice},mr)\}) \imp
\varphi(\ttime,p,t)$. The reader may check that because the trace has
no other information, $\psi_1 =
\varphi[(3,\mbox{Alice},mr)/(\ttime,p,t)]$, so the output of the
reduction is $\psi_1 \conj \varphi_0'$. Expansion of the formula
$\psi_1$ shows that it is precisely the obligation that the recipient
respond to $\mbox{Alice}$ with her medical record in 30 days. Call
this entire output $\varphi_1$.

Consider a second round of audit on the reduced policy $\varphi_1$ and
an extended trace $\structure'$ which has the additional state $11$ in
which $\mbox{Bob}$, in role ``$\mbox{records}$'', responds with a
message $M$ to $\mbox{Alice}$.  Since $\varphi_1 = \psi_1 \conj
\varphi_0'$, we have $\reduce(\structure',\varphi_1) =
\reduce(\structure',\psi_1) \conj \reduce(\structure',
\varphi_0')$. The reader may check that
$\reduce(\structure',\varphi_0') = \varphi_0'$ because the top-level
restriction in $\varphi_0'$ has no satisfying instance in
$\structure'$. Thus, we consider here the reduction of $\psi_1$. Note
that $\psi_1$ has the form $\exists \ttime', q, m. ~(
(\pred{in}(\ttime',3,33) \conj \pred{inrole}(q,
\mbox{records},\ttime') \conj\pred{send}(q, \mbox{Alice}, m,\ttime'))
\conj \varphi'(\ttime',q,m))$.  To calculate its reduction, we first
observe that from the information in $\structure'$, it should follow
that $\lift{\sat}(\structure', \pred{in}(\ttime',3,33) \conj
\pred{inrole}(q, \mbox{records},\ttime') \conj \pred{send}(q,
\mbox{Alice}, m,\ttime')) = \{(\ttime',q,m) \mapsto (11,\mbox{Bob},M)
\}$. (Again, we check formally in Example~\ref{eg:lift:sat} that this
is the case.)  Consequently, $\reduce(\structure',\psi_1) = \psi_1'
\disj \varphi_1'$, where $\psi_1' = \reduce(\structure',
\varphi'(11,\mbox{Bob},M))$ and $\varphi_1' = \exists \ttime', q,
m. ~( (\pred{in}(\ttime',3,33) \conj \pred{inrole}(q,
\mbox{records},\ttime') \conj\pred{send}(q, \mbox{Alice}, m,\ttime')
\conj (\ttime',q,m) \not \in \{(11,\mbox{Bob},M) \}) \conj
\varphi'(\ttime',q,m))$. We calculate $\psi_1'$ below. The second
disjunct $\varphi_1'$ simply means that the policy is satisfied if at
some point other than $11$ (but before $33$), someone in role
``records'' sends $\mbox{Alice}$'s $mr$ to her.

What is $\psi_1' = \reduce(\structure', \varphi'(11,\mbox{Bob},M))$?
Expanding $\varphi'$, we have $\varphi'(11,\mbox{Bob},M) = \linebreak[6]
\pred{contains}(M,\mbox{Alice},mr,11) \conj \psi_2'$, where $\psi_2' =
\forall \ttime'' .~(\pred{in}(\ttime'', 3, 11) \conj \ttime'' \not=
11) \imp \dual{\pred{ftr}}(\mbox{Alice}, mr, \ttime'')$. Because
$\pred{contains}$ is a subjective predicate,
$\rho_{\structure'}(\pred{contains}(M,\mbox{Alice},mr,11)) = \symu$
so, by definition, $\reduce(\structure',
\pred{contains}(M,\mbox{Alice},mr,11)) =
\pred{contains}(M,\mbox{Alice},mr,11)$. Hence, if
$\reduce(\structure', \psi_2') = \psi_2''$, then $\psi_1' =
\pred{contains}(M,\mbox{Alice},mr,11) \conj \psi_2''$.

To compute $\psi_2''$, we note that $\lift{\sat}(\structure',
\pred{in}(\ttime'', 3, 11) \conj \ttime'' \not= 11) = \{\ttime''
\mapsto 3, \ttime'' \mapsto 7\}$. It follows that
$\reduce(\structure', \psi_2') = \psi_2'' =
\dual{\pred{ftr}}(\mbox{Alice}, mr, 3) \conj
\dual{\pred{ftr}}(\mbox{Alice}, mr, 7) \conj \psi_2'''$, where
$\psi_2''' = \forall \ttime'' .~(\pred{in}(\ttime'', 3, 11) \conj
\ttime'' \not= 11 \conj \ttime'' \not \in \{3,7\}) \imp
\dual{\pred{ftr}}(\mbox{Alice}, mr, \ttime'')$. Informally, $\psi_2''$
means that it should have been infeasible to respond to $\mbox{Alice}$
at times $3$ and $7$ (which are the only two observed time points on
$\structure'$ before the response at time $11$), and also at any other
time points between $3$ and $11$ that may show up in extensions of
$\structure'$.

Putting back the various formulae, we have
$\reduce(\structure',\varphi_1) = (\psi_1' \disj \varphi_1') \conj
\varphi_0'$, where $\psi_1' = \pred{contains}(M,\mbox{Alice},mr,11)
\conj \psi_2''$ means that the message $M$ sent to $\mbox{Alice}$ at
time $11$ contain her $mr$ and that it be infeasible to respond
earlier ($\psi_2''$), $\varphi_1'$ allows for the possibility to
satisfy $\mbox{Alice}$'s request through another response before time
$33$, and $\varphi_0'$ enforces the top-level policy on any other
requests.  This is exactly what we might expect from an informal
analysis.  Further, note that the reduction exposes the ground
subjective atoms $\pred{contains}(M,\mbox{Alice},mr,11)$,
$\dual{\pred{ftr}}(\mbox{Alice}, mr, 3)$ and
$\dual{\pred{ftr}}(\mbox{Alice}, mr, 7)$ for a human
auditor to inspect and discharge.
\end{eg}



\subsection{Correctness and Minimality of Enforcement}
\label{sec:correctness:minimality}

The function $\reduce$ is correct in the sense that its input and
output formulas contain the same obligations. Formally, if
$\reduce(\structure, \varphi) = \psi$, then in \emph{all extensions of
  $\structure$}, $\varphi$ is true iff $\psi$ is true and $\varphi$ is
false iff $\psi$ is false.

\begin{thm}[Correctness of $\reduce$]\label{thm:correctness}
If $\reduce(\structure, \varphi) = \psi$ and $\structure' \geq
\structure$, then (1)~$\structure' \models \varphi$ iff $\structure'
\models \psi$ and (2)~$\structure' \models \dual{\varphi}$ iff $\structure'
\models \dual{\psi}$.
\end{thm}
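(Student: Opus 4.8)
The plan is to prove both parts simultaneously by structural induction on $\varphi$, since the definition of $\reduce$ is itself by induction on $\varphi$ and the two claims (preservation of truth and of falsity) are mutually intertwined through the quantifier cases. The key tool throughout will be the Monotonicity property established in Section~\ref{sec:structures}: because $\structure' \geq \structure$, any atom whose valuation on $\structure$ is $\symt$ or $\symf$ has the same valuation on $\structure'$. I would also lean on the defining property of $\lift{\sat}$, namely that $\structure \models c\sigma$ iff $\sigma \in \lift{\sat}(\structure, c)$, and on the semantic clauses for $\models$ given earlier.

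First I would dispatch the base cases. For an atom $P$ with $\rho_\structure(P) = \symt$, we have $\psi = \top$; by Monotonicity $\rho_{\structure'}(P) = \symt$, so $\structure' \models P$ and $\structure' \models \top$ both hold, and for falsity neither $\dual{P}$ nor $\dual{\top} = \bot$ is satisfied, matching up. The $\symf$ case is symmetric using $\dual{P}$, and the $\symu$ case is trivial since $\psi = P$ literally. The cases $\top$ and $\bot$ are immediate. For $\varphi_1 \conj \varphi_2$ and $\varphi_1 \disj \varphi_2$, the output is the corresponding connective applied to the recursive reductions, so both parts follow from the induction hypotheses together with the semantic clauses for $\conj$, $\disj$ and their duals (recalling $\dual{\varphi \conj \psi} = \dual{\varphi} \disj \dual{\psi}$).

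The main obstacle, as the paper itself flags, is the quantifier case $\forall \vec{x}.(c \imp \varphi)$ (and dually $\exists$). The output is $\psi_1 \conj \cdots \conj \psi_n \conj \psi'$ where $\psi_i = \reduce(\structure, \varphi[\vec{t_i}/\vec{x}])$, the $\vec{t_i}$ range over $\lift{\sat}(\structure, c)$, and $\psi' = \forall \vec{x}.((c \conj \vec{x} \not\in S) \imp \varphi)$. The crux is to argue that on $\structure'$ the universal $\forall \vec{x}.(c \imp \varphi)$ splits exactly into the instances in $S$ and the instances outside $S$. For any $\vec{t} \in \domain$, either $\vec{t} \in S$ or not; the instances in $S$ are precisely those satisfying $c$ on $\structure$, handled by the $\psi_i$ via the induction hypothesis applied to $\varphi[\vec{t_i}/\vec{x}]$; the remaining instances are captured by $\psi'$, whose restriction $c \conj \vec{x} \not\in S$ holds on $\structure'$ exactly when $\vec{t}$ satisfies $c$ but lies outside $S$. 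The delicate point is that instances in $S$ might satisfy $c$ on $\structure$ but I must ensure they are genuinely accounted for on every extension $\structure'$, and that instances satisfying $c$ only on $\structure'$ (not on $\structure$) are neither double-counted nor dropped --- this is exactly why $\psi'$ carries the clause $\vec{x} \not\in S$. I would verify that $\structure' \models \forall \vec{x}.(c \imp \varphi)$ iff $\structure' \models \psi_1 \conj \cdots \conj \psi_n \conj \psi'$ by a case split on whether a witnessing or failing instance lies in $S$, using the $\lift{\sat}$ property (which holds over $\structure$, so membership in $S$ corresponds to $\structure$-satisfaction of $c$) and Monotonicity to transfer truth values of the $c$-restriction up to $\structure'$. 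Part (2), the falsity claim, then follows by applying the duality identities for $\forall$ and $\exists$ and invoking the $\exists$ case of the induction, so the universal and existential cases really must be proved as one mutually recursive argument rather than separately.
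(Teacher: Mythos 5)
Your proposal is correct, and the substance of the hard quantifier case is the same as the paper's: split the domain into the instances in $S$ and those outside it, use the defining property of $\lift{\sat}$ to equate membership in $S$ with $\structure$-satisfaction of $c$, and use Monotonicity to transfer that satisfaction up to $\structure'$, with $\psi'$ absorbing the instances that satisfy $c$ only in the extension. Where you genuinely diverge is in how the falsity claim (2) is obtained. The paper first proves a standalone syntactic lemma, $\reduce(\structure, \dual{\varphi}) = \dual{\reduce(\structure, \varphi)}$, by its own induction on $\varphi$; it then observes that (2) is just (1) applied to $\dual{\varphi}$, so the main induction needs to establish only (1), with $\forall$ and $\exists$ as separate cases of that single claim. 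You instead carry (1) and (2) through one simultaneous induction, getting the falsity half of the $\forall$ case by dualizing it into the $\exists$ shape and invoking part (2) of the induction hypothesis on the instantiated body. Both decompositions are sound: the paper's factoring roughly halves the case analysis of the main theorem and isolates a reusable commutation fact about $\reduce$, while yours avoids an auxiliary lemma at the cost of doing a truth argument and a mirrored falsity argument in each quantifier case. One detail you should make explicit when writing it out: in the forward direction of the $\forall$ case, to conclude $\structure' \models \varphi'[\vec{t_i}/\vec{x}]$ from $\structure' \models \forall \vec{x}. (c \imp \varphi')$ you must exclude the branch $\structure' \models \dual{c}[\vec{t_i}/\vec{x}]$; since Monotonicity gives you $\structure' \models c[\vec{t_i}/\vec{x}]$, this exclusion requires the Consistency property of the three-valued semantics (a formula and its dual cannot both hold), which the paper invokes as a separate lemma alongside Monotonicity and the correctness of $\lift{\sat}$.
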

\begin{proof}
See Appendix~\ref{app:enforcement}, Theorem~\ref{thm:correctness:app}.
\end{proof}

The proof of this theorem relies on correctness of $\lift{\sat}$,
which we prove in the next subsection
(Theorem~\ref{thm:sat:correct}). Correctness of iterative enforcement
is an immediate corollary of Theorem~\ref{thm:correctness}. We can
prove by induction on $n$ that if $\varphi_0 \iterate{\structure_1}
\varphi_1 \ldots \iterate{\structure_n} \varphi_n$, then for all
extensions $\structure' \geq \structure_n$, $\structure' \models
\varphi_n$ iff $\structure' \models \varphi_0$ and $\structure'
\models \dual{\varphi_n}$ iff $\structure' \models \dual{\varphi_0}$.

Next, we wish to prove that if $\reduce(\structure, \varphi) = \psi$
then $\psi$ is minimal with respect to $\varphi$ and $\structure$,
i.e., an atom occurs in $\psi$ only if it occurs in $\varphi$ and its
interpretation in $\structure$ is unknown. Unfortunately, owing to
quantification, there is no standard definition of the set of atoms of
a formula of first-order logic. In the following, we provide one
natural definition of the atoms of a formula and characterize
minimality with respect to it; other similar characterizations are
possible. If $\vdash \varphi$, we define the set of atoms of a formula
$\varphi$ with respect to a structure $\structure$ as follows.
\[\begin{array}{lll}
\atoms(\structure, P_S) & = & \{ P_S \} \\

\atoms(\structure, P_O) & = & \{ P_O \} \\

\atoms(\structure, \top) & = & \{ \} \\

\atoms(\structure, \bot) & = & \{ \} \\

\atoms(\structure, \varphi_1 \conj \varphi_2) & = & \atoms(\structure,
\varphi_1) \cup \atoms(\structure, \varphi_2)\\

\atoms(\structure, \varphi_1 \disj \varphi_2) & = & \atoms(\structure,
\varphi_1) \cup \atoms(\structure, \varphi_2)\\

\atoms(\structure, \forall \vec{x}.(c \imp \varphi)) & = &
\bigcup_{\sigma \in \lift{\sat}(\structure, c)} \atoms(\structure,
\varphi\sigma) \\

\atoms(\structure, \exists \vec{x}.(c \conj \varphi)) & = &
\bigcup_{\sigma \in \lift{\sat}(\structure, c)} \atoms(\structure,
\varphi\sigma)

\end{array}\]

The following theorem characterizes minimality of $\reduce$ with
respect to the above definition of atoms in a formula.

\begin{thm}[Minimality]\label{thm:minimality}
Suppose $\vdash \varphi$ and $\reduce(\structure, \varphi) =
\psi$. Then $\atoms(\structure, \psi) \subseteq \atoms(\structure,
\varphi) \cap \{P ~|~ \rho_\structure(P) = \symu\}$.
\end{thm}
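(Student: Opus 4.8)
The plan is to prove Theorem~\ref{thm:minimality} by structural induction on $\varphi$, exactly mirroring the recursion structure of both $\reduce$ and $\atoms$. The goal is the set inclusion $\atoms(\structure, \psi) \subseteq \atoms(\structure, \varphi) \cap \{P \mid \rho_\structure(P) = \symu\}$, where $\psi = \reduce(\structure, \varphi)$. I would split this into two separate containments that I prove simultaneously in the induction: first, every atom of $\psi$ is an atom of $\varphi$ (the ``subformula'' part), and second, every atom of $\psi$ has valuation $\symu$ in $\structure$ (the ``unknown'' part). Keeping these two obligations explicit at each case makes the bookkeeping manageable.

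\medskip

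For the base cases I would handle atoms directly. When $\varphi = P$, the definition of $\reduce$ gives $\psi = \top$ or $\psi = \bot$ when $\rho_\structure(P) \in \{\symt,\symf\}$, in which case $\atoms(\structure,\psi) = \{\}$ and the inclusion holds vacuously; when $\rho_\structure(P) = \symu$ we have $\psi = P$, so $\atoms(\structure,\psi) = \{P\} = \atoms(\structure,\varphi)$ and $P$ indeed has valuation $\symu$, satisfying both obligations. The cases $\top$ and $\bot$ are immediate since their atom sets are empty. For the binary connectives $\varphi_1 \conj \varphi_2$ and $\varphi_1 \disj \varphi_2$, the atom sets distribute over union in both $\reduce$ and $\atoms$, so the two induction hypotheses combine directly by monotonicity of $\cup$ and $\cap$ over $\cup$.

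\medskip

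The quantifier cases are where the real work lies, and I expect the $\forall$ case (the $\exists$ case being dual) to be the main obstacle. Here $\psi = \psi_1 \conj \cdots \conj \psi_n \conj \psi'$, where $\psi_i = \reduce(\structure, \varphi[\vec{t_i}/\vec{x}])$ for the instances $\vec{t_i} = \sigma_i(\vec{x})$ arising from $\lift{\sat}(\structure,c)$, and $\psi' = \forall \vec{x}.((c \conj \vec{x} \not\in S) \imp \varphi)$. Since atoms distribute over the top-level conjunction, it suffices to bound $\atoms(\structure,\psi_i)$ for each $i$ and $\atoms(\structure,\psi')$. For each $\psi_i$ the induction hypothesis applied to $\varphi[\vec{t_i}/\vec{x}]$ gives $\atoms(\structure,\psi_i) \subseteq \atoms(\structure, \varphi[\vec{t_i}/\vec{x}]) \cap \{P \mid \rho_\structure(P) = \symu\}$, and since $\vec{t_i} = \sigma_i(\vec{x})$ with $\sigma_i \in \lift{\sat}(\structure,c)$, the set $\atoms(\structure, \varphi\sigma_i)$ is by definition one of the components of $\atoms(\structure, \forall \vec{x}.(c \imp \varphi))$, so it is contained in $\atoms(\structure,\varphi)$; the $\symu$ obligation is carried along by the same hypothesis. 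The delicate point is the residual conjunct $\psi'$: I must show $\atoms(\structure, \psi') \subseteq \atoms(\structure, \varphi)$. By the definition of $\atoms$ on a universal, $\atoms(\structure, \psi') = \bigcup_{\sigma \in \lift{\sat}(\structure,\, c \conj \vec{x} \not\in S)} \atoms(\structure, \varphi\sigma)$, so I need the set of satisfying instances of the strengthened restriction $c \conj \vec{x}\not\in S$ to be a subset of those of $c$. This follows from the correctness property of $\lift{\sat}$ stated in the excerpt, namely $\structure \models c\sigma$ iff $\sigma \in \lift{\sat}(\structure,c)$: any $\sigma$ satisfying $c \conj \vec{x}\not\in S$ in particular satisfies $c$, hence $\atoms(\structure,\varphi\sigma)$ is again one of the unioned components of $\atoms(\structure,\forall\vec{x}.(c\imp\varphi))$. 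Thus every atom contributed by $\psi'$ already appears in $\atoms(\structure,\varphi)$, and the $\symu$ obligation for these atoms comes from the induction hypothesis applied at the recursive call inside the definition of $\atoms$. The $\exists$ case is entirely symmetric, replacing the outer conjunction by disjunction, which leaves the $\atoms$ computation unchanged since $\atoms$ treats $\conj$ and $\disj$ identically.
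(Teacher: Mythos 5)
Your overall structure (induction on $\varphi$, splitting the conclusion into a ``subformula'' obligation and a ``$\symu$'' obligation, and the treatment of the base and propositional cases) matches the paper's proof. But there is a genuine gap in the quantifier case, precisely at the point you identify as delicate: the residual conjunct $\psi' = \forall \vec{x}.((c \conj \vec{x} \not\in S) \imp \varphi')$. You prove only the subformula half for $\psi'$ (every instance satisfying $c \conj \vec{x}\not\in S$ satisfies $c$, so its atoms sit inside $\atoms(\structure,\varphi)$), and then claim the $\symu$ half ``comes from the induction hypothesis applied at the recursive call inside the definition of $\atoms$.'' This does not work: the formula under the quantifier in $\psi'$ is the \emph{unreduced} body $\varphi'$, not an output of $\reduce$, so the induction hypothesis --- which relates atoms of $\reduce(\structure,\cdot)$ to atoms of its input --- says nothing about the atoms $\atoms(\structure,\varphi'\sigma)$ arising there. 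Those atoms are atoms of the original policy instantiated at $\sigma$ and may perfectly well have valuation $\symt$ or $\symf$; if $\lift{\sat}(\structure, c \conj \vec{x}\not\in S)$ were nonempty, the theorem's conclusion would actually fail on them.

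What saves the theorem, and what the paper proves instead (fact (D) in its proof of Theorem~\ref{thm:minimality:app}), is that the strengthened restriction has \emph{no} satisfying instances on $\structure$ at all: every $\sigma_i \in \lift{\sat}(\structure,c)$ assigns $\vec{x}$ some $\vec{t_i} \in S$, so $\lift{\sat}(\structure, \vec{t_i}\not\in S) = \{\}$ and hence $\lift{\sat}(\structure, c \conj \vec{x}\not\in S) = \bigcup_i (\sigma_i + \{\}) = \{\}$. Consequently $\atoms(\structure,\psi') = \{\}$ by definition of $\atoms$, and \emph{both} obligations for $\psi'$ hold vacuously --- no containment-of-instances argument is needed or sufficient. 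Separately, and more routinely, your application of the induction hypothesis to $\varphi'[\vec{t_i}/\vec{x}]$ requires its precondition $\vdash \varphi'[\vec{t_i}/\vec{x}]$; the paper discharges this via inversion on $\vdash \varphi$, Lemma~\ref{lem:mode:condition:inclusion} (giving $\vec{x} = \fv(c) = \chi_O$), and the mode-substitution lemma (Lemma~\ref{lem:subst:mode}), a step your proposal omits entirely.
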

\begin{proof}
See Appendix~\ref{app:enforcement}, Theorem~\ref{thm:minimality:app}.
\end{proof}

\begin{eg}\label{eg:minimality}
Revisiting Example~\ref{eg:reduce}, we check that the output produced
by the second reduction satisfies Theorem~\ref{thm:minimality}. Recall
that the second reduction is $\reduce(\structure', \varphi_1) =
(\psi_1' \disj \varphi_1') \conj \varphi_0'$. $\varphi_1'$ and
$\varphi_0'$ each have top-level quantifiers whose guards have no
satisfying instances in $\structure'$, so, by definition of $\atoms$,
$\varphi_1'$ and $\varphi_0'$ have no atoms
w.r.t.\ $\structure'$. Thus we turn to $\psi_1'$. It is easy to check
that $\atoms(\structure', \psi_1')$ is the three element set
$\{\pred{contains}(M, \mbox{Alice}, mr, 11),
\dual{\pred{ftr}}(\mbox{Alice}, mr, 3), \linebreak[6]
\dual{\pred{ftr}}(\mbox{Alice}, mr, 7)\}$. Further, from the analysis
of Example~\ref{eg:reduce}, each of these three atoms also exist in
$\atoms(\structure', \varphi_1)$. Finally, each of the three atoms is
subjective, so each has a valuation $\symu$ in $\structure$.
\end{eg}



\subsection{Quantifier Instantiation and Mode Analysis}
\label{sec:mode}

Having described our main enforcement function $\reduce$, we turn to
the mode analysis relation $\vdash \varphi$ and the function
$\lift{\sat}$ on which the definition of $\reduce$ relies. The rest of
this paper can be understood without understanding this section, so
the disinclined reader may choose to skip it.

\paragraph{Input and Output}
The objective of our mode analysis, as mentioned earlier, is to ensure
that the set of satisfying instances of quantified variables $\vec{x}$
in a restriction $c$ be both finite and computable. Our method of mode
analysis is inspired by, and based on a similar technique in logic
programming (see, e.g.~\cite{apt94:modes}). The key observation in
mode analysis is that, for many predicates, the set of all satisfying
instances on any given structure can be computed finitely if arguments
in certain positions are ground. The reason why instances can be
computed may vary from predicate to predicate; we illustrate some such
computations from prior examples.
\begin{enumerate}
\item Given a ground $m$, the set of $q,t$ such that
  $\pred{tagged}(m,q,t, \ttime)$ holds is finite and can be computed
  from $m$ itself, as we assumed in Example~\ref{eg:past}. (Note that
  the last argument $\ttime$ is an artifact of our translation and is
  irrelevant here.)
\item For an action predicate like $\pred{send}(p_1,p_2,m,\ttime)$, we
  can compute all instances of $p_1$, $p_2$, $m$, $\ttime$ for which
  $\pred{send}(p_1,p_2,m,\ttime)$ holds simply by querying the given
  system log.
\item Given ground $\ttime_2, \ttime_3$, we can compute all $\ttime_1$
  such that $\pred{in}(\ttime_1,\ttime_2,\ttime_3)$ by looking at the
  states in the given system log and selecting the subset that lie in
  the interval $[\ttime_2,\ttime_3]$.
\item Given ground $r$ and $\ttime$, we can compute all principals $p$ such
  that $\pred{inrole}(p,r,\ttime)$ by looking at the roles' database.
\end{enumerate}

Note that in each of the cases 1--4, we require that certain argument
positions be ground (e.g., $m$ in 1 and $\ttime_2,\ttime_3$ in 3), and
compute others (e.g., $q,t$ in 1 and $\ttime_1$ in 3). We call these
the \emph{input} and \emph{output} argument positions, respectively.
Formally, we represent input and output positions by two \emph{partial
  functions} $I$ and $O$ (input and output) from predicates to
$2^{\mathrm{N}}$, which we assume are given to us. The functions are
partial because satisfying instances of certain predicates, including
all subjective predicates, are not computable. Following the earlier
example, we could choose:
\begin{enumerate}
\item $I(\pred{tagged}) = \{1\}$, $O(\pred{tagged}) = \{2,3\}$
\item $I(\pred{send}) = \{\}$, $O(\pred{send}) = \{1,2,3,4\}$
\item $I(\pred{in}) = \{2,3\}$, $O(\pred{in}) = \{1\}$
\item $I(\pred{inrole}) = \{2,3\}$, $O(\pred{inrole}) = \{1\}$
\end{enumerate}

For a subjective predicate $p_S$, $I(p_S)$ and $O(p_S)$ are
undefined. The sets $I(p)$ and $O(p)$ are called a moding of predicate
$p$. If $i \in I(p)$ ($i \in O(p)$), we say that the $i$th argument of
$p$ is in input (output) mode. Certain arguments may be in neither
input nor output mode, e.g., argument $4$ of the predicate
$\pred{tagged}$. Also, the same predicate may be moded in multiple
ways. For example, both the assignments ($I(\pred{send}) = \{\}$,
$O(\pred{send}) = \{1,2,3,4\}$) and ($I(\pred{send}) = \{1\}$,
$O(\pred{send}) = \{2,3,4\}$) are correct. However, it suffices to
assume that each predicate has a unique moding, because we can use
different names for predicates with the same interpretation but
different modings.


\paragraph{Substitution Computation}
A substitution $\sigma$ is a finite map from variables to ground
terms. Say that a substitution $\sigma'$ extends a substitution
$\sigma$, written $\sigma' \geq \sigma$, if $\dom(\sigma') \supseteq
\dom(\sigma)$ and for all $x \in \dom(\sigma)$, $\sigma(x) =
\sigma'(x)$. We abstract the computation of terms in output positions
from terms in input positions as a \emph{partial computable function}
$\sat$. The input of the function is a pair containing a structure and
an atom; its output is a finite \emph{set} of substitutions. The
function $\sat$ satisfies the following condition:
\begin{quote}
Given a structure $\structure$ and
an atom $p(t_1,\ldots,t_n)$ such that for all $i \in I(p)$, $t_i$ is
ground, $\sat(\structure, p(t_1,\ldots,t_n))$ is the set of all
substitutions for variables in $\bigcup_{i \in O(p)} t_i$ that have
extensions $\sigma$ such that $\structure \models
p(t_1,\ldots,t_n)\sigma$. 
\end{quote}
For example, if in structure $\structure$, principal $\mbox{Charlie}$
has doctors $\mbox{Alice}$ and $\mbox{Bob}$ at time $\ttime$, then
$\sat(\structure, \pred{inrole}(p, \pred{doc}(\mbox{Charlie}),
\ttime))$ would be the two element set $\{p \mapsto \mbox{Alice}, p
\mapsto \mbox{Bob} \}$. If the input arguments in atom $P$ are not
ground, then $\sat(\structure, P)$ may be undefined. For example, if
either $\tau_2$ or $\tau_3$ is not ground, then $\sat(\structure,
\pred{in}(\tau_1, \tau_2, \tau_3))$ is undefined. Because subjective
predicates are not computable, $\sat(\structure, P_S)$ is also
undefined for every subjective atom $P_S$. In practice, the function
$\sat(\structure, P)$ could be implemented through queries to the
database that stores the audit log.


We lift the function $\sat$ to the function $\lift{\sat}$ that
computes satisfying instances of restrictions. The specification of
the lifted function $\lift{\sat}(\structure, c)$ is similar to that of
$\sat$: Given a partially ground restriction $c$,
$\lift{\sat}(\structure, c)$ is a finite set of substitutions
characterizing all satisfying instances of $c$. 
\[\begin{array}{lll}
\lift{\sat}(\structure, p_O(t_1,\ldots,t_n)) & = & \sat(\structure,
p_O(t_1,\ldots,t_n)) \\

\lift{\sat}(\structure, \top) & = & \{\bullet\} \\

\lift{\sat}(\structure, \bot) & = & \{\} \\

\lift{\sat}(\structure, c_1 \conj c_2) & = & \bigcup_{\sigma \in
\lift{\sat}(\structure, c_1) } \sigma + \lift{\sat}(\structure, c_2
\sigma)\\

\lift{\sat}(\structure, c_1 \disj c_2) & = & \lift{\sat}(\structure,
c_1) \cup \lift{\sat}(\structure, c_2)\\

\lift{\sat}(\structure, \exists x. c) & = & \lift{\sat}(\structure, c)
\backslash \{x\} ~~~~~ \mbox{($x$ fresh)}
\end{array}\]

For atoms, the definition of $\lift{\sat}$ coincides with that of
$\sat$. Since $\top$ must always be true, $\lift{\sat}(\structure,
\top)$ contains only the empty substitution (denoted $\bullet$). Since
$\bot$ can never be satisfied, $\lift{\sat}(\structure, \bot)$ is
empty. For $c_1 \conj c_2$, the set of satisfying instances is
obtained by taking those of $c_1$ (denoted $\sigma$ above), and
conjoining those with satisfying instances of $c_2 \sigma$ (the
operation $+$ is composition of substitutions with disjoint
domains). The set of satisfying instances of $c_1 \disj c_2$ is the
union of the satisfying instances of $c_1$ and $c_2$. Satisfying
instances of $\exists x.c$ are obtained by taking those of $c$, and
removing the substitutions for $x$.

$\lift{\sat}$ is a partial function because the underlying function
$\sat$ is partial. For instance, taking an example from
Section~\ref{sec:logic}, $\lift{\sat}(\structure,
\pred{send}(p_1,p_2,m,\ttime) \conj \pred{tagged}(m',q,t,\ttime'))$ is
undefined if $m'$ is a variable because any substitution $\sigma$ in
the output of the recursive call $\lift{\sat}(\structure,
\pred{send}(p_1,p_2,m,\ttime))$ will not contain $m'$ in its domain
and, therefore, in the call to $\lift{\sat}(\structure,
\pred{tagged}(m',q,t,\ttime')\sigma)$, the first argument to
$\pred{tagged}$ will be non-ground. Since $I(\pred{tagged}) = \{1\}$,
this recursive call may fail to return an answer. On the other hand,
$\lift{\sat}(\structure, \pred{send}(p_1,p_2,m,\ttime) \conj
\pred{tagged}(m,q,t,\ttime'))$ is defined because the first argument
of $\pred{tagged}$ in the second recursive call is $m$, which is
grounded by the substitution $\sigma$ of the first recursive
call. Despite being partial, $\lift{\sat}(\structure, c)$ represents
all satisfying instances of $c$, whenever it is defined, as formalized
by the following theorem.

\begin{thm}[Correctness of $\lift{\sat}$]\label{thm:sat:correct}
If $\lift{\sat}(\structure, c)$ is defined then for any substitution
$\sigma'$ with $\dom(\sigma') \supseteq \fv(c)$, $\structure \models
c\sigma'$ iff there is a substitution $\sigma
\in\lift{\sat}(\structure, c)$ such that $\sigma' \geq \sigma$.
\end{thm}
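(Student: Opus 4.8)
The statement is a biconditional parameterized by a substitution $\sigma'$ whose domain covers all free variables of $c$, asserting that $\structure \models c\sigma'$ exactly when $\sigma'$ extends some $\sigma \in \lift{\sat}(\structure, c)$. The natural approach is induction on the structure of the restriction $c$, following the recursive clauses that define $\lift{\sat}$. For each syntactic form of $c$, I would prove the two directions of the iff simultaneously, leaning on the semantics of the sublogic given in Section~\ref{sec:structures} and on the specification of the base function $\sat$ stated just before the theorem.

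Let me sketch the plan to be, case by case. For the atomic case $c = p_O(t_1,\ldots,t_n)$, the result reduces to the defining specification of $\sat$: since $\lift{\sat}$ coincides with $\sat$ here, and $\sat(\structure, p_O(\vec{t}))$ is by assumption the set of all substitutions for the output variables $\bigcup_{i \in O(p)} t_i$ that extend to some $\sigma$ with $\structure \models p_O(\vec{t})\sigma$, the equivalence is essentially a restatement of that specification (with $\sigma'$ playing the role of the witnessing extension). The cases $c = \top$ and $c = \bot$ are immediate from the fact that $\lift{\sat}(\structure, \top) = \{\bullet\}$ always contains the empty substitution (and any $\sigma'$ extends $\bullet$), while $\lift{\sat}(\structure, \bot)$ is empty (and $\structure \not\models \bot\sigma'$). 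The disjunction case $c = c_1 \disj c_2$ follows by unfolding the semantics of $\disj$ and the union in the definition, applying the inductive hypothesis to each disjunct. The existential case $c = \exists x. c'$ uses the inductive hypothesis on $c'$ together with the observation that removing $x$ from the computed substitutions corresponds, under the semantics of $\exists$, to projecting out the witness for $x$; I would need to extend $\sigma'$ by a value for $x$ on the semantic side and match it against the pre-projection substitution.

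The main obstacle will be the conjunction case $c = c_1 \conj c_2$, where $\lift{\sat}(\structure, c_1 \conj c_2) = \bigcup_{\sigma \in \lift{\sat}(\structure, c_1)} \sigma + \lift{\sat}(\structure, c_2\sigma)$. Here the second recursive call is on $c_2\sigma$, not $c_2$, so the inductive hypothesis must be applied to a \emph{substituted} restriction whose free variables have shrunk. For the forward direction, from $\structure \models (c_1 \conj c_2)\sigma'$ I would split into $\structure \models c_1\sigma'$ and $\structure \models c_2\sigma'$, use the IH on $c_1$ to extract $\sigma \in \lift{\sat}(\structure, c_1)$ with $\sigma' \geq \sigma$, then argue that $c_2\sigma'$ equals $(c_2\sigma)\sigma'$ (since $\sigma' \geq \sigma$ agrees with $\sigma$ on $\dom(\sigma)$), and apply the IH on $c_2\sigma$ to obtain a compatible $\tau \in \lift{\sat}(\structure, c_2\sigma)$; finally I would verify that $\sigma' \geq \sigma + \tau$ and that $\sigma + \tau$ is a well-defined composition with disjoint domains. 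The reverse direction runs symmetrically. The delicate points, which I would handle with care, are the disjointness of domains in the $+$ operation (ensured by the mode discipline guaranteeing that $c_1$ grounds the input positions of $c_2$) and the definedness hypothesis: because $\lift{\sat}$ is partial, I must propagate the assumption that $\lift{\sat}(\structure, c)$ is defined down to each recursive call, ensuring that every invocation of $\sat$ receives ground input arguments so that the IH genuinely applies.
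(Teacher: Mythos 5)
Your proposal is correct and follows essentially the same route as the paper's proof: induction on the structure of $c$, with the atomic case discharged by the specification of $\sat$, and the conjunction case handled by exactly the paper's key step --- using the IH on $c_1$ to obtain $\sigma_1 \leq \sigma'$, observing $c_2\sigma' = (c_2\sigma_1)\sigma'$, and then applying the IH to the substituted restriction $c_2\sigma_1$ before recombining via $+$. Your remarks on propagating the definedness hypothesis to the recursive calls match what the paper does implicitly at the start of each case (though note the disjointness of domains in $+$ needs only definedness, not the mode discipline, since the theorem does not assume well-modedness).
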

\begin{proof}
See Appendix~\ref{app:enforcement}, Theorem~\ref{thm:sat:correct:app}.
\end{proof}

\begin{eg}\label{eg:lift:sat}
In Example~\ref{eg:reduce}, we informally evaluated $\lift{\sat}$ at
several places. Here, we justify the first two evaluations. In the
first instance, we said that $\lift{\sat}(\structure,
\pred{in}(\ttime,0,\infty) \conj \pred{req}(p,t,\ttime)) =
\{(\ttime,p,t) \mapsto (3,\mbox{Alice},mr)\}$. This follows from the
observation that from the information in the structure $\structure$,
we must have $\sat(\structure, \pred{in}(\ttime,0,\infty)) = \{\ttime
\mapsto 1,\ttime \mapsto 3, \ttime \mapsto 7\}$, $\sat(\structure,
\pred{req}(p,t,3)) = \{(p,t) \mapsto (\mbox{Alice}, mr)\}$ and
$\sat(\structure, \pred{req}(p,t,\ttime)) = \{\}$ for $\ttime \not =
3$. The result of applying $\lift{\sat}$ follows from its definition.

Similarly, we calculated that $\lift{\sat}(\structure',
\pred{in}(\ttime',3,33) \conj \pred{inrole}(q, \mbox{records},\ttime')
\conj \pred{send}(q, \mbox{Alice}, m,\ttime'))\linebreak[6] =
\{(\ttime',q,m) \mapsto (11,\mbox{Bob},M)\}$. This follows because,
from the description of $\structure'$, $\sat(\structure',
\pred{in}(\ttime', 3, 33))\linebreak[6] = \{\ttime' \mapsto 3, \ttime'
\mapsto 7, \ttime' \mapsto 11\}$, $\sat(\structure',
\pred{inrole}(q,\mbox{records},T)) = \{q \mapsto \mbox{Bob}\}$ for $T
= 11$ and $\{\}$ otherwise, and $\sat(\structure',
\pred{send}(q,p,m,\ttime')) = \{(q,p,m,\ttime') \mapsto (\mbox{Bob},
\mbox{Alice}, M, 11)\}$.
\end{eg}

\begin{figure}
\framebox{$\chi_I \vdash c : \chi_O$}
\begin{mathpar}
\inferrule{\forall k \in I(p_O). ~\fv(t_k) \subseteq \chi_I
  \\ \chi_O = \chi_I \cup (\bigcup_{j \in O(p_O)} \fv(t_j))}{
  \chi_I \vdash p_O(t_1,\ldots,t_n) : \chi_O}\and
\inferrule{ }{ \chi_I \vdash \top: \chi_I} \and
\inferrule{ }{ \chi_I \vdash \bot: \chi_I} \and
\inferrule{ \chi_I \vdash c_1 : \chi \\ \chi \vdash c_2:
  \chi_O}{ \chi_I \vdash c_1 \conj c_2: \chi_O}\and
\inferrule{ \chi_I \vdash c_1: \chi_1 \\  \chi_I
  \vdash c_2: \chi_2}{ \chi_I \vdash c_1 \disj
  c_2: \chi_1 \cap \chi_2}\and
\inferrule{\chi_I \vdash c: \chi_O}{
  \chi_I \vdash \exists x.c : \chi_O \backslash
  \{x\}}\and
\end{mathpar}

\framebox{$\chi \vdash \varphi$}
\begin{mathpar}
\inferrule{\forall k.~\fv(t_k) \subseteq \chi}{
  \chi \vdash p(t_1,\ldots,t_k)}\and
\inferrule{ }{ \chi \vdash \top} \and
\inferrule{ }{ \chi \vdash \bot} \and
\inferrule{ \chi \vdash \varphi_1 \\ \chi \vdash \varphi_2}{ \chi
  \vdash \varphi_1 \conj \varphi_2}\and
\inferrule{ \chi \vdash \varphi_1 \\ \chi \vdash \varphi_2}{ \chi
  \vdash \varphi_1 \disj \varphi_2}\and
\inferrule{ \chi \vdash c: \chi_O \\ \vec{x} \subseteq \chi_O \\ \fv(c)
  \subseteq \chi \cup \vec{x} \\ \chi_O \vdash \varphi}{
  \chi \vdash \forall \vec{x}. (c \imp \varphi)}\and
\inferrule{\chi \vdash c: \chi_O \\ \vec{x} \subseteq \chi_O \\ \fv(c)
  \subseteq \chi \cup \vec{x} \\ \chi_O \vdash \varphi}{
  \chi \vdash \exists \vec{x}.(c \conj \varphi)}\and
\end{mathpar}

(In the rules for quantifiers, bound variables $x$ or $\vec{x}$ must
be renamed so that they are fresh.)
\caption{Moding Rules}
\label{fig:moding}
\end{figure}

\paragraph{Mode Analysis}
Next, we define a static check of restrictions to rule out those on
which $\lift{\sat}$ is not defined, e.g.,
$\pred{send}(p_1,p_2,m,\ttime) \conj \pred{tagged}(m',q,t,\ttime')$
described earlier. This static check is what we call the mode
analysis. A restriction that passes the check is called
\emph{well-moded}.  Formally, we define well-modedness as a relation
$\chi_I \vdash c: \chi_O$, where $\chi_I$ and $\chi_O$ are sets of
variables. If the relation holds, then for any $\sigma$ with
$\dom(\sigma) \supseteq \chi_I$ and any $\structure$,
$\lift{\sat}(\structure, c\sigma)$ is defined and, further, any
substitution in it contains all of $\chi_O \backslash \chi_I$ in its
domain. ($\chi_I$ and $\chi_O$ are analogues of inputs and outputs for
restrictions.)

The relation $\chi_I \vdash c: \chi_O$ is defined by the rules of
Figure~\ref{fig:moding}, which also constitute a linear-time decision
procedure for deciding the relation (with inputs $c$ and $\chi_I$ and
output $\chi_O$). We explain some of the rules. An atom
$p(t_1,\ldots,t_k)$ is well-moded if the free variables (abbreviated
$\fv$) of input positions are ground (premise $\forall k \in
I(p_O). ~\fv(t_k) \subseteq \chi_I$ of the first rule) and the output
$\chi_O$ equals $\chi_I$ (which is already ground) unioned with
$\bigcup_{j \in O(p_O)} \fv(t_j)$ (all of which must be in the domain
of $\sat(\structure, p(t_1,\ldots,t_n))$). The rule for conjunctions
$c_1 \conj c_2$ chains the outputs $\chi$ of $c_1$ into the inputs of
$c_2$. The following theorem establishes that $\lift{\sat}$ is total
on well-moded restrictions and also establishes the relation between
$\chi_I$, $\chi_O$ and the substitutions in the output of
$\lift{\sat}$.

\begin{thm}[Totality of $\lift{\sat}$]\label{thm:sat:total}
If $\chi_I \vdash c: \chi_O$, then for all structures $\structure$ and
all substitutions $\sigma$ with $\dom(\sigma) \supseteq \chi_I$,
$\lift{\sat}(\structure, c\sigma)$ is defined and, further, for each
substitution $\sigma' \in \lift{\sat}(\structure, c\sigma)$, $\chi_I \cup
\dom(\sigma') \supseteq \chi_O$.
\end{thm}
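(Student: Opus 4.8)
The plan is to prove the statement by structural induction on the derivation of the moding judgment $\chi_I \vdash c : \chi_O$ (equivalently, by induction on the restriction $c$). The statement is already in the right shape to serve as its own induction hypothesis, since it is universally quantified over all structures $\structure$ and all substitutions $\sigma$ with $\dom(\sigma) \supseteq \chi_I$; I would not need to strengthen it. The base cases $c = \top$ and $c = \bot$ are immediate: $\lift{\sat}(\structure, \top\sigma) = \{\bullet\}$ and $\lift{\sat}(\structure, \bot\sigma) = \{\}$ are both defined, each rule sets $\chi_O = \chi_I$, and the lone output in the first case is the empty substitution while the second case is vacuous, so $\chi_I \cup \dom(\sigma') \supseteq \chi_O$ holds trivially.

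For the atom case $c = p_O(t_1,\ldots,t_n)$, the moding rule guarantees $\fv(t_k) \subseteq \chi_I$ for every input position $k \in I(p_O)$; since $\dom(\sigma) \supseteq \chi_I$, each input argument $t_k\sigma$ is ground, so the specification of $\sat$ makes $\lift{\sat}(\structure, c\sigma) = \sat(\structure, c\sigma)$ defined. Each returned $\sigma'$ is a substitution for the variables of the output positions after applying $\sigma$, i.e.\ for $\fv(t_j\sigma)$ with $j \in O(p_O)$; any output variable already in $\chi_I$ is grounded by $\sigma$ and thus contributed by $\chi_I$, while the remaining output variables are contributed by $\dom(\sigma')$, so $\chi_I \cup \dom(\sigma') \supseteq \chi_I \cup \bigcup_{j\in O(p_O)}\fv(t_j) = \chi_O$. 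The conjunction case is the crux. Writing the incoming substitution as $\theta$, the defining equation gives $\lift{\sat}(\structure, (c_1\conj c_2)\theta) = \bigcup_{\sigma_1 \in \lift{\sat}(\structure, c_1\theta)} \sigma_1 + \lift{\sat}(\structure, (c_2\theta)\sigma_1)$. I would apply the induction hypothesis to the premise $\chi_I \vdash c_1 : \chi$ to conclude that the outer call is defined and that each $\sigma_1$ satisfies $\chi_I \cup \dom(\sigma_1) \supseteq \chi$. The key observation is that the composite $\theta\sigma_1$ has domain $\dom(\theta) \cup \dom(\sigma_1) \supseteq \chi_I \cup \dom(\sigma_1) \supseteq \chi$, so the induction hypothesis for $\chi \vdash c_2 : \chi_O$ applies to the inner call $\lift{\sat}(\structure, c_2(\theta\sigma_1))$, yielding definedness and $\chi \cup \dom(\sigma_2) \supseteq \chi_O$ for each inner result $\sigma_2$. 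Chaining the two containments gives $\chi_I \cup \dom(\sigma_1 + \sigma_2) \supseteq \chi_O$.

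For disjunction, the two subcalls are defined by the induction hypotheses on $\chi_I \vdash c_1 : \chi_1$ and $\chi_I \vdash c_2 : \chi_2$, and every output originates in one of them, so it satisfies $\chi_I \cup \dom(\sigma') \supseteq \chi_i \supseteq \chi_1 \cap \chi_2 = \chi_O$. For $\exists x.c'$, using the freshness side condition I would assume $x \notin \chi_I \cup \dom(\theta)$; then $\lift{\sat}(\structure, (\exists x.c')\theta) = \lift{\sat}(\structure, c'\theta) \setminus \{x\}$ is defined by the induction hypothesis on $\chi_I \vdash c' : \chi_O'$, where the rule sets $\chi_O = \chi_O' \setminus \{x\}$. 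Each existential output is $\sigma' = \sigma'' \setminus \{x\}$ for some inner $\sigma''$ with $\chi_I \cup \dom(\sigma'') \supseteq \chi_O'$; since $x \notin \chi_I$, removing $x$ from both sides preserves the containment and gives $\chi_I \cup \dom(\sigma') \supseteq \chi_O$. I expect the main obstacle to be the bookkeeping in the conjunction case: establishing that $\dom(\sigma_1)$ and $\dom(\sigma_2)$ are disjoint so that the composition $\sigma_1 + \sigma_2$ is well-defined, and that $\dom(\theta\sigma_1)$ genuinely meets the input set $\chi$ demanded by the induction hypothesis for $c_2$. Both follow from the domain-containment clause of the induction hypothesis, which forces the substitutions produced by $\lift{\sat}$ to introduce only variables fresh with respect to the substitution already applied, but making this precise is where the care is needed.
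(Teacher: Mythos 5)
Your proposal is correct and follows essentially the same route as the paper's own proof: induction on the derivation of $\chi_I \vdash c : \chi_O$ with the identical case-by-case treatment — groundness of input positions via the first premise in the atom case, chaining $\dom(\sigma) \cup \dom(\sigma_1) \supseteq \chi$ so the induction hypothesis for $c_2$ applies to the composed substitution in the conjunction case, the intersection argument for disjunction, and freshness of the bound variable to preserve the containment after removing $x$ in the existential case. The only difference is cosmetic: you explicitly flag the disjointness needed for $\sigma_1 + \sigma_2$ to be well-defined, which the paper leaves implicit in its definition of $+$.
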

\begin{proof}
See Appendix~\ref{app:enforcement},
Theorem~\ref{thm:sat:total:app}.
\end{proof}

We extend the mode-check on restrictions to formulas $\varphi$ of the
sublogic. The objective of this mode-check is two-fold. First, the
check ensures that all restrictions occurring in $\varphi$ are
well-moded in the sense described above. Second, for quantifiers
$\forall \vec{x}. (c \imp \varphi')$ and $\exists \vec{x}. (c \conj
\varphi')$, the check ensures that the quantified variables $\vec{x}$
are contained in the \emph{outputs} ($\chi_O$) of the restriction
$c$. (Hence, by Theorems~\ref{thm:sat:correct}
and~\ref{thm:sat:total}, any substitution in
$\lift{\sat}(\structure,c)$ grounds $\vec{x}$, which is central to the
termination of $\reduce$.) The mode-check is formalized as the
relation $\chi \vdash \varphi$, meaning that for any substitution
$\sigma$ with $\dom(\sigma) \supseteq \chi$, the formula $\varphi
\sigma$ is well-moded. Its straightforward rules are shown in
Figure~\ref{fig:moding}. The rules constitute a linear-time decision
procedure for checking the relation (with inputs $\chi$ and
$\varphi$). In the rules for $\forall \vec{x}. (c \imp \varphi')$ and
$\exists \vec{x}. (c \conj \varphi')$, the first premises check that
$c$ is well-moded. The second premises ensure that the variables
$\vec{x}$ are contained in the output $\chi_O$ of the mode check on
$c$. The third premises ensure that $c$ is closed. It can easily be
checked that if $\chi \vdash \varphi$, then $\fv(\varphi) \subseteq
\chi$.

We call a formula $\varphi$ well-moded if $\{\} \vdash \varphi$, which
we abbreviate to $\vdash \varphi$. The following theorem shows that on
well-moded formulas, the function $\reduce$ is total. Further on a
well-moded input, the output is also well-moded (so the output can
used as input in a subsequent iteration).

\begin{thm}[Totality of $\reduce$]\label{thm:reduce:total}
If $\vdash \varphi$ then there is a $\psi$ such that
$\reduce(\structure,\varphi) = \psi$ and $ \vdash \psi$.
\end{thm}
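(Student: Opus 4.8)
The plan is to prove both assertions---that $\reduce(\structure,\varphi)$ is defined and that its output $\psi$ satisfies $\vdash\psi$---simultaneously, by induction on $\varphi$, taking as the well-founded measure the number of logical connectives and quantifiers in $\varphi$. This choice of measure is essential: the quantifier clauses of $\reduce$ recurse not on $\varphi$ but on its ground instances $\varphi[\vec{t_i}/\vec{x}]$, and since substitution of ground terms for variables leaves the count of connectives and quantifiers unchanged, each instance is strictly smaller than $\forall \vec{x}.(c \imp \varphi)$, keeping the induction well-founded. Throughout, the structure $\structure$ is fixed and universally quantified.

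The propositional cases ($P$, $\top$, $\bot$, $\conj$, $\disj$) are routine: inversion of the moding rule for $\{\} \vdash \varphi$ yields well-modedness of the immediate subformulas, the induction hypothesis discharges the recursive calls, and the moding rules of Figure~\ref{fig:moding} reassemble $\vdash\psi$; an unknown atom reduces to itself, which is ground and hence trivially well-moded. The quantifier case $\forall \vec{x}.(c \imp \varphi)$ carries the weight. Inverting its moding rule gives $\{\} \vdash c : \chi_O$, $\vec{x} \subseteq \chi_O$, $\fv(c) \subseteq \vec{x}$, and $\chi_O \vdash \varphi$. Totality of $\lift{\sat}$ (Theorem~\ref{thm:sat:total}), applied with input context $\{\}$, guarantees that $\lift{\sat}(\structure,c)$ is defined and finite and that every returned $\sigma_i$ satisfies $\dom(\sigma_i) \supseteq \chi_O \supseteq \vec{x}$; hence each $\vec{t_i} = \sigma_i(\vec{x})$ is ground and there are only finitely many recursive calls.

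To apply the induction hypothesis to each call $\reduce(\structure, \varphi[\vec{t_i}/\vec{x}])$ I must first establish $\vdash \varphi[\vec{t_i}/\vec{x}]$, and this is the crux. I would obtain it from two auxiliary lemmas on the moding judgments, each proved by a direct induction on moding derivations. First, an \emph{output-bound lemma}: if $\chi_I \vdash c : \chi_O$ then $\chi_O \subseteq \chi_I \cup \fv(c)$; applied with $\chi_I = \{\}$ and $\fv(c) \subseteq \vec{x}$ it yields $\chi_O \subseteq \vec{x}$, which together with $\vec{x} \subseteq \chi_O$ forces the pivotal identity $\chi_O = \vec{x}$. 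Second, a \emph{substitution lemma}: if $\chi \vdash \varphi$ and $\sigma$ maps a subset $D \subseteq \chi$ to ground terms, then $(\chi \setminus D) \vdash \varphi\sigma$, with a companion statement $(\chi_I \setminus D) \vdash c\sigma : (\chi_O \setminus D)$ for restrictions (needed to treat the restriction nested inside the quantifier clause). Instantiating the substitution lemma with $\chi = \chi_O = \vec{x}$, $D = \vec{x}$, and $\sigma = [\vec{t_i}/\vec{x}]$ delivers exactly $\{\} \vdash \varphi[\vec{t_i}/\vec{x}]$, so the induction hypothesis produces well-moded $\psi_i$.

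It remains to show that the residual conjunct $\psi' = \forall \vec{x}.((c \conj \vec{x} \not\in S) \imp \varphi)$ is well-moded. Treating $\vec{x} \not\in S$ as an objective predicate all of whose argument positions are in input mode, so that $\vec{x} \vdash (\vec{x} \not\in S) : \vec{x}$, I chain $\{\} \vdash c : \vec{x}$ with this through the conjunction rule to get $\{\} \vdash (c \conj \vec{x} \not\in S) : \vec{x}$; then $\vec{x} \subseteq \vec{x}$, $\fv(c \conj \vec{x}\not\in S) \subseteq \vec{x}$, and $\vec{x} \vdash \varphi$ (which is $\chi_O \vdash \varphi$) let the $\forall$-rule conclude $\vdash\psi'$. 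The conjunction rule then assembles $\vdash (\psi_1 \conj \cdots \conj \psi_n \conj \psi')$, and since $\lift{\sat}$ returned a finite set the whole computation terminates; the existential case is exactly dual. The main obstacle is formulating and proving the substitution lemma so that it tracks the input and output contexts $\chi_I,\chi_O$ correctly through restrictions and their existential quantifiers. Once that and the output-bound lemma are in hand, the identity $\chi_O = \vec{x}$ makes every recursive call land in the empty context, and the remainder of the argument is mechanical.
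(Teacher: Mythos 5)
Your proof is correct, but it is built on a genuinely different inductive scaffolding than the paper's. The paper proves a strengthened claim --- if $\chi \vdash \varphi$ and $\dom(\sigma) \supseteq \chi$, then there is a $\psi$ with $\reduce(\structure,\varphi\sigma) = \psi\sigma$ and $\chi \vdash \psi$ --- and inducts on the moding derivation $\chi \vdash \varphi$; recursive calls on ground instances are absorbed because the derivation under induction never changes, only the ambient substitution does, and the output is exhibited in the factored form $\psi\sigma$. You instead keep the theorem in its closed-formula form and induct on a connective-and-quantifier count that ground substitution preserves; the pivot that makes this work is your identity $\chi_O = \vec{x}$, which holds precisely because the ambient context is empty (the output-bound lemma plus $\fv(c) \subseteq \vec{x}$ give $\chi_O \subseteq \vec{x}$, and the rule premise gives the converse), so every recursive call $\reduce(\structure, \varphi'[\vec{t_i}/\vec{x}])$ lands back in the empty context and the un-generalized induction hypothesis applies. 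Both routes lean on the same two auxiliary lemmas --- your output-bound lemma is the paper's Lemma~\ref{lem:mode:condition:inclusion}, and your substitution lemma with its restriction companion is the paper's Lemma~\ref{lem:subst:mode} --- but the paper additionally needs mode weakening (Lemma~\ref{lem:weaken:mode}) to lift $\psi_i[\vec{t_i}/\vec{x}]$ from context $\chi_O \setminus \vec{x}$ up to $\chi$, a step your empty-context invariant renders unnecessary. What the paper's generalization buys is a statement meaningful for open formulas under arbitrary substitutions and a purely structural induction on derivations; what yours buys is a more elementary argument for the theorem as literally stated, at the cost of justifying a numeric termination measure and the invariant that closed well-moded inputs stay closed and well-moded through recursion. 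Your explicit moding of $\vec{x} \not\in S$ (all argument positions in input mode) to derive $\vdash \psi'$ is also a detail the paper passes over silently, and it is a welcome addition.
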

\begin{proof}
See Appendix~\ref{app:enforcement},
Theorem~\ref{thm:reduce:total:app}.
\end{proof}

\begin{eg}
It can easily be checked that the formulas $\globally \alpha_{pol1}$
and $\globally \alpha_{pol2}$ defined in
Example~\ref{eg:policies:translated} are all well-moded (e.g.,~$\vdash
{\globally \alpha_{pol1}}$) using the definitions of $I$ and $O$
presented at the beginning of this subsection.
\end{eg}


\section{Specific Instances of Enforcement}
\label{sec:instances}

We analyze the behavior of our enforcement algorithm on two restricted
classes of structures.  First, we consider \emph{objectively-complete}
structures -- those that map every objective atom to either $\symt$ or
$\symf$ (Section~\ref{sec:instance:oc}). We show that for such
structures $\structure$, the output of $\reduce(\structure,\varphi)$
can be simplified to conjunctions and disjunctions of ground
subjective atoms through trivial rewriting (e.g., replacing $\top
\conj \psi$ with $\psi$), thus making it more amenable to human
inspection. We also obtain a decision procedure to decide the truth
and falsity of input formulas without subjective predicates.

Second, we consider \emph{past-complete} structures, those that have
complete information up to a specific point of time
(Section~\ref{sec:instance:pc}). This corresponds to the standard
assumption in every existing work on enforcement of temporal
properties that the audit log contains all past information. In
particular, we show that on past-complete traces, our algorithm yields
a method to find violations of safety
properties~\cite{alpern87:safety} and satisfactions of co-safety
properties~\cite{bauer10:rm} at the earliest.



\subsection{Execution on Objectively-Complete Structures}
\label{sec:instance:oc}

We analyze the output of $\reduce(\structure, \varphi)$ when
$\structure$ is objectively-complete. Although
objective-\linebreak[6]completeness requires that truth and falsity of
objective atoms be determined even in the future, it may model some
realistic settings. For instance, after audit-relevant information has
been gathered from all possible sources, it may be assumed that any
fact not explicitly seen is, by default, false. The resulting
structure would be objectively-complete. Objectively-complete
structures correspond to the case of subjective incompleteness from
Section~\ref{sec:structures}.

\begin{defn}
A structure $\structure$ is called objectively-complete if for all
objective atoms $P_O$, $\rho_\structure(P_O) \in \{\symt, \symf\}$.
\end{defn}

If a structure $\structure$ is objectively-complete, then during the
execution of $\reduce(\structure, \varphi)$, \emph{all} relevant
substitutions can be found for quantifiers and \emph{all} objective
atoms can be replaced with either $\top$ or $\bot$. Indeed, we show in
this subsection that if $\structure$ is objectively-complete, then the
output, $\psi$, of $\reduce(\structure, \varphi)$ can be
\emph{rewritten} (using straightforward rewrite rules) to a logically
equivalent formula that is either $\top$ or $\bot$ or contains only
subjective atoms, conjunctions and disjunctions. This has practical
importance because, as compared to a formula with quantifiers, a
formula containing only subjective atoms, conjunctions and
disjunctions is more amenable to human inspection and audit.

There are two kinds of rewriting we need to perform on the output
$\psi$ to reduce it to our desired form. First, we need to eliminate
unnecessary occurrences of $\top$ and $\bot$ that arise either from
occurrences of $\top$ and $\bot$ in the input formula, or as
replacements of atoms that evaluate to $\symt$ and $\symf$
respectively. Such occurrences can be eliminated by repeatedly
applying the following eight rewriting rules anywhere in the output:
\[\begin{array}{l@{\hspace{20mm}}l}
\psi \conj \top \rewrite \psi & \top \conj \psi \rewrite \psi \\

\psi \conj \bot \rewrite \bot & \bot \conj \psi \rewrite \bot \\

\psi \disj \top \rewrite \top & \top \disj \psi \rewrite \top \\

\psi \disj \bot \rewrite \psi & \bot \disj \psi \rewrite \psi \\
\end{array}\]

For example, if $\varphi = P_O \conj P_S$ for an objective atom $P_O$
and a subjective atom $P_S$ and $\rho_\structure(P_O) = \symt$, then
$\reduce(\structure,\varphi) = \top \conj P_S$. This can be simplified
to $P_S$ using the second rule above. Note that each rule above
preserves logical equivalence of formulas.

Second, we need to eliminate those quantified subformulas in the
output that are called $\psi'$ in the definition of $\reduce$
(Figure~\ref{fig:reduce}). These have the forms $\forall \vec{x}. ((c
\conj x \not\in S) \imp \varphi)$ and $\exists \vec{x}. ((c \conj x
\not\in S) \conj \varphi)$. Because $S$ contains all instances of
$\vec{x}$ that satisfy $c$, $(c \conj x \not \in S)$ has no satisfying
instances in $\structure$, i.e., $\lift{\sat}(\structure, (c \conj x
\not \in S)) = \{\}$. Further, because $\structure$ is
objectively-complete, any extension $\structure'$ of $\structure$ must
agree with $\structure$ on valuation of objective atoms, so, by
Theorem~\ref{thm:sat:correct}, $\lift{\sat}(\structure', (c \conj x
\not \in S)) = \{\}$. Consequently, $\forall \vec{x}. ((c \conj x
\not\in S) \imp \varphi)$ is logically equivalent to $\top$ in all
extensions of $\structure$ and $\exists \vec{x}. ((c \disj x \not\in
S) \imp \varphi)$ is logically equivalent to $\bot$ in all extensions
of $\structure$. This immediately yields the following two rules for
elimination of quantifiers from the output of $\reduce$.
\[\begin{array}{l@{\hspace{20mm}}l}
\forall \vec{x}. (c \imp \varphi) \rewrite \top &

\exists \vec{x}. (c \conj \varphi) \rewrite \bot
\end{array}\]
We point out that, unlike the eight rewriting rules presented earlier,
the two rewriting rules above do not preserve logical equivalence in
general, but they preserve logical equivalence when applied to the
output $\psi = \reduce(\structure, \varphi)$ for objectively-complete
$\structure$.

Let $\rewrite^*$ denote the reflexive-transitive closure of
$\rewrite$. Since $\rewrite$ makes formulas strictly smaller, it
cannot be applied indefinitely to any formula.  Further, even though a
formula may be rewritten in many ways using a single application of
$\rewrite$, the formula obtained by applying $\rewrite$ exhaustively
starting from a fixed initial formula is unique because $\rewrite$ is
confluent.

\begin{thm}\label{thm:oc}
Suppose $\structure$ is objectively-complete, $\vdash \varphi$ and
$\psi = \reduce(\structure, \varphi)$. Then $\psi \rewrite^* \psi'$,
where (1)~$\psi'$ is either $\top$, or $\bot$, or contains only
subjective atoms and the connectives $\conj$, $\disj$, and (2)~For all
$\structure' \geq \structure$, $\structure' \models \psi$ iff
$\structure' \models \psi'$ and $\structure' \models \dual{\psi}$ iff
$\structure' \models \dual{\psi'}$.
\end{thm}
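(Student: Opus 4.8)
The plan is to exhibit one specific, equivalence-preserving rewriting sequence that drives $\psi = \reduce(\structure,\varphi)$ to a formula of the desired shape, and then to invoke confluence of $\rewrite$ to conclude that this formula is the unique normal form $\psi'$. I deliberately avoid arguing that \emph{every} application of $\rewrite$ preserves logical equivalence, because the two quantifier-elimination rules are unsound in isolation; only their application to the residual quantifiers produced by $\reduce$ is justified. Thus the work splits into (i) pinning down the syntactic shape of $\psi$, (ii) showing that the residual quantifiers collapse soundly, and (iii) letting confluence identify the endpoint of the safe sequence with $\psi'$.

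First I would prove, by induction on $\varphi$, a \emph{shape lemma}: for objectively-complete $\structure$ and $\vdash\varphi$, the formula $\reduce(\structure,\varphi)$ is a $\conj/\disj$-tree whose leaves are each $\top$, $\bot$, a subjective atom $P_S$, or a \emph{residual quantifier} of the form $\forall\vec{x}.((c\conj\vec{x}\not\in S)\imp\varphi_0)$ or $\exists\vec{x}.((c\conj\vec{x}\not\in S)\conj\varphi_0)$. The atom and connective cases are immediate from Figure~\ref{fig:reduce}; objective-completeness forces every objective atom to reduce to $\top$ or $\bot$, so the only atomic leaves are subjective. In the $\forall$ case the recursive results $\psi_i=\reduce(\structure,\varphi_0[\vec{t_i}/\vec{x}])$ are trees of the correct shape by the induction hypothesis (the instantiated bodies being well-moded by the moding rules of Figure~\ref{fig:moding}), while the final conjunct $\psi'$ is exactly a residual quantifier; the $\exists$ case is dual. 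The structural consequence I want is that \emph{every} quantifier occurring at top level in $\psi$ is residual --- the only non-residual quantifiers sit inside the bodies $\varphi_0$ of residual leaves and are therefore destined to be erased together with those leaves.

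Next I would supply the semantic core, a \emph{decidedness lemma}: if $\structure'$ is objectively-complete and $c$ is a closed restriction, then either $\structure'\models c$ or $\structure'\models\dual{c}$, proved by induction on the restriction syntax with the base case that objective atoms take value $\symt$ or $\symf$. Note that any $\structure'\geq\structure$ inherits objective-completeness, with objective atoms keeping their $\structure$-valuations. Using this I would show each residual quantifier equivalent, in both truth and falsity, to $\top$ (for $\forall$) or $\bot$ (for $\exists$) in every extension $\structure'\geq\structure$. Since $S$ collects all instances satisfying $c$ and objective valuations are frozen, Theorem~\ref{thm:sat:correct} gives $\lift{\sat}(\structure',(c\conj\vec{x}\not\in S))=\{\}$, i.e.\ no instance of $\vec{x}$ satisfies the residual restriction in $\structure'$. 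The decidedness lemma upgrades this ``not satisfied'' to ``dual satisfied'' for every instance, making the residual universal vacuously true; its dual $\exists\vec{x}.((c\conj\vec{x}\not\in S)\conj\dual{\varphi_0})$ still has no satisfying instance, so $\structure'\not\models\dual{\psi'}$, matching the behaviour of $\top$. The existential case is symmetric. Separately I would record that each of the eight Boolean rules preserves both truth and falsity in \emph{every} structure; the eight rules are closed under the syntactic dual, so the falsity clause reduces to the truth clause.

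Finally I would assemble the argument. Starting from $\psi$, I rewrite each top-level residual-quantifier leaf to $\top$ or $\bot$ (sound by the previous paragraph), then apply the Boolean rules exhaustively. The quantifier steps delete \emph{all} quantifiers --- including the raw ones buried in residual bodies --- so none remain, and exhaustive Boolean simplification leaves either $\top$, or $\bot$, or a $\top/\bot$-free $\conj/\disj$-combination of subjective atoms. This endpoint admits no further rewrite, hence is a normal form; by confluence and termination of $\rewrite$ it is \emph{the} normal form $\psi'$, giving part~(1). Because every step of this particular sequence preserved truth and falsity in all $\structure'\geq\structure$, I obtain $\structure'\models\psi$ iff $\structure'\models\psi'$ and $\structure'\models\dual{\psi}$ iff $\structure'\models\dual{\psi'}$, which is part~(2). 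I expect the main obstacle to be exactly the tension between confluence and the unsoundness of the bare quantifier rules: the proof must not claim step-by-step equivalence for arbitrary reductions but instead commit to the safe, residual-first strategy, with the decidedness lemma doing the real work of turning non-satisfiability into genuine vacuous truth in the three-valued setting.
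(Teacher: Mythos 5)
Your proof is correct, and its ingredients coincide with the paper's: your decidedness lemma is the paper's excluded-middle lemma for restrictions on objectively-complete structures (Lemma~\ref{lem:oc:restriction:em}), your observation that extensions freeze objective valuations is Lemma~\ref{lem:oc:restriction:extend}, your collapse of residual quantifiers via Theorem~\ref{thm:sat:correct} is exactly the paper's argument that the residual $\psi''$ is equivalent to $\top$ (resp.\ $\bot$) in every extension, and your exhaustive Boolean simplification is Lemma~\ref{lem:oc:noquant}. The architecture differs, however: the paper proves the theorem by a single structural induction on $\varphi$ that builds the rewrite sequence and the equivalence simultaneously, invoking Lemma~\ref{lem:oc:noquant} at each $\conj/\disj$ node and treating residual quantifiers inside the induction, whereas you factor the argument into a purely syntactic shape lemma characterizing $\reduce$'s output (a $\conj/\disj$-tree over $\top$, $\bot$, subjective atoms, and residual quantifiers) followed by a global two-phase rewriting strategy justified leaf by leaf. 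Your factoring makes explicit, more so than the paper's proof, why the quantifier rules --- unsound in general --- are applied only to residual quantifiers; the price is glue that the paper's induction provides for free: you tacitly need the congruence fact that replacing a subformula by one equivalent in both truth and falsity, inside a $\conj/\disj$ context, preserves equivalence of the whole formula in the three-valued semantics, and the well-modedness of the instantiated bodies $\varphi_0[\vec{t_i}/\vec{x}]$, which your parenthetical glosses over, actually requires the mode inversion, substitution, and weakening lemmas (Lemmas~\ref{lem:mode:condition:inclusion}, \ref{lem:subst:mode}, and~\ref{lem:weaken:mode}) exactly as in the paper's quantifier cases. Finally, your appeal to confluence is superfluous: the theorem only asserts existence of some $\psi'$ with $\psi \rewrite^* \psi'$ satisfying (1) and (2), so the endpoint of your particular safe rewriting sequence already witnesses the claim without identifying it as the unique normal form.
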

\begin{proof}
See Appendix~\ref{app:instances}, Theorem~\ref{thm:oc:app}.
\end{proof}

An interesting special case arises on inputs $\varphi$ without any
subjective predicates. In this case, it can be proved by induction on
$\varphi$ that if $\structure$ is objectively-complete, then either
$\structure \models \varphi$ or $\structure \models \dual{\varphi}$
(either $\varphi$ is true in $\structure$ or it is
false). Interestingly, for such inputs, Theorem~\ref{thm:oc} yields a
\emph{decision procedure} for determining the truth or falsity of
$\varphi$ in $\structure$. The proof of this fact is
straightforward. By minimality of $\reduce$
(Theorem~\ref{thm:minimality}), the output $\psi$ of
$\reduce(\structure, \varphi)$ cannot contain any subjective atoms if
$\varphi$ does not contain them, so neither can the formula $\psi'$
obtained by rewriting in Theorem~\ref{thm:oc}. Hence, $\psi'$ must be
either $\top$ or $\bot$. If $\psi' = \top$, then by
Theorem~\ref{thm:correctness}, $\structure \models \varphi$, and if
$\psi' = \bot$, then by the same theorem, $\structure \models
\dual{\varphi}$. This is a decision procedure because both $\reduce$
and $\rewrite^*$ terminate.

\subsection{Execution on Past-Complete Structures}
\label{sec:instance:pc}

Next, we analyze our enforcement algorithm on structures that have
complete information up to a specific point of time, say
$\ttime_0$. We call such structures $\ttime_0$-past-complete or,
briefly, $\ttime_0$-complete. Past-completeness corresponds to future
incompleteness from Section~\ref{sec:structures} and is practically
relevant because in many cases, audit logs record all relevant events
as they happen and the \emph{entire} history is available to an
enforcement algorithm. In fact, this is a standard assumption in all
existing literature on either runtime or post-hoc enforcement of
temporal properties. The classic result in this context is that, under
this assumption, a runtime monitor can detect both violation of
so-called safety properties (a given bad event never happens) and
satisfaction of so-called co-safety properties (a given good event
happens at some time either in the past or in the future) at the
earliest possible time. In the rest of this subsection, we show that
on past-complete structures similar results hold for our enforcement
method.

We start by formally defining past-complete structures, then adapt a
standard characterization of safety and co-safety properties in
temporal logic to our setting, and finally prove that the function
$\reduce$, together with rewriting $\rewrite$, yields a method to
enforce both safety and co-safety properties. It is important to
mention here that violation or satisfaction of a property cannot be
defined formally if the property has subjective
predicates. Consequently, we assume in this subsection, like existing
literature on the subject, that policies do not contain subjective
predicates.

\begin{defn}
Given a ground time $\ttime_0$, a structure $\structure$ is called
$\ttime_0$-past-complete or $\ttime_0$-complete if the following two
conditions hold:
\begin{enumerate}
\item For all predicates $p$, all ground $t_1,\ldots,t_n$ and all
  $\ttime \leq \ttime_0$, $\rho_\structure(p(t_1,\ldots,t_n,\ttime))
  \in \{\symt,\symf\}$.
\item For all ground $\ttime_1,\ttime_2,\ttime_3$ such that $\ttime_1
  \leq \ttime_0$,
  $\rho_\structure(\pred{in}(\ttime_1,\ttime_2,\ttime_3)) \in
  \{\symt,\symf\}$.
\end{enumerate}
\end{defn}
The first condition means that the truth or falsity of every atom in
the temporal logic can be determined at time $\ttime$ if $\ttime \leq
\ttime_0$. The second condition states that $\structure$ records all
relevant states up to time $\ttime_0$.

\paragraph{Safety and Co-safety}
Informally, a safety property states that a specified bad condition is
never satisfied. Dually, a co-safety property states that a specified
good condition is satisfied at some time (either in the past or in the
future). Although the two kinds of properties are often characterized
in terms of traces (semantically)~\cite{alpern87:safety,bauer10:rm},
characterizations of the two kinds of properties as classes of
formulas in logic are more relevant for us.  It is
known~\cite{manna95:temporal} that safety properties correspond to
formulas of the form $\globally \alpha_p$, where $\globally$ is the
``in every state'' operator introduced in
Example~\ref{eg:policies:translated} and $\alpha_p$ is an arbitrary
formula of the temporal logic not containing any future operators
($\boxp$ and $\until$). In words, $\globally \alpha_p$ means that in
every state (the bad condition) $\neg\alpha_p$ does not hold. As an
illustration, the policy $\globally \alpha_{pol1}$ in
Example~\ref{eg:policies:translated} is a safety property, but
$\globally \alpha_{pol2}$ is not because it contains a future
operator. Dually, co-safety properties can be characterized as
formulas of the form $\eventually \alpha_p = \exists
\ttime. (\pred{in}(\ttime,0,\infty) \conj \trans{\ttime}{\alpha_p})$,
informally meaning that in some state $\ttime$, (the good condition)
$\alpha_p$ holds.\footnote{We have not seen this characterization of
  co-safety properties in literature, but it is easily derived as the
  dual of the known characterization of safety properties.}

We say that a safety property $\globally \alpha_p$ is \emph{violated}
at time $\ttime$ in a structure $\structure$ if $\structure \models
\dual{\trans{\ttime}{\alpha_p}}$. In other words, $\globally \alpha_p$
is violated at time $\ttime$ if at that time, the negation of
$\alpha_p$ holds in $\structure$. Similarly, we say that a co-safety
property $\eventually \alpha_p$ is \emph{satisfied} at time $\ttime$
in a structure $\structure$ if $\structure \models
\trans{\ttime}{\alpha_p}$.

Our first result (Theorem~\ref{thm:safety}) is that if a safety
property $\globally \alpha_p$ is violated at time $\ttime$ in a
structure $\structure$ that is $\ttime_0$-complete ($\ttime \leq
\ttime_0$), then $\reduce(\structure, \globally \alpha_p) \rewrite^*
\bot$ (and conversely). This result is important because it implies
that violations of safety properties can be detected in the \emph{next
  iteration of enforcement after they occur} if audit logs contain all
past information. An analogous result -- Theorem~\ref{thm:cosafety} --
holds for co-safety properties, wherein satisfaction can be detected
at the earliest. The justification for both theorems is similar to
that for Theorem~\ref{thm:oc}, but more involved. Because both
$\reduce$ and $\rewrite^*$ terminate, the theorems also provide
decision procedures for enforcing safety and co-safety properties on
past-complete structures.

\begin{thm}[Enforcement of safety properties]\label{thm:safety}
Suppose ${\globally \alpha_p}$ is a safety property, $\vdash
{\globally \alpha_p}$, $\structure$ is $\ttime_0$-complete, and for
all $\ttime$, $(\rho_\structure(\pred{in}(\ttime,0,\infty)) = \symt)
\Rightarrow \ttime \leq \ttime_0$. Then, $\reduce(\structure,
\globally \alpha_p) \rewrite^* \bot$ iff there is a $\ttime$ such that
$\structure \models \pred{in}(\ttime,0,\ttime_0)$ and $\structure
\models \dual{\trans{\ttime}{\alpha_p}}$.
\end{thm}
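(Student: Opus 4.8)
The plan is to unfold $\reduce$ on the top-level universal quantifier of $\globally \alpha_p = \forall \ttime.\,(\pred{in}(\ttime,0,\infty) \imp \trans{\ttime}{\alpha_p})$, reduce the whole statement to a single per-time lemma, and then reassemble. By the definition of $\reduce$ in Figure~\ref{fig:reduce}, $\reduce(\structure, \globally \alpha_p) = \psi_1 \conj \cdots \conj \psi_n \conj \psi'$, where $\{\tau_1,\ldots,\tau_n\}$ are the instances returned by $\lift{\sat}(\structure, \pred{in}(\ttime,0,\infty))$, each $\psi_i = \reduce(\structure, \trans{\tau_i}{\alpha_p})$, and $\psi' = \forall \ttime.\,((\pred{in}(\ttime,0,\infty) \conj \ttime \notin S) \imp \trans{\ttime}{\alpha_p})$ with $S = \{\tau_1,\ldots,\tau_n\}$. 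By Theorem~\ref{thm:sat:correct} the $\tau_i$ are exactly the times with $\rho_\structure(\pred{in}(\tau_i,0,\infty)) = \symt$, and the hypothesis forces every such $\tau_i \leq \ttime_0$; combined with $\ttime_0$-completeness (condition~2) this gives $\structure \models \pred{in}(\tau_i,0,\infty)$ iff $\structure \models \pred{in}(\tau_i,0,\ttime_0)$, so $S$ is precisely the set of $\ttime$ appearing in the right-hand side of the theorem. The residual $\psi'$ rewrites to $\top$ by the quantifier-elimination rule $\forall \vec{x}.(c \imp \varphi) \rewrite \top$, so once each $\psi_i$ is normalized the whole output normalizes (using the eight propositional rules and confluence) to $\bot$ iff some $\psi_i \rewrite^* \bot$, and to $\top$ otherwise.

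It therefore suffices to prove the per-time lemma: for $\tau \leq \ttime_0$ with $\structure \models \pred{in}(\tau,0,\infty)$, $\reduce(\structure, \trans{\tau}{\alpha_p}) \rewrite^* \bot$ iff $\structure \models \dual{\trans{\tau}{\alpha_p}}$ (and $\rewrite^* \top$ iff $\structure \models \trans{\tau}{\alpha_p}$). The cleanest route is to lift the problem to an objectively-complete structure and invoke Theorem~\ref{thm:oc}. Define $\structure^+$ to agree with $\structure$ on every atom whose time argument is $\leq \ttime_0$ and to map every remaining objective atom to $\symf$; then $\structure^+$ is objectively-complete. The key structural observation (a separate induction) is that, because $\globally \alpha_p$ is a safety property, $\alpha_p$ contains no future operators, so every atom and every restriction touched either semantically by $\trans{\tau}{\alpha_p}$ or operationally during $\reduce(\structure,\trans{\tau}{\alpha_p})$ carries a time argument $\leq \tau \leq \ttime_0$ --- the translations of $\since$ and $\boxm$ bind their quantified times to $[0,\tau]$, the freeze quantifier keeps the current time, and first-order quantifiers leave it unchanged, so the invariant is preserved through every recursive call of $\reduce$. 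Since $\structure^+$ and $\structure$ coincide on exactly these atoms (and $\sat$ never returns an out-of-window witness for the $\pred{in}$-restrictions), we get $\reduce(\structure^+,\trans{\tau}{\alpha_p}) = \reduce(\structure,\trans{\tau}{\alpha_p})$ as syntactic formulas and $\structure^+ \models \trans{\tau}{\alpha_p} \Leftrightarrow \structure \models \trans{\tau}{\alpha_p}$ (and likewise for duals). Applying Theorem~\ref{thm:oc} to the objectively-complete $\structure^+$, and using that $\alpha_p$ has no subjective atoms, the rewritten normal form is $\top$ or $\bot$; Theorem~\ref{thm:correctness} (via part~(2) of Theorem~\ref{thm:oc}) identifies which one according to the truth or falsity of $\trans{\tau}{\alpha_p}$ in $\structure^+$, and the transport back to $\structure$ finishes the lemma.

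With the lemma in hand, both directions of the iff are immediate. If some $\ttime$ satisfies $\structure \models \pred{in}(\ttime,0,\ttime_0)$ and $\structure \models \dual{\trans{\ttime}{\alpha_p}}$, then $\ttime \in S$, so the corresponding $\psi_i = \reduce(\structure,\trans{\ttime}{\alpha_p}) \rewrite^* \bot$ by the lemma and the whole output rewrites to $\bot$. Conversely, if $\reduce(\structure, \globally \alpha_p) \rewrite^* \bot$ then, since $\psi' \rewrite^* \top$, confluence forces some $\psi_i \rewrite^* \bot$, whence by the lemma $\structure \models \dual{\trans{\tau_i}{\alpha_p}}$ with $\tau_i \in S$, i.e.\ $\structure \models \pred{in}(\tau_i,0,\ttime_0)$, giving the required witness.

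I expect the main obstacle to be the time-locality step inside the per-time lemma: carefully proving, by simultaneous induction on the translated formula, that neither the three-valued semantics of $\trans{\tau}{\alpha_p}$ nor any atom queried during its reduction (including witnesses produced by $\lift{\sat}$ for the $\pred{in}$-restrictions, and the residual guards $c \conj \ttime'' \notin S''$ generated at nested past operators) ever escapes the window $[0,\ttime_0]$. This is precisely what lets the objectively-complete extension $\structure^+$ simulate $\structure$ exactly, and what guarantees that the otherwise-lossy quantifier-elimination rewrites preserve equivalence in this setting. The remaining bookkeeping --- the normalization of the top-level conjunction and the $\pred{in}(\cdot,0,\infty)$-versus-$\pred{in}(\cdot,0,\ttime_0)$ correspondence --- is routine given the stated hypotheses.
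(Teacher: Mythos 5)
Your proposal is correct in outline, and its top-level skeleton coincides with the paper's proof (Theorem~\ref{thm:safety:app}): unfold $\reduce$ on the outermost quantifier of $\globally \alpha_p$, prove a per-time lemma for each instantiated $\trans{\ttime_i}{\alpha_p}$, and reassemble using the facts that the residual universal can only rewrite to $\top$ and that $\rewrite$ is confluent. Where you genuinely diverge is in the proof of the per-time lemma. The paper proves it (Lemma~\ref{lem:past:reduce}) by introducing a syntactic class of ``$T$-protected'' formulas (Definitions~\ref{def:protected:restriction} and~\ref{def:protected:formula}), showing that translations of future-free formulas are protected (Lemma~\ref{lem:past:translation}), that ground protected formulas satisfy excluded middle on $\ttime_0$-complete structures (Lemma~\ref{lem:protected:em}), and then redoing the $\reduce$-induction under the protectedness invariant (Lemma~\ref{lem:protected:reduction}). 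You instead complete $\structure$ to an objectively-complete companion $\structure^+$ on the window $[0,\ttime_0]$ and route everything through Theorem~\ref{thm:oc}, so the induction over $\reduce$ is performed only once, in the already-proved objectively-complete case; that is an economical repackaging, and your use of syntactic equality of the two $\reduce$ outputs to share the rewrite chain while deciding the endpoint over $\structure^+$ is sound. What it costs you: (i)~the ``time-locality'' induction you defer --- that neither the semantics of $\trans{\tau}{\alpha_p}$ nor any atom queried by $\reduce$ (including $\lift{\sat}$ witnesses and residual guards) leaves the window --- is exactly the content of the paper's $T$-protectedness machinery, so that bookkeeping is relocated rather than avoided; (ii)~your definition of $\structure^+$ must be amended so that rigid atoms carrying no time argument ($\ttime \not= \ttime'$, $\vec{x} \not\in S$) keep their interpretation rather than being sent to $\symf$; and (iii)~$\structure^+$ need not extend $\structure$ (past-completeness does not forbid $\structure$ from recording facts after $\ttime_0$), which is harmless since you never invoke $\structure^+ \geq \structure$, only window agreement, but is worth stating explicitly because Theorems~\ref{thm:oc} and~\ref{thm:correctness} are otherwise phrased in terms of extensions. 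Finally, like the paper's own proof, you silently rely on the intended semantics of $\pred{in}$ (that $\pred{in}(\ttime,\ttime',\ttime'')$ can hold only if $\ttime \leq \ttime''$) to identify the satisfying instances of $\pred{in}(\ttime,0,\infty)$ with the times satisfying $\pred{in}(\ttime,0,\ttime_0)$ and to keep $\lift{\sat}$ witnesses inside the window; since the paper makes the same implicit assumption, this is not a gap relative to it.
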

\begin{proof}
See Appendix~\ref{app:instances}, Theorem~\ref{thm:safety:app}.
\end{proof}

\begin{thm}[Enforcement of co-safety properties]\label{thm:cosafety}
Suppose ${\eventually \alpha_p}$ is a co-safety property, $\vdash
{\eventually \alpha_p}$, $\structure$ is $\ttime_0$-complete, and for
all $\ttime$, $(\rho_\structure(\pred{in}(\ttime,0,\infty)) = \symt)
\Rightarrow \ttime \leq \ttime_0$. Then, $\reduce(\structure,
\eventually \alpha_p) \rewrite^* \top$ if and only if there is a
$\ttime$ such that $\structure \models \pred{in}(\ttime,0,\ttime_0)$
and $\structure \models \trans{\ttime}{\alpha_p}$.
\end{thm}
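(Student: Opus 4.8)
The plan is to prove Theorem~\ref{thm:cosafety} as the exact dual of Theorem~\ref{thm:safety}, exploiting the duality between $\eventually$ and $\globally$ and between $\top$ and $\bot$ that the sublogic makes available through the operation $\overline{\,\cdot\,}$. First I would observe that $\dual{\eventually \alpha_p}$, after unfolding $\eventually \alpha_p = \exists \ttime.(\pred{in}(\ttime,0,\infty) \conj \trans{\ttime}{\alpha_p})$ and pushing the dual inward, is precisely $\forall \ttime.(\pred{in}(\ttime,0,\infty) \imp \dual{\trans{\ttime}{\alpha_p}})$, which has exactly the shape of a safety formula $\globally \dual{\alpha_p}$. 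So the natural strategy is to reduce the co-safety statement to the already-established safety statement applied to the dual property, rather than reprove everything from scratch.

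The key steps, in order, are as follows. First I would establish a syntactic duality lemma for the rewrite relation: $\psi \rewrite^* \top$ iff $\dual{\psi} \rewrite^* \bot$. This holds because the eight propositional rewrite rules and the two quantifier-elimination rules come in dual pairs under the De Morgan correspondence defining $\overline{\,\cdot\,}$, and confluence guarantees the normal forms correspond. Second I would establish a duality for $\reduce$ itself, namely $\reduce(\structure, \dual{\varphi}) = \dual{\reduce(\structure,\varphi)}$ up to the rewriting; this follows by induction on $\varphi$ from the definition in Figure~\ref{fig:reduce}, since the $\forall$ and $\exists$ clauses are mirror images and the atom clause respects $\rho_\structure(\dual{P}) = \dual{\rho_\structure(P)}$. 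Combining these two, $\reduce(\structure,\eventually\alpha_p) \rewrite^* \top$ becomes equivalent to $\reduce(\structure, \globally\dual{\alpha_p}) \rewrite^* \bot$.

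Third I would invoke Theorem~\ref{thm:safety} on the safety property $\globally \dual{\alpha_p}$. I must first check its hypotheses transfer: well-modedness $\vdash \globally\dual{\alpha_p}$ follows from $\vdash \eventually\alpha_p$ together with the fact that $\overline{\,\cdot\,}$ preserves the moding structure (restrictions $c$ are unchanged by dualization, only the polarity of the body flips), and the $\ttime_0$-completeness and observed-time hypotheses on $\structure$ are identical in both statements. Theorem~\ref{thm:safety} then yields that $\reduce(\structure,\globally\dual{\alpha_p}) \rewrite^* \bot$ iff there is a $\ttime$ with $\structure \models \pred{in}(\ttime,0,\ttime_0)$ and $\structure \models \dual{\trans{\ttime}{\dual{\alpha_p}}}$. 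The final step is to simplify $\dual{\trans{\ttime}{\dual{\alpha_p}}}$ back to $\trans{\ttime}{\alpha_p}$: the translation commutes with dualization, $\trans{\ttime}{\dual{\alpha_p}} = \dual{\trans{\ttime}{\alpha_p}}$, and the involution property $\dual{\dual{\psi}} = \psi$ collapses the double dual, giving exactly the condition $\structure \models \trans{\ttime}{\alpha_p}$ in the theorem statement.

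The main obstacle I anticipate is not the high-level duality but the bookkeeping needed to verify that each of the three commutation facts—$\overline{\,\cdot\,}$ with $\reduce$, with $\trans{\ttime}{\bullet}$, and with $\rewrite^*$—holds on the nose rather than merely up to logical equivalence, since Theorem~\ref{thm:cosafety} is a statement about the \emph{syntactic} normal form reached by $\rewrite^*$, not just about satisfaction. In particular, the auxiliary conjuncts $\psi'$ of the form $\forall \vec{x}.((c \conj \vec{x}\not\in S) \imp \varphi)$ generated by $\reduce$ must dualize correctly to the existential guards, and I would need the fact (established in the proof of Theorem~\ref{thm:oc}, and reused in Theorem~\ref{thm:safety}) that under $\ttime_0$-completeness these residual quantifiers rewrite away cleanly; I expect the cleanest route is to factor out a single ``duality of $\reduce$ modulo $\rewrite$'' lemma and apply it uniformly, so that the co-safety proof becomes genuinely short once that lemma is in hand.
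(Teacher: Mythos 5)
Your proposal is correct, but it takes a genuinely different route from the paper. The paper's own proof of Theorem~\ref{thm:cosafety} (Theorem~\ref{thm:cosafety:app}) is just a direct re-run of the safety argument: expand $\reduce(\structure, \eventually \alpha_p)$ via the existential clause of Figure~\ref{fig:reduce} into $\psi_1 \disj \ldots \disj \psi_n \disj \psi'$, note that the residual $\psi'$ (top-level $\exists$) can only rewrite to $\bot$, and apply Lemma~\ref{lem:past:reduce} to each ground disjunct $\psi_i = \reduce(\structure, \trans{\ttime_i}{\alpha_p})$, so the disjunction rewrites to $\top$ iff some $\psi_i$ does iff $\structure \models \trans{\ttime_i}{\alpha_p}$ for some observed $\ttime_i \leq \ttime_0$. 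You instead derive co-safety from safety by dualization, which needs exactly the commutation facts you list. Of these, duality of $\reduce$ is already proved \emph{on the nose} in the paper (Lemma~\ref{lem:duality:reduce}), so your hedge ``up to rewriting'' is unnecessary; step-wise duality of $\rewrite$ and invariance of well-modedness under $\dual{\,\cdot\,}$ are not in the paper but follow by easy inductions (the moding rules for $\forall$/$\exists$ and $\conj$/$\disj$ have identical premises, and restrictions are untouched by dualization); and you also need $\dual{\,\cdot\,}$ to be an involution so that $\dual{\trans{\ttime}{\neg\alpha_p}} = \trans{\ttime}{\alpha_p}$, which the paper assumes implicitly for atoms. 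One point of rigor: to invoke Theorem~\ref{thm:safety} you must exhibit a temporal-logic formula without future operators, namely $\neg\alpha_p$, satisfying $\globally(\neg\alpha_p) = \dual{\eventually \alpha_p}$ syntactically via $\trans{\ttime}{\neg\alpha} = \dual{\trans{\ttime}{\alpha}}$; your ``$\globally \dual{\alpha_p}$'' should be read this way, since $\dual{\,\cdot\,}$ is an operation on sublogic formulas, not temporal ones. The trade-off: the paper's mirror argument repeats a case analysis but needs no new machinery, while your factoring makes the co-safety theorem a short corollary and isolates a reusable symmetry, at the cost of the extra duality lemmas.
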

\begin{proof}
See Appendix~\ref{app:instances}, Theorem~\ref{thm:cosafety:app}.
\end{proof}

\begin{eg}\label{eg:safety}
We check Theorem~\ref{thm:safety} on the safety property $\globally
\alpha_{pol1}$ from Example~\ref{eg:policies:translated}. The policy
states that if a message $m$ is sent by $p_1$ to $p_2$ for purpose $u$
and the message is tagged as containing $q$'s data about attribute $t$
(which is a form of $phi$), then either the recipient $p_2$ is $q$'s
doctor and the purpose $u$ is treatment, or $q$ has previously
consented to this message transmission.

We consider a simple structure $\structure$ in which this policy is
violated. $\structure$ has only one time point $7$, at which principal
$\mbox{A}$ sends principal $\mbox{B}$ a message $M$. The message $M$
is labeled with purpose $test$ ($\pred{purp\_in}(test,treatment)$
holds) and tagged as containing principal $\mbox{C}$'s information
about attribute $meds$ (medications), which is a form of
$phi$. Further, $\mbox{B}$, the recipient, is not $\mbox{C}$'s
doctor. Suppose that we audit at a later point of time ($10$) and that
$\structure$ described above is $10$-complete. Since there is no other
information in $\structure$ besides what has been mentioned,
$\mbox{C}$ has not consented explicitly to this message transmission,
so the policy has been violated at time $7$. We seek to verify that
$\reduce(\structure, \globally \alpha_{pol1}) \rewrite^* \bot$.

We start by computing $\reduce(\structure, \globally
\alpha_{pol1})$. The reader is advised to revisit the definition of
$\globally \alpha_{pol1}$ in Example~\ref{eg:policies:translated}. At
the top-level, $\globally \alpha_{pol1}$ contains a universal
quantifier with restriction $c = (\pred{in}(\ttime,0,\infty) \conj
\pred{send}(p_1, p_2, m,\ttime) \conj \pred{purp}(m,u,\ttime)
\conj\pred{tagged}(m,q,t,\ttime) \conj \pred{attr\_in}(t,
\attr{phi},\ttime))$. Computing $\lift{\sat}(\structure, c)$ yields
$\{(\ttime, p_1, p_2, m, u, q, t) \mapsto (7, \mbox{A}, \mbox{B}, M,
test, \mbox{C}, meds)\}$. Hence, $\reduce(\structure, \globally
\alpha_{pol1})\linebreak[6] = \reduce(\structure, \varphi_1) \conj
\varphi_0'$, where $\varphi_1$ is shown below and $\varphi_0'$ is
almost a copy of the original policy, with a larger restriction. The
only aspect of $\varphi_0'$ relevant for this example is that it
contains a top-level universal quantifier.
\begin{tabbing}
$\varphi_1$ = \=$($\=$\pred{inrole}(\mbox{B},
  \pred{doc}(\mbox{C}),7) \conj$\\
\>\>$  \pred{purp\_in}(test,\purpose{treatment},7)) \disj$\\
\>$(\exists \ttime'.~ ($\=$ \pred{in}(\ttime',0,7) \conj$\\
\>\>$\pred{consents}(\mbox{C}, \pred{sendaction}(\mbox{A}, \mbox{B}, (\mbox{C},meds)),\ttime')))$
\end{tabbing} 
Next, we calculate $\reduce(\structure, \varphi_1)$. Since
$\rho_\structure(\pred{inrole}(\mbox{B}, \pred{doc}(\mbox{C}),7)) =
\symf$ and\linebreak[6]
$\rho_\structure(\pred{purp\_in}(test,\purpose{treatment},7)) =
\symt$, $\reduce(\structure, \varphi_1) = (\bot \conj \top) \disj
\reduce(\structure, \varphi_2)$, where $\varphi_2$ is the second
disjunct of $\varphi_1$.  Finally, we compute $\reduce(\structure,
\varphi_2)$. The top-level connective of $\varphi_2$ is an existential
quantifier restricted by $\pred{in}(\ttime',0,7)$. Since
$\lift{\sat}(\structure, \pred{in}(\ttime',0,7)) = \{\ttime' \mapsto
7\}$, $\reduce(\structure, \varphi_2) = \reduce(\structure, \varphi_3)
\disj \varphi_2'$, where $\varphi_3 = \pred{consents}(\mbox{C},
\pred{sendaction}(\mbox{A}, \mbox{B}, (\mbox{C},meds)),7)$ and
$\varphi_2'$ begins with an existential quantifier. Clearly,
$\reduce(\structure, \varphi_3) = \bot$. Putting the pieces back
together, we get $\reduce(\structure, \globally \alpha_{pol1}) =
((\bot \conj \top) \disj (\bot \disj \varphi_2')) \conj \varphi_0'$.

Since $\varphi_0'$ and $\varphi_2'$ begin with a universal and an
existential quantifier, they can be rewritten to $\top$ and $\bot$
respectively. So, $\reduce(\structure, \globally \alpha_{pol1})
\rewrite ((\bot \conj \top) \disj (\bot \disj \bot)) \conj \top$,
which can easily be rewritten to $\bot$, thus indicating a
violation. If we change the example to avoid a violation, say by
setting $\rho_\structure(\pred{inrole}(\mbox{B},
\pred{doc}(\mbox{C}),7))$ to $\symt$ instead of $\symf$, then the
result of rewriting changes from $\bot$ to $\top$, indicating a lack
of violation thus far. Finally, if we do not assume that $\structure$
is past-complete, then the rewriting of $\varphi_2'$ to $\bot$ is
unsound because there may be an extension of $\structure$ in which
$\varphi_2'$ is true and, hence, the original property may not have
been violated, but our procedure would conclude that it is. So,
past-completeness is a necessary assumption in
Theorem~\ref{thm:safety} (and also Theorem~\ref{thm:cosafety}).
\end{eg}

\section{Application to HIPAA}
\label{sec:hipaa-case}

We comment on application of our algorithm to transmission-relevant
clauses of the HIPAA Privacy Rule. These clauses can be viewed as a
template for actual privacy policies, which may be obtained by
instantiating abstract roles like ``covered entity'' in HIPAA with
actual roles like ``doctor'', ``nurse'', etc. In prior work on
PrivacyLFP~\cite{DeYoung+10:privacylfp:wpes}, we have shown that all
84 transmission-related clauses in HIPAA can be represented in the
logic. Since we have restricted the syntax of quantifiers in this
paper to facilitate enforcement, an immediate question is whether we
can still represent all the clauses of HIPAA in our logic. A careful
re-analysis of the prior work reveals that 81 of the 84 clauses fall
in the fragment considered in this paper. The three remaining clauses,
namely Sections 164.506(c)(4), 164.512(k)(1)(i), and 164.512(k)(1)(iv)
of HIPAA, contain quantifiers with subjective restrictions. However,
in each such case, the formula under the quantifier contains only
subjective predicates and, therefore, the entire formula may be
considered a single subjective predicate. With this minor change, the
algorithm of Section~\ref{sec:enforcement} can be applied to all 84
clauses of HIPAA.

The next question is the usefulness of the algorithm, given that HIPAA
contains many subjective predicates (in fact, 578 out of a total of
881 atoms in our formalization of HIPAA are subjective). The answer to
this question is two-fold. First, irrespective of the percentage of
subjective atoms, one practical advantage of using our algorithm is
that it instantiates quantifiers automatically using log data, which
could otherwise be a daunting task for a human auditor.

Second, our algorithm automatically discharges objective atoms from
fully instantiated formulas, leaving only subjective atoms for a human
auditor. As discussed in the prior work, with a slight amount of
design effort, e.g., standardizing message formats, 402 of the
subjective atoms can be mechanized, leaving a total of 176 subjective
atoms, and improving the effectiveness of the algorithm significantly.
A reasonable method to quantify the effectiveness of the algorithm on
instantiated formulas is to calculate the ratio of the number of
objective atoms to the total number of atoms for all 84 clauses. (A
more accurate assessment can be made if we also know how frequently
each clause of HIPAA gets instantiated, but this is impossible without
real data.)  In Appendix~\ref{app:hipaa}, we list for each clause the
numbers of subjective and objective atoms in it (\#S and \#O
respectively), as well as the number of subjective atoms that can be
mechanized by simple design effort such as standardizing message
formats (\#O'). The ratio (\#O' + \#O) / (\#S + \#O) shown in the last
column is an estimate of the percentage of the clause our algorithm
will reduce automatically, assuming that the required design effort
has been made. Based on these figures, we count that in 17 clauses,
all atoms can be reduced automatically; in 24 other clauses, at least
80\% of the atoms can be reduced automatically; and in 29 other
clauses, at least 50\% of the atoms can be reduced automatically. On
the other hand, in 6 clauses our algorithm cannot reduce any atoms
automatically but 5 out of these 6 clauses contain exactly one
subjective atom each.

In summary, even though completely automatic enforcement of policies
derived from HIPAA is impossible due its use of subjective predicates,
our algorithm can help reduce the burden of human auditors
significantly, both by instantiating quantifiers automatically and by
discharging objective atoms in fully instantiated formulas.

\section{Related Work}
\label{sec:related}

\paragraph{Policy Enforcement with Temporal Logic} A
lot of prior work addresses the problem of \emph{runtime monitoring}
of policies expressed in Linear Temporal Logic
(LTL)~\cite{Thati+2005:MAM, Baader+2009:alcltl,Rosu+2005:rewrite,
  Sokolsky+2006:RCD,Basin+2010:cav, Barringer+2004:RBRV} and its
extensions~\cite{Roger+2001:LAT,
  Barringer+2004:RBRV,Sokolsky+2006:RCD}. Although similar in the
spirit of enforcing policies, the intended deployment of our work is
different: we expect our algorithm to be used for after-the-fact audit
for violations, rather than for online monitoring. Consequently, the
issue of retaining only necessary portions of logs, which is central
to runtime monitoring, is largely irrelevant for our work (and hence
not considered in this paper).

Comparing only the expressiveness of the logic, our work is more
advanced than all existing work on policy enforcement. First, we
enforce a large fragment of first-order temporal logic, whereas prior
work is either limited to propositional logic~\cite{Thati+2005:MAM,
  Baader+2009:alcltl,Rosu+2005:rewrite}, or, when quantifiers are
considered, they are severely restricted~\cite{Roger+2001:LAT,
  Barringer+2004:RBRV,Sokolsky+2006:RCD}. A recent exception to such
syntactic restrictions is the work of Basin et
al.~\cite{Basin+2010:cav}, to which we compare in detail
below. Second, no prior work considers either subjective predicates,
or the possibility of gaps in past information, both of which our
partial structures and enforcement algorithm account for.

Recent work by Basin et al.~\cite{Basin+2010:cav} considers runtime
monitoring over an expressive fragment of Metric First-order Temporal
Logic. Similar to our work, Basin et al.\ allow quantification over
infinite domains, and use a form of mode analysis (called a safe-range
analysis) to ensure finiteness during enforcement. However, Basin et
al's mode analysis is weaker than ours; in particular, it cannot
relate the same variable in the input and output positions of two
different conjuncts of a restriction and requires that each free
variable appear in at least one predicate with a finite model.  As a
consequence, some policies such as $\alpha_{pol1}$
(Example~\ref{eg:past}), whose top-level restriction
$(\pred{send}(p_1, p_2, m) \conj \pred{purp}(m,u) \conj \ldots)$
contains a variable $u$ not occurring in any predicate with a finite
model, cannot be enforced in their framework, but can be enforced in
ours. Due to their goal of runtime enforcement, Basin et al.\ use
auxiliary data structures to cache relevant portions of the log in
memory, which may form the basis of useful optimizations in an
implementation of our work.




Cederquist et al.~\cite{cederquist07audit} present a proof-based
system for a-posteriori audit, where policy obligations are discharged
by constructing formal proofs. The leaves of proofs are established
from logs, but the audit process only checks that an obligation has
been satisfied somewhere in the past, thus allowing only for
obligations of the form $\diam \varphi$. Further, there is no
systematic mechanism to instantiate quantifiers in proofs. However,
using connectives of linear logic, the mechanism admits policies that
rely on consumable permissions.

The idea of iteratively rewriting the policy over evolving audit logs
has been considered
previously~\cite{Rosu+2005:rewrite,Thati+2005:MAM}, but only for
propositional logic. Bauer et al.~\cite{Baader+2009:alcltl} use a
different approach for iterative enforcement: they convert an LTL
formula with limited first-order quantification to a B\"{u}chi
automaton and check whether the automaton accepts the input
log. Further, they also use a three-valued semantic model similar to
ours, but assume past-completeness.  Three-valued structures have also
been considered in work on generalized model
checking~\cite{Bruns:2000:GMC,Godefroid:2005:MCV}. However, the
problems addressed in that line of work are different; the objective
there is to check whether there exist extensions of a given structure
in which a formula is satisfied (or falsified).



\paragraph{Policy Specification}
Several variants of LTL have been used to \emph{specify} the
properties of programs, business processes and security and privacy
policies~\cite{Barth+06:paci:fap,DeYoung+10:privacylfp:wpes,Basin+10:mfotl:sacmat,
  Giblin+05:REALM,Liu:2007:SCF}. Our representation of policies and
our logic, PrivacyLFP, draw inspiration from
LPU~\cite{Barth+06:paci:fap}.

Further, several access-control models have extensions for specifying
usage control and future obligations~\cite{hilty05:obligations,
  Bettini:2003:POP,
  Park:2002:UCON,Irwin:2006:MAO,Ni:2008:OMB,Dougherty+2007:OTIP,XACML}.
Some of these models assume a pre-defined notion of
obligations~\cite{Irwin:2006:MAO,Ni:2008:OMB}. For instance, Irwin et
al~\cite{Irwin:2006:MAO} model obligations as tuples containing the
subject of the obligation, the actions to be performed, the objects
that are targets of the actions and the time frames of the
obligations. Other models leave specifications for obligations
abstract~\cite{hilty05:obligations, Bettini:2003:POP, Park:2002:UCON}.
Such specific models and the ensuing policies can be encoded in our
logic using quantifiers and temporal operators.

There also has been much work on analyzing the properties of policies
represented in formal models. For instance, Ni et al.\ study the
interaction between obligation and authorization~\cite{Ni:2008:OMB},
Irwin et al.\ have analyzed accountability problems with
obligations~\cite{Irwin:2006:MAO}, and Dougherty et al.\ have modeled
the interaction between obligations and
programs~\cite{Dougherty+2007:OTIP}.  These methods are orthogonal to
our objective of policy enforcement. It may be possible to adapt ideas
from these papers to analyze similar properties of policies expressed
in PrivacyLFP also.

Finally, privacy languages such as EPAL~\cite{Backes+03:atfmepp} and
privacyAPI~\cite{May+06:pa:acttaavlpp} do not include obligations or
temporal modalities as primitives, and are less expressive than our
framework.

\section{Conclusion}
\label{sec:conclusion}



We have presented an expressive and provably correct iterative method
for enforcing privacy policies that works by reducing policies, even
in the face of incomplete system logs. Our method is expressive enough
to enforce real privacy legislation like HIPAA, yet tractable due to a
carefully designed static analysis. Under standard assumptions about
system logs, we obtain methods to mechanically enforce safety and
co-safety properties.

Our planned next step is to implement the proposed enforcement
mechanism and to test its performance on real privacy legislation. A
specific goal is to develop generic optimization and caching
techniques that encompass all forms of log incompleteness, to the
extent possible. Prior work on runtime monitoring may provide valuable
insights in this regard, but a significant challenge is to generalize
it beyond past-completeness.

\bibliographystyle{plain}

\bibliography{audit}

\begin{thebibliography}{10}

\bibitem{fairwarning}
\mbox{FairWarning$\textsuperscript{\textregistered}$}.
\newblock \url{http://www.fairwarningaudit.com}.

\bibitem{alpern87:safety}
Bowen Alpern and Fred~B. Schneider.
\newblock Recognizing safety and liveness.
\newblock {\em Distributed Computing}, 2(3):117--126, 1987.

\bibitem{AlurHenzinger:artl}
Rajeev Alur and Thomas~A. Henzinger.
\newblock A really temporal logic.
\newblock {\em Journal of the ACM}, 41(1):181--203, 1994.

\bibitem{apt94:modes}
Krzysztof~R. Apt and Elena Marchiori.
\newblock Reasoning about {P}rolog programs: From modes through types to
  assertions.
\newblock {\em Formal Aspects of Computing}, 6(6):743--765, 1994.

\bibitem{Baader+2009:alcltl}
Franz Baader, Andreas Bauer, and Marcel Lippmann.
\newblock Runtime verification using a temporal description logic.
\newblock In {\em Proceedings of the 7th international conference on Frontiers
  of combining systems}, FroCoS'09, pages 149--164, 2009.

\bibitem{Backes+03:atfmepp}
Michael Backes, Birgit Pfitzmann, and Matthias Schunter.
\newblock A toolkit for managing enterprise privacy policies.
\newblock In {\em European Symposium on Research in Computer Security}, LNCS
  2808, pages 101--119, 2003.

\bibitem{Barringer+2004:RBRV}
Howard Barringer, Allen Goldberg, Klaus Havelund, and Koushik Sen.
\newblock Rule-based runtime verification.
\newblock In {\em Proceedings of the 5th International Conference on
  Verification, Model Checking, and Abstract Interpretation (VMCAI)}, pages
  44--57, 2004.

\bibitem{Barth+06:paci:fap}
Adam Barth, Anupam Datta, John~C. Mitchell, and Helen Nissenbaum.
\newblock Privacy and contextual integrity: {F}ramework and applications.
\newblock In {\em Proceedings of the 27th IEEE Symposium on Security and
  Privacy}, pages 184--198, May 2006.

\bibitem{Basin+10:mfotl:sacmat}
David Basin, Felix Klaedtke, and Samuel M\"{u}ller.
\newblock Monitoring security policies with metric first-order temporal logic.
\newblock In {\em Proceeding of the 15th ACM Symposium on Access Control Models
  and Technologies (SACMAT)}, pages 23--34, 2010.

\bibitem{Basin+2010:cav}
David~A. Basin, Felix Klaedtke, and Samuel M{\"u}ller.
\newblock Policy monitoring in first-order temporal logic.
\newblock In {\em Proceedings of the 22nd International Conference on Computer
  Aided Verification (CAV)}, pages 1--18, 2010.

\bibitem{bauer10:rm}
Andreas Bauer, Martin Leucker, and Christian Schallhart.
\newblock Runtime verification for {LTL} and {TLTL}.
\newblock {\em ACM Transactions on Software Engineering and Methodology}, 2010.
\newblock To appear.

\bibitem{Bettini:2003:POP}
Claudio Bettini, Sushil Jajodia, X.~Sean Wang, and Duminda Wijesekera.
\newblock Provisions and obligations in policy rule management.
\newblock {\em Journal of Network and Systems Management}, 11:351--372, 2003.

\bibitem{Bruns:2000:GMC}
Glenn Bruns and Patrice Godefroid.
\newblock Generalized model checking: Reasoning about partial state spaces.
\newblock In {\em Proceedings of the 11th International Conference on
  Concurrency Theory}, CONCUR '00, pages 168--182, 2000.

\bibitem{cederquist07audit}
J.~G. Cederquist, R.~Corin, M.~A.~C. Dekker, S.~Etalle, J.~I. den Hartog, and
  G.~Lenzini.
\newblock Audit-based compliance control.
\newblock {\em International Journal of Information Security}, 6(2):133--151,
  2007.

\bibitem{DP2007}
{Deloitte \& Touche and the Ponemon Institute}.
\newblock {Enterprise@Risk: 2007 Privacy and Data Protection Survey}.
\newblock White Paper, December 2007.

\bibitem{DeYoung+10:privacylfp:wpes}
Henry DeYoung, Deepak Garg, Limin Jia, Dilsun Kaynar, and Anupam Datta.
\newblock Experiences in the logical specification of the {HIPAA} and {GLBA}
  privacy laws.
\newblock In {\em Proceedings of the 9th annual ACM Workshop on Privacy in the
  Electronic Society (WPES)}, 2010.

\bibitem{Dougherty+2007:OTIP}
Daniel~J. Dougherty, Kathi Fisler, and Shriram Krishnamurthi.
\newblock Obligations and their interaction with programs.
\newblock In {\em Proceedings of the 12th European Symposium on Research in
  Computer Security (ESORICS)}, pages 375--389, 2007.

\bibitem{Giblin+05:REALM}
Christopher Giblin, Alice~Y. Liu, Samuel M\"{u}ller, Birgit Pfitzmann, and Xin
  Zhou.
\newblock Regulations expressed as logical models ({REALM}).
\newblock In {\em Proceeding of the 18th Annual Conference on Legal Knowledge
  and Information Systems (JURIX)}, pages 37--48, 2005.

\bibitem{Godefroid:2005:MCV}
Patrice Godefroid and Michael Huth.
\newblock Model checking vs. generalized model checking: Semantic minimizations
  for temporal logics.
\newblock In {\em Proceedings of the 20th Annual IEEE Symposium on Logic in
  Computer Science}, pages 158--167, 2005.

\bibitem{hilty05:obligations}
Manuel Hilty, David~A. Basin, and Alexander Pretschner.
\newblock On obligations.
\newblock In {\em Proceedings of the 10th European Symposium on Research in
  Computer Security}, pages 98--117, 2005.

\bibitem{Irwin:2006:MAO}
Keith Irwin, Ting Yu, and William~H. Winsborough.
\newblock On the modeling and analysis of obligations.
\newblock In {\em Proceedings of the 13th ACM Conference on Computer and
  Communications Security (CCS)}, pages 134--143, 2006.

\bibitem{Liu:2007:SCF}
Y.~Liu, S.~M\"{u}ller, and K.~Xu.
\newblock A static compliance-checking framework for business process models.
\newblock {\em IBM Systems Journal}, 46:335--361, 2007.

\bibitem{manna95:temporal}
Zohar Manna and Amir Pnueli.
\newblock {\em Temporal Verification of Reactive Systems: Safety}.
\newblock Springer-Verlag, 1995.

\bibitem{May+06:pa:acttaavlpp}
Michael~J. May, Carl~A. Gunter, and Insup Lee.
\newblock Privacy {API}s: {A}ccess control techniques to analyze and verify
  legal privacy policies.
\newblock In {\em Proceedings of the 19th IEEE Workshop on Computer Security
  Foundations (CSFW)}, pages 85--97, 2006.

\bibitem{Ni:2008:OMB}
Qun Ni, Elisa Bertino, and Jorge Lobo.
\newblock An obligation model bridging access control policies and privacy
  policies.
\newblock In {\em Proceedings of the 13th ACM Symposium on Access Control
  Models and Technologies (SACMAT)}, pages 133--142, 2008.

\bibitem{XACML}
{OASIS XACML Committee}.
\newblock Extensible access control markup language ({XACML}) v2.0, 2004.
\newblock Available at \url{http://www.oasis-open.org/specs/\#xacmlv2.0}.

\bibitem{Park:2002:UCON}
Jaehong Park and Ravi Sandhu.
\newblock Towards usage control models: beyond traditional access control.
\newblock In {\em Proceedings of the 7th ACM Symposium on Access Control Models
  and Technologies (SACMAT)}, pages 57--64, 2002.

\bibitem{Rosu+2005:rewrite}
Grigore Ro\c{s}u and Klaus Havelund.
\newblock Rewriting-based techniques for runtime verification.
\newblock {\em Automated Software Engineering}, 12:151--197, 2005.

\bibitem{Roger+2001:LAT}
Muriel Roger and Jean Goubault-Larrecq.
\newblock Log auditing through model-checking.
\newblock In {\em Proceedings of the 14th IEEE Workshop on Computer Security
  Foundations (CSF)}, pages 220--236, 2001.

\bibitem{Sokolsky+2006:RCD}
Oleg Sokolsky, Usa Sammapun, Insup Lee, and Jesung Kim.
\newblock Run-time checking of dynamic properties.
\newblock {\em Electronic Notes in Theoretical Computer Science}, 144:91--108,
  2006.

\bibitem{Thati+2005:MAM}
Prasanna Thati and Grigore Ro\c{s}u.
\newblock Monitoring algorithms for metric temporal logic specifications.
\newblock {\em Electronic Notes in Theoretical Computer Science}, 113:145--162,
  2005.

\bibitem{HIPAA}
{US Congress}.
\newblock {Health Insurance Portability and Accountability Act of 1996, Privacy
  Rule}.
\newblock 45 {CFR} 164, 2002.
\newblock Available at
  \url{http://www.access.gpo.gov/nara/cfr/waisidx_07/45cfr164_07.html}.

\end{thebibliography}

\appendix

\section{Details from Section~\ref{sec:logic}}
\label{app:logic}

The full definition of the $\dual{\varphi}$ is shown below:
\[\begin{array}{ccc}
\dual{p_O(t_1,\ldots,t_n)} & = & \dual{p_O}(t_1,\ldots,t_n) \\
\dual{p_S(t_1,\ldots,t_n)} & = & \dual{p_S}(t_1,\ldots,t_n) \\
\dual{\top} & = & \bot\\
\dual{\bot} & = & \top\\
\dual{\varphi \conj \psi} & = & \dual{\varphi} \disj \dual{\psi} \\
\dual{\varphi \disj \psi} & = & \dual{\varphi} \conj \dual{\psi} \\
\dual{\forall \vec{x} \not \in S. (c \imp \varphi)} & = & \exists \vec{x} \not \in S. (c \conj \dual{\varphi})\\
\dual{\exists \vec{x} \not \in S. (c \conj \varphi)} & = & \forall \vec{x} \not \in S. (c \imp \dual{\varphi})
\end{array}\]

The full translation $\trans{\ttime}{\bullet}$ from the temporal logic
to the sublogic is shown below:

\[\begin{array}{ccl}

\trans{\ttime}{p_O(t_1,\ldots,t_n)} & = &
p_O(t_1,\ldots,t_n,\ttime)\\
\trans{\ttime}{\top} & = & \top\\
\trans{\ttime}{\bot} & = & \bot\\
\trans{\ttime}{c_1 \conj c_2} & = & \trans{\ttime}{c_1}
\conj \trans{\ttime}{c_2} \\
\trans{\ttime}{c_1 \disj c_2} & = & \trans{\ttime}{c_1}
\disj \trans{\ttime}{c_2} \\
\trans{\ttime}{\exists x. c} & = & \exists
x. \trans{\ttime}{c}\\

\\

\trans{\ttime}{p_O(t_1,\ldots,t_n)} & = &
p_O(t_1,\ldots,t_n,\ttime) \\

\trans{\ttime}{p_S(t_1,\ldots,t_n)} & = &
p_S(t_1,\ldots,t_n,\ttime) \\

\trans{\ttime}{\top} & = & \top\\

\trans{\ttime}{\bot} & = & \bot\\

\trans{\ttime}{\alpha \conj \beta} & = & \trans{\ttime}{\alpha} \conj
\trans{\ttime}{\beta}\\

\trans{\ttime}{\alpha \disj \beta} & = & \trans{\ttime}{\alpha} \disj
\trans{\ttime}{\beta}\\

\trans{\ttime}{\neg \alpha} & = & \dual{\trans{\ttime}{\alpha}}\\

\trans{\ttime}{\forall \vec{x}. (c \imp \alpha)} & =
& \forall \vec{x}. (\trans{\ttime}{c} \imp
\trans{\ttime}{\alpha})\\

\trans{\ttime}{\exists \vec{x}. (c \conj \alpha)} & =
& \exists \vec{x}. (\trans{\ttime}{c} \conj
\trans{\ttime}{\alpha})\\

\trans{\ttime}{\here x. \alpha} & = &
\trans{\ttime}{\alpha[\ttime/x]}\\

\trans{\ttime}{\alpha \since \beta} & = & \exists
\ttime'. (\pred{in}(\ttime', 0, \ttime) \conj
\trans{\ttime'}{\beta} 
\\ & & ~ \conj (\forall \ttime''. (({\tt
  in}(\ttime'',\ttime',\ttime) \conj \ttime' \not= \ttime'') 
\\& & \qquad\quad \imp
\trans{\ttime''}{\alpha})))\\

\trans{\ttime}{\alpha \until \beta} & = & \exists
\ttime'. (\pred{in}(\ttime', \ttime, \infty) \conj
\trans{\ttime'}{\beta} 
\\ & & ~ \conj (\forall \ttime''. (({\tt
  in}(\ttime'',\ttime,\ttime') \conj \ttime'' \not= \ttime') 
\\& & \qquad\quad\imp
  \trans{\ttime''}{\alpha})))\\

\trans{\ttime}{\boxm{\alpha}} & = & \forall
\ttime'. (\pred{in}(\ttime',0,\ttime) \imp
\trans{\ttime'}{\alpha})\\

\trans{\ttime}{\boxp{\alpha}} & = & \forall
\ttime'. (\pred{in}(\ttime', \ttime,\infty) \imp
\trans{\ttime'}{\alpha})
\end{array}\]

\section{Proofs from Section~\ref{sec:enforcement}}
\label{app:enforcement}

This appendix contains proofs of theorems presented in
Section~\ref{sec:enforcement}. The proofs are presented in an order
different from the order of theorems in the main body of the paper
because of dependencies in the proofs.


\begin{lem}[Monotonicity]\label{lem:monotonicity}
$\structure' \geq \structure$ and $\structure \models \varphi$ imply
  $\structure' \models \varphi$.
\end{lem}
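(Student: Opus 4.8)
The plan is to prove the claim by induction on the structure of $\varphi$, following the clauses that define $\structure \models \varphi$. Throughout I fix structures with $\structure' \geq \structure$ and assume $\structure \models \varphi$, aiming to conclude $\structure' \models \varphi$. The measure driving the induction is the number of logical connectives in $\varphi$; this is the right measure because both substitution of ground terms for variables and the dualization operation $\dual{\cdot}$ preserve connective count, a fact I will need in the quantifier cases.

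For the base cases: if $\varphi$ is an atom $P$, then $\structure \models P$ means $\rho_\structure(P) = \symt$; since $\symt \in \{\symt,\symf\}$, the definition of $\structure' \geq \structure$ forces $\rho_{\structure'}(P) = \rho_\structure(P) = \symt$, so $\structure' \models P$. The case $\varphi = \top$ is immediate, and $\varphi = \bot$ is vacuous since $\structure \models \bot$ never holds. For $\varphi_1 \conj \varphi_2$ and $\varphi_1 \disj \varphi_2$, I simply apply the induction hypothesis to the relevant subformulas: from $\structure \models \varphi_1$ and $\structure \models \varphi_2$ I get both in $\structure'$ for the conjunction, and for the disjunction I transport whichever disjunct witnesses truth in $\structure$.

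The quantifier cases are where the real work lies, and the universal case is the \emph{main obstacle}. Consider $\varphi = \forall \vec{x}.(c \imp \varphi_0)$. By the semantics, $\structure \models \varphi$ says that for every $\vec{t} \in \domain$, either $\structure \models \dual{c}[\vec{t}/\vec{x}]$ or $\structure \models \varphi_0[\vec{t}/\vec{x}]$. Fixing an arbitrary $\vec{t}$ and using the induction hypothesis on whichever disjunct holds, I obtain the corresponding disjunct in $\structure'$, which is exactly what $\structure' \models \varphi$ requires. The subtlety is that the semantics refers to $\dual{c}$ rather than $c$, and $\dual{c}$ is not literally a structural subformula of $\forall \vec{x}.(c \imp \varphi_0)$. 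This is precisely why I take the measure to be connective count: since $\dual{c}$ has the same number of connectives as $c$, and $c$ has strictly fewer than $\varphi$, the induction hypothesis does apply to $\dual{c}[\vec{t}/\vec{x}]$. A clean alternative, should I wish to avoid invoking a size measure, is to strengthen the induction to prove the two statements $\structure \models \varphi \Rightarrow \structure' \models \varphi$ and $\structure \models \dual{\varphi} \Rightarrow \structure' \models \dual{\varphi}$ simultaneously; then monotonicity of $\dual{c}$ is supplied directly by the dual half of the hypothesis applied to the subformula $c$.

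The existential case $\exists \vec{x}.(c \conj \varphi_0)$ is handled symmetrically: I extract a witness $\vec{t}$ for which $\structure \models c[\vec{t}/\vec{x}]$ and $\structure \models \varphi_0[\vec{t}/\vec{x}]$, then push both through to $\structure'$ by the induction hypothesis, keeping the same witness. Once the primal statement holds for all $\varphi$, the dual form noted in the surrounding text follows immediately by instantiating the lemma at $\dual{\varphi}$ in place of $\varphi$.
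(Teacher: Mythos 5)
Your proof is correct and follows essentially the same route as the paper, whose entire argument is the one-liner ``by induction on $\varphi$.'' Your extra care in choosing connective count as the induction measure (or, alternatively, proving the primal and dual statements simultaneously) correctly resolves the subtlety that $\dual{c}[\vec{t}/\vec{x}]$ and $\varphi_0[\vec{t}/\vec{x}]$ are not literal subformulas of $\forall \vec{x}.(c \imp \varphi_0)$ --- a point the paper leaves implicit.
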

\begin{proof} By induction on $\varphi$.
\end{proof}


\begin{lem}[Consistency]\label{lem:consistency}
For all $\structure$ and $\varphi$, either $\structure \not \models
\varphi$ or $\structure \not \models \dual{\varphi}$.
\end{lem}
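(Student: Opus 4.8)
The plan is to prove this by structural induction on $\varphi$. Since the quantifier clauses of the semantics force the induction hypothesis to be applied to substitution instances such as $\varphi[\vec{t}/\vec{x}]$, I would first fix as my induction measure the number of logical connectives and quantifiers in $\varphi$. This measure is invariant under substitution of ground terms for variables (substitution only rewrites the terms inside atoms, never the connective structure), so applying the hypothesis to $\varphi[\vec{t}/\vec{x}]$ or $c[\vec{t}/\vec{x}]$ --- each of which has the same measure as the proper subformula $\varphi$ or $c$ --- keeps the induction well-founded. I would also record the routine auxiliary fact, proved by a separate induction, that duality commutes with substitution: $\dual{\varphi}[\vec{t}/\vec{x}] = \dual{\varphi[\vec{t}/\vec{x}]}$.

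For the base cases, the atomic case is where consistency is really forced: $\structure \models P$ holds iff $\rho_\structure(P) = \symt$, whereas $\structure \models \dual{P}$ holds iff $\rho_\structure(\dual{P}) = \dual{\rho_\structure(P)} = \symt$, i.e.\ iff $\rho_\structure(P) = \symf$; since $\rho_\structure(P)$ is a single value in $\{\symt, \symf, \symu\}$, it cannot be both $\symt$ and $\symf$, so at most one membership holds. For $\top$, note $\structure \models \top$ always and $\dual{\top} = \bot$, and no semantic clause ever makes $\bot$ true, so $\structure \not\models \bot$; the case for $\bot$ is symmetric. The propositional cases are immediate: for $\varphi \conj \psi$ with dual $\dual{\varphi} \disj \dual{\psi}$, assuming both are true would give $\structure \models \varphi$ and $\structure \models \psi$ together with $\structure \models \dual{\varphi}$ or $\structure \models \dual{\psi}$, contradicting the hypothesis on $\varphi$ or on $\psi$; the disjunctive case is dual.

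The main work, and the step I expect to be the main obstacle, is the pair of quantifier cases. For $\forall \vec{x}.(c \imp \varphi)$, whose dual is $\exists \vec{x}.(c \conj \dual{\varphi})$, I would assume both hold and derive a contradiction. The existential supplies some $\vec{t} \in \domain$ with $\structure \models c[\vec{t}/\vec{x}]$ and $\structure \models \dual{\varphi}[\vec{t}/\vec{x}]$; the universal, instantiated at this same $\vec{t}$, gives $\structure \models \dual{c}[\vec{t}/\vec{x}]$ or $\structure \models \varphi[\vec{t}/\vec{x}]$. Using the commutation fact $\dual{\varphi}[\vec{t}/\vec{x}] = \dual{\varphi[\vec{t}/\vec{x}]}$, the first disjunct makes $c[\vec{t}/\vec{x}]$ both true and false, contradicting the hypothesis applied to $c$, and the second makes $\varphi[\vec{t}/\vec{x}]$ both true and false, contradicting the hypothesis applied to $\varphi$. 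The existential case $\exists \vec{x}.(c \conj \varphi)$ is entirely symmetric. The delicate point throughout is simply that the hypothesis is applied to instances rather than to $\varphi$ and $c$ themselves, which is exactly why the substitution-invariant measure and the duality/substitution commutation are needed up front.
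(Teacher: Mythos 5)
Your proof is correct and follows the same route as the paper, which simply says ``by induction on $\varphi$'': your substitution-invariant measure, the duality/substitution commutation fact, and the case analysis are exactly the details that one-line proof leaves implicit (noting that restrictions $c$ are a subsyntax of formulas, so the induction hypothesis covers them). No gaps.
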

\begin{proof}
By induction on $\varphi$.
\end{proof}


\begin{thm}[Correctness of $\lift{\sat}$; Theorem~\ref{thm:sat:correct}]\label{thm:sat:correct:app}
If $\lift{\sat}(\structure, c)$ is defined then for any substitution
$\sigma'$ with $\dom(\sigma') \supseteq \fv(c)$, $\structure \models
c\sigma'$ iff there is a substitution $\sigma
\in\lift{\sat}(\structure, c)$ such that $\sigma' \geq \sigma$.
\end{thm}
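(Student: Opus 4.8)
The plan is to prove the biconditional by induction on the number of connectives ($\conj$, $\disj$, $\exists$) appearing in the restriction $c$, using the semantic clauses for $\structure \models {-}$ on restrictions and the defining equations of $\lift{\sat}$ from Section~\ref{sec:mode}. I deliberately induct on this count rather than on literal subterms, because in the conjunction case the inductive hypothesis must be applied to $c_2\sigma_1$, which is not a subterm of $c_1 \conj c_2$ but has strictly fewer connectives (substitution replaces variables by terms, and terms contain no connectives). The base cases are quick: for $c = \top$ both sides hold, since $\bullet \in \lift{\sat}(\structure,\top)$ and $\sigma' \geq \bullet$ always; for $c = \bot$ both sides fail, since $\lift{\sat}(\structure,\bot) = \{\}$. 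The atom case $c = p_O(t_1,\ldots,t_n)$ is where $\lift{\sat}$ reduces to $\sat$, so the argument is a direct appeal to the specification of $\sat$: forward, I restrict $\sigma'$ to the output variables $\bigcup_{i \in O(p_O)} \fv(t_i)$ to obtain $\sigma \in \sat(\structure, c)$ witnessed by the extension $\sigma'$ itself, with $\sigma' \geq \sigma$; backward, I take the extension the spec guarantees for $\sigma$ and reconcile it with $\sigma'$. The point to watch is that $\sat$ returns substitutions ranging only over output positions, so one must use that the input positions are already ground — which is exactly why $\lift{\sat}(\structure,c)$ is defined — to pin down the ground atom whose truth the spec reports.

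The conjunction case $c = c_1 \conj c_2$ is the crux. Recall $\lift{\sat}(\structure, c_1 \conj c_2) = \bigcup_{\sigma_1 \in \lift{\sat}(\structure,c_1)} \sigma_1 + \lift{\sat}(\structure, c_2\sigma_1)$. Two facts drive the induction: (i) if $\sigma_1$ maps variables to ground terms and $\sigma' \geq \sigma_1$, then $(c_2\sigma_1)\sigma' = c_2\sigma'$, since $\sigma'$ agrees with $\sigma_1$ on $\dom(\sigma_1)$ and re-applying $\sigma'$ to ground images changes nothing; and (ii) the domain of $\sigma_1$ is disjoint from the domain of any $\sigma_2 \in \lift{\sat}(\structure, c_2\sigma_1)$ (those variables were substituted away by $\sigma_1$), so $\sigma_1 + \sigma_2$ is well defined and $\sigma' \geq \sigma_1 + \sigma_2$ iff $\sigma' \geq \sigma_1$ and $\sigma' \geq \sigma_2$. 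Forward, I split $\structure \models c\sigma'$ into $\structure \models c_1\sigma'$ and $\structure \models c_2\sigma'$, apply the IH to $c_1$ to get $\sigma_1$, rewrite $c_2\sigma' = (c_2\sigma_1)\sigma'$ by (i), apply the IH to $c_2\sigma_1$ (whose free variables lie in $\fv(c)$ and on which $\lift{\sat}$ is defined because $\lift{\sat}(\structure,c)$ is) to get $\sigma_2$, and assemble $\sigma_1 + \sigma_2$ using (ii). Backward runs the same equivalences in reverse.

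The remaining inductive cases are lighter. Disjunction $c = c_1 \disj c_2$ is a routine union argument: $\structure \models c\sigma'$ iff $\structure \models c_1\sigma'$ or $\structure \models c_2\sigma'$, and the IH applied separately to $c_1$ and $c_2$ matches this against membership in $\lift{\sat}(\structure,c_1) \cup \lift{\sat}(\structure,c_2)$. For the existential $c = \exists x.c_0$ with $x$ fresh, I use $\lift{\sat}(\structure, \exists x.c_0) = \lift{\sat}(\structure,c_0)\backslash\{x\}$ together with the clause that $\structure \models (\exists x.c_0)\sigma'$ iff $\structure \models c_0(\sigma' + [x \mapsto t])$ for some $t \in \domain$; forward, the witness $t$ extends $\sigma'$ to $\sigma''$, the IH on $c_0$ yields $\sigma_0$ with $\sigma'' \geq \sigma_0$, and deleting $x$ gives the required member; backward, I read $t$ off the deleted binding $\sigma_0(x)$ (or pick $t$ arbitrarily if $x \notin \dom(\sigma_0)$). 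Freshness of $x$ — so that $x \notin \dom(\sigma')$ and $x \notin \fv(c)$ — is what lets me push $\sigma'$ under the quantifier and delete $x$ cleanly.

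I expect the main obstacle to be the conjunction case: keeping the substitution domains and the composition $+$ consistent with the extension order $\geq$, and establishing the rewriting identity $(c_2\sigma_1)\sigma' = c_2\sigma'$. The atom case is the other delicate point, since it is the sole place where the externally supplied $\sat$ contract is invoked, and one must reconcile its output-only substitutions with the fully grounding $\sigma'$.
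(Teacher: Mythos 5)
Your proof is correct and follows essentially the same route as the paper's: structural induction on $c$ with the same case analysis, the same key identity $(c_2\sigma_1)\sigma' = c_2\sigma'$ and substitution-composition argument in the conjunction case, the same union argument for $\disj$, and the same delete-the-binding argument for $\exists x.\,c$. Your only departure is the induction measure---counting connectives rather than inducting on literal subterms---which is a sound (and slightly more careful) justification of a step the paper performs silently, namely applying the induction hypothesis to $c_2\sigma_1$, which is not a subterm of $c_1 \conj c_2$; you likewise patch a small gap the paper leaves in the backward existential case when $x \notin \dom(\sigma_0)$.
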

\begin{proof}
By induction on $c$ and case analysis of its top-level constructor.\\

\case ~$c = p_o(t_1,\ldots,t_n)$. Then, $\lift{\sat}(\structure, c) =
\sat(\structure, c)$. The result follows from the condition that
$\sat$ is required to satisfy (Section~\ref{sec:mode}).\\

\case ~$c = \top$. Then, $\lift{\sat}(\structure, c) =
\{\bullet\}$. If $\structure \models c \sigma'$, $\sigma'$ trivially
extends $\bullet$ by definition. Conversely, any substitution
$\sigma'$ trivially satisfies $\structure \models \top \sigma'$.\\

\case ~$c = \bot$. Then, $\lift{\sat}(\structure, c) = \{\}$. The
result is vacuously true in both directions because $\structure \not
\models \bot \sigma'$, and $\sigma' \not \in \{\}$.\\

\case ~$c = c_1 \conj c_2$. Then, $\lift{\sat}(\structure, c) =
\bigcup_{\sigma_1 \in \lift{\sat}(\structure, c_1) } \sigma_1 +
\lift{\sat}(\structure, c_2 \sigma_1)$. Clearly, if this exists, then
$\lift{\sat}(\structure, c_1)$ must be defined also, and for each
$\sigma_1 \in \lift{\sat}(\structure, c_1)$, $\lift{\sat}(\structure,
c_2 \sigma_1)$ must also be defined. 

Suppose $\structure \models (c_1 \conj c_2) \sigma'$. By definition of
$\models$, we get $\structure \models c_1 \sigma'$ and $\structure
\models c_2 \sigma'$. By the i.h., the former implies that there is a
$\sigma_1 \in \lift{\sat}(\structure, c_1)$ such that $\sigma' \geq
\sigma_1$. This also implies that $c_2 \sigma' = (c_2 \sigma_1)
\sigma'$. So, $\structure \models c_2 \sigma'$ implies $\structure
\models (c_2 \sigma_1) \sigma'$. Consequently, by the i.h.\ on $c_2
\sigma_1$, there must be a $\sigma_2 \in \lift{\sat}(\structure, c_2
\sigma_1)$ such that $\sigma' \geq \sigma_2$. It follows that $\sigma'
\geq \sigma_1 + \sigma_2$. Clearly, $(\sigma_1 + \sigma_2) \in
(\sigma_1 + \lift{\sat}(\structure, c_2 \sigma_1)) \subseteq
(\bigcup_{\sigma_1 \in \lift{\sat}(\structure, c_1) } \sigma_1 +
\lift{\sat}(\structure, c_2 \sigma_1)) = \lift{\sat}(\structure, c)$,
as required.

Conversely, suppose that there is a $\sigma \in \bigcup_{\sigma_1 \in
  \lift{\sat}(\structure, c_1) } \sigma_1 + \lift{\sat}(\structure,
c_2 \sigma_1)$ and $\sigma' \geq \sigma$ with $\dom(\sigma') \supseteq
\fv(\sigma)$. We need to show that $\structure \models (c_1 \conj c_2)
\sigma'$ or, equivalently, $\structure \models c_1 \sigma'$ and
$\structure \models c_2 \sigma'$. By set-theory, there must be a
$\sigma_1 \in \lift{\sat}(\structure, c_1)$ and a $\sigma_2 \in
\lift{\sat}(\structure, c_2\sigma_1)$ such that $\sigma = \sigma_1 +
\sigma_2$. Clearly, $\sigma' \geq \sigma_1$. So, by the i.h., we
immediately have $\structure \models c_1 \sigma'$. Similarly,
$\sigma' \geq \sigma_2$. So, by i.h.\ on $c_2 \sigma_1$, $\structure
\models c_2\sigma_1 \sigma'$. But, $c_2 \sigma_1 \sigma' = c_2
\sigma'$. Therefore, $\structure \models c_2\sigma'$.\\

\case ~$c = c_1 \disj c_2$. Then, $\lift{\sat}(\structure, c) =
\lift{\sat}(\structure, c_1) \cup \lift{\sat}(\structure, c_2)$. If
this is defined, then, clearly, both $\lift{\sat}(\structure, c_1)$
and $\lift{\sat}(\structure, c_2)$ must be defined.

Suppose $\structure \models (c_1 \disj c_2) \sigma'$. By definition of
$\models$, we get that either $\structure \models c_1 \sigma'$ or
$\structure \models c_2 \sigma'$. We consider here the former case
(the latter is similar). So $\structure \models c_1 \sigma'$. By the
i.h., there is a $\sigma \in \lift{\sat}(\structure, c_1)$ such that
$\sigma' \geq \sigma_1$. The proof is complete by noting that
$\sigma_1 \in \lift{\sat}(\structure, c_1) \in \lift{\sat}(\structure,
c)$.

Conversely, suppose that there is a $\sigma \in
\lift{\sat}(\structure, c_1) \cup \lift{\sat}(\structure, c_2)$ and
$\sigma' \geq \sigma$ with $\dom(\sigma') \supseteq \fv(\sigma)$. We
need to show that $\structure \models (c_1 \disj c_2) \sigma'$ or,
equivalently, either $\structure \models c_1\sigma'$ or $\structure
\models c_2\sigma'$. From $\sigma \in \lift{\sat}(\structure, c_1)
\cup \lift{\sat}(\structure, c_2)$, we get that either $\sigma \in
\lift{\sat}(\structure, c_1)$ or $\sigma \in \lift{\sat}(\structure,
c_2)$. Consider the former case (the latter is similar): $\sigma \in
\lift{\sat}(\structure, c_1)$. By i.h.\ on $c_1$, we immediately get
$\structure \models c_1 \sigma'$, as required.\\

\case ~$c = \exists x. c'$. Then, $\lift{\sat}(\structure, c) =
\lift{\sat}(\structure, c') \backslash \{x\}$. If this is defined,
then, clearly, $\lift{\sat}(\structure, c')$ must also be defined.

Suppose $\structure \models (\exists x. c') \sigma'$. By definition of
$\models$, there must be a $t$ such that $\structure \models
c'[t/x]\sigma'$. By i.h.\ on $c'$, there must be a $\sigma \in
\lift{\sat}(\structure, c')$ such that $(\sigma' + [x \mapsto t]) \geq
\sigma$. Clearly, $\sigma' \geq \sigma \backslash \{x\}$ and
$\sigma\backslash\{x\} \in \lift{\sat}(\structure, c)$, as required.

Conversely, suppose that there is a $\sigma \in
\lift{\sat}(\structure, c') \backslash \{x\}$ and $\sigma' \geq
\sigma$ with $\dom(\sigma') \supseteq \fv(c)$. We need to show that
$\structure \models c \sigma'$. Because $\sigma \in
\lift{\sat}(\structure, c') \backslash \{x\}$, there is a $\sigma''
\in \lift{\sat}(\structure, c')$ and a $t$ such that $\sigma'' =
\sigma + [x \mapsto t]$. Clearly, $\sigma' + [x \mapsto t] \geq \sigma
+ [x \mapsto t] = \sigma''$. By i.h.\ on $c'$, $\structure \models c'
[t/x] \sigma'$, which implies (by definition of $\models$) that
$\structure \models (\exists x.c') \sigma'$, i.e., $\structure \models
c\sigma'$.
\end{proof}


\begin{lem}[Duality of $\reduce$]\label{lem:duality:reduce}
$\reduce(\structure, \dual{\varphi})$ = $\dual{\reduce(\structure,
    \varphi)}$.
\end{lem}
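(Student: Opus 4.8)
The plan is to prove the identity by structural induction on $\varphi$, with a case analysis on its top-level constructor. Since every recursive call of $\reduce$ in the quantifier clauses is on a formula $\varphi'[\vec{t_i}/\vec{x}]$ obtained by substituting \emph{ground} terms, I would take the induction measure to be the number of logical connectives and quantifiers in $\varphi$, which is invariant under such ground substitution, so that the induction hypothesis remains available at each recursive call. Before starting I would record one trivial auxiliary fact, established by its own easy induction: dualization commutes with ground substitution, i.e.\ $\dual{\varphi}[\vec{t}/\vec{x}] = \dual{\varphi[\vec{t}/\vec{x}]}$. This holds because $\dual{\cdot}$ rewrites only predicate symbols and propositional/quantifier connectives, never the term arguments that a substitution touches.

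The propositional cases are routine. For an atom $P$, I would use the definitional facts $\rho_\structure(\dual{P}) = \dual{\rho_\structure(P)}$ together with $\dual{\symt} = \symf$, $\dual{\symf} = \symt$, $\dual{\symu} = \symu$, and $\dual{\top} = \bot$, $\dual{\bot} = \top$, to check that all three valuations of $P$ line up. The $\top$, $\bot$, $\conj$, and $\disj$ cases follow immediately from the definition of $\reduce$, the De Morgan clauses of $\dual{\cdot}$, and the induction hypothesis; for instance in the $\conj$ case both sides reduce to $\reduce(\structure, \dual{\varphi_1}) \disj \reduce(\structure, \dual{\varphi_2})$.

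The crux is the quantifier case, say $\varphi = \forall \vec{x}.(c \imp \varphi')$ (the $\exists$ case is symmetric, and in fact also follows from the $\forall$ case together with involutivity of $\dual{\cdot}$). The essential observation is that dualization leaves the restriction $c$ untouched: $\dual{\forall \vec{x}.(c \imp \varphi')} = \exists \vec{x}.(c \conj \dual{\varphi'})$. Hence when I compute $\reduce$ on the two sides, the call $\lift{\sat}(\structure, c)$ returns the same set $\{\sigma_1,\ldots,\sigma_n\}$, the same instances $\vec{t_i} = \sigma_i(\vec{x})$, and the same $S = \{\vec{t_1},\ldots,\vec{t_n}\}$ in both computations. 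On the left, $\dual{\reduce(\structure, \forall \vec{x}.(c \imp \varphi'))}$ expands, via the $\conj$-clause of $\dual{\cdot}$, to $\dual{\psi_1} \disj \cdots \disj \dual{\psi_n} \disj \dual{\psi'}$ with $\psi_i = \reduce(\structure, \varphi'[\vec{t_i}/\vec{x}])$ and $\psi' = \forall \vec{x}.((c \conj \vec{x} \not\in S) \imp \varphi')$. Applying the induction hypothesis to each $\psi_i$ and then the substitution-commutation fact gives $\dual{\psi_i} = \reduce(\structure, \dual{\varphi'}[\vec{t_i}/\vec{x}])$, while the appendix definition of $\dual{\cdot}$ on the $\not\in S$ quantifier gives $\dual{\psi'} = \exists \vec{x}.((c \conj \vec{x} \not\in S) \conj \dual{\varphi'})$. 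On the right, $\reduce(\structure, \exists \vec{x}.(c \conj \dual{\varphi'}))$ produces exactly these same $n$ disjuncts (now using the $\exists$-clause of $\reduce$) together with the same residual $\exists \vec{x}.((c \conj \vec{x} \not\in S) \conj \dual{\varphi'})$, so the two sides coincide.

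I expect the only real friction to be bookkeeping rather than mathematics: making the residual quantifier $\psi'$ match on the nose (this is precisely why the appendix extends $\dual{\cdot}$ to the $\vec{x} \not\in S$ guarded quantifiers), and confirming that the induction hypothesis is legitimately available at the substituted subformulas $\varphi'[\vec{t_i}/\vec{x}]$ --- which is exactly what the connective-count induction measure and the substitution-commutation lemma secure. Definedness is not an extra obstacle: both sides invoke $\lift{\sat}$ on the identical restriction $c$, so one side is defined iff the other is, and the equation is to be read as holding wherever $\reduce$ is defined (in particular on well-moded inputs, well-modedness being preserved by $\dual{\cdot}$ since it leaves restrictions and argument positions unchanged).
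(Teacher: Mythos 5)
Your proposal is correct and follows essentially the same route as the paper's proof: induction on $\varphi$, with the quantifier case hinging on the observation that dualization leaves the restriction $c$ (and hence $\lift{\sat}(\structure,c)$, the instances $\vec{t_i}$, and $S$) unchanged, so the conjuncts/disjuncts and residual formulas match up via the appendix clauses for $\dual{\cdot}$. Your explicit connective-count measure and the substitution-commutation fact $\dual{\varphi}[\vec{t}/\vec{x}] = \dual{\varphi[\vec{t}/\vec{x}]}$ merely make precise two points the paper's ``straightforward induction'' uses implicitly.
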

\begin{proof}
By a straightforward induction on $\varphi$. We show some
representative cases below.\\

\case ~$\varphi = P$. Then, 
\[\reduce(\structure, P) ~ = ~ \left\lbrace
\begin{array}{ll}
\top & \mbox{ if $\rho_{\structure}(P) = \symt$}\\
\bot & \mbox{ if $\rho_{\structure}(P) = \symf$}\\
P & \mbox{ if $\rho_{\structure}(P) = \symu$}
\end{array}\right.\]
We consider all three possible subcases on $\rho_\structure(P)$. If
$\rho_\structure(P) = \symt$, then, by definition,
$\rho_\structure(\dual{P}) = \symf$, so $\reduce(\structure, \dual{P})
= \bot = \dual{\top} = \dual{\reduce(\structure, P)}$. The case of
$\rho_\structure(P) = \symf$ is similar. For $\rho_\structure(P) =
\symu$, we have $\rho_\structure(\dual{P}) = \symu$, so
$\reduce(\structure, \dual{P}) = \dual{P} = \dual{\reduce(\structure,
  P)}$.\\

\case ~$\varphi = \forall \vec{x}. (c \imp \varphi')$.  Then,
$\reduce(\structure, \varphi)$ is calculated as follows:
\[\reduce(\structure, \forall \vec{x}.(c \imp \varphi')) ~ = ~ 
\begin{array}[t]{@{}l}
\mbox{let}\\
~\{\sigma_1,\ldots,\sigma_n\} \leftarrow \lift{\sat}(\structure,c)\\
~\{ \vec{t_i} \leftarrow \sigma_i(\vec{x}) \}_{i=1}^n\\
~S \leftarrow \{\vec{t_1},\ldots,\vec{t_n}\}\\
~\{ \psi_i \leftarrow \reduce(\structure,
\varphi'[\vec{t_i}/\vec{x}]) \}_{i=1}^n\\
~\psi' \leftarrow \forall \vec{x}.((c \conj \vec{x} \not \in S) \imp \varphi')\\
\mbox{return } \\
~~~ \psi_1 \conj \ldots \conj \psi_n \conj \psi'
\end{array} \]
Note that $\dual{\varphi} = \exists \vec{x}.(c \conj
\dual{\varphi'})$. Consequently, $\reduce(\structure, \dual{\varphi})$
is calculated as follows, where we have renamed some bound variables to
distinguish them from those in the above display.
\[\reduce(\structure, \exists \vec{x}.(c \conj \dual{\varphi'})) ~ = ~ 
\begin{array}[t]{@{}l}
\mbox{let}\\
~\{\sigma_1',\ldots,\sigma_n'\} \leftarrow \lift{\sat}(\structure,c)\\
~\{ \vec{t_i'} \leftarrow \sigma_i'(\vec{x}) \}_{i=1}^n\\
~S' \leftarrow \{\vec{t_1'},\ldots,\vec{t_n'}\}\\
~\{ \psi_i' \leftarrow \reduce(\structure,
\dual{\varphi'}[\vec{t_i'}/\vec{x}]) \}_{i=1}^n\\
~\psi'' \leftarrow \exists \vec{x}.((c \conj \vec{x} \not \in S') \conj \dual{\varphi'})\\
\mbox{return } \\
~~~ \psi_1' \disj \ldots \disj \psi_n' \disj \psi''
\end{array} \]

We must have $\sigma_i = \sigma_i'$ (because both are calculated using
$\lift{\sat}(\structure, c)$) and, consequently, $\vec{t_i} =
\vec{t_i'}$ and $S = S'$. Thus, by the i.h., we get that
$\reduce(\structure, \dual{\varphi'}[\vec{t_i'}/\vec{x}]) =
\dual{\reduce(\structure, \varphi'[\vec{t_i}/\vec{x}])}$, i.e.,
$\psi_i' = \dual{\psi_i}$. Also observe that directly from definition
of duality, $\psi'' = \dual{\psi'}$. Thus, $\reduce(\structure,
\dual{\varphi}) = \psi_1' \disj \ldots \disj \psi_n' \disj \psi'' =
\dual{\psi_1} \disj \ldots \disj \dual{\psi_n} \disj \dual{\psi'} =
\dual{\psi_1 \conj \ldots \conj \psi_n \conj \psi'} =
\dual{\reduce(\structure, \varphi)}$.
\end{proof}


\begin{thm}[Correctness of $\reduce$; Theorem~\ref{thm:correctness}]\label{thm:correctness:app}
If $\reduce(\structure, \varphi) = \psi$ and $\structure' \geq
\structure$, then (1)~$\structure' \models \varphi$ iff $\structure'
\models \psi$ and (2)~$\structure' \models \dual{\varphi}$ iff
$\structure' \models \dual{\psi}$.
\end{thm}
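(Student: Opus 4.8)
The plan is to prove part (1) by induction on the structure of $\varphi$ (equivalently, on the recursion of $\reduce$), and then to obtain part (2) almost for free from the duality lemma. Concretely, once part (1) is established for \emph{all} formulas, I would apply it to $\dual{\varphi}$: by Lemma~\ref{lem:duality:reduce} we have $\reduce(\structure, \dual{\varphi}) = \dual{\psi}$, so part (1) instantiated at $\dual{\varphi}$ reads $\structure' \models \dual{\varphi}$ iff $\structure' \models \dual{\psi}$, which is exactly part (2). This lets me run a single induction carrying only one inductive hypothesis, rather than proving (1) and (2) by mutual induction. Since $\reduce(\structure,\varphi)=\psi$ is assumed, every $\lift{\sat}(\structure,c)$ call encountered returned a value, so Theorem~\ref{thm:sat:correct} is applicable throughout.

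The base and propositional cases are routine. For an atom $P$ I would split on $\rho_\structure(P)$: when it is $\symt$ or $\symf$, the fact that $\structure' \geq \structure$ freezes the valuation, together with the semantics of $\top$ and $\bot$, makes both sides of the iff agree; when it is $\symu$ we have $\psi = P$ and the iff is trivial. The cases $\top$, $\bot$, $\varphi_1 \conj \varphi_2$, and $\varphi_1 \disj \varphi_2$ follow directly from the inductive hypothesis and the clauses of $\models$, since $\reduce$ commutes with $\conj$ and $\disj$.

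The real work is the quantifier case $\varphi = \forall \vec{x}.(c \imp \varphi')$, where $\reduce$ returns $\psi_1 \conj \cdots \conj \psi_n \conj \psi'$ with $\{\sigma_1, \ldots, \sigma_n\} = \lift{\sat}(\structure, c)$, $\vec{t_i} = \sigma_i(\vec{x})$, $S = \{\vec{t_1}, \ldots, \vec{t_n}\}$, $\psi_i = \reduce(\structure, \varphi'[\vec{t_i}/\vec{x}])$, and $\psi' = \forall \vec{x}.((c \conj \vec{x} \not\in S) \imp \varphi')$. I would prove both directions by unfolding the semantics of $\forall$ (for every $\vec{t}$, either $\structure' \models \dual{c}[\vec{t}/\vec{x}]$ or $\structure' \models \varphi'[\vec{t}/\vec{x}]$) and case-splitting each instance $\vec{t}$ on whether $\vec{t} \in S$. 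For $\vec{t} = \vec{t_i} \in S$, Theorem~\ref{thm:sat:correct} gives $\structure \models c[\vec{t_i}/\vec{x}]$, monotonicity (Lemma~\ref{lem:monotonicity}) lifts this to $\structure'$, and consistency (Lemma~\ref{lem:consistency}) rules out $\structure' \models \dual{c}[\vec{t_i}/\vec{x}]$; so on $S$ the clause is governed purely by $\varphi'[\vec{t_i}/\vec{x}]$, which by the inductive hypothesis matches $\psi_i$. For $\vec{t} \notin S$, the objective membership predicate makes $\vec{t} \not\in S$ true, so $\dual{(c \conj \vec{x} \not\in S)}[\vec{t}/\vec{x}]$ collapses to $\dual{c}[\vec{t}/\vec{x}]$, and the guarded conjunct $\psi'$ expresses exactly the original clause on the complement of $S$. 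Assembling the two regions yields the iff in both directions; the $\exists$ case is perfectly dual.

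The main obstacle I anticipate is the bookkeeping around $S$ in the quantifier case: one must argue that the finitely many instances recorded in $S$ are precisely those satisfying $c$ on $\structure$ (so they can be discharged now via the $\psi_i$), while the residual $\psi'$ faithfully carries \emph{all} remaining instances — including those that may come to satisfy $c$ only in some later extension $\structure' \geq \structure$. The delicate point is that the rewritten guard $c \conj \vec{x} \not\in S$ must neither drop a future-relevant instance nor re-examine an already-handled one, and verifying this hinges on combining $\lift{\sat}$-correctness (which speaks about $\structure$) with monotonicity and consistency (which transport the conclusion to $\structure'$). Once this interplay is arranged cleanly, the remaining cases are mechanical.
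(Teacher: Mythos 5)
Your proposal is correct and follows essentially the same route as the paper's proof: part~(2) is obtained from part~(1) via Lemma~\ref{lem:duality:reduce}, part~(1) is proved by induction on $\varphi$, and the quantifier case combines Theorem~\ref{thm:sat:correct:app}, monotonicity (Lemma~\ref{lem:monotonicity}), and consistency (Lemma~\ref{lem:consistency}) with a case split on $\vec{t} \in S$ versus $\vec{t} \not\in S$, exactly as in Appendix~\ref{app:enforcement}. The bookkeeping around $S$ that you flag as the delicate point is indeed where the paper spends its effort, and your sketch resolves it the same way.
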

\begin{proof}
First observe that (1) implies (2). Why? Suppose (1) holds for all
$\varphi$. We need to show that (2) holds. So suppose
$\reduce(\structure, \varphi) = \psi$ and $\structure' \geq
\structure$. By Lemma~\ref{lem:duality:reduce}, $\reduce(\structure,
\dual{\varphi}) = \dual{\psi}$. Applying the assumed (1) to
$\dual{\varphi}$ instead of $\varphi$, we immediately deduce that
$\structure' \models \dual{\varphi}$ iff $\structure' \models
\dual{\psi}$, as required. 

Hence, we only need to prove (1). We do that by induction on
$\varphi$, and a case analysis of its top-level constructor.\\

\case ~$\varphi = P$. Then, 
\[\reduce(\structure, \varphi) ~ = ~ \left\lbrace
\begin{array}{ll}
\top & \mbox{ if $\rho_{\structure}(P) = \symt$}\\
\bot & \mbox{ if $\rho_{\structure}(P) = \symf$}\\
P & \mbox{ if $\rho_{\structure}(P) = \symu$}
\end{array} \right. 
\]
We consider three subcases on the value of $\rho_\structure(P)$. \\

\subcase ~$\rho_\structure(P) = \symt$. Here, $\psi = \top$. First,
assume that $\structure' \models \varphi$. Then, we need to prove that
$\structure' \models \psi$, i.e, $\structure' \models \top$. This
follows directly from the definition of $\models$. Conversely, assume
that $\structure' \models \psi$. We need to prove that $\structure'
\models P$. By definition, this is equivalent to proving
$\rho_{\structure'}(P) = \symt$, which follows immediately from the
subcase assumption $\rho_\structure(P) = \symt$ and the assumption
$\structure' \geq \structure$. \\

\subcase ~$\rho_\structure(P) = \symf$. Here $\psi = \bot$. First,
assume that $\structure' \models \varphi$. We need to show that
$\structure' \models \psi$. From the subcase assumption, we have
$\rho_\structure(P) = \symf$, so the definition of $\structure' \geq
\structure$ implies that $\rho_{\structure'}(P) = \symf$. However,
$\structure' \models \varphi$ implies $\structure' \models P$, i.e.,
$\rho_{\structure'}(P) = \symt$ -- a contradiction.  Thus,
$\structure' \models \psi$ holds vacuously.

Conversely, suppose that $\structure' \models \psi$, i.e.,
$\structure' \models \bot$. By definition of $\models$, this is a
contradiction, so $\structure' \models \varphi$ holds vacuously, as
required.\\

\subcase ~$\rho_\structure(P) = \symu$. Here, $\varphi = \psi = P$, so
the case is trivial.\\

\case ~$\varphi = \top$. Then, $\psi = \reduce(\structure, \varphi) =
\reduce(\structure, \top) = \top$. Since $\varphi = \psi$, the case is
trivial.\\

\case ~$\varphi = \bot$. Then, $\psi = \reduce(\structure, \varphi) =
\reduce(\structure, \bot) = \bot$. Since $\varphi = \psi$, the case is
trivial.\\

\case ~$\varphi = \varphi_1 \conj \varphi_2$. Then, $\psi =
\reduce(\structure, \varphi_1) \conj \reduce(\structure, \varphi_2)$,
so both the conjuncts exist. First, suppose that $\structure' \models
\varphi$, i.e., $\structure' \models \varphi_1$ and $\structure'
\models \varphi_2$. By the i.h., $\structure' \models
\reduce(\structure, \varphi_1)$ and $\structure' \models
\reduce(\structure, \varphi_2)$ or, equivalently, $\structure' \models
\psi$.

Conversely, suppose that $\structure' \models \psi$. Then,
$\structure' \models \reduce(\structure, \varphi_1)$ and $\structure'
\models \reduce(\structure, \varphi_2)$. By the i.h., $\structure'
\models \varphi_1$ and $\structure' \models \varphi_2$, i.e.,
$\structure'\models \varphi$.\\

\case ~$\varphi = \varphi_1 \disj \varphi_2$. Then, $\psi =
\reduce(\structure, \varphi_1) \disj \reduce(\structure, \varphi_2)$,
so both the disjuncts exist. First, suppose that $\structure' \models
\varphi$, i.e., either $\structure' \models \varphi_1$ or $\structure'
\models \varphi_2$. By the i.h., either $\structure' \models
\reduce(\structure, \varphi_1)$ or $\structure' \models
\reduce(\structure, \varphi_2)$. Equivalently, $\structure' \models
\psi$.

Conversely, suppose that $\structure' \models \psi$. Then, either
$\structure' \models \reduce(\structure, \varphi_1)$ or $\structure'
\models \reduce(\structure, \varphi_2)$. By the i.h., either
$\structure' \models \varphi_1$ or $\structure' \models
\varphi_2$. Equivalently, $\structure'\models \varphi$.\\

\case ~$\varphi = \forall \vec{x}.(c \imp \varphi')$. Then, $\psi =
\reduce(\structure, \varphi)$ is calculated as follows.
\[\reduce(\structure, \forall \vec{x}.(c \imp \varphi')) ~ = ~ 
\begin{array}[t]{@{}l}
\mbox{let}\\
~\{\sigma_1,\ldots,\sigma_n\} \leftarrow \lift{\sat}(\structure,c)\\
~\{ \vec{t_i} \leftarrow \sigma_i(\vec{x}) \}_{i=1}^n\\
~S \leftarrow \{\vec{t_1},\ldots,\vec{t_n}\}\\
~\{ \psi_i \leftarrow \reduce(\structure,
\varphi'[\vec{t_i}/\vec{x}]) \}_{i=1}^n\\
~\psi' \leftarrow \forall \vec{x}.((c \conj \vec{x} \not \in S) \imp \varphi')\\
\mbox{return } \\
~~~ \psi_1 \conj \ldots \conj \psi_n \conj \psi'
\end{array} \]

So $\psi = \psi_1 \conj \ldots \conj \psi_n \conj \forall \vec{x}.((c
\conj \vec{x} \not \in S) \imp \varphi')$. First, suppose that
$\structure' \models \varphi$, i.e., $\structure' \models \forall
\vec{x}.(c \imp \varphi')$. We need to prove that $\structure' \models
\psi$, i.e., $\structure' \models \psi_i$ and $\structure' \models
\forall \vec{x}.((c \conj \vec{x} \not \in S) \imp \varphi')$. We
first prove that $\structure' \models \psi_i$. Because
$\reduce(\structure, \varphi'[\vec{t_i}/\vec{x}]) = \psi_i$, by the
i.h., it suffices to show that $\structure' \models
\varphi'[\vec{t_i}/\vec{x}]$. From the definition of $\structure'
\models \forall \vec{x}.(c\imp \varphi')$, either $\structure' \models
\dual{c}[\vec{t_i}/\vec{x}]$ or $\structure' \models
\varphi'[\vec{t_i}/\vec{x}]$. Hence, it suffices to prove that
$\structure' \not\models \dual{c}[\vec{t_i}/\vec{x}]$. Suppose, for
the sake of contradiction, that $\structure' \models
\dual{c}[\vec{t_i}/\vec{x}]$.  Since $\sigma_i \in
\lift{\sat}(\structure, c)$, Theorem~\ref{thm:sat:correct:app} yields
$\structure \models c\sigma_i$, i.e., $\structure \models
c[\vec{t_i}/\vec{x}]$ (note that because $\forall \vec{x}.(c \imp
\varphi')$ is closed, $\fv(c) \subseteq \vec{x}$; so
$c[\vec{t_i}/\vec{x}] = c \sigma_i$). Hence, by
Lemma~\ref{lem:monotonicity}, $\structure' \models
c[\vec{t_i}/\vec{x}]$, which, by Lemma~\ref{lem:consistency},
contradicts the earlier fact $\structure' \models
\dual{c}[\vec{t_i}/\vec{x}]$.

Next, we show that $\structure' \models \forall \vec{x}.((c \conj
\vec{x} \not \in S) \imp \varphi')$. Following the definition of
$\models$, pick any $\vec{t}$. We show that either $\structure'
\models \dual{(c \conj \vec{x} \not \in S)}[\vec{t}/\vec{x}]$ or
$\structure' \models \varphi'[\vec{t}/\vec{x}]$. Since we assumed that
$\structure' \models \forall \vec{x}. (c\imp \varphi')$, either
$\structure' \models \dual{c}[\vec{t}/\vec{x}]$ or $\structure'
\models \varphi'[\vec{t}/\vec{x}]$. The proof is complete by observing
that $\structure' \models \dual{c}[\vec{t}/\vec{x}]$ implies
$\structure' \models \dual{(c \conj \vec{x} \not \in
  S)}[\vec{t}/\vec{x}]$.

Conversely, assume that $\structure' \models \psi$, i.e., $\structure'
\models \psi_i$ and $\structure' \models \forall \vec{x}.((c \conj
\vec{x} \not \in S) \imp \varphi')$. We need to prove that
$\structure' \models \varphi$, i.e., $\structure' \models \forall
\vec{x}.(c \imp \varphi')$. Following the definition of $\models$,
pick any $\vec{t}$. We need to prove that either $\structure' \models
\dual{c}[\vec{t}/\vec{x}]$ or $\structure' \models
\varphi'[\vec{t}/\vec{x}]$. We consider two subcases. Either $\vec{t}
\in S$ or $\vec{t} \not \in S$. \\

\subcase ~$\vec{t} \in S$. Then, $\vec{t} = \vec{t_i}$ for some
$i$. Since $\reduce(\structure, \varphi'[\vec{t_i}/\vec{x}]) = \psi_i$
and $\structure' \models \psi_i'$, by the i.h.\ we get $\structure'
\models \varphi'[\vec{t_i}/\vec{x}]$, as required. \\

\subcase ~$\vec{t} \not \in S$. We already know that $\structure'
\models \forall \vec{x}.((c \conj \vec{x} \not \in S) \imp
\varphi')$. So, either $\structure' \models \dual{(c \conj \vec{x}
  \not \in S)}[\vec{t}/\vec{x}]$ or $\structure' \models
\varphi'[\vec{t}/\vec{x}]$. If the latter, we are done, so assume the
former. Thus, $\structure' \models \dual{(c \conj \vec{x} \not \in
  S)}[\vec{t}/\vec{x}]$, i.e., $\structure' \models
\dual{c}[\vec{t}/\vec{x}] \disj \vec{t} \in S$. This immediately
implies that either $\structure' \models \dual{c}[\vec{t}/\vec{x}]$ or
$\vec{t} \in S$. The former case is sufficient for our purpose, and
the latter case contradicts the subcase assumption.\\

\case ~$\varphi = \exists \vec{x}.(c \conj \varphi')$. Then,
$\reduce(\structure, \varphi)$ is calculated as follows.
\[\reduce(\structure, \exists \vec{x}.(c \conj \varphi')) ~ = ~ 
\begin{array}[t]{@{}l}
\mbox{let}\\
~\{\sigma_1,\ldots,\sigma_n\} \leftarrow \lift{\sat}(\structure,c)\\
~\{ \vec{t_i} \leftarrow \sigma_i(\vec{x}) \}_{i=1}^n\\
~S \leftarrow \{\vec{t_1},\ldots,\vec{t_n}\}\\
~\{ \psi_i \leftarrow \reduce(\structure,
\varphi'[\vec{t_i}/\vec{x}]) \}_{i=1}^n\\
~\psi' \leftarrow \exists \vec{x}.((c \conj \vec{x} \not \in S) \conj \varphi')\\
\mbox{return } \\
~~~ \psi_1 \disj \ldots \disj \psi_n \disj \psi'
\end{array}\]

So, $\psi = \psi_1 \disj \ldots \disj \psi_n \disj \exists \vec{x}.((c
\conj \vec{x} \not \in S) \conj \varphi')$. First suppose that
$\structure' \models \varphi$. We show that $\structure' \models
\psi$. Following the definition of $\models$ on $\structure' \models
\varphi$, we obtain a $\vec{t}$ such that $\structure' \models
c[\vec{t}/\vec{x}]$ and $\structure' \models
\varphi'[\vec{t}/\vec{x}]$. We consider two subcases: either $\vec{t}
\in S$ or $\vec{t} \not \in S$.\\

\subcase ~$\vec{t} \in S$. So, $\vec{t} = \vec{t_i}$ for some $i$ and
from $\structure' \models \varphi'[\vec{t}/\vec{x}]$ we obtain
$\structure' \models \varphi'[\vec{t_i}/\vec{x}]$. Since
$\reduce(\structure, \varphi'[\vec{t_i}/\vec{x}]) = \psi_i$, by the
i.h., we get $\structure' \models \psi_i$, which immediately implies
$\structure' \models \psi$.\\

\subcase ~$\vec{t} \not \in S$. Combining this and $\structure'
\models c[\vec{t}/\vec{x}]$, we get $\structure' \models (c \conj
\vec{x}\not\in S)[\vec{t}/\vec{x}]$. Since $\structure' \models
\varphi'[\vec{t}/\vec{x}]$, we derive from the definition of $\models$
that $\structure' \models \exists \vec{x}.((c \conj \vec{x} \not\in S)
\conj \varphi')$. This immediately yields $\structure' \models \psi$.\\

Conversely, suppose that $\structure' \models \psi$. We show that
$\structure' \models \varphi$. $\structure' \models \psi$ implies that
either $\structure' \models \psi_i$ for some $i$ or $\structure'
\models \exists \vec{x}.((c \conj \vec{x} \not \in S) \conj
\varphi')$. We consider both subcases below.\\

\subcase ~$\structure \models \psi_i$. Since $\reduce(\structure,
\varphi'[\vec{t_i}/\vec{x}]) = \psi_i$, by the i.h., $\structure'
\models \varphi'[\vec{t_i}/\vec{x}]$. Further, observe that because
$\sigma_i \in \lift{\sat}(\structure, c)$ and $\dom(\sigma_i)
\supseteq \vec{x} \supseteq \fv(c)$ (the latter because $\exists
\vec{x}. (c \imp \varphi')$ must be closed),
Theorem~\ref{thm:sat:correct:app} yields $\structure \models
c\sigma_i$ and, hence, $\structure \models c[\vec{t_i}/\vec{x}]$. By
Lemma~\ref{lem:monotonicity}, $\structure' \models
c[\vec{t_i}/\vec{x}]$. Since we have already derived $\structure'
\models \varphi'[\vec{t_i}/\vec{x}]$, the definition of $\models$
yields that $\structure' \models \exists \vec{x}. (c \conj \varphi')$,
as required.\\

\subcase ~$\structure' \models \exists \vec{x}.((c \conj \vec{x} \not
\in S) \conj \varphi')$. Thus there must be a $\vec{t}$ such that
$\structure' \models c[\vec{t}/\vec{x}]$, $\vec{t} \not \in S$, and
$\structure' \models \varphi'[\vec{t}/\vec{x}]$. The first and third
facts in the last sentence imply that $\structure' \models \exists
\vec{x}. (c \conj \varphi')$, as required.
\end{proof}


\begin{thm}[Totality of $\lift{\sat}$;
    Theorem~\ref{thm:sat:total}]
\label{thm:sat:total:app}
If $\chi_I \vdash c: \chi_O$, then for all structures $\structure$ and
all substitutions $\sigma$ with $\dom(\sigma) \supseteq \chi_I$,
$\lift{\sat}(\structure, c\sigma)$ is defined and, further, for each
substitution $\sigma' \in \lift{\sat}(\structure, c\sigma)$, $\chi_I
\cup \dom(\sigma') \supseteq \chi_O$.
\end{thm}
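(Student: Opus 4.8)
The plan is to induct on the derivation of the mode-judgment $\chi_I \vdash c : \chi_O$ (equivalently, on the structure of $c$), establishing the two conclusions — definedness of $\lift{\sat}(\structure, c\sigma)$ and the domain bound $\chi_I \cup \dom(\sigma') \supseteq \chi_O$ for every output $\sigma'$ — simultaneously. The base cases are immediate from the definition of $\lift{\sat}$. For $c = \top$ we have $\lift{\sat}(\structure,\top) = \{\bullet\}$ and $\chi_O = \chi_I$, so both conclusions hold; for $c = \bot$ the output is $\{\}$, so definedness is trivial and the domain bound holds vacuously. For an atom $c = p_O(t_1,\ldots,t_n)$, the first premise of its moding rule gives $\fv(t_k) \subseteq \chi_I \subseteq \dom(\sigma)$ for every input position $k \in I(p_O)$, so each input argument $t_k\sigma$ is ground; hence $\sat(\structure, (p_O(\vec{t}))\sigma)$ is defined by the specification that $\sat$ must satisfy. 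Its output substitutions range over the variables of the output positions $\bigcup_{j \in O(p_O)} \fv(t_j\sigma)$, which together with $\chi_I$ cover $\chi_O = \chi_I \cup \bigcup_{j\in O(p_O)}\fv(t_j)$, giving the domain bound.

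The inductive cases for $\disj$ and $\exists$ are light. For $c_1 \disj c_2$, the two premises $\chi_I \vdash c_1 : \chi_1$ and $\chi_I \vdash c_2 : \chi_2$ let me apply the induction hypothesis to each disjunct; the union of the two (defined) result sets is defined, and any $\sigma'$ in it lies in one of the sets, so it satisfies $\chi_I \cup \dom(\sigma') \supseteq \chi_i \supseteq \chi_1 \cap \chi_2 = \chi_O$. For $\exists x. c'$, I first rename $x$ to be fresh, so that $x \notin \dom(\sigma) \cup \chi_I$, $(\exists x.c')\sigma = \exists x.(c'\sigma)$, and $\lift{\sat}(\structure, \exists x. c'\sigma) = \lift{\sat}(\structure, c'\sigma)\setminus\{x\}$; the induction hypothesis on $c'$ gives definedness and $\chi_I \cup \dom(\sigma'') \supseteq \chi_O'$ for each $\sigma''$, and removing $x$ from the domain on the left and from $\chi_O'$ on the right (using $x\notin\chi_I$) yields the bound for $\chi_O = \chi_O' \setminus \{x\}$.

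The crux is the conjunction case $c = c_1 \conj c_2$, whose rule threads an intermediate context: $\chi_I \vdash c_1 : \chi$ and $\chi \vdash c_2 : \chi_O$. Applying the induction hypothesis to $c_1$ and $\sigma$ shows $\lift{\sat}(\structure, c_1\sigma)$ is defined with each $\sigma_1$ satisfying $\chi_I \cup \dom(\sigma_1) \supseteq \chi$. The key structural observation is that $\dom(\sigma_1)$ consists of variables free in $c_1\sigma$, hence disjoint from $\dom(\sigma)$, so the composition $\sigma + \sigma_1$ is well-defined and $(c_2\sigma)\sigma_1 = c_2(\sigma + \sigma_1)$. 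Since $\dom(\sigma + \sigma_1) = \dom(\sigma) \cup \dom(\sigma_1) \supseteq \chi_I \cup \dom(\sigma_1) \supseteq \chi$, I can apply the induction hypothesis to $c_2$ with the substitution $\sigma + \sigma_1$, obtaining definedness of $\lift{\sat}(\structure, c_2(\sigma+\sigma_1))$ and, for each $\sigma_2$ in it, $\chi \cup \dom(\sigma_2) \supseteq \chi_O$. Definedness of the whole union then follows from the definition of $\lift{\sat}$ on conjunctions, and for the domain bound on $\sigma' = \sigma_1 + \sigma_2$ I chain the two inclusions: $\chi_O \subseteq \chi \cup \dom(\sigma_2) \subseteq \chi_I \cup \dom(\sigma_1) \cup \dom(\sigma_2) = \chi_I \cup \dom(\sigma')$. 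I expect this case to be the main obstacle, since it is the only place where definedness genuinely exercises the inductive clause of $\lift{\sat}$ and where the input/output chaining of contexts, the disjointness of substitution domains, and the substitution identity $(c_2\sigma)\sigma_1 = c_2(\sigma+\sigma_1)$ must all be lined up so that both induction hypotheses apply.
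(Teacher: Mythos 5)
Your proposal is correct and follows essentially the same route as the paper's proof: induction on the mode derivation $\chi_I \vdash c : \chi_O$, with the same treatment of the base cases, the same intersection argument for $\disj$, the same fresh-variable removal for $\exists$, and the conjunction case handled exactly as in the paper by applying the induction hypothesis to $c_2$ under the extended substitution $\sigma + \sigma_1$ and chaining $\chi_O \subseteq \chi \cup \dom(\sigma_2) \subseteq \chi_I \cup \dom(\sigma_1 + \sigma_2)$. Your explicit remarks on domain disjointness and the identity $(c_2\sigma)\sigma_1 = c_2(\sigma + \sigma_1)$ merely spell out steps the paper's proof leaves implicit.
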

\begin{proof}
By induction on the given derivation of $\chi_I \vdash c: \chi_O$ and
case analysis of its last rule.\\

\case ~$\inferrule{\forall k \in I(p_O). ~\fv(t_k) \subseteq \chi_I
  \\ \chi_O = \chi_I \cup (\bigcup_{j \in O(p_O)} \fv(t_j))}{ \chi_I
  \vdash p_O(t_1,\ldots,t_n) : \chi_O}$

We are given $\sigma$ such that $\dom(\sigma) \supseteq \chi_I$. From
this and the first premise it follows that $\forall k \in
I(p_O). ~{\tt ground}(t_k\sigma)$. Thus, by definition,
$\sat(\structure, p_O(t_1,\ldots,t_n)\sigma)$ is
defined. Consequently, $\lift{\sat}(\structure,
p_O(t_1,\ldots,t_n)\sigma)$, which equals $\sat(\structure,
p_O(t_1,\ldots,t_n)\sigma)$ is also defined. Pick any $\sigma'
\in\linebreak[6] \sat(\structure, p_O(t_1,\ldots,t_n)\sigma)$. By
definition of $\sat$, $\dom(\sigma') \supseteq \bigcup_{j \in O(p_O)}
\fv(t_j)$. Consequently, $\chi_I \cup \dom(\sigma') \supseteq \chi_I
\cup (\bigcup_{j \in O(p_O)} \fv(t_j)) \supseteq \chi_O$, where the
last relation follows from the second premise.\\

\case ~$\inferrule{ }{ \chi_I \vdash \top: \chi_I}$

Suppose $\dom(\sigma) \supseteq \chi_I$. Note that
$\lift{\sat}(\structure, \top\sigma) = \lift{\sat}(\structure, \top) =
\{\bullet\}$ is always defined. If $\sigma' \in \{\bullet\}$, then
$\sigma' = \bullet$. Clearly, $\chi_I \cup \dom(\sigma') = \chi_I \cup
\dom(\bullet) = \chi_I = \chi_O$.\\

\case ~$\inferrule{ }{ \chi_I \vdash \bot: \chi_I}$

Suppose $\dom(\sigma) \supseteq \chi_I$. Note that
$\lift{\sat}(\structure, \bot\sigma) = \lift{\sat}(\structure, \bot) =
\{\}$ is always defined. Because there cannot be a $\sigma' \in \{\}$,
the rest of the proof holds vacuously in this case.\\

\case ~$\inferrule{ \chi_I \vdash c_1 : \chi \\ \chi \vdash c_2:
  \chi_O}{ \chi_I \vdash c_1 \conj c_2: \chi_O}$

Suppose $\dom(\sigma) \supseteq \chi_I$. By i.h.\ on the first
premise, $\lift{\sat}(\structure, c_1\sigma)$ is defined. Let
$\lift{\sat}(\structure, c_1\sigma) = \{\sigma_1, \ldots,
\sigma_n\}$. Also by the i.h., $\chi_I \cup \dom(\sigma_i)
\supseteq \chi$. Call this fact (A). Since $\dom(\sigma)
\supseteq \chi_I$, fact (A) implies $\dom(\sigma + \sigma_i)
\supseteq \chi$. Using the latter, by i.h.\ on the second premise and
each of $\{\sigma + \sigma_1 , \ldots, \sigma + \sigma_n\}$, we obtain
that each of $\lift{\sat}(\structure, c_2\sigma\sigma_i)$ are also
defined for each $i$ and $\forall \sigma_i' \in
\lift{\sat}(\structure, c_2\sigma\sigma_i)$, $\chi \cup \dom(\sigma_i') \supseteq \chi_O$. Call the last fact~(B). We
immediately have that $\lift{\sat}(\structure, (c_1 \conj c_2)\sigma)
= \bigcup_{\sigma_1 \in \lift{\sat}(\structure, c_1\sigma)}
\lift{\sat}(\structure, c_2\sigma\sigma_1)$ is also defined.  

Pick any $\sigma' \in \lift{\sat}(\structure, (c_1 \conj
c_2)\sigma)$. Then for some $i$ and some $\sigma_i' \in
\lift{\sat}(\structure, c_2\sigma\sigma_i)$, we have $\sigma' =
\sigma_i + \sigma_i'$. We want to show that $\chi_I \cup \dom(\sigma')
\supseteq \chi_O$. Or, equivalently, $\chi_I \cup \dom(\sigma_i +
\sigma_i') \supseteq \chi_O$. However, $\chi_I \cup \dom(\sigma_i
+ \sigma_i') = \chi_I \cup \dom(\sigma_i) \cup \dom
(\sigma_i') \supseteq \chi \cup \dom (\sigma_i') \supseteq
\chi_O$, where the last two relations follow from facts (A) and (B),
respectively.\\

\case ~$\inferrule{ \chi_I \vdash c_1: \chi_1 \\  \chi_I
  \vdash c_2: \chi_2}{ \chi_I \vdash c_1 \disj
  c_2: \chi_1 \cap \chi_2}$

Suppose $\dom(\sigma) \supseteq \chi_I$. By i.h.\ on the first
premise, $\lift{\sat}(\structure, c_1\sigma)$ is defined and $\forall
\sigma' \in \lift{\sat}(\structure, c_1\sigma)$, $\chi_I \cup
\dom(\sigma') \supseteq \chi_1$. Call this fact~(A). Similarly, by
i.h.\ on the second premise, $\lift{\sat}(\structure, c_2\sigma)$ is
defined and $\forall \sigma' \in \lift{\sat}(\structure, c_2\sigma)$,
$\chi_I \cup \dom(\sigma') \supseteq \chi_2$. Call this
fact~(B). By definition of $\lift{\sat}$, $\lift{\sat}(\structure,
(c_1 \disj c_2) \sigma) = \lift{\sat}(\structure, c_1\sigma) \cup
\lift{\sat}(\structure, c_2\sigma)$ is defined.

Pick any $\sigma' \in \lift{\sat}(\structure, c_1\sigma) \cup
\lift{\sat}(\structure, c_2\sigma)$. We want to show $\chi_I \cup
\dom(\sigma') \supseteq \chi_O$. Either $\sigma' \in
\lift{\sat}(\structure, c_1\sigma)$ or $\sigma' \in
\lift{\sat}(\structure, c_2\sigma)$. Consider the former case (the
other case is similar). We have $\chi_I \cup \dom(\sigma')
\supseteq \chi_1 \supseteq \chi_1 \cap \chi_2 = \chi_O$, where the
first relation follows from fact~(A).\\

\case ~$\inferrule{\chi_I \vdash c: \chi_O'}{ \chi_I \vdash \exists x.c
  : \chi_O' \backslash \{x\}}$ 

Suppose $\dom(\sigma) \supseteq \chi_I$. By i.h.\ on the premise,
$\lift{\sat}(\structure, c\sigma)$ is defined and $\forall \sigma''
\in \lift{\sat}(\structure, c\sigma)$, $\chi_I \cup \dom(\sigma'')
\supseteq \chi_O'$. Call the latter fact (A). By definition of
$\lift{\sat}$, $\lift{\sat}(\structure, \exists x.c) =
\lift{\sat}(\structure, c) \backslash \{x\}$ is defined.

Pick any $\sigma' \in \lift{\sat}(\structure, c) \backslash \{x\}$. We
want to prove that $\chi_I \cup \dom(\sigma') \supseteq \chi_O'
\backslash \{x\}$. However, $\sigma' \in \lift{\sat}(\structure, c)
\backslash \{x\}$ implies (by definition) that there is a $\sigma''
\in \lift{\sat}(\structure, c)$ such that $\sigma' = \sigma''
\backslash \{x\}$. Thus, $\chi_I \cup \dom(\sigma') = \chi_I \cup
(\dom(\sigma'') \backslash \{x\}) \supseteq \chi_O' \backslash
\{x\}$. The last inclusion follows from fact (A).
\end{proof}


\begin{lem}\label{lem:mode:condition:inclusion}
If $\chi_I \vdash c: \chi_O$, then $\chi_O \subseteq \chi_I \cup
\fv(c)$.
\end{lem}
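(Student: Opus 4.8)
The plan is to proceed by structural induction on the derivation of $\chi_I \vdash c : \chi_O$, following exactly the six rules defining the moding relation on restrictions in Figure~\ref{fig:moding}. In each case I unfold the definition of the output set $\chi_O$ supplied by the rule and show that every variable it contains lies in $\chi_I$ or in $\fv(c)$.

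The base cases are immediate. For $c = \top$ and $c = \bot$ the rule gives $\chi_O = \chi_I$, so the inclusion holds trivially. For an atom $c = p_O(t_1,\ldots,t_n)$ the rule sets $\chi_O = \chi_I \cup (\bigcup_{j \in O(p_O)} \fv(t_j))$; since $\bigcup_{j \in O(p_O)} \fv(t_j) \subseteq \bigcup_k \fv(t_k) = \fv(c)$, we obtain $\chi_O \subseteq \chi_I \cup \fv(c)$.

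For the inductive cases I chain the hypotheses. For $c = c_1 \conj c_2$ the premises are $\chi_I \vdash c_1 : \chi$ and $\chi \vdash c_2 : \chi_O$; applying the induction hypothesis to each gives $\chi \subseteq \chi_I \cup \fv(c_1)$ and $\chi_O \subseteq \chi \cup \fv(c_2)$, and substituting the first bound into the second yields $\chi_O \subseteq \chi_I \cup \fv(c_1) \cup \fv(c_2) = \chi_I \cup \fv(c)$. For $c = c_1 \disj c_2$ the output is $\chi_O = \chi_1 \cap \chi_2$, so it suffices to bound one intersectand: the induction hypothesis on the first premise gives $\chi_1 \subseteq \chi_I \cup \fv(c_1) \subseteq \chi_I \cup \fv(c)$, and $\chi_O \subseteq \chi_1$ finishes the case (the second premise is not even needed).

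The only step requiring any care is the existential case $c = \exists x. c'$, where $\chi_O = \chi_O' \backslash \{x\}$ for the premise output $\chi_O'$. The induction hypothesis gives $\chi_O' \subseteq \chi_I \cup \fv(c')$, whence $\chi_O = \chi_O' \backslash \{x\} \subseteq (\chi_I \cup \fv(c')) \backslash \{x\} \subseteq \chi_I \cup (\fv(c') \backslash \{x\}) = \chi_I \cup \fv(\exists x. c') = \chi_I \cup \fv(c)$, using $\chi_I \backslash \{x\} \subseteq \chi_I$ and the definition of $\fv$ on a binder, $\fv(\exists x. c') = \fv(c') \backslash \{x\}$. This small manipulation around the bound variable is the one place where the treatment of the binder must be made explicit, and it is the closest thing to an obstacle in what is otherwise an entirely routine induction.
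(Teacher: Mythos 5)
Your proof is correct and is exactly the argument the paper has in mind: the paper's proof of Lemma~\ref{lem:mode:condition:inclusion} is stated simply as ``by a straightforward induction on the given derivation of $\chi_I \vdash c : \chi_O$,'' and your case analysis over the six moding rules (including the careful handling of the binder in the $\exists x.\,c$ case) fills in precisely that induction.
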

\begin{proof}
By a straightforward induction on the given derivation of $\chi_I
\vdash c: \chi_O$. 
\end{proof}


\begin{lem}[Mode substitution]\label{lem:subst:mode} The following hold:
\begin{enumerate}
\item If $\chi_I \vdash c: \chi_O$, then $\chi_I \backslash
  \dom(\sigma) \vdash c\sigma: \chi_O \backslash \dom(\sigma)$.
\item If $\chi \vdash \varphi$, then $\chi \backslash \dom(\sigma)
  \vdash \varphi \sigma$.
\end{enumerate}
\end{lem}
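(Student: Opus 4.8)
The plan is to prove the two parts by induction on the moding derivations, with part~2 invoking part~1 in the quantifier cases. Part~1 proceeds by induction on the derivation of $\chi_I \vdash c : \chi_O$ and case analysis on its last rule; part~2 by induction on the derivation of $\chi \vdash \varphi$. The single fact that drives every case is that substitutions map variables to ground terms, so applying $\sigma$ merely deletes the variables of $\dom(\sigma)$ from a free-variable set without introducing new ones: $\fv(c\sigma) = \fv(c) \backslash \dom(\sigma)$, and likewise for terms $t$. I would record this as a preliminary observation and abbreviate $D = \dom(\sigma)$.

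For part~1 the atom rule is the crux of the calculation. For each input position $k \in I(p_O)$ I have $\fv(t_k) \subseteq \chi_I$, hence $\fv(t_k\sigma) = \fv(t_k) \backslash D \subseteq \chi_I \backslash D$, which re-establishes the first premise with input context $\chi_I \backslash D$. For the output I use distributivity of $\backslash$ over $\cup$: $\chi_O \backslash D = (\chi_I \cup \bigcup_{j} \fv(t_j)) \backslash D = (\chi_I \backslash D) \cup \bigcup_j \fv(t_j\sigma)$, which is exactly the output the atom rule assigns to $p_O(t_1,\ldots,t_n)\sigma$. The $\top$ and $\bot$ cases are immediate (output equals input); conjunction threads $\sigma$ through both premises by the induction hypothesis; disjunction relies on $(\chi_1 \backslash D) \cap (\chi_2 \backslash D) = (\chi_1 \cap \chi_2) \backslash D$. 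In the $\exists x. c$ case I first $\alpha$-rename so that $x \notin D$ and $x$ is absent from the range of $\sigma$; then $(\exists x. c)\sigma = \exists x.(c\sigma)$, the induction hypothesis supplies the premise, and $x \notin D$ lets me commute the two set-minus operations: $(\chi_O \backslash D) \backslash \{x\} = (\chi_O \backslash \{x\}) \backslash D$.

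For part~2, atoms, conjunctions, and disjunctions go through exactly as in part~1 (the atom case reuses the same free-variable computation). The interesting cases are the two quantifiers $\forall \vec{x}.(c \imp \varphi)$ and $\exists \vec{x}.(c \conj \varphi)$, treated symmetrically. After $\alpha$-renaming so that $\vec{x} \cap D = \emptyset$ and $\vec{x}$ avoids the range of $\sigma$, I must reconstruct the four premises of the quantifier rule for the substituted formula relative to context $\chi \backslash D$: the first premise $\chi \backslash D \vdash c\sigma : \chi_O \backslash D$ comes from part~1; the side condition $\vec{x} \subseteq \chi_O \backslash D$ follows from $\vec{x} \subseteq \chi_O$ together with $\vec{x} \cap D = \emptyset$; the closedness condition $\fv(c\sigma) \subseteq (\chi \backslash D) \cup \vec{x}$ follows from $\fv(c) \subseteq \chi \cup \vec{x}$, the identity $\fv(c\sigma) = \fv(c) \backslash D$, and $\vec{x} \cap D = \emptyset$; and the body premise $\chi_O \backslash D \vdash \varphi\sigma$ is the induction hypothesis applied to $\chi_O \vdash \varphi$.

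I expect the main obstacle to be purely bookkeeping around bound variables: ensuring every binder is $\alpha$-renamed fresh for both $\dom(\sigma)$ and the range of $\sigma$, so that substitution commutes with the binder and the identities relating $\backslash \{x\}$ (or $\backslash \vec{x}$) with $\backslash \dom(\sigma)$ hold. The remaining content is routine set algebra plus the observation $\fv(c\sigma) = \fv(c) \backslash \dom(\sigma)$; once those are fixed, each rule maps to itself under substitution.
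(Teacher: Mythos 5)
Your proposal is correct and takes exactly the approach the paper uses: the paper's proof of this lemma is simply ``By induction on the given derivations of $\chi_I \vdash c: \chi_O$ and $\chi \vdash \varphi$,'' with no further detail. Your case analysis fills in that omitted detail accurately --- the key observation $\fv(c\sigma) = \fv(c) \backslash \dom(\sigma)$ (valid because substitutions map variables to ground terms), the set-algebra identities in the atom, disjunction, and quantifier cases, and the $\alpha$-renaming of binders (which the paper itself stipulates in Figure~\ref{fig:moding}) all check out.
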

\begin{proof}
By induction on the given derivations of $\chi_I \vdash c: \chi_O$ and
$\chi \vdash \varphi$.
\end{proof}


\begin{lem}[Mode weakening]\label{lem:weaken:mode} The following hold:
\begin{enumerate}
\item If $\chi_I \vdash c: \chi_O$ and $\chi_I' \supseteq \chi_I$,
  then there is a $\chi_O' \supseteq \chi_O$ such that $\chi_I' \vdash
  c: \chi_O'$.
\item If $\chi \vdash \varphi$ and $\chi' \supseteq \chi$, then $\chi'
  \vdash \varphi$.
\end{enumerate}
\end{lem}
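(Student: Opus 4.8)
The plan is to prove both parts by induction on the given moding derivation, establishing part~1 first and then invoking it as a lemma inside the proof of part~2. No mutual induction is required: the syntax of restrictions $c$ contains no formulas $\varphi$, so the restriction judgment $\chi_I \vdash c : \chi_O$ is self-contained and part~1 can be proved in isolation. A point worth emphasizing from the outset is that the statement must be proved for \emph{all} enlargements $\chi_I' \supseteq \chi_I$ (resp. $\chi' \supseteq \chi$) at once, since the inductive hypotheses in the structural cases are applied at intermediate contexts that differ from the original.

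For part~1 I induct on the derivation of $\chi_I \vdash c : \chi_O$ and case-split on the last rule. In the atom case, enlarging $\chi_I$ to $\chi_I'$ preserves the premise $\fv(t_k) \subseteq \chi_I'$ for every input position $k$, and I take $\chi_O' = \chi_I' \cup (\bigcup_{j \in O(p_O)} \fv(t_j))$, which plainly contains $\chi_O$. The rules for $\top$ and $\bot$ carry the input set unchanged, so $\chi_O' = \chi_I'$ suffices. For $c_1 \conj c_2$ I chain the hypotheses: the i.h. on the left premise gives $\chi' \supseteq \chi$ with $\chi_I' \vdash c_1 : \chi'$, and the i.h. on the right premise at the enlarged intermediate context $\chi'$ produces the required $\chi_O' \supseteq \chi_O$. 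The existential case is the same, modulo deleting the bound variable from both output sets. The one case needing a small observation is $c_1 \disj c_2$, where the output is an intersection: the two hypotheses give $\chi_1' \supseteq \chi_1$ and $\chi_2' \supseteq \chi_2$, and I then use monotonicity of intersection, $\chi_1' \cap \chi_2' \supseteq \chi_1 \cap \chi_2$, to set $\chi_O' = \chi_1' \cap \chi_2'$.

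For part~2 I induct on the derivation of $\chi \vdash \varphi$. The atom, $\top$, $\bot$, $\conj$, and $\disj$ cases are immediate, since each premise is of the form $\chi \vdash \varphi_i$ or $\fv(t_k) \subseteq \chi$, both preserved under $\chi' \supseteq \chi$ (the former via the i.h.). The substantive cases are the two quantifiers $\forall \vec{x}. (c \imp \varphi')$ and $\exists \vec{x}. (c \conj \varphi')$, which share the same premise structure, so I would treat them uniformly. Here I reconstruct all four premises with $\chi'$ in place of $\chi$: the premise $\chi \vdash c : \chi_O$ is handled by part~1, yielding some $\chi_O' \supseteq \chi_O$ with $\chi' \vdash c : \chi_O'$; the inclusions $\vec{x} \subseteq \chi_O \subseteq \chi_O'$ and $\fv(c) \subseteq \chi \cup \vec{x} \subseteq \chi' \cup \vec{x}$ re-establish the second and third premises; and the premise $\chi_O \vdash \varphi'$ is weakened to $\chi_O' \vdash \varphi'$ by the part~2 inductive hypothesis applied at the larger context $\chi_O'$.

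The delicate step, and the only one I would flag as nontrivial, is precisely this interplay in the quantifier case: weakening the ambient context forces the restriction's output set to grow through part~1, and the body must then be re-checked against that \emph{larger} output rather than the original $\chi_O$. This is exactly why the generalization over all enlarged contexts is indispensable, and why part~2 must call both part~1 and its own hypothesis at a shifted context. A secondary technicality is the freshness side condition on bound variables: passing from $\chi$ to $\chi'$ may introduce variables of $\chi' \backslash \chi$ that clash with $\vec{x}$, so I would first alpha-rename $\vec{x}$ fresh for $\chi'$ and note that well-modedness is invariant under such renaming, which keeps the statement well-defined up to alpha-equivalence.
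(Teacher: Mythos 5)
Your proof is correct and takes exactly the approach the paper intends: the paper's proof of Lemma~\ref{lem:weaken:mode} is simply ``by induction on the given derivations of $\chi_I \vdash c: \chi_O$ and $\chi \vdash \varphi$,'' and your argument fills in that induction faithfully, including the points the paper leaves implicit (the need to quantify over all enlarged contexts so the inductive hypothesis can be applied at the shifted intermediate set in the $c_1 \conj c_2$ case, part~2's quantifier case invoking part~1 plus the inductive hypothesis at the enlarged output $\chi_O'$, and freshness of bound variables). No gaps to report.
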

\begin{proof}
By induction on the given derivations of $\chi_I \vdash c: \chi_O$ and
$\chi \vdash \varphi$.
\end{proof}


\begin{thm}[Totality of $\reduce$; Theorem~\ref{thm:reduce:total}]
  \label{thm:reduce:total:app}
If $\vdash \varphi$ then there is a $\psi$ such that
$\reduce(\structure,\varphi) = \psi$ and $ \vdash \psi$.
\end{thm}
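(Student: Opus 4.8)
The plan is to prove the statement by induction on the structure of $\varphi$, with a case analysis on its top-level connective; since $\vdash \varphi$ abbreviates $\{\} \vdash \varphi$, the context is always empty and no strengthening of the induction is needed. The base and propositional cases are immediate. For an atom $P$, inverting the atom moding rule on $\{\} \vdash P$ forces $\fv(P) = \{\}$, so $P$ is ground; hence whichever of $\top$, $\bot$, $P$ the definition of $\reduce$ returns is well-moded. The cases $\top$ and $\bot$ are trivial, and for $\varphi_1 \conj \varphi_2$ (resp.\ $\varphi_1 \disj \varphi_2$) the formula moding rule yields $\{\} \vdash \varphi_1$ and $\{\} \vdash \varphi_2$, the induction hypothesis supplies well-moded reducts $\psi_1, \psi_2$, and the same rule reassembles $\vdash (\psi_1 \conj \psi_2)$ (resp.\ $\disj$).

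The substance lies in the quantifier cases; I would treat $\forall \vec{x}.(c \imp \varphi')$ in full, the existential case being exactly dual by the symmetry of the two clauses of $\reduce$. Inverting the $\forall$ moding rule on $\{\} \vdash \forall \vec{x}.(c \imp \varphi')$ produces a $\chi_O$ with $\{\} \vdash c : \chi_O$, $\vec{x} \subseteq \chi_O$, $\fv(c) \subseteq \vec{x}$, and $\chi_O \vdash \varphi'$. The first pivotal fact I would establish is $\chi_O = \vec{x}$: Lemma~\ref{lem:mode:condition:inclusion} gives $\chi_O \subseteq \{\} \cup \fv(c) = \fv(c) \subseteq \vec{x}$, which combined with $\vec{x} \subseteq \chi_O$ forces equality. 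The second is that the internal call $\lift{\sat}(\structure, c)$ terminates usefully: applying Theorem~\ref{thm:sat:total} to $\{\} \vdash c : \chi_O$ with the empty substitution shows $\lift{\sat}(\structure, c) = \{\sigma_1, \ldots, \sigma_n\}$ is defined and each $\sigma_i$ satisfies $\dom(\sigma_i) \supseteq \chi_O = \vec{x}$, so each $\vec{t_i} = \sigma_i(\vec{x})$ is a well-defined tuple of ground terms.

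These two facts discharge the recursion and the residual. Since $\chi_O = \vec{x}$, the substitution $[\vec{t_i}/\vec{x}]$ has domain $\vec{x}$, so Lemma~\ref{lem:subst:mode} applied to $\chi_O \vdash \varphi'$ yields $\chi_O \setminus \vec{x} \vdash \varphi'[\vec{t_i}/\vec{x}]$, i.e.\ $\vdash \varphi'[\vec{t_i}/\vec{x}]$; the induction hypothesis then supplies $\psi_i$ with $\reduce(\structure, \varphi'[\vec{t_i}/\vec{x}]) = \psi_i$ and $\vdash \psi_i$. It remains to mode the residual $\psi' = \forall \vec{x}.((c \conj \vec{x} \notin S) \imp \varphi')$. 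Here I would mode the synthetic objective predicate $\vec{x} \notin S$ (with $S$ a constant finite set) by taking all its argument positions as inputs and none as outputs; the conjunction rule then chains $\{\} \vdash c : \vec{x}$ into $\vec{x} \vdash (\vec{x} \notin S) : \vec{x}$, giving $\{\} \vdash (c \conj \vec{x} \notin S) : \vec{x}$, while the remaining premises $\vec{x} \subseteq \vec{x}$, $\fv(c \conj \vec{x} \notin S) = \vec{x} \subseteq \vec{x}$, and $\vec{x} = \chi_O \vdash \varphi'$ all hold, so $\vdash \psi'$. Repeated application of the conjunction moding rule to $\psi = \psi_1 \conj \cdots \conj \psi_n \conj \psi'$ then yields $\vdash \psi$, closing the case.

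I expect the main obstacle to be the bookkeeping around well-modedness of the residual $\psi'$ rather than termination itself: termination of the $\lift{\sat}$ call and well-foundedness of the recursion both fall out of Theorem~\ref{thm:sat:total} together with the identity $\chi_O = \vec{x}$, but verifying $\vdash \psi'$ forces a commitment to a moding of the synthetic predicate $\vec{x} \notin S$ and a check that conjoining it to $c$ introduces no ungrounded input and discards no output variable needed to mode $\varphi'$. The identity $\chi_O = \vec{x}$ is the linchpin that makes both the recursive substitution and the residual argument align, so I would prove it as the very first step of the quantifier case and let the remainder read off almost mechanically.
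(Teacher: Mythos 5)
Your proposal is correct, but it takes a genuinely different route from the paper's. The paper does not induct on the formula with a fixed empty context: it proves a strengthened statement --- if $\chi \vdash \varphi$ and $\dom(\sigma) \supseteq \chi$, then $\reduce(\structure, \varphi\sigma) = \psi\sigma$ for some $\psi$ with $\chi \vdash \psi$ --- by induction on the derivation of $\chi \vdash \varphi$, and then specializes to $\chi = \{\}$, $\sigma = \bullet$. That generalization exists precisely to avoid the point your argument relies on implicitly: in your quantifier case the induction hypothesis is applied to $\varphi'[\vec{t_i}/\vec{x}]$, which is a substitution \emph{instance} of a subformula rather than a subformula, so your induction must be on a measure invariant under substitution (e.g., the number of connectives and quantifiers), not on the raw subformula relation; the paper's induction applies its hypothesis only to the strict subderivation $\chi_O \vdash \varphi'$, threading the substitution $\sigma[\vec{t_i}/\vec{x}]$ through the statement instead, so no such care is needed. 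What your route buys: the identity $\chi_O = \vec{x}$ (available only because the top-level context is empty --- and which the paper itself establishes as fact~(A) in its minimality proof, Theorem~\ref{thm:minimality:app}) makes Lemma~\ref{lem:subst:mode} yield exactly $\vdash \varphi'[\vec{t_i}/\vec{x}]$, so recursion stays at the empty context and you never need mode weakening (Lemma~\ref{lem:weaken:mode}), which the paper invokes to pass from $\chi_O \backslash \vec{x} \vdash \psi_i[\vec{t_i}/\vec{x}]$ to $\chi \vdash \psi_i[\vec{t_i}/\vec{x}]$; your argument is correspondingly more elementary. What the paper's route buys: a stronger invariant (the reduct of any instance $\varphi\sigma$ is itself an instance $\psi\sigma$ of a formula well-moded in the same context), and immunity from the substitution-invariance subtlety. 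One further point in your favor: where the paper merely asserts that $\chi \vdash \psi'$ ``follows from the four premises,'' you commit to an explicit moding of the synthetic atom $\vec{x} \not\in S$ (all positions input, no outputs) and check the conjunction chaining, which is what actually makes that claim go through.
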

\begin{proof}
We prove a more general result: If $\chi \vdash \varphi$ and
$\dom(\sigma) \supseteq \chi$, then there is a $\psi$ such that
$\reduce(\structure,\varphi\sigma) = \psi\sigma$ and $\chi \vdash
\psi$. The statement of the theorem follows by choosing $\chi = \{\}$
and $\sigma = \bullet$ in this result. We proceed by induction on the
assumed derivation of $\chi \vdash \varphi$, and case analysis of its
last rule.\\

\case ~$\inferrule{\forall k.~\fv(t_k) \subseteq \chi}{ \chi \vdash
  p(t_1,\ldots,t_k)}$

Here, $\varphi = p(t_1,\ldots,t_k)$. Suppose $\dom(\sigma) \supseteq
\chi$. Due to the premise, $p(t_1,\ldots,t_n) \sigma$ is
ground. Hence, $\lift{\sat}(\structure, p(t_1,\ldots,t_n)\sigma)$ is
defined. Depending on whether it is $\symt$, $\symf$, or
$\symu$,\linebreak[6] $\reduce(\structure, p(t_1,\ldots,t_n) \sigma)$
is $\top$, $\bot$ or $p(t_1,\ldots,t_n)\sigma$
respectively. Accordingly, we choose $\psi = \top$, $\psi = \bot$ or
$\psi = p(t_1,\ldots, t_n)$. In each case, $\chi \vdash \psi$.\\

\case ~$\inferrule{ }{ \chi \vdash \top}$

Here, $\varphi = \top$. Suppose $\dom(\sigma) \supseteq
\chi$. Clearly, we can choose $\psi = \top$ because
$\reduce(\structure, \top\sigma) = \top = \psi\sigma$ and $\chi \vdash
\top$, i.e., $\chi \vdash \psi$.\\

\case ~$\inferrule{ }{ \chi \vdash \bot}$

Here, $\varphi = \bot$.  Suppose $\dom(\sigma) \supseteq
\chi$. Clearly, we can choose $\psi = \bot$ because
$\reduce(\structure, \bot\sigma) = \bot = \psi\sigma$ and $\chi \vdash
\bot$, i.e., $\chi \vdash \psi$.\\

\case ~$\inferrule{ \chi \vdash \varphi_1 \\ \chi \vdash \varphi_2}{
  \chi \vdash \varphi_1 \conj \varphi_2}$

Here, $\varphi = \varphi_1 \conj \varphi_2$. Suppose $\dom(\sigma)
\supseteq \chi$. By i.h.\ on the first premise, there is a $\psi_1$
such that $\reduce(\structure, \varphi_1 \sigma) = \psi_1\sigma$ and
$\chi \vdash \psi_1$. Similarly, by i.h.\ on the second premise, there
is a $\psi_2$ such that $\reduce(\structure, \varphi_2 \sigma) =
\psi_2\sigma$ and $\chi \vdash \psi_2$. By definition of $\reduce$,
$\reduce (\structure, \varphi \sigma) = \reduce(\structure, (\varphi_1
\conj \varphi_2) \sigma) = \reduce(\structure, \varphi_1\sigma) \conj
\reduce(\structure, \varphi_2\sigma) = \psi_1\sigma \conj
\psi_2\sigma$. Further, $\chi \vdash \psi_1 \conj \psi_2$ follows from
$\chi \vdash \psi_1$ and $\chi \vdash \psi_2$. So we can choose $\psi
= \psi_1 \conj \psi_2$.\\

\case ~$\inferrule{ \chi \vdash \varphi_1 \\ \chi \vdash \varphi_2}{
  \chi \vdash \varphi_1 \disj \varphi_2}$

Here, $\varphi = \varphi_1 \disj \varphi_2$. Suppose $\dom(\sigma)
\supseteq \chi$. By i.h.\ on the first premise, there is a $\psi_1$
such that $\reduce(\structure, \varphi_1 \sigma) = \psi_1\sigma$ and
$\chi \vdash \psi_1$. Similarly, by i.h.\ on the second premise, there
is a $\psi_2$ such that $\reduce(\structure, \varphi_2 \sigma) =
\psi_2\sigma$ and $\chi \vdash \psi_2$. By definition of $\reduce$,
$\reduce (\structure, \varphi \sigma) = \reduce(\structure, (\varphi_1
\disj \varphi_2) \sigma) = \reduce(\structure, \varphi_1\sigma) \disj
\reduce(\structure, \varphi_2\sigma) = \psi_1\sigma \disj
\psi_2\sigma$. Further, $\chi \vdash \psi_1 \disj \psi_2$ follows from
$\chi \vdash \psi_1$ and $\chi \vdash \psi_2$. So we can choose $\psi
= \psi_1 \disj \psi_2$.\\

\case ~$\inferrule{ \chi \vdash c: \chi_O \\ \vec{x} \subseteq \chi_O
  \\ \fv(c) \subseteq \chi \cup \vec{x} \\ \chi_O \vdash \varphi'}{
  \chi \vdash \forall \vec{x}. (c \imp \varphi')}$

Here, $\varphi = \forall \vec{x}. (c \imp \varphi')$. Suppose
$\dom(\sigma) \supseteq \chi$. By
Theorem~\ref{thm:sat:total:app} on the first premise, there is
a set $\{\sigma_1,\ldots, \sigma_n\} = \lift{\sat}(\structure,
c\sigma)$ such that for each $\sigma_i$, $\chi \cup \dom(\sigma_i)
\supseteq \chi_O$. Call the latter fact~(A). From the second premise
and fact~(A) we also derive that $\chi \cup \dom(\sigma_i) \supseteq
\vec{x}$. Since $\vec{x}$ must be chosen fresh in the premise, this
also implies that $\dom(\sigma_i) \supseteq \vec{x}$. Consequently,
$\sigma_i(\vec{x})$ is defined. Let $\sigma_i(\vec{x}) =\vec{t_i}$ and
let $S = \{\vec{t_1}, \ldots, \vec{t_n}\}$. Further, note that by
Lemma~\ref{lem:mode:condition:inclusion} on the first premise, $\chi_O
\subseteq \chi\cup \fv(c)$. Hence, from the third premise we obtain
$\chi_O \subseteq \chi \cup \chi \cup \vec{x} = \chi \cup
\vec{x}$. So, $\dom(\sigma) \cup \vec{x} \supseteq \chi \cup \vec{x}
\supseteq \chi_O$. Call this fact~(B). From the i.h.\ applied to the
last premise and fact~(B) we get the existence of $\psi_i$ such that
$\reduce(\structure, \varphi'\sigma[\vec{t_i}/\vec{x}]) =
\psi_i\sigma[\vec{t_i}/\vec{x}]$ and $\chi_O \vdash \psi_i$. Call this
fact~(C).

By definition of $\reduce$, we obtain $\reduce(\structure, \varphi
\sigma) = \psi_1 \sigma [\vec{t_1}/\vec{x}] \conj \ldots \conj \psi_n
\sigma [\vec{t_n}/\vec{x}] \conj \psi'\sigma$, where $\psi' = \forall
\vec{x}.((c \conj \vec{x} \not \in S) \imp \varphi')$. Choose $\psi =
\psi_1[\vec{t_1}/\vec{x}] \conj \ldots \conj \psi_n[\vec{t_n}/\vec{x}]
\conj \psi'$. It only remains to show that $\chi \vdash \psi$. This is
equivalent to showing that $\chi \vdash \psi_i[\vec{t_i}/\vec{x}]$ and
$\chi \vdash \psi'$. The latter, which is equal to $\chi \vdash
\forall \vec{x}.((c \conj \vec{x} \not \in S) \imp \varphi')$, follows
from the four premises of the rule above. It remains to show that
$\chi \vdash \psi_i[\vec{t_i}/\vec{x}]$. Applying
Lemma~\ref{lem:subst:mode} to fact(C), we derive that $\chi_O
\backslash \vec{x} \vdash \psi_i[\vec{t_i}/\vec{x}]$. Since we already
derived that $\chi_O \subseteq \chi \cup \vec{x}$, we also have
$\chi_O \backslash \vec{x} \subseteq \chi$. Hence, by
Lemma~\ref{lem:weaken:mode}, we get $\chi \vdash
\psi_i[\vec{t_i}/\vec{x}]$, as required.\\

\case ~$\inferrule{ \chi \vdash c: \chi_O \\ \vec{x} \subseteq \chi_O
  \\ \fv(c) \subseteq \chi \cup \vec{x} \\ \chi_O \vdash \varphi'}{
  \chi \vdash \exists \vec{x}. (c \conj \varphi')}$

Here, $\varphi = \exists \vec{x}. (c \conj \varphi')$. Suppose
$\dom(\sigma) \supseteq \chi$. By
Theorem~\ref{thm:sat:total:app} on the first premise, there is
a set $\{\sigma_1,\ldots, \sigma_n\} = \lift{\sat}(\structure,
c\sigma)$ such that for each $\sigma_i$, $\chi \cup \dom(\sigma_i)
\supseteq \chi_O$. Call the latter fact~(A). From the second premise
and fact~(A) we also derive that $\chi \cup \dom(\sigma_i) \supseteq
\vec{x}$. Since $\vec{x}$ must be chosen fresh in the premise, this
also implies that $\dom(\sigma_i) \supseteq \vec{x}$. Consequently,
$\sigma_i(\vec{x})$ is defined. Let $\sigma_i(\vec{x}) =\vec{t_i}$ and
let $S = \{\vec{t_1}, \ldots, \vec{t_n}\}$. Further, note that by
Lemma~\ref{lem:mode:condition:inclusion} on the first premise, $\chi_O
\subseteq \chi\cup \fv(c)$. Hence, from the third premise we obtain
$\chi_O \subseteq \chi \cup \chi \cup \vec{x} = \chi \cup
\vec{x}$. So, $\dom(\sigma) \cup \vec{x} \supseteq \chi \cup \vec{x}
\supseteq \chi_O$. Call this fact~(B). From the i.h.\ applied to the
last premise and fact~(B) we get the existence of $\psi_i$ such that
$\reduce(\structure, \varphi'\sigma[\vec{t_i}/\vec{x}]) =
\psi_i\sigma[\vec{t_i}/\vec{x}]$ and $\chi_O \vdash \psi_i$. Call this
fact~(C).

By definition of $\reduce$, we obtain $\reduce(\structure, \varphi
\sigma) = \psi_1 \sigma [\vec{t_1}/\vec{x}] \disj \ldots \disj \psi_n
\sigma [\vec{t_n}/\vec{x}] \disj \psi'\sigma$, where $\psi' = \exists
\vec{x}.((c \conj \vec{x} \not \in S) \conj \varphi')$. Choose $\psi =
\psi_1[\vec{t_1}/\vec{x}] \disj \ldots \disj \psi_n[\vec{t_n}/\vec{x}]
\disj \psi'$. It only remains to show that $\chi \vdash \psi$. This is
equivalent to showing that $\chi \vdash \psi_i[\vec{t_i}/\vec{x}]$ and
$\chi \vdash \psi'$. The latter, which is equal to $\chi \vdash
\exists \vec{x}.((c \conj \vec{x} \not \in S) \conj \varphi')$,
follows from the four premises of the rule above. It remains to show
that $\chi \vdash \psi_i[\vec{t_i}/\vec{x}]$. Applying
Lemma~\ref{lem:subst:mode} to fact(C), we derive that $\chi_O
\backslash \vec{x} \vdash \psi_i[\vec{t_i}/\vec{x}]$. Since we already
derived that $\chi_O \subseteq \chi \cup \vec{x}$, we also have
$\chi_O \backslash \vec{x} \subseteq \chi$. Hence, by
Lemma~\ref{lem:weaken:mode}, we get $\chi \vdash
\psi_i[\vec{t_i}/\vec{x}]$, as required.
\end{proof}


\begin{lem}[Totality of $\atoms$]\label{lem:atoms:total}
Suppose $\chi \vdash \varphi$ and $\dom(\sigma) \supseteq \chi$. Then,
$\atoms(\structure, \varphi\sigma)$ is defined and ground.
\end{lem}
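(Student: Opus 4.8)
The plan is to prove the statement by induction on the derivation of the mode judgment $\chi \vdash \varphi$, following the same pattern as the totality proof for $\reduce$ (Theorem~\ref{thm:reduce:total:app}), and appealing to the totality theorem for $\lift{\sat}$ (Theorem~\ref{thm:sat:total:app}) in the quantifier cases. Throughout, I read ``ground'' as ``every atom in the set is variable-free''. The two leaf cases are immediate: if $\varphi = p(t_1,\ldots,t_k)$, the rule premise gives $\fv(t_j) \subseteq \chi \subseteq \dom(\sigma)$ for every $j$, so $p(t_1,\ldots,t_k)\sigma$ is ground and $\atoms(\structure, p(t_1,\ldots,t_k)\sigma)$ is the singleton of that ground atom, which is defined and ground; if $\varphi$ is $\top$ or $\bot$, then $\atoms(\structure, \varphi\sigma) = \{\}$, which is trivially defined and (vacuously) ground.

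The two propositional cases are routine. For $\varphi = \varphi_1 \conj \varphi_2$ (and symmetrically $\disj$), we have $\varphi\sigma = \varphi_1\sigma \conj \varphi_2\sigma$, so by definition $\atoms(\structure, \varphi\sigma) = \atoms(\structure, \varphi_1\sigma) \cup \atoms(\structure, \varphi_2\sigma)$. The two premises $\chi \vdash \varphi_1$ and $\chi \vdash \varphi_2$ use the same context, and $\dom(\sigma) \supseteq \chi$ already holds, so the induction hypothesis applies to each subformula and gives two defined, ground sets; their union is defined and ground.

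The quantifier cases are where the substitution bookkeeping is delicate, and I expect this to be the main obstacle. Take $\varphi = \forall \vec{x}.(c \imp \varphi')$ with premises $\chi \vdash c : \chi_O$, $\vec{x} \subseteq \chi_O$, $\fv(c) \subseteq \chi \cup \vec{x}$, and $\chi_O \vdash \varphi'$; by the freshness convention $\vec{x}$ is disjoint from $\dom(\sigma)$ and from the range of $\sigma$, so $\varphi\sigma = \forall \vec{x}.(c\sigma \imp \varphi'\sigma)$ and $\atoms(\structure, \varphi\sigma) = \bigcup_{\sigma' \in \lift{\sat}(\structure, c\sigma)} \atoms(\structure, (\varphi'\sigma)\sigma')$. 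First, Theorem~\ref{thm:sat:total:app} applied to $\chi \vdash c : \chi_O$ and $\dom(\sigma) \supseteq \chi$ guarantees that $\lift{\sat}(\structure, c\sigma)$ is defined (and, being a $\lift{\sat}$ output, finite), and that each $\sigma' \in \lift{\sat}(\structure, c\sigma)$ satisfies $\chi \cup \dom(\sigma') \supseteq \chi_O$. The crux is then to apply the induction hypothesis to the body premise $\chi_O \vdash \varphi'$: I take the combined substitution $\sigma + \sigma'$, which is a genuine disjoint composition because $\sigma'$ binds only variables of $c\sigma$, all of which lie in $\vec{x}$ (since $\fv(c) \subseteq \chi \cup \vec{x}$ and $\dom(\sigma) \supseteq \chi$ force $\fv(c\sigma) \subseteq \vec{x}$), hence are fresh for $\sigma$. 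Combining $\dom(\sigma) \supseteq \chi$ with $\chi \cup \dom(\sigma') \supseteq \chi_O$ gives $\dom(\sigma + \sigma') \supseteq \chi_O$, so the induction hypothesis yields that $\atoms(\structure, \varphi'(\sigma + \sigma'))$ is defined and ground; and because substitutions map to ground terms and $\dom(\sigma) \cap \dom(\sigma') = \{\}$, we have $(\varphi'\sigma)\sigma' = \varphi'(\sigma + \sigma')$. Since $\lift{\sat}(\structure, c\sigma)$ is finite, the whole set is a finite union of defined ground sets, hence defined and ground. The existential case $\exists \vec{x}.(c \conj \varphi')$ is literally identical in the relevant premises and in the definition of $\atoms$, so it goes through verbatim. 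The only real care needed is the reconciliation of the two domain lower bounds (one from $\sigma$, one from the totality theorem) and the identity $(\varphi'\sigma)\sigma' = \varphi'(\sigma + \sigma')$ justifying the substitution merge; everything else is a direct structural recursion.
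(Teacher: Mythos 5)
Your proof is correct and follows essentially the same route as the paper's: induction on the derivation of $\chi \vdash \varphi$, with Theorem~\ref{thm:sat:total:app} supplying definedness of $\lift{\sat}(\structure, c\sigma)$ and the domain bound $\chi \cup \dom(\sigma') \supseteq \chi_O$ in the quantifier cases, then the induction hypothesis applied to $\chi_O \vdash \varphi'$ under the composed substitution. The only difference is that you spell out the disjointness/freshness bookkeeping justifying $(\varphi'\sigma)\sigma' = \varphi'(\sigma + \sigma')$, which the paper leaves implicit.
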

\begin{proof}
By induction on the given derivation of $\chi \vdash \varphi$ and case
analysis of its last rule.\\

\case ~$\inferrule{\forall k.~\fv(t_k) \subseteq \chi}{ \chi \vdash
  p(t_1,\ldots,t_k)}$

Here $\varphi = p(t_1,\ldots,t_k)$. From the premise and given
condition $\dom(\sigma) \supseteq \chi$, we know that
$p(t_1,\ldots,t_k) \sigma$ is ground. Clearly, then
$\atoms(\structure, p(t_1,\ldots,t_k)\sigma) =
\{p(t_1,\ldots,t_k)\sigma\}$ is defined and ground.\\

\case ~$\inferrule{ }{ \chi \vdash \top}$

Here $\varphi = \top$. So $\atoms(\structure, \varphi\sigma) =
\atoms(\structure, \top) = \{\}$ is defined and ground.\\

\case ~$\inferrule{ }{ \chi \vdash \bot}$

Here $\varphi = \bot$. So $\atoms(\structure, \varphi\sigma) =
\atoms(\structure, \bot) = \{\}$ is defined and ground.\\

\case ~$\inferrule{ \chi \vdash \varphi_1 \\ \chi \vdash \varphi_2}{
  \chi \vdash \varphi_1 \conj \varphi_2}$

Here $\varphi = \varphi_1 \conj \varphi_2$. By the i.h.\ applied to
the premises, $\atoms(\structure, \varphi_i \sigma)$ for $i = 1,2$ is
defined and ground. It follows that $\atoms(\structure, \varphi
\sigma) = \atoms(\structure, \varphi_1 \sigma \conj \varphi_2\sigma) =
\atoms(\structure, \varphi_1 \sigma) \cup \atoms(\structure, \varphi_2
\sigma)$ is also defined and ground.\\

\case ~$\inferrule{ \chi \vdash \varphi_1 \\ \chi \vdash \varphi_2}{
  \chi \vdash \varphi_1 \disj \varphi_2}$

Here $\varphi = \varphi_1 \disj \varphi_2$. By the i.h.\ applied to
the premises, $\atoms(\structure, \varphi_i \sigma)$ for $i = 1,2$ is
defined and ground. It follows that $\atoms(\structure, \varphi
\sigma) = \atoms(\structure, \varphi_1 \sigma \disj \varphi_2\sigma) =
\atoms(\structure, \varphi_1 \sigma) \cup \atoms(\structure, \varphi_2
\sigma)$ is also defined and ground.\\

\case ~$\inferrule{ \chi \vdash c: \chi_O \\ \vec{x} \subseteq \chi_O
  \\ \fv(c) \subseteq \chi \cup \vec{x} \\ \chi_O \vdash \varphi'}{
  \chi \vdash \forall \vec{x}. (c \imp \varphi')}$

Here $\varphi = \forall \vec{x}. (c \imp \varphi')$. By
Theorem~\ref{thm:sat:total:app} on the first premise and the
given condition $\dom(\sigma) \supseteq \chi$,
$\lift{\sat}(\structure, c\sigma)$ is defined and for all $\sigma' \in
\lift{\sat}(\structure, c\sigma)$, $\chi \cup \dom(\sigma') \supseteq
\chi_O$. The latter implies that for all $\sigma' \in
\lift{\sat}(\structure, c\sigma)$, $\dom(\sigma\sigma') \supseteq
\chi_O$. By i.h.\ on the last premise, for each $\sigma' \in
\lift{\sat}(\structure, c\sigma)$, $\atoms(\structure,
\varphi'\sigma\sigma')$ is defined and ground. Hence, by definition,
$\atoms(\structure, \varphi\sigma) = \bigcup_{\sigma' \in
  \lift{\sat}(\structure, c\sigma)} \atoms(\structure,
\varphi'\sigma\sigma')$ is defined and ground.\\

\case ~$\inferrule{ \chi \vdash c: \chi_O \\ \vec{x} \subseteq \chi_O
  \\ \fv(c) \subseteq \chi \cup \vec{x} \\ \chi_O \vdash \varphi'}{
  \chi \vdash \exists \vec{x}. (c \conj \varphi')}$

Here $\varphi = \exists \vec{x}. (c \conj \varphi')$. By
Theorem~\ref{thm:sat:total:app} on the first premise and the
given condition $\dom(\sigma) \supseteq \chi$,
$\lift{\sat}(\structure, c\sigma)$ is defined and for all $\sigma' \in
\lift{\sat}(\structure, c\sigma)$, $\chi \cup \dom(\sigma') \supseteq
\chi_O$. The latter implies that for all $\sigma' \in
\lift{\sat}(\structure, c\sigma)$, $\dom(\sigma\sigma') \supseteq
\chi_O$. By i.h.\ on the last premise, for each $\sigma' \in
\lift{\sat}(\structure, c\sigma)$, $\atoms(\structure,
\varphi'\sigma\sigma')$ is defined and ground. Hence, by definition,
$\atoms(\structure, \varphi\sigma) = \bigcup_{\sigma' \in
  \lift{\sat}(\structure, c\sigma)} \atoms(\structure,
\varphi'\sigma\sigma')$ is defined and ground.
\end{proof}


\begin{thm}[Minimality; Theorem~\ref{thm:minimality}]
  \label{thm:minimality:app}
Suppose $\vdash \varphi$ and $\reduce(\structure, \varphi) =
\psi$. Then $\atoms(\structure, \psi) \subseteq \atoms(\structure,
\varphi) \cap \{P ~|~ \rho_\structure(P) = \symu\}$.
\end{thm}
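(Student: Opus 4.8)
The plan is to prove minimality by structural induction on $\varphi$, but, exactly as in the proof of Theorem~\ref{thm:reduce:total}, I would first generalize to open well-moded formulas equipped with a grounding substitution. Concretely, I would show: if $\chi \vdash \varphi$ and $\dom(\sigma) \supseteq \chi$, and $\reduce(\structure, \varphi\sigma) = \psi$, then $\atoms(\structure, \psi) \subseteq \atoms(\structure, \varphi\sigma) \cap \{P \mid \rho_\structure(P) = \symu\}$; the theorem is the instance $\chi = \{\}$, $\sigma = \bullet$. This generalization is what lets the quantifier cases go through, since reducing $\forall \vec{x}.(c \imp \varphi')$ recurses into $\varphi'$ under an \emph{extended} substitution $\sigma[\vec{t_i}/\vec{x}]$. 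Definedness of all the $\atoms(\cdot)$ sets involved is free: Lemma~\ref{lem:atoms:total} supplies $\atoms(\structure, \varphi\sigma)$, and Theorem~\ref{thm:reduce:total} gives that $\psi$ is well-moded, so $\atoms(\structure, \psi)$ is defined and ground as well.

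The base cases are routine. For an atom $P$, $\reduce$ returns $\top$ or $\bot$ (whose atom sets are empty) when $\rho_\structure(P\sigma) \in \{\symt,\symf\}$, and returns $P\sigma$ exactly when $\rho_\structure(P\sigma) = \symu$; this last subcase is the only place the ``$\rho_\structure(P) = \symu$'' half of the conclusion is genuinely exercised, and it holds by construction. The cases $\top$ and $\bot$ are immediate, and the cases $\varphi_1 \conj \varphi_2$ and $\varphi_1 \disj \varphi_2$ follow because $\reduce$ and $\atoms$ both act componentwise on these connectives, so the desired inclusion is the union of the two inductive inclusions, using that $(A_1 \cup A_2) \cap B = (A_1 \cap B) \cup (A_2 \cap B)$.

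The quantifier cases carry all the content; I describe $\forall \vec{x}.(c \imp \varphi')$, the existential case being dual. Writing $\{\sigma_1,\ldots,\sigma_n\} = \lift{\sat}(\structure, c\sigma)$, $\vec{t_i} = \sigma_i(\vec{x})$, $S = \{\vec{t_1},\ldots,\vec{t_n}\}$, $\psi_i = \reduce(\structure, \varphi'\sigma[\vec{t_i}/\vec{x}])$ and $\psi' = \forall \vec{x}.((c \conj \vec{x} \not\in S) \imp \varphi')\sigma$, I would split $\atoms(\structure,\psi) = \bigcup_i \atoms(\structure,\psi_i) \cup \atoms(\structure,\psi')$. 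For the residual conjunct I would show $\atoms(\structure,\psi') = \{\}$: by the $\lift{\sat}$ clause for conjunction, $\lift{\sat}(\structure, (c \conj \vec{x}\not\in S)\sigma)$ iterates over each $\sigma_i$ and computes $\lift{\sat}(\structure, (\vec{x}\not\in S)\sigma_i)$; since each $\sigma_i$ grounds $\vec{x}$ to $\vec{t_i}$ (Theorem~\ref{thm:sat:total}, together with $\vec{x} \subseteq \chi_O$ from the moding rule), this is $\sat(\structure, \vec{t_i}\not\in S)$, which is empty because $\vec{t_i} \in S$. Hence $\lift{\sat}(\structure, (c\conj \vec{x}\not\in S)\sigma) = \{\}$, and the defining clause of $\atoms$ for a universal quantifier makes $\atoms(\structure,\psi')$ the empty union. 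For the instance conjuncts, since $\fv(\varphi'\sigma) \subseteq \vec{x}$ and $\sigma_i$ agrees with $[\vec{t_i}/\vec{x}]$ on $\vec{x}$, I have $\varphi'\sigma\sigma_i = \varphi'\sigma[\vec{t_i}/\vec{x}]$; applying the inductive hypothesis with $\chi_O$ and $\sigma[\vec{t_i}/\vec{x}]$ gives $\atoms(\structure,\psi_i) \subseteq \atoms(\structure,\varphi'\sigma[\vec{t_i}/\vec{x}]) \cap \{P \mid \rho_\structure(P) = \symu\}$. Finally, the $\atoms$ clause for $\forall$ gives $\atoms(\structure, \varphi\sigma) = \bigcup_i \atoms(\structure, \varphi'\sigma[\vec{t_i}/\vec{x}])$, so unioning the inductive inclusions and distributing $\cap\,\{P \mid \rho_\structure(P)=\symu\}$ over the union yields exactly the claim.

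The main obstacle I anticipate is the residual-quantifier step: arguing that the freshly introduced guard $\vec{x}\not\in S$ annihilates all satisfying instances, so that $\psi'$ contributes no atoms. This hinges on unfolding the conjunction clause of $\lift{\sat}$ and on knowing that each $\sigma_i$ actually binds all of $\vec{x}$, so that $(\vec{x}\not\in S)\sigma_i$ is the ground, and false, atom $\vec{t_i}\not\in S$; both facts come from Theorem~\ref{thm:sat:total} and the moding premise $\vec{x} \subseteq \chi_O$. The only other point requiring care is the bookkeeping identity $\varphi'\sigma\sigma_i = \varphi'\sigma[\vec{t_i}/\vec{x}]$, which aligns the recursive $\atoms$ computation inside $\varphi\sigma$ with the reduced instances and which again relies on $\lift{\sat}$ returning substitutions whose domain covers $\vec{x}$.
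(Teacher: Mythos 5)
Your proposal is correct and follows essentially the same route as the paper's proof: a structural induction whose quantifier cases rest on the same three facts, namely that the residual guarded quantifier contributes no atoms because $\lift{\sat}(\structure, c \conj \vec{x} \not\in S)$ unfolds via the conjunction clause to $\bigcup_i (\sigma_i + \{\}) = \{\}$, that the instantiated bodies are handled by the induction hypothesis, and that $\atoms(\structure, \cdot)$ of the input decomposes as the union over the satisfying instances $\vec{t_i}$. The only difference is the induction invariant: you strengthen the statement over grounding substitutions (as in the proof of Theorem~\ref{thm:reduce:total}), whereas the paper keeps the induction at closed formulas and instead uses Lemma~\ref{lem:mode:condition:inclusion} to get $\vec{x} = \fv(c) = \chi_O$ and Lemma~\ref{lem:subst:mode} to conclude $\vdash \varphi'[\vec{t_i}/\vec{x}]$, so that the closed-form induction hypothesis applies directly to the instantiated body; the two devices are interchangeable and serve the identical purpose of keeping the recursive call well-moded.
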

\begin{proof}
By Lemma~\ref{lem:atoms:total}, $\atoms(\structure, \varphi)$ is
defined. Further, by Theorem~\ref{thm:reduce:total:app}, $\vdash
\psi$, so $\atoms(\structure, \psi)$ is also defined. Hence, the
statement of the theorem makes sense. We prove the relation
$\atoms(\structure, \psi) \subseteq \atoms(\structure, \varphi) \cap
\{P ~|~ \rho_\structure(P) = \symu\}$ by induction on $\varphi$ and
case analysis of its form. Let $U = \{P ~|~ \rho_\structure(P) =
\symu\}$. We want to show that $\atoms(\structure, \psi) \subseteq
\atoms(\structure, \varphi) \cap U$.\\

\case ~$\varphi = P$ where $P$ is either a subjective or an objective
atom. We perform a sub-case analysis on $\rho_\structure(P)$. \\

\subcase ~$\rho_\structure(P) = \symt$. Then, $\psi = \top$. So,
trivially, $\atoms(\psi) = \{\} \subseteq \atoms(\structure, \varphi)
\cap U$.\\

\subcase ~$\rho_\structure(P) = \symf$. Then, $\psi = \bot$. So,
trivially, $\atoms(\psi) = \{\} \subseteq \atoms(\structure, \varphi)
\cap U$.\\

\subcase ~$\rho_\structure(P) = \symu$. Then, $\psi = P$. Further, in
this case, $\atoms(\structure,\psi) = \{P\} = \atoms(\structure,
\varphi)$ and $P \in U$ (the latter by definition of $U$). Clearly,
$\atoms(\structure, \psi) \subseteq \atoms(\structure, \varphi) \cap
U$.\\

\case ~$\varphi = \top$. Here, $\psi = \top$. So, trivially,
$\atoms(\psi) = \{\} \subseteq \atoms(\structure, \varphi) \cap U$.\\

\case ~$\varphi = \bot$. Here, $\psi = \bot$. So, trivially,
$\atoms(\psi) = \{\} \subseteq \atoms(\structure, \varphi) \cap U$.\\

\case ~$\varphi = \varphi_1 \conj \varphi_2$. Then, $\psi =
\reduce(\structure, \varphi_1) \conj \reduce(\structure,
\varphi_2)$. By inversion on the derivation of $\vdash \varphi$, we
know that $\vdash \varphi_1$ and $\vdash \varphi_2$. Hence, by the
i.h., for $i=1,2$, $\atoms(\structure, \reduce(\structure, \varphi_i))
\subseteq \atoms(\structure, \varphi_i) \cap U$. Thus, we have,
$\atoms(\structure, \psi) = \atoms(\structure, \reduce(\structure,
\varphi_1)) \cup \atoms(\structure, \reduce(\structure, \varphi_2))
\subseteq (\atoms(\structure, \varphi_1) \cap U) \cup
(\atoms(\structure, \varphi_2) \cap U) = (\atoms(\structure,
\varphi_1) \cup \atoms(\structure, \varphi_2)) \cap U =
\atoms(\structure, \varphi) \cap U$.\\

\case ~$\varphi = \varphi_1 \disj \varphi_2$. Then, $\psi =
\reduce(\structure, \varphi_1) \disj \reduce(\structure,
\varphi_2)$. By inversion on the derivation of $\vdash \varphi$, we
know that $\vdash \varphi_1$ and $\vdash \varphi_2$. Hence, by the
i.h., for $i=1,2$, $\atoms(\structure, \reduce(\structure, \varphi_i))
\subseteq \atoms(\structure, \varphi_i) \cap U$. Thus, we have,
$\atoms(\structure, \psi) = \atoms(\structure, \reduce(\structure,
\varphi_1)) \cup \atoms(\structure, \reduce(\structure, \varphi_2))
\subseteq (\atoms(\structure, \varphi_1) \cap U) \cup
(\atoms(\structure, \varphi_2) \cap U) = (\atoms(\structure,
\varphi_1) \cup \atoms(\structure, \varphi_2)) \cap U =
\atoms(\structure, \varphi) \cap U$.\\

\case ~$\varphi = \forall \vec{x}.(c\imp \varphi')$. Then,
\[
\psi = \reduce(\structure, \varphi) ~=~
\begin{array}[t]{@{}l}
\mbox{let}\\
~\{\sigma_1,\ldots,\sigma_n\} \leftarrow \lift{\sat}(\structure,c)\\
~\{ \vec{t_i} \leftarrow \sigma_i(\vec{x}) \}_{i=1}^n\\
~S \leftarrow \{\vec{t_1},\ldots,\vec{t_n}\}\\
~\{ \psi_i \leftarrow \reduce(\structure,
\varphi'[\vec{t_i}/\vec{x}]) \}_{i=1}^n\\
~\psi' \leftarrow \forall \vec{x}.((c \conj \vec{x} \not \in S) \imp \varphi')\\
\mbox{return } \\
~~~ \psi_1 \conj \ldots \conj \psi_n \conj \psi'
\end{array} \]

By inversion on the given derivation of $\vdash \varphi$, we know that
there is a $\chi_O$ such that (1)~$\{\} \vdash c: \chi_O$,
(2)~$\vec{x} \subseteq \chi_O$, (3)~$\fv(c) \subseteq \vec{x}$, and
(4)~$\chi_O \vdash \varphi'$.  By
Lemma~\ref{lem:mode:condition:inclusion} on (1), $\chi_O \subseteq
\fv(c)$. From this, (2), and (3), it follows that $\vec{x} = \fv(c) =
\chi_O$. Call this fact~(A). Using Lemma~\ref{lem:subst:mode} on~(4),
we get $\chi_O \backslash \vec{x} \vdash
\varphi'[\vec{t_i}/\vec{x}]$. This and fact~(A) imply that $\vdash
\varphi'[\vec{t_i}/\vec{x}]$. Call this fact~(B). By the i.h.\ on
fact~(B) and $\psi_i \leftarrow \reduce(\structure,
\varphi'[\vec{t_i}/\vec{x}])$, we get that $\atoms(\structure, \psi_i)
\subseteq \atoms(\structure, \varphi'[\vec{t_i}/\vec{x}]) \cap
U$. Call this fact~(C).

Next, $\lift{\sat}(\structure, (c \conj \vec{x} \not \in S)) =
\bigcup_{\sigma' \in \lift{\sat}(\structure, c)} (\sigma' +
\lift{\sat}(\structure, \sigma'(\vec{x}) \not \in S)) =
\bigcup_{i=1}^n (\sigma_i + \lift{\sat}(\structure, \vec{t_i} \not \in
S)) = \bigcup_{i=1}^n (\sigma_i + \{\}) = \{\}$. Hence, by definition,
$\atoms(\structure, \psi') = \atoms(\structure, \forall \vec{x}.((c
\conj \vec{x} \not \in S) \imp \varphi')) = \{\}$. Call this fact~(D).

Also, $\atoms(\structure, \varphi) = \atoms(\structure, \forall
\vec{x}.(c\imp \varphi')) = \bigcup_{\sigma \in
  \lift{\sat}(\structure, c)} \atoms(\structure, \varphi'\sigma) =
\bigcup_{i=1}^n \atoms(\structure, \varphi'\sigma_i) = \bigcup_{i=1}^n
\atoms(\structure, \varphi'[\vec{t_i}/\vec{x}])$ (the last equality
follows from $\fv(\varphi') \subseteq \vec{x}$, which in turn follows
from fact~(B)). Call this fact~(E).

Finally, we have,
\[\begin{array}{llll}
\atoms(\structure, \psi) & = & \atoms(\structure, \psi_1
\conj \ldots \conj \psi_n \conj \psi')\\

& = & \atoms(\structure, \psi') \cup (\bigcup_{i=1}^n
\atoms(\structure, \psi_i)) & \mbox{(Defn. of $\atoms$)}\\

& = & \{\} \cup (\bigcup_{i=1}^n
\atoms(\structure, \psi_i)) & \mbox{(Fact~(D))}\\

& = & \bigcup_{i=1}^n \atoms(\structure, \psi_i) \\

& \subseteq & \bigcup_{i=1}^n (\atoms(\structure,
\varphi'[\vec{t_i}/\vec{x}]) \cap U) & \mbox{(Fact~(C))}\\

& = & (\bigcup_{i=1}^n \atoms(\structure,
\varphi'[\vec{t_i}/\vec{x}])) \cap U \\

& = & \atoms(\structure, \varphi) \cap U & \mbox{(Fact~(E))}
\end{array}\]

\case ~$\varphi = \exists \vec{x}.(c\conj \varphi')$. Then,
\[
\psi = \reduce(\structure, \varphi) ~=~
\begin{array}[t]{@{}l}
\mbox{let}\\
~\{\sigma_1,\ldots,\sigma_n\} \leftarrow \lift{\sat}(\structure,c)\\
~\{ \vec{t_i} \leftarrow \sigma_i(\vec{x}) \}_{i=1}^n\\
~S \leftarrow \{\vec{t_1},\ldots,\vec{t_n}\}\\
~\{ \psi_i \leftarrow \reduce(\structure,
\varphi'[\vec{t_i}/\vec{x}]) \}_{i=1}^n\\
~\psi' \leftarrow \exists \vec{x}.((c \conj \vec{x} \not \in S) \conj \varphi')\\
\mbox{return } \\
~~~ \psi_1 \disj \ldots \disj \psi_n \disj \psi'
\end{array} \]

By inversion on the given derivation of $\vdash \varphi$, we know that
there is a $\chi_O$ such that (1)~$\{\} \vdash c: \chi_O$,
(2)~$\vec{x} \subseteq \chi_O$, (3)~$\fv(c) \subseteq \vec{x}$, and
(4)~$\chi_O \vdash \varphi'$.  By
Lemma~\ref{lem:mode:condition:inclusion} on (1), $\chi_O \subseteq
\fv(c)$. From this, (2), and (3), it follows that $\vec{x} = \fv(c) =
\chi_O$. Call this fact~(A). Using Lemma~\ref{lem:subst:mode} on~(4),
we get $\chi_O \backslash \vec{x} \vdash
\varphi'[\vec{t_i}/\vec{x}]$. This and fact~(A) imply that $\vdash
\varphi'[\vec{t_i}/\vec{x}]$. Call this fact~(B). By the i.h.\ on
fact~(B) and $\psi_i \leftarrow \reduce(\structure,
\varphi'[\vec{t_i}/\vec{x}])$, we get that $\atoms(\structure, \psi_i)
\subseteq \atoms(\structure, \varphi'[\vec{t_i}/\vec{x}]) \cap
U$. Call this fact~(C).

Next, $\lift{\sat}(\structure, (c \conj \vec{x} \not \in S)) =
\bigcup_{\sigma' \in \lift{\sat}(\structure, c)} (\sigma' +
\lift{\sat}(\structure, \sigma'(\vec{x}) \not \in S)) =
\bigcup_{i=1}^n (\sigma_i + \lift{\sat}(\structure, \vec{t_i} \not \in
S)) = \bigcup_{i=1}^n (\sigma_i + \{\}) = \{\}$. Hence, by definition,
$\atoms(\structure, \psi') = \atoms(\structure, \exists \vec{x}.((c
\conj \vec{x} \not \in S) \conj \varphi')) = \{\}$. Call this fact~(D).

Also, $\atoms(\structure, \varphi) = \atoms(\structure, \exists
\vec{x}.(c \conj \varphi')) = \bigcup_{\sigma \in
  \lift{\sat}(\structure, c)} \atoms(\structure, \varphi'\sigma) =
\bigcup_{i=1}^n \atoms(\structure, \varphi'\sigma_i) = \bigcup_{i=1}^n
\atoms(\structure, \varphi'[\vec{t_i}/\vec{x}])$ (the last equality
follows from $\fv(\varphi') \subseteq \vec{x}$, which in turn follows
from fact~(B)). Call this fact~(E).

Finally, we have,
\[\begin{array}{llll}
\atoms(\structure, \psi) & = & \atoms(\structure, \psi_1
\disj \ldots \disj \psi_n \disj \psi')\\

& = & \atoms(\structure, \psi') \cup (\bigcup_{i=1}^n
\atoms(\structure, \psi_i)) & \mbox{(Defn. of $\atoms$)}\\

& = & \{\} \cup (\bigcup_{i=1}^n
\atoms(\structure, \psi_i)) & \mbox{(Fact~(D))}\\

& = & \bigcup_{i=1}^n \atoms(\structure, \psi_i) \\

& \subseteq & \bigcup_{i=1}^n (\atoms(\structure,
\varphi'[\vec{t_i}/\vec{x}]) \cap U) & \mbox{(Fact~(C))}\\

& = & (\bigcup_{i=1}^n \atoms(\structure,
\varphi'[\vec{t_i}/\vec{x}])) \cap U \\

& = & \atoms(\structure, \varphi) \cap U & \mbox{(Fact~(E))}
\end{array}\]
\end{proof}

\section{Proofs from Section~\ref{sec:instances}}
\label{app:instances}

This appendix contains proofs of theorems presented in
Section~\ref{sec:instances}.

\begin{lem}\label{lem:oc:noquant}
Suppose $\psi$ does not contain any quantifiers or objective
atoms. Then, $\psi \rewrite^* \psi'$ such that (1)~$\psi'$ is either
$\top$, or $\bot$, or contains only subjective atoms and the
connectives $\conj$, $\disj$, and (2)~For all structures $\structure$,
$\structure \models \psi$ iff $\structure \models \psi'$ and
$\structure \models \dual{\psi}$ iff $\structure \models
\dual{\psi'}$.
\end{lem}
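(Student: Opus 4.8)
The plan is to exploit the hypothesis that $\psi$ contains no quantifiers: consequently the two quantifier-elimination rules $\forall \vec{x}.(c \imp \varphi) \rewrite \top$ and $\exists \vec{x}.(c \conj \varphi) \rewrite \bot$ can never fire on $\psi$ or on any of its reducts, so only the eight propositional rules are ever relevant. Since each application of $\rewrite$ strictly decreases the number of connectives, $\rewrite$ is strongly normalizing; I would therefore first let $\psi'$ be any $\rewrite$-normal form (irreducible reduct) of $\psi$, whose existence is guaranteed by termination, and then establish parts (1) and (2) about this particular $\psi'$. Note that confluence is not needed, as the lemma only asserts the existence of a suitable $\psi'$.

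For part (1) the crux is a structural characterization of irreducible formulas. By induction on the structure of an irreducible formula $\chi$, I would show that either $\chi = \top$, or $\chi = \bot$, or $\chi$ contains no occurrence of $\top$ or $\bot$. The atomic and constant cases are immediate. In the case $\chi = \chi_1 \conj \chi_2$, the induction hypothesis gives that each $\chi_i$ is $\top$, $\bot$, or free of $\top/\bot$; if either $\chi_i$ were $\top$ or $\bot$, one of the four conjunction rules would apply, contradicting irreducibility, so both $\chi_i$ are free of $\top/\bot$ and hence so is $\chi$. The disjunction case is symmetric. Applying this to $\psi'$, and observing that no rewrite rule ever introduces a new atom (the rules only delete subterms or collapse occurrences of $\top/\bot$), every atom of $\psi'$ already occurs in $\psi$ and is therefore subjective by hypothesis. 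This yields exactly the three alternatives of (1): $\psi'$ is $\top$, $\bot$, or built solely from subjective atoms using $\conj$ and $\disj$.

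For part (2) I would show that a single step $\chi_1 \rewrite \chi_2$ preserves both truth and falsity in every structure, and then lift this to $\rewrite^*$ by induction on the length of the reduction. Truth preservation for each of the eight rules is a direct check against the clauses of $\models$ in the three-valued semantics; for instance $\structure \models \chi \conj \top$ iff $\structure \models \chi$, since $\structure \models \top$ always holds. Falsity preservation I would obtain from the fact that the eight rules are closed under taking duals: the dual of $\chi \conj \top \rewrite \chi$ is $\dual{\chi} \disj \bot \rewrite \dual{\chi}$, again one of the eight, and likewise for the others. Hence $\chi_1 \rewrite \chi_2$ implies $\dual{\chi_1} \rewrite \dual{\chi_2}$, and applying the already-established truth preservation to this dual step is precisely the statement that $\structure \models \dual{\chi_1}$ iff $\structure \models \dual{\chi_2}$.

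I expect the only mild subtlety to be the bookkeeping in the three-valued setting: because the logic lacks a genuine negation and uses $\symu$, one must verify truth preservation directly against the semantic clauses rather than appeal to classical logical equivalence, and keep the truth and falsity statements tracked separately throughout the induction. The structural characterization of normal forms and the duality argument for falsity are otherwise routine, so I do not anticipate a deep obstacle.
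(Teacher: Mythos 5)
Your proof is correct, but it takes a genuinely different route from the paper's. The paper proves the lemma by a single structural induction on $\psi$: it inductively rewrites $\psi_1$ and $\psi_2$ to reducts $\psi_1'$, $\psi_2'$ already in the desired shape, then case-analyzes whether these are $\top$, $\bot$, or subjective-only and applies at most one root rule to combine them; the rewrite sequence, the normal form, and the equivalence are all produced by one induction (in effect an innermost normalization strategy), so no separate termination argument, normal-form characterization, or reasoning about redex positions is ever needed. You instead argue in rewriting-theoretic style: termination (strict size decrease) gives existence of some normal form; a separate structural lemma shows every irreducible quantifier-free formula is $\top$, $\bot$, or entirely free of $\top$/$\bot$, and since the rules never introduce atoms the surviving atoms are subjective; and part (2) comes from single-step semantic soundness lifted along $\rewrite^*$, with the nice observation that the eight rules are closed under duals, so falsity preservation is just truth preservation applied to the dualized step. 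Your decomposition is more modular and slightly more general --- it establishes the conclusion for \emph{every} normal form rather than one specifically constructed, the step-soundness argument works for arbitrary formulas (not just quantifier-free ones), and the duality trick halves the semantic checking. The price is a little extra infrastructure that you leave implicit: subterms of irreducible formulas must be irreducible, and a step fired inside a context must preserve truth and falsity, which requires a routine congruence argument over $\conj$/$\disj$ in the three-valued semantics (this is where tracking truth and falsity jointly, as you note, matters). The paper's construction sidesteps both points by only ever rewriting at the root of already-normalized subformulas.
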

\begin{proof}
By induction on $\psi$. If $\psi$ is either $\top$, $\bot$, or $P_S$,
we can choose $\psi' = \psi$. 

If $\psi = \psi_1 \conj \psi_2$, then we inductively rewrite both
$\psi_1$ and $\psi_2$ to $\psi_1'$ and $\psi_2'$, respectively. Thus,
$\psi_1 \conj \psi_2 \rewrite^* \psi_1' \conj \psi_2'$. If either
$\psi_1'$ or $\psi_2'$ equals $\bot$, then $\psi_1' \conj \psi_2'
\rewrite \bot$ and we choose $\psi' = \bot$. If $\psi_1' = \top$, then
$\psi_1' \conj \psi_2' \rewrite \psi_2'$, so we can choose $\psi' =
\psi_2'$. Similarly, if $\psi_2' = \top$, then $\psi_1' \conj \psi_2'
\rewrite \psi_1'$, so we can choose $\psi' = \psi_1'$. Finally, if
both $\psi_1'$ and $\psi_2'$ contain only subjective atoms and
connectives $\conj$, $\disj$, then we choose $\psi' = \psi_1' \conj
\psi_2'$.

The case of $\psi = \psi_1 \disj \psi_2$ is similarly handles. No
other cases apply. 
\end{proof}


\begin{lem}\label{lem:oc:restriction:em}
If $\structure$ is objectively-complete, then for all restrictions
$c$, either $\structure \models c$ or $\structure \models \dual{c}$.
\end{lem}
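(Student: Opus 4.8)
The plan is to prove the statement by induction on the structure of the restriction $c$, treating $c$ as closed so that $\structure \models c$ is well-defined. The single fact that drives the whole argument is that, by the grammar of restrictions in Figure~\ref{fig:syntax-temporal}, a restriction never contains a subjective atom $P_S$: it is built only from objective atoms $P_O$, the constants $\top$ and $\bot$, the connectives $\conj$ and $\disj$, and existential quantification. Since $\structure$ is objectively-complete, every objective atom is decided ($\rho_\structure(P_O) \in \{\symt,\symf\}$), and I will show that each syntactic construct of a restriction preserves this ``decidedness,'' i.e.\ that $c$ is either true or false in $\structure$.

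For the base cases, if $c = P_O$ then objective-completeness gives $\rho_\structure(P_O) \in \{\symt, \symf\}$; when it is $\symt$ we have $\structure \models c$, and when it is $\symf$ we have $\rho_\structure(\dual{P_O}) = \symt$, hence $\structure \models \dual{c}$. The cases $c = \top$ and $c = \bot$ are immediate, since $\structure \models \top = \dual{\bot}$ always holds. For $c = c_1 \conj c_2$, the induction hypothesis decides each conjunct: if both are true then $\structure \models c$, while if either $c_i$ is false then $\structure \models \dual{c_i}$, whence $\structure \models \dual{c_1} \disj \dual{c_2} = \dual{c}$. The case $c = c_1 \disj c_2$ is dual: a true disjunct yields $\structure \models c$, and two false disjuncts yield $\structure \models \dual{c_1} \conj \dual{c_2} = \dual{c}$.

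The only genuinely interesting case, and the one I expect to be the main obstacle, is the existential restriction $c = \exists x.\, c'$, which under the embedding $\exists x.\,c' \equiv \exists x.(c' \conj \top)$ has dual $\forall x.(c' \imp \bot)$. Because the domain $\domain$ is infinite, I cannot simply enumerate witnesses; instead I case-split. If there is some ground $t$ with $\structure \models c'[t/x]$, then the semantics of $\exists$ gives $\structure \models c$ directly. Otherwise $\structure \not\models c'[t/x]$ for every $t$, and applying the induction hypothesis to each $c'[t/x]$ (which is structurally no larger than $c'$, as substitution preserves structure) forces $\structure \models \dual{c'[t/x]}$ for all $t$; unwinding the semantics of $\forall$, this is exactly $\structure \models \dual{c}$. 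In both branches one of the two disjuncts holds, completing the induction. I emphasize that the argument relies crucially on restrictions excluding subjective atoms: were a subjective $P_S$ with $\rho_\structure(P_S) = \symu$ permitted, then neither $\structure \models P_S$ nor $\structure \models \dual{P_S}$ would hold (since $\dual{\symu} = \symu$), and the base case would fail.
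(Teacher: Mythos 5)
Your proof is correct and follows exactly the paper's approach: the paper's own proof is simply ``by induction on $c$,'' and your argument fills in precisely that induction, including the right handling of the existential case (a classical case split on the existence of a witness, with the induction hypothesis applied to ground instances $c'[t/x]$, which have the same structural size as $c'$). The observation that the whole argument hinges on restrictions containing no subjective atoms is exactly the point of the lemma.
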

\begin{proof}
By induction on $c$. 
\end{proof}


\begin{lem}\label{lem:oc:restriction:extend}
If $\structure$ is objectively-complete and $\structure' \geq
\structure$, then for all restrictions $c$, $\structure' \models c$
iff $\structure \models c$.
\end{lem}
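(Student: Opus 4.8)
The plan is to prove the biconditional by treating each direction separately. The forward direction, $\structure \models c \Rightarrow \structure' \models c$, is immediate from monotonicity (Lemma~\ref{lem:monotonicity}), since $\structure' \geq \structure$; this uses nothing special about objective-completeness. The entire content of the lemma therefore lies in the converse, $\structure' \models c \Rightarrow \structure \models c$, where I must rule out that extending the structure manufactures new satisfaction that was absent on $\structure$.

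For the converse I would avoid a direct induction and instead combine three already-available facts. Since $\structure$ is objectively-complete and $c$ is a restriction, Lemma~\ref{lem:oc:restriction:em} supplies the excluded middle: either $\structure \models c$ or $\structure \models \dual{c}$. Assume $\structure' \models c$; I want $\structure \models c$. If the first disjunct holds we are done, so suppose toward a contradiction that $\structure \models \dual{c}$. Since $\dual{c}$ is a formula of the sublogic and $\structure' \geq \structure$, monotonicity (Lemma~\ref{lem:monotonicity}) gives $\structure' \models \dual{c}$. But then $\structure' \models c$ and $\structure' \models \dual{c}$ hold simultaneously, contradicting consistency (Lemma~\ref{lem:consistency}). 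Hence $\structure \models c$, completing the converse.

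The essential point—and the place where objective-completeness is indispensable—is Lemma~\ref{lem:oc:restriction:em}: it is precisely the syntactic restriction that $c$ contains only objective atoms and no negation (see the grammar for $c$ in Figure~\ref{fig:syntax-temporal}) that lets objective-completeness pin down a definite truth value. Without this, an unknown objective (or subjective) atom in $c$ could be refined to $\symt$ in $\structure'$ while remaining $\symu$ in $\structure$, breaking the claim.

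If one preferred a self-contained argument, the converse can instead be shown by induction on the structure of $c$. The atomic case $c = P_O$ uses that $\rho_\structure(P_O) \in \{\symt,\symf\}$ together with the fact that $\structure' \geq \structure$ freezes defined valuations, so $\rho_{\structure'}(P_O) = \symt$ forces $\rho_\structure(P_O) = \symt$; the $\conj$, $\disj$, $\top$, $\bot$ cases are routine. The only mildly delicate case is $c = \exists x.\, c'$, where a witness $t$ with $\structure' \models c'[t/x]$ must be pushed through the induction hypothesis applied to $c'[t/x]$, which has the same connective-structure as $c'$, so the induction remains well-founded. I expect this existential case, rather than anything conceptual, to be the sole bookkeeping obstacle; the lemma-based proof above sidesteps it entirely and is the version I would write.
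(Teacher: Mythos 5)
Your proof is correct, but it takes a genuinely different route from the paper's. The paper's own proof is a one-line observation plus induction: since $\structure$ is objectively-complete, every objective atom already has value $\symt$ or $\symf$ in $\structure$, and the definition of $\geq$ freezes these values, so $\structure$ and $\structure'$ agree on the valuation of every atom occurring in $c$; a straightforward induction on $c$ then gives both directions of the iff at once. You instead avoid any new induction: the forward direction is monotonicity (Lemma~\ref{lem:monotonicity}), and the converse is a contradiction argument combining excluded middle (Lemma~\ref{lem:oc:restriction:em}), monotonicity applied to $\dual{c}$, and consistency (Lemma~\ref{lem:consistency}). This is sound and non-circular, since Lemma~\ref{lem:oc:restriction:em} is proved independently by induction on $c$, and it has the virtue of isolating exactly where objective-completeness enters (only through excluded middle), as you note. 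What the paper's approach buys is self-containedness and a sharper invariant: agreement on atom valuations is stronger than agreement on satisfaction, and extends verbatim to $\dual{c}$; in your version the inductions are simply hidden inside the three cited lemmas. One small caveat: applying monotonicity to $\dual{c}$ leans on the Section~\ref{sec:structures} remark that monotonicity transfers to duals, and for restrictions $\dual{\exists x.\,c'}$ is an unrestricted universal lying strictly outside the stated sublogic grammar--but the paper itself relies on the same semantics (e.g., inside Lemma~\ref{lem:oc:restriction:em}), so you are on equal footing. Your fallback induction sketch at the end is essentially the paper's proof.
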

\begin{proof}
Suppose $\structure' \geq \structure$. Observe that because
$\structure$ is objectively-complete, $\structure'$ and $\structure$
agree on valuation of objective atoms, which are the only atoms in
$c$. The result now follows by a straightforward induction on $c$.
\end{proof}


\begin{thm}[Theorem~\ref{thm:oc}]\label{thm:oc:app}
Suppose $\structure$ is objectively-complete, $\vdash \varphi$ and
$\psi = \reduce(\structure, \varphi)$. Then $\psi \rewrite^* \psi'$,
where (1)~$\psi'$ is either $\top$, or $\bot$, or contains only subjective
atoms and the connectives $\conj$, $\disj$, and (2)~For all
$\structure' \geq \structure$, $\structure' \models \psi$ iff
$\structure' \models \psi'$ and $\structure' \models \dual{\psi}$ iff
$\structure' \models \dual{\psi'}$.
\end{thm}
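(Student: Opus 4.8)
The plan is to prove the statement by induction on the structure of $\varphi$, carrying property~(2) as the invariant. Write $\chi \approx \chi'$ to abbreviate ``for all $\structure' \geq \structure$, $\structure' \models \chi$ iff $\structure' \models \chi'$, and $\structure' \models \dual{\chi}$ iff $\structure' \models \dual{\chi'}$.'' This relation is reflexive, transitive, and a congruence for $\conj$ and $\disj$: the congruence proof uses both halves of $\approx$ together with the clauses $\dual{\varphi\conj\psi}=\dual{\varphi}\disj\dual{\psi}$ and $\dual{\varphi\disj\psi}=\dual{\varphi}\conj\dual{\psi}$. Each of the eight Boolean rewrite rules preserves truth and falsity in every structure, so every such step automatically respects $\approx$; and Lemma~\ref{lem:oc:noquant} is the base engine that collapses any quantifier- and objective-atom-free formula into the target shape while respecting $\approx$. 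The goal is thus to exhibit, for each well-moded $\varphi$, a $\psi'$ of the required form with $\reduce(\structure,\varphi) \rewrite^* \psi'$ and $\reduce(\structure,\varphi) \approx \psi'$.

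First I would dispatch the easy cases. For an atom $P$, objective-completeness forces $\rho_\structure(P_O)\in\{\symt,\symf\}$, so $\reduce$ already returns $\top$ or $\bot$ on objective atoms, and returns $\top$, $\bot$, or the subjective atom $P_S$ itself otherwise; in each case $\reduce(\structure,P)$ is already in normal form, and I take $\psi'=\reduce(\structure,P)$. The cases $\varphi=\top,\bot$ are immediate. For $\varphi=\varphi_1\conj\varphi_2$ (and dually for $\disj$), the induction hypothesis gives $\reduce(\structure,\varphi_i)\rewrite^*\psi_i'$ with $\reduce(\structure,\varphi_i)\approx\psi_i'$; by congruence $\reduce(\structure,\varphi)\rewrite^*\psi_1'\conj\psi_2'\approx\reduce(\structure,\varphi)$, and since $\psi_1'\conj\psi_2'$ has neither quantifiers nor objective atoms, Lemma~\ref{lem:oc:noquant} rewrites it to the desired $\psi'$ while preserving $\approx$.

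The crux is the quantifier case $\varphi=\forall\vec{x}.(c\imp\varphi')$ (the existential case is dual, collapsing to $\bot$). Here $\reduce$ returns $\psi_1\conj\cdots\conj\psi_n\conj\psi_{\mathrm{res}}$, where $\psi_i=\reduce(\structure,\varphi'[\vec{t_i}/\vec{x}])$, the set $S=\{\vec{t_1},\ldots,\vec{t_n}\}$ enumerates $\lift{\sat}(\structure,c)$, and $\psi_{\mathrm{res}}=\forall\vec{x}.((c\conj\vec{x}\not\in S)\imp\varphi')$. Inversion of the moding derivation together with mode substitution (Lemma~\ref{lem:subst:mode}) gives $\vdash\varphi'[\vec{t_i}/\vec{x}]$, so the induction hypothesis applies to each $\psi_i$, yielding $\psi_i\rewrite^*\psi_i'$ with $\psi_i\approx\psi_i'$. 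The key lemma I must establish is that $\psi_{\mathrm{res}}\approx\top$: because $c$ is a restriction it contains only objective atoms, so by Theorem~\ref{thm:sat:correct} the set $S$ records exactly the instances of $\vec{x}$ satisfying $c$ in $\structure$, and by Lemma~\ref{lem:oc:restriction:extend} (objective-completeness makes the truth of $c$ stable under extension) those remain exactly the satisfying instances in every $\structure'\geq\structure$. Hence $(c\conj\vec{x}\not\in S)$ has no satisfying instance in any extension, making $\psi_{\mathrm{res}}$ vacuously true and $\dual{\psi_{\mathrm{res}}}=\exists\vec{x}.((c\conj\vec{x}\not\in S)\conj\dual{\varphi'})$ vacuously false throughout. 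This licenses the single quantifier-elimination rewrite $\psi_{\mathrm{res}}\rewrite\top$ as $\approx$-preserving, and by congruence $\reduce(\structure,\varphi)\rewrite^*\psi_1'\conj\cdots\conj\psi_n'\conj\top$; this formula is quantifier- and objective-atom-free, so Lemma~\ref{lem:oc:noquant} delivers the final $\psi'$.

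The main obstacle is exactly this quantifier-elimination step, and it is where objective-completeness is indispensable: the rewrite $\psi_{\mathrm{res}}\rewrite\top$ is \emph{not} sound in general, and its soundness over \emph{all} extensions rests on the stability of $c$ under extension. Were $c$ allowed to contain subjective atoms, or were $\structure$ not objectively-complete, some extension might satisfy $(c\conj\vec{x}\not\in S)$ at a fresh instance and falsify the equivalence --- precisely the phenomenon flagged at the end of Example~\ref{eg:safety}. I would therefore take care to route the ``no new instances'' argument through Theorem~\ref{thm:sat:correct} and Lemma~\ref{lem:oc:restriction:extend}, since what is needed is unsatisfiability of the residual restriction in \emph{every} extension, not merely in $\structure$ itself.
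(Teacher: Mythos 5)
Your proposal is correct and is essentially the paper's own proof: the same induction on $\varphi$ with the equivalence-over-all-extensions invariant, the same atom/Boolean cases collapsed via Lemma~\ref{lem:oc:noquant}, and the same quantifier case assembled from mode inversion, Lemma~\ref{lem:subst:mode}, Theorem~\ref{thm:sat:correct}, and Lemma~\ref{lem:oc:restriction:extend}. The one compressed step is your claim that absence of satisfying instances of $(c \conj \vec{x} \not\in S)$ in every extension makes the residual formula \emph{vacuously true}: in the three-valued semantics, $\structure' \models \forall \vec{x}.((c \conj \vec{x}\not\in S) \imp \varphi')$ requires $\structure' \models \dual{(c \conj \vec{x}\not\in S)}[\vec{t}/\vec{x}]$ (or truth of the body) at \emph{every} instance $\vec{t}$, and ``not true'' does not in general imply ``dual is true.'' That implication is exactly what the paper's Lemma~\ref{lem:oc:restriction:em} (excluded middle for restrictions on objectively-complete structures, which extensions of $\structure$ inherit) supplies; adding that one citation closes the gap, after which your argument coincides with the paper's.
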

\begin{proof}
By induction on $\varphi$ and case analysis of its form. Define
$\pred{simp}(\psi')$ to mean statement~(1) of the theorem, i.e., that
$\psi'$ is either $\top$, or $\bot$, or contains only subjective atoms
and the connectives $\conj$, $\disj$. Define $\pred{equiv}(\structure,
\psi, \psi')$ to mean statement~(2) of the theorem, i.e., for all
$\structure' \geq \structure$, $\structure' \models \psi$ iff
$\structure' \models \psi'$ and $\structure' \models \dual{\psi}$ iff
$\structure' \models \dual{\psi'}$. \\

\case ~$\varphi = P_O$. In this case, $\rho_\structure(P_O) \in
\{\symt, \symf\}$ and, accordingly, $\psi = \top$ or $\psi = \bot$. So
we can choose $\psi' = \psi$ to trivially satisfy both~(1) and~(2).\\

\case ~$\varphi = P_S$. In this case $\psi = \top$ or $\psi = \bot$ or
$\psi = P_S$. So we can choose $\psi' = \psi$ to trivially
satisfy both~(1) and~(2). \\

\case ~$\varphi = \top$. Then, $\psi = \top$. We choose $\psi' = \psi$
to trivially satisfy both~(1) and~(2).\\

\case ~$\varphi = \bot$. Then, $\psi = \bot$. We choose $\psi' = \psi$
to trivially satisfy both~(1) and~(2).\\

\case ~$\varphi = \varphi_1 \conj \varphi_2$. Then, $\psi = \psi_1
\conj \psi_2$, where $\psi_i = \reduce(\structure, \varphi_i)$ for $i
= 1,2$. By inversion on the given derivation of $\vdash \varphi$, we
deduce $\vdash \varphi_1$ and $\vdash \varphi_2$. Hence, from the
i.h., $\psi_i \rewrite^* \psi_i'$ where $\pred{simp}(\psi_i')$ and
$\pred{equiv}(\structure, \psi_i, \psi_i')$. The last fact implies
that $\pred{equiv}(\structure, \psi, \psi_1' \conj \psi_2')$. Further,
$\psi = \psi_1 \conj \psi_2 \rewrite^* \psi_1' \conj \psi_2'$. Using
Lemma~\ref{lem:oc:noquant}, we obtain a $\psi'$ such that $\psi_1'
\conj \psi_2' \rewrite^* \psi'$, $\pred{simp}(\psi')$ and
$\pred{equiv}(\structure, \psi_1' \conj \psi_2', \psi')$. The last
fact and $\pred{equiv}(\structure, \psi, \psi_i' \conj \psi_2')$ imply
$\pred{equiv}(\structure, \psi, \psi')$. So $\psi'$ satisfies all our
requirements.\\

\case ~$\varphi = \varphi_1 \disj \varphi_2$. Then, $\psi = \psi_1
\disj \psi_2$, where $\psi_i = \reduce(\structure, \varphi_i)$ for $i
= 1,2$. By inversion on the given derivation of $\vdash \varphi$, we
deduce $\vdash \varphi_1$ and $\vdash \varphi_2$. Hence, from the
i.h., $\psi_i \rewrite^* \psi_i'$ where $\pred{simp}(\psi_i')$ and
$\pred{equiv}(\structure, \psi_i, \psi_i')$. The last fact implies
that $\pred{equiv}(\structure, \psi, \psi_1' \disj \psi_2')$. Further,
$\psi = \psi_1 \disj \psi_2 \rewrite^* \psi_1' \disj \psi_2'$. Using
Lemma~\ref{lem:oc:noquant}, we obtain a $\psi'$ such that $\psi_1'
\disj \psi_2' \rewrite^* \psi'$, $\pred{simp}(\psi')$ and
$\pred{equiv}(\structure, \psi_1' \disj \psi_2', \psi')$. The last
fact and $\pred{equiv}(\structure, \psi, \psi_i' \disj \psi_2')$ imply
$\pred{equiv}(\structure, \psi, \psi')$. So $\psi'$ satisfies all our
requirements.\\

\case ~$\varphi = \forall \vec{x}. (c \imp \varphi')$. Then, $\psi$ is
calculated as follows:
\[
\psi = \reduce(\structure, \varphi) ~=~
\begin{array}[t]{@{}l}
\mbox{let}\\
~\{\sigma_1,\ldots,\sigma_n\} \leftarrow \lift{\sat}(\structure,c)\\
~\{ \vec{t_i} \leftarrow \sigma_i(\vec{x}) \}_{i=1}^n\\
~S \leftarrow \{\vec{t_1},\ldots,\vec{t_n}\}\\
~\{ \psi_i \leftarrow \reduce(\structure,
\varphi'[\vec{t_i}/\vec{x}]) \}_{i=1}^n\\
~\psi'' \leftarrow \forall \vec{x}.((c \conj \vec{x} \not \in S) \imp \varphi')\\
\mbox{return } \\
~~~ \psi_1 \conj \ldots \conj \psi_n \conj \psi''
\end{array} \]

By inversion on the given derivation of $\vdash \varphi$, we know that
there is a $\chi_O$ such that (1)~$\{\} \vdash c: \chi_O$,
(2)~$\vec{x} \subseteq \chi_O$, (3)~$\fv(c) \subseteq \vec{x}$, and
(4)~$\chi_O \vdash \varphi'$. By
Lemma~\ref{lem:mode:condition:inclusion} on (1), $\chi_O \subseteq
\fv(c)$. From this, (2), and (3), it follows that $\vec{x} = \fv(c) =
\chi_O$. Call this fact~(A). Note also that by
Theorem~\ref{thm:sat:total:app}, $\dom(\sigma_i) \supseteq \chi_O =
\vec{x}$. Call this fact~(B).

Next, we show that $\pred{equiv}(\structure, \psi'', \top)$. Since for
all $\structure'$, $\structure' \models \top$, it suffices to show
that for all $\structure' \geq \structure$, $\structure' \models
\forall \vec{x}.((c \conj \vec{x} \not \in S) \imp \varphi')$. By
definition of $\models$, it suffices to prove that for all $\vec{t}$
and $\structure' \geq \structure$, $\structure' \models
\dual{c[\vec{t}/\vec{x}] \conj \vec{t} \not \in S}$, i.e.,
$\structure' \models \dual{c[\vec{t}/\vec{x}]} \disj \vec{t} \in
S$. If $\vec{t} = \vec{t_i}$ for some $i$, then $\structure' \models
\vec{t} \in S$ by definition of $S$, so we are done. Hence, we need
only consider the case where $\vec{t} \not \in S$. In this case we
show that $\structure' \models \dual{c[\vec{t}/\vec{x}]}$. By
Lemma~\ref{lem:oc:restriction:em}, this is implied by $\structure'
\not \models c[\vec{t}/\vec{x}]$, so we show the latter. Suppose, for
the sake of contradiction, that $\structure' \models
c[\vec{t}/\vec{x}]$. By Lemma~\ref{lem:oc:restriction:extend},
$\structure \models c[\vec{t}/\vec{x}]$. Hence, by
Theorem~\ref{thm:sat:correct:app}, there is a $\sigma \in
\lift{\sat}(\structure, c)$ such that $[\vec{x} \mapsto \vec{t}] \geq
\sigma$.  $\sigma \in \lift{\sat}(\structure, c)$ forces $\sigma =
\sigma_i$ for some $i$ and, by fact~(B), $\vec{t} = \vec{t_i}$. Hence,
$\vec{t} = \vec{t_i} \in S$, a contradiction. Therefore,
$\pred{equiv}(\structure, \psi'', \top)$. Call this fact~(C).

By Lemma~\ref{lem:subst:mode} on~(4), we derive $\chi_O \backslash
\vec{x} \vdash \varphi'[\vec{t}/\vec{x}]$. Using fact~(A), we have
$\vdash \varphi'[\vec{t}/\vec{x}]$. Applying the i.h.\ to this and
$\psi_i \leftarrow \reduce(\structure, \varphi'[\vec{t_i}/\vec{x}])$,
we know that there is a $\psi_i'$ such that $\psi_i \rewrite^*
\psi_i'$, $\pred{simp}(\psi_i')$ and $\pred{equiv}(\structure, \psi_i,
\psi_i')$. Call this fact~(D).

Note that $\psi = \psi_1 \conj \ldots \conj \psi_n \conj \psi''
\rewrite^* \psi_1' \conj \ldots \conj \psi_n' \conj \top$ (the second
relation follows because $\psi'' \rewrite \top$). Further, because
$\pred{equiv}(\structure, \psi_i, \psi_i')$ (fact~(D)) and
$\pred{equiv}(\structure, \psi'', \top)$ (fact~(C)), it follows that
$\pred{equiv}(\structure, \psi, (\psi_1' \conj \ldots \conj \psi_n'
\conj \top))$. Also, from fact~(C), $\pred{simp}(\psi_1' \conj \ldots
\conj \psi_n' \conj \top)$. The proof is complete by choosing the
$\psi'$ obtained by applying Lemma~\ref{lem:oc:noquant} to $\psi_1'
\conj \ldots \conj \psi_n' \conj \top$.\\

\case ~$\varphi = \exists \vec{x}. (c \conj \varphi')$. Then, $\psi$ is
calculated as follows:
\[
\psi = \reduce(\structure, \varphi) ~=~
\begin{array}[t]{@{}l}
\mbox{let}\\
~\{\sigma_1,\ldots,\sigma_n\} \leftarrow \lift{\sat}(\structure,c)\\
~\{ \vec{t_i} \leftarrow \sigma_i(\vec{x}) \}_{i=1}^n\\
~S \leftarrow \{\vec{t_1},\ldots,\vec{t_n}\}\\
~\{ \psi_i \leftarrow \reduce(\structure,
\varphi'[\vec{t_i}/\vec{x}]) \}_{i=1}^n\\
~\psi'' \leftarrow \exists \vec{x}.((c \conj \vec{x} \not \in S) \conj \varphi')\\
\mbox{return } \\
~~~ \psi_1 \disj \ldots \disj \psi_n \disj \psi''
\end{array} \]

By inversion on the given derivation of $\vdash \varphi$, we know that
there is a $\chi_O$ such that (1)~$\{\} \vdash c: \chi_O$,
(2)~$\vec{x} \subseteq \chi_O$, (3)~$\fv(c) \subseteq \vec{x}$, and
(4)~$\chi_O \vdash \varphi'$. By
Lemma~\ref{lem:mode:condition:inclusion} on (1), $\chi_O \subseteq
\fv(c)$. From this, (2), and (3), it follows that $\vec{x} = \fv(c) =
\chi_O$. Call this fact~(A). Note also that by
Theorem~\ref{thm:sat:total:app}, $\dom(\sigma_i) \supseteq \chi_O =
\vec{x}$. Call this fact~(B).

Next, we show that $\pred{equiv}(\structure, \psi'', \bot)$. Since for
all $\structure'$, $\structure' \models \dual{\bot} = \top$, it
suffices to show that for all $\structure' \geq \structure$,
$\structure' \models \dual{\exists \vec{x}.((c \conj \vec{x} \not \in
  S) \conj \varphi')}$, i.e., $\structure' \models \forall \vec{x}.((c
\conj \vec{x} \not \in S) \imp \varphi')$. By definition of $\models$,
it suffices to prove that for all $\vec{t}$ and $\structure' \geq
\structure$, $\structure' \models \dual{c[\vec{t}/\vec{x}] \conj
  \vec{t} \not \in S}$, i.e., $\structure' \models
\dual{c[\vec{t}/\vec{x}]} \disj \vec{t} \in S$. If $\vec{t} =
\vec{t_i}$ for some $i$, then $\structure' \models \vec{t} \in S$ by
definition of $S$, so we are done. Hence, we need only consider the
case where $\vec{t} \not \in S$. In this case we show that
$\structure' \models \dual{c[\vec{t}/\vec{x}]}$. By
Lemma~\ref{lem:oc:restriction:em}, this is implied by $\structure'
\not \models c[\vec{t}/\vec{x}]$, so we show the latter. Suppose, for
the sake of contradiction, that $\structure' \models
c[\vec{t}/\vec{x}]$. By Lemma~\ref{lem:oc:restriction:extend},
$\structure \models c[\vec{t}/\vec{x}]$. Hence, by
Theorem~\ref{thm:sat:correct:app}, there is a $\sigma \in
\lift{\sat}(\structure, c)$ such that $[\vec{x} \mapsto \vec{t}] \geq
\sigma$.  $\sigma \in \lift{\sat}(\structure, c)$ forces $\sigma =
\sigma_i$ for some $i$ and, by fact~(B), $\vec{t} = \vec{t_i}$. Hence,
$\vec{t} = \vec{t_i} \in S$, a contradiction. Therefore,
$\pred{equiv}(\structure, \psi'', \bot)$. Call this fact~(C).

By Lemma~\ref{lem:subst:mode} on~(4), we derive $\chi_O \backslash
\vec{x} \vdash \varphi'[\vec{t}/\vec{x}]$. Using fact~(A), we have
$\vdash \varphi'[\vec{t}/\vec{x}]$. Applying the i.h.\ to this and
$\psi_i \leftarrow \reduce(\structure, \varphi'[\vec{t_i}/\vec{x}])$,
we know that there is a $\psi_i'$ such that $\psi_i \rewrite^*
\psi_i'$, $\pred{simp}(\psi_i')$ and $\pred{equiv}(\structure, \psi_i,
\psi_i')$. Call this fact~(D).

Note that $\psi = \psi_1 \disj \ldots \disj \psi_n \disj \psi''
\rewrite^* \psi_1' \disj \ldots \disj \psi_n' \disj \bot$ (the second
relation follows because $\psi'' \rewrite \bot$). Further, because
$\pred{equiv}(\structure, \psi_i, \psi_i')$ (fact~(D)) and
$\pred{equiv}(\structure, \psi'', \bot)$ (fact~(C)), it follows that
$\pred{equiv}(\structure, \psi, (\psi_1' \disj \ldots \disj \psi_n'
\disj \bot))$. Also, from fact~(C), $\pred{simp}(\psi_1' \disj \ldots
\disj \psi_n' \disj \bot)$. The proof is complete by choosing the
$\psi'$ obtained by applying Lemma~\ref{lem:oc:noquant} to $\psi_1'
\disj \ldots \disj \psi_n' \disj \bot$.
\end{proof}


Next, we turn to proofs of Theorems~\ref{thm:safety}
and~\ref{thm:cosafety}. Both theorems rely on a central lemma
(Lemma~\ref{lem:past:reduce}). In order to prove the lemma cleanly, we
need a few definitions and some other lemmas. Note that in the rest of
this Appendix we assume that there are no subjective predicates.

\begin{defn}[Protected restrictions]\label{def:protected:restriction}
Let $T$ be a set of time points (possibly non-ground). We define a
subclass ``$T$-protected'' of restrictions $c$ of the sublogic
inductively as follows:
\begin{enumerate}
\item $p_O(t_1,\ldots,t_n,\ttime_0)$ is $T$-protected if $\ttime_0 \in T$
\item $\vec{x} \not \in S$ is $T$-protected
\item $\ttime \not= \ttime'$ is $T$-protected
\item $\pred{in}(\ttime, \ttime', \ttime_0)$ is $T$-protected if
  $\ttime_0 \in T$
\item $\top$ is $T$-protected
\item $\bot$ is $T$-protected
\item $c_1 \conj c_2$ is $T$-protected if both $c_1$ and
  $c_2$ are $T$-protected.
\item $c_1 \disj c_2$ is $T$-protected if both $c_1$ and
  $c_2$ are $T$-protected.
\item $\exists x.c$ is $T$-protected if $c$ is
  $T$-protected.
\end{enumerate}
\end{defn}

\begin{defn}[Protected formulas]\label{def:protected:formula}
Let $T$ be a set of time points (possibly non-ground).  We define a
subclass ``$T$-protected'' of formulas $\varphi$ of the sublogic
inductively as follows:
\begin{enumerate}
\item $p_O(t_1,\ldots,t_n,\ttime_0)$ is $T$-protected if $\ttime_0 \in
  T$
\item $\top$ is $T$-protected
\item $\bot$ is $T$-protected
\item $\varphi_1 \conj \varphi_2$ is $T$-protected if both
  $\varphi_1$ and $\varphi_2$ are $T$-protected
\item $\varphi_1 \disj \varphi_2$ is $T$-protected if both
  $\varphi_1$ and $\varphi_2$ are $T$-protected
\item $\forall \vec{x}.(c \imp \varphi)$ is $T$-protected if
  $c$ is $T$-protected and $\varphi$ is $T$-protected
\item $\forall \ttime.((\pred{in}(\ttime, \ttime', \ttime_0) \conj c)
  \imp \varphi)$ is $T$-protected if $c$ is $T$-protected, $\ttime_0
  \in T$, and $\varphi$ is ($T \cup \{\ttime\}$)-protected
\item $\exists \vec{x}.(c \conj \varphi)$ is $T$-protected if
  $c$ is $T$-protected and $\varphi$ is $T$-protected
\item $\exists \ttime.((\pred{in}(\ttime, \ttime', \ttime_0) \conj c)
  \conj \varphi)$ is $T$-protected if $c$ is $T$-protected, $\ttime_0
  \in T$, and $\varphi$ is ($T \cup \{\ttime\}$)-protected
\end{enumerate}
\end{defn}


\begin{lem}[Excluded middle for protected formulas]\label{lem:protected:em}
Let $T$, $\ttime_0$ be ground. Suppose $\structure$ is
$\ttime_0$-complete and for all $\ttime \in T$, $\ttime \leq
\ttime_0$.  Then, the following hold.
\begin{enumerate}
\item If $c$ is ground and $T$-protected, then either $\structure
  \models c$ or $\structure \models \dual{c}$.
\item If $\varphi$ is ground and $T$-protected, then either
  $\structure \models \varphi$ or $\structure \models \dual{\varphi}$.
\end{enumerate}
\end{lem}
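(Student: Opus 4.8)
The plan is to prove both parts simultaneously by induction, part~(1) by induction on the structure of the $T$-protected restriction $c$ and part~(2) by induction on the $T$-protected formula $\varphi$, carrying $T$ as a \emph{variable} ranging over all ground sets of time points whose every element is $\leq \ttime_0$. Keeping $T$ general is essential, because the bounded-quantifier clauses (items~7 and~9 of Definition~\ref{def:protected:formula}) recurse on a subformula that is protected with respect to the enlarged set $T \cup \{\ttime\}$, so the induction hypothesis must already be available for such sets. Before the induction I would record three standing facts about the fixed semantics of the objective predicates introduced by the translation: a ground disequality $\ttime \neq \ttime'$ and a ground membership test $\vec{x} \not\in S$ are each assigned $\symt$ or $\symf$ in every structure (they are purely arithmetic/combinatorial), and $\pred{in}(\ttime,\ttime',\ttime'')$ is assigned $\symf$ in every structure whenever the range constraint $\ttime' \leq \ttime \leq \ttime''$ fails. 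I would also note once that ``decided,'' i.e.\ $\rho_\structure(P) \in \{\symt,\symf\}$, yields the conclusion, since $\rho_\structure(P) = \symt$ gives $\structure \models P$ and $\rho_\structure(P) = \symf$ gives $\structure \models \dual{P}$.

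For the atomic and propositional cases the argument is short. An objective atom $p_O(t_1,\ldots,t_n,\ttime'')$ that is $T$-protected has $\ttime'' \in T$, hence $\ttime'' \leq \ttime_0$, so its time argument lies in the past and the first condition of $\ttime_0$-completeness decides it. For $\pred{in}(\ttime,\ttime',\ttime'')$ with $\ttime'' \in T$ I would split on the range: if $\ttime \leq \ttime''$ then $\ttime \leq \ttime'' \leq \ttime_0$, so the \emph{first} argument is $\leq \ttime_0$ and the second condition of $\ttime_0$-completeness decides it; otherwise the range fails and the atom is $\symf$ by the standing fact. The atoms $\ttime \neq \ttime'$ and $\vec{x} \not\in S$ are handled by the standing facts, and $\top,\bot$ are immediate. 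For $c_1 \conj c_2$ and $c_1 \disj c_2$ I would apply the induction hypothesis to each conjunct/disjunct and combine using the duality clauses $\dual{c_1 \conj c_2} = \dual{c_1}\disj\dual{c_2}$ and $\dual{c_1\disj c_2}=\dual{c_1}\conj\dual{c_2}$: if every subpart is true the whole is true, and as soon as one subpart is false its dual witnesses $\structure \models \dual{c}$ (consistency, Lemma~\ref{lem:consistency}, guarantees these verdicts do not conflict). The same pattern covers the propositional cases of part~(2).

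The quantifier cases carry the real content, and the main obstacle is reconciling the infinite domain with the growing set $T$. For $\exists x.\,c'$ and for the unbounded formula quantifiers $\forall\vec{x}.(c\imp\varphi')$ and $\exists\vec{x}.(c\conj\varphi')$ I would observe that because $T$ is \emph{ground}, none of the ``must lie in $T$'' positions of Definitions~\ref{def:protected:restriction} and~\ref{def:protected:formula} can be the bound variable; hence $T$-protectedness is preserved when the bound variable is instantiated by any ground term, so every instance is ground, $T$-protected, and decided by the induction hypothesis at the \emph{same} set $T$. Then either some instance makes the body true (making an $\exists$ true, or refuting a $\forall$) or every instance makes it false, and the semantics of the dual---a universally/existentially quantified formula over all ground terms---records exactly this verdict; crucially, since each instance is decided there is no residual $\symu$ even though the domain is infinite. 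The delicate case is the \emph{bounded} time quantifier $\forall\ttime.((\pred{in}(\ttime,\ttime',\ttime'')\conj c)\imp\varphi')$ and its existential dual, where $\ttime'' \in T$. Here, for a ground instantiation $t$ of $\ttime$, the guard $\pred{in}(t,\ttime',\ttime'')\conj c[t/\ttime]$ is decided (its first conjunct as above, its second by part~(1)); if the guard is false the implication instance holds vacuously, and if it is true then in particular $\pred{in}(t,\ttime',\ttime'')$ holds, forcing $t \leq \ttime'' \leq \ttime_0$, so $T \cup \{t\}$ is again ground with all elements $\leq \ttime_0$ and $\varphi'[t/\ttime]$ is $(T\cup\{t\})$-protected, whence the induction hypothesis for $\varphi'$ at the enlarged set decides it. I expect the only genuinely fiddly point to be this stability-under-substitution observation---that substituting a ground $t$ for the bound time variable turns a $(T\cup\{\ttime\})$-protected body into a $(T\cup\{t\})$-protected one---which holds because for ground $T$ the protected time positions are already ground and the single non-ground protected position $\ttime$ is exactly the element replaced in both the formula and the index set.
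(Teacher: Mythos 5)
Your proposal is correct and follows essentially the same route as the paper's proof: induction on the derivation of $T$-protectedness with $T$ kept general, atoms decided by $\ttime_0$-completeness, propositional cases combined via duality, and, in the bounded time-quantifier case, truth of the $\pred{in}$ guard used to force the instantiated time point below $\ttime_0$ so that the induction hypothesis applies at the enlarged set $T \cup \{t\}$. The only difference is one of explicitness rather than method: you state as standing facts the fixed interpretation of $\ttime \not= \ttime'$, $\vec{x} \not\in S$, and range-violating $\pred{in}$ atoms, and you isolate the substitution-stability of protectedness ($(T\cup\{\ttime\})$-protected becomes $(T\cup\{t\})$-protected under $[t/\ttime]$), both of which the paper uses but leaves implicit.
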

\begin{proof}
Both statements follow by an induction on the respective definitions
of $T$-protected. We show some representative cases below.\\

\noindent \underline{\bf Proof of (1).} \\

\case ~$c = p_O(t_1,\ldots, t_n,\ttime)$ and $\ttime \in T$. By
definition of $\ttime_0$-complete and the fact $\ttime \leq \ttime_0$,
we know that either $\rho_\structure(p_O(t_1,\ldots, t_n,\ttime)) =
\symt$ or $\rho_\structure(p_O(t_1,\ldots, t_n,\ttime)) = \symf$. In
the former case, $\structure \models p_O(t_1,\ldots, t_n,\ttime)$,
while in the latter case, $\structure \models \dual{p_O}(t_1,\ldots,
t_n,\ttime)$.\\

\case ~$c = c_1 \conj c_2$ and both $c_1$ and $c_2$ are
$T$-protected. By the i.h., for each $i$, either $\structure \models
c_i$ or $\structure \models \dual{c_i}$. If $\structure \models c_1$
and $\structure \models c_2$, then $\structure \models c_1 \conj c_2$,
as required. If, on the other hand, for some $i$, $\structure \models
\dual{c_i}$, then $\structure \models \dual{c_1} \disj \dual{c_2}$,
i.e., $\structure \models \dual{c}$.\\

\case ~$c = \exists x.c$ and $c$ is $T$-protected. By the i.h., for
every $t$, either $\structure \models c[t/x]$ or $\structure \models
\dual{c}[t/x]$. If there is a $t$ such that $\structure \models
c[t/x]$, then also $\structure \models \exists x.c$. If, on the other
hand, for every $t$, $\structure \models \dual{c}[t/x]$, then also,
$\structure \models \forall x.\dual{c}$, i.e., $\structure \models
\dual{\exists x.c}$.\\

\noindent \underline{\bf Proof of (2).}\\

\case ~$\varphi = \forall \vec{x}.(c \imp \varphi')$ where $c$ is
$T$-protected and $\varphi'$ is $T$-protected. If for any $\vec{t}$,
$\structure \models c[\vec{t}/\vec{x}]$ and $\structure \models
\dual{\varphi'}[\vec{t}/\vec{x}]$, then, by definition, $\structure
\models \exists \vec{x}. (c \conj \dual{\varphi'})$, i.e., $\structure
\models \dual{\varphi}$ and we are done. Hence, we need only consider
the case where for every $\vec{t}$, either $\structure \not \models
c[\vec{t}/\vec{x}]$ or $\structure \not \models \dual{\varphi'}
[\vec{t}/\vec{x}]$. However, by~(1) and the i.h., we also deduce in
this case that for every $\vec{t}$, either $\structure \models
\dual{c}[\vec{t}/\vec{x}]$ or $\structure \models \varphi'
     [\vec{t}/\vec{x}]$. By definition of $\models$, $\structure
     \models \varphi$ in this case. \\

\case ~$\forall \ttime.((\pred{in}(\ttime, \ttime', \ttime_1) \conj c)
\imp \varphi')$ where $c$ is $T$-protected, $\ttime_1 \in T$, and
$\varphi'$ is ($T \cup \{\ttime\}$)-protected. We consider two
exhaustive subcases:\\

\subcase ~There is a ground $\ttime''$ such that $\structure \models
\pred{in}(\ttime'', \ttime', \ttime_1)$, $\structure \models
c[\ttime''/\ttime]$ and $\structure \models
\dual{\varphi'}[\ttime''/\ttime]$. By definition of $\models$,
$\structure \models \exists \tau. ((\pred{in}(\ttime'', \ttime',
\ttime_1) \conj c) \conj \dual{\varphi'})$, i.e., $\structure \models
\dual{\varphi}$.\\

\subcase ~For every ground $\ttime''$, either $\structure \not \models
\pred{in}(\ttime'',\ttime',\ttime_1)$, or $\structure \not \models
c[\ttime''/\ttime]$, or $\structure \not \models
\dual{\varphi'}[\ttime''/\ttime]$. In this case we show that
$\structure \models \varphi$. Following the definition of $\models$,
pick any $\ttime''$. It suffices to prove that either $\structure
\models \dual{\pred{in}}(\ttime'',\ttime',\ttime_1)$ or $\structure
\models \dual{c}[\ttime''/\ttime]$ or $\structure \models
\varphi'[\ttime''/\ttime]$. From the subcase assumption, $\structure
\not \models \pred{in}(\ttime'',\ttime',\ttime_1)$, or $\structure
\not \models c[\ttime''/\ttime]$, or $\structure \not \models
\dual{\varphi'}[\ttime''/\ttime]$. If $\structure \not \models
\pred{in}(\ttime'',\ttime',\ttime_1)$, then because
$\pred{in}(\ttime'',\ttime',\ttime_1)$ is $T$-protected (note that
$\ttime_1 \in T$), (1)~implies that $\structure \models
\dual{\pred{in}}(\ttime'',\ttime',\ttime_1)$. The case $\structure
\not \models c[\ttime''/\ttime]$ is similar. That leaves only the last
case: $\structure \not \models \dual{\varphi'}[\ttime''/\ttime]$.
Since $\varphi'$ is ($T \cup \{\ttime\}$)-protected,
$\varphi'[\ttime''/\ttime]$ is ($T \cup
\{\ttime''\}$)-protected. Further, because we already considered the
case $\structure \not \models \pred{in}(\ttime'',\ttime',\ttime_1)$,
we may assume here that $\structure \models
\pred{in}(\ttime'',\ttime',\ttime_1)$, which implies $\ttime'' \leq
\ttime_1 \leq \ttime_0$. Thus, we can apply the i.h.\ to
$\varphi'[\ttime''/\ttime]$ to deduce that either $\structure \models
\varphi'[\ttime''/\ttime]$ or $\structure \models
\dual{\varphi'}[\ttime''/\ttime]$. The latter is assumed to be false,
so we must have $\structure \models \varphi'[\ttime''/\ttime]$, as
required.
\end{proof}


\begin{lem}[Reduction of protected formulas]\label{lem:protected:reduction}
Let $T$, $\ttime_0$ be ground. Suppose $\varphi$ is $T$-protected,
$\vdash \varphi$, $\structure$ is $\ttime_0$-complete, and for all
$\ttime \in T$, $\ttime \leq \ttime_0$. Then, $\reduce(\structure,
\varphi) \rewrite^* \psi$, where $\psi = \top$ or $\psi = \bot$ and
$\structure \models \varphi$ iff $\structure \models \psi$.
\end{lem}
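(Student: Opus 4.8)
The plan is to prove the statement by structural induction on $\varphi$, generalising the claim over the time-set so that it is usable as an induction hypothesis: for every ground $T$ all of whose elements are $\leq \ttime_0$ and every $T$-protected, well-moded $\varphi$, one has $\reduce(\structure,\varphi)\rewrite^*\psi$ with $\psi\in\{\top,\bot\}$, and moreover $\psi=\top$ exactly when $\structure\models\varphi$ (equivalently $\structure\models\varphi$ iff $\structure\models\psi$, since $\structure\models\top$ and $\structure\not\models\bot$). Because this part of the appendix assumes no subjective predicates, the only atoms occurring in a $T$-protected formula are objective atoms whose time argument lies in $T$, hence $\leq\ttime_0$; by $\ttime_0$-completeness their valuation is $\symt$ or $\symf$, so $\reduce$ maps each directly to $\top$ or $\bot$. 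The base cases ($P_O$, $\top$, $\bot$) are then immediate, and the connective cases follow because $\reduce$ distributes over $\conj,\disj$, the i.h.\ collapses each subformula to $\top$ or $\bot$, and the eight Boolean simplification rules collapse the resulting conjunction/disjunction to the correct truth value.

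The heart of the argument is the quantifier cases. For $\varphi=\forall\vec{x}.(c\imp\varphi')$, $\reduce$ produces $\psi_1\conj\cdots\conj\psi_n\conj\psi''$, where $\{\sigma_1,\dots,\sigma_n\}=\lift{\sat}(\structure,c)$, $\vec{t_i}=\sigma_i(\vec{x})$, $S=\{\vec{t_1},\dots,\vec{t_n}\}$, $\psi_i=\reduce(\structure,\varphi'[\vec{t_i}/\vec{x}])$, and $\psi''=\forall\vec{x}.((c\conj\vec{x}\not\in S)\imp\varphi')$. I would first use the quantifier-elimination rewrite $\psi''\rewrite\top$ and argue it is sound in $\structure$: by Theorem~\ref{thm:sat:correct}, $S$ is exactly the set of instances satisfying $c$ in $\structure$, so $(c\conj\vec{x}\not\in S)$ has no satisfying instance, whence $\structure\models\psi''$ (for $\vec{t}\notin S$ the needed $\structure\models\dual{c}[\vec{t}/\vec{x}]$ comes from the excluded-middle Lemma~\ref{lem:protected:em}(1) applied to the ground, $T$-protected $c$). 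Each $\varphi'[\vec{t_i}/\vec{x}]$ is again $T$-protected (a routine substitution property of protectedness) and well-moded (Lemma~\ref{lem:subst:mode}), so the i.h.\ gives $\psi_i\rewrite^*\psi_i'\in\{\top,\bot\}$ with $\psi_i'=\top$ iff $\structure\models\varphi'[\vec{t_i}/\vec{x}]$. Using the same excluded-middle fact, I would show $\structure\models\varphi$ iff $\structure\models\varphi'[\vec{t_i}/\vec{x}]$ for all $i$ (instances outside $S$ satisfy the implication vacuously, since $\structure\models\dual{c}$ there). Chaining the rewrites, $\reduce(\structure,\varphi)\rewrite^*\psi_1'\conj\cdots\conj\psi_n'\conj\top$, which the Boolean rules collapse to $\top$ iff every $\psi_i'=\top$, i.e.\ iff $\structure\models\varphi$. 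The existential case $\exists\vec{x}.(c\conj\varphi')$ is dual, using $\psi''\rewrite\bot$ and the observation that $\structure\not\models\psi''$.

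The one extra wrinkle is the timed quantifier clauses (cases 7 and 9 of Definition~\ref{def:protected:formula}), where the restriction has the form $\pred{in}(\ttime,\ttime',\ttime_1)\conj c$ with $\ttime_1\in T$ and $\varphi'$ is $(T\cup\{\ttime\})$-protected rather than $T$-protected. Here each satisfying instance binds $\ttime$ to an observed point $\ttime_i''$ with $\structure\models\pred{in}(\ttime_i'',\ttime',\ttime_1)$, which forces $\ttime_i''\leq\ttime_1\leq\ttime_0$; hence $\varphi'[\ttime_i''/\ttime]$ is $(T\cup\{\ttime_i''\})$-protected and every time in $T\cup\{\ttime_i''\}$ is still $\leq\ttime_0$, so the generalised i.h.\ applies to the enlarged time-set exactly as in the untimed case. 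I expect the main obstacle to be bookkeeping rather than conceptual: one must track carefully that protectedness and the bound $\leq\ttime_0$ are preserved under the substitutions performed by $\reduce$, and that the two quantifier-elimination rewrites—which are \emph{not} equivalence-preserving in general—are justified here precisely because past-completeness together with $T$-protectedness makes every relevant restriction decided in $\structure$, forcing the residual $\psi''$ to be already $\structure$-true (resp.\ $\structure$-false). Finally, I note that the entire argument concerns truth in $\structure$ alone, so I never need the stronger ``all extensions'' conclusion of Theorem~\ref{thm:oc}.
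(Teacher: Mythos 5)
Your proposal is correct and follows essentially the same route as the paper's own proof: induction with the claim generalized over the time-set $T$ (the paper phrases this as induction on the derivation of $T$-protectedness), with the excluded-middle Lemma~\ref{lem:protected:em}, correctness of $\lift{\sat}$ (Theorem~\ref{thm:sat:correct:app}), and the mode-substitution lemmas doing exactly the work you assign them, including the rewrite of the residual quantified formula $\psi''$ to $\top$ (resp.\ $\bot$) justified by past-completeness, and the enlarged set $T\cup\{\ttime_i\}$ in the timed-quantifier case. The only cosmetic difference is that where you argue $\structure\models\varphi$ iff all sampled instances $\varphi'[\vec{t_i}/\vec{x}]$ hold directly from the semantics of the quantifier, the paper routes this same step through the already-proved correctness of $\reduce$ (Theorem~\ref{thm:correctness:app}, instantiated at $\structure'=\structure$); both are sound and the substance is identical.
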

\begin{proof}
By induction on the derivation of $\varphi$ being $T$-protected. The
proof is very similar to that of Theorem~\ref{thm:oc:app} and we show
here only some representative cases of the induction.\\

\case ~$\varphi = p_O(t_1,\ldots,t_n,\ttime)$ where $\ttime \in
T$. Because $\structure$ is $\ttime_0$-complete and $\ttime \leq
\ttime_0$, we know that $\rho_\structure(p_O(t_1,\ldots,t_n,\ttime))
\in {\symt, \symf}$. Accordingly, $\reduce(\structure, \varphi) \in
\{\top,\bot\}$, so we can choose $\psi = \reduce(\structure, \varphi)$
to satisfy the theorem's requirements.\\

\case ~$\varphi = \forall \vec{x}.(c \imp \varphi')$ where $c$ and
$\varphi'$ are both $T$-protected. Then, $\reduce(\structure,
\varphi)$ is calculated as follows.
\[
\reduce(\structure, \varphi) ~=~
\begin{array}[t]{@{}l}
\mbox{let}\\
~\{\sigma_1,\ldots,\sigma_n\} \leftarrow \lift{\sat}(\structure,c)\\
~\{ \vec{t_i} \leftarrow \sigma_i(\vec{x}) \}_{i=1}^n\\
~S \leftarrow \{\vec{t_1},\ldots,\vec{t_n}\}\\
~\{ \psi_i \leftarrow \reduce(\structure,
\varphi'[\vec{t_i}/\vec{x}]) \}_{i=1}^n\\
~\psi' \leftarrow \forall \vec{x}.((c \conj \vec{x} \not \in S) \imp \varphi')\\
\mbox{return } \\
~~~ \psi_1 \conj \ldots \conj \psi_n \conj \psi'
\end{array} \]

By inversion on the given derivation of $\vdash \varphi$, we know that
there is a $\chi_O$ such that (1)~$\{\} \vdash c: \chi_O$,
(2)~$\vec{x} \subseteq \chi_O$, (3)~$\fv(c) \subseteq \vec{x}$, and
(4)~$\chi_O \vdash \varphi'$. By
Lemma~\ref{lem:mode:condition:inclusion} on (1), $\chi_O \subseteq
\fv(c)$. From this, (2), and (3), it follows that $\vec{x} = \fv(c) =
\chi_O$. Call this fact~(A). Note also that by
Theorem~\ref{thm:sat:total:app}, $\dom(\sigma_i) \supseteq \chi_O =
\vec{x}$. Call this fact~(B).

Next, we show that $\structure \models \psi'$. Following the
definition of $\models$, it suffices to prove that for all $\vec{t}$,
$\structure \models \dual{c[\vec{t}/\vec{x}] \conj \vec{t} \not \in
  S}$, i.e., either $\structure \models \dual{c}[\vec{t}/\vec{x}]$ or
$\vec{t} \in S$. Suppose $\vec{t} \not \in S$. Then, we show that
$\structure \models \dual{c}[\vec{t}/\vec{x}]$. Because $c$ is
$T$-protected, Lemma~\ref{lem:protected:em}(1) applies, so the last
fact is implied by $\structure \not \models c[\vec{t}/\vec{x}]$. So we
prove this instead. Suppose, for the sake of contradiction, that
$\structure \models c[\vec{t}/\vec{x}]$. Then, by
Theorem~\ref{thm:sat:correct:app}, there is a $\sigma \in
\lift{\sat}(\structure, c)$ such that $[\vec{x} \mapsto \vec{t}] \geq
\sigma$.  $\sigma \in \lift{\sat}(\structure, c)$ forces $\sigma =
\sigma_i$ for some $i$ and, by fact~(B), $\vec{t} = \vec{t_i}$. Hence,
$\vec{t} = \vec{t_i} \in S$, a contradiction. Hence, we must have
$\structure \models \psi'$. Call this fact~(C).

By Lemma~\ref{lem:subst:mode} on~(4), we derive $\chi_O \backslash
\vec{x} \vdash \varphi'[\vec{t}/\vec{x}]$. Using fact~(A), we have
$\vdash \varphi'[\vec{t}/\vec{x}]$. We already know that $\varphi'$ is
$T$-protected and, hence, $\varphi'[\vec{t}/\vec{x}]$ is also
$T$-protected. Applying the i.h.\ to the last two facts, and $\psi_i
\leftarrow \reduce(\structure, \varphi'[\vec{t_i}/\vec{x}])$, we know
that there is a $\psi_i' \in \{\top,\bot\}$ such that $\psi_i
\rewrite^* \psi_i'$ and $\structure \models
\varphi'[\vec{t_i}/\vec{x}]$ iff $\structure \models \psi_i'$. Note
that by Theorem~\ref{thm:correctness:app}, this also implies
$\structure \models \psi_i$ iff $\structure \models \psi_i'$. Call
this fact~(D). We consider two subcases:\\

\subcase ~For every $i$, $\psi_i' = \top$. Clearly, we have
$\reduce(\structure, \varphi) = (\psi_1 \conj \ldots \conj \psi_n
\conj \psi') \rewrite^* \top$ (note: $\psi' \rewrite \top$). We must
show that $\structure \models \varphi$. We have by fact~(D) that
$\structure \models \psi_i$ for each $i$ and by fact~(C) that
$\structure \models \psi'$. Consequently, $\structure \models (\psi_1
\conj \ldots \conj \psi_n \conj \psi')$ and, hence, by
Theorem~\ref{thm:correctness:app}, $\structure \models \varphi$.\\

\subcase ~There is a $i$ such that $\psi_i' = \bot$. Clearly, we have
$\reduce(\structure, \varphi) = (\ldots \conj \psi_i \conj \ldots)
\rewrite^* \bot$. We must show that $\structure \not \models
\varphi$. Note that by fact~(D), $\structure \not\models
\psi_i$. Consequently, by definition of $\models$, $\structure \not
\models \reduce(\structure, \varphi)$ and, hence, by
Theorem~\ref{thm:correctness:app}, $\structure \not \models \varphi$,
as required.\\

\case ~$\varphi = \forall x.((\pred{in}(x, \ttime', \ttime) \conj c)
\imp \varphi')$ where $c$ is $T$-protected, $\ttime \in T$, and
$\varphi'$ is ($T \cup \{x\}$)-protected. Then, $\reduce(\structure,
\varphi)$ is calculated as follows.
\[
\reduce(\structure, \varphi) ~=~
\begin{array}[t]{@{}l}
\mbox{let}\\
~\{\sigma_1,\ldots,\sigma_n\} \leftarrow \lift{\sat}(\structure,(\pred{in}(x,\ttime',\ttime) \conj c))\\
~\{ \ttime_i \leftarrow \sigma_i(x) \}_{i=1}^n\\
~S \leftarrow \{\ttime_1,\ldots,\ttime_n\}\\
~\{ \psi_i \leftarrow \reduce(\structure,
\varphi'[\ttime_i/x]) \}_{i=1}^n\\
~\psi' \leftarrow \forall x.((\pred{in}(x, \ttime', \ttime) \conj c \conj x \not \in S)
\imp \varphi')\\
\mbox{return } \\
~~~ \psi_1 \conj \ldots \conj \psi_n \conj \psi'
\end{array} \]

By inversion on the given derivation of $\vdash \varphi$, we know that
there is a $\chi_O$ such that (1)~$\{\} \vdash
\pred{in}(x,\ttime',\ttime) \conj c: \chi_O$, (2)~$\{x\} \subseteq
\chi_O$, (3)~$\fv(\pred{in}(x,\ttime',\ttime) \conj c) \subseteq
\{x\}$, and (4)~$\chi_O \vdash \varphi'$. By
Lemma~\ref{lem:mode:condition:inclusion} on (1), $\chi_O \subseteq
\fv(\pred{in}(x,\ttime',\ttime) \conj c)$. From this, (2), and (3), it
follows that $\{x\} = \fv(\pred{in}(x,\ttime',\ttime) \conj c) =
\chi_O$. Call this fact~(A). Note also that by
Theorem~\ref{thm:sat:total:app}, $\dom(\sigma_i) \supseteq \chi_O =
\{x\}$. Call this fact~(B).

Next, we show that $\structure \models \psi'$. Following the
definition of $\models$, it suffices to prove that for all $t$,
$\structure \models \dual{\pred{in}(t,\ttime',\ttime) \conj c[t/x]
  \conj t \not \in S}$, i.e., either $\structure \models
\dual{\pred{in}(t,\ttime',\ttime) \conj c[t/x]}$ or $t \in S$. Suppose
$t \not \in S$. Then, we show that $\structure \models
\dual{\pred{in}(t,\ttime',\ttime) \conj c[t/x]}$. Because
$\pred{in}(t,\ttime',\ttime) \conj c[t/x]$ is $T$-protected,
Lemma~\ref{lem:protected:em}(1) applies, so the last fact is implied
by $\structure \not \models \pred{in}(t,\ttime',\ttime) \conj
c[t/x]$. So we prove this instead. Suppose, for the sake of
contradiction, that $\structure \models \pred{in}(t,\ttime',\ttime)
\conj c[t/x]$. Then, by Theorem~\ref{thm:sat:correct:app}, there is a
$\sigma \in \lift{\sat}(\structure, \pred{in}(x,\ttime',\ttime) \conj
c)$ such that $[x \mapsto t] \geq \sigma$.  $\sigma \in
\lift{\sat}(\structure, \pred{in}(x,\ttime',\ttime) \conj c)$ forces
$\sigma = \sigma_i$ for some $i$ and, by fact~(B), $t =
\ttime_i$. Hence, $t = \ttime_i \in S$, a contradiction. Hence, we
must have $\structure \models \psi'$. Call this fact~(C).

By Lemma~\ref{lem:subst:mode} on~(4), we derive $\chi_O \backslash
\{x\} \vdash \varphi'[\ttime_i/x]$. Using fact~(A), we have $\vdash
\varphi'[\ttime_i/x]$. We already know that $\varphi'$ is ($T \cup
\{x\}$)-protected and, hence, $\varphi'[\ttime_i/x]$ is ($T \cup
\{\ttime_i\}$)-protected. Note also that $\ttime_i \leq \ttime \leq
\ttime_0$. Applying the i.h.\ to the last three facts, and $\psi_i
\leftarrow \reduce(\structure, \varphi'[\ttime_i/x])$, we know that
there is a $\psi_i' \in \{\top,\bot\}$ such that $\psi_i \rewrite^*
\psi_i'$ and $\structure \models \varphi'[\ttime_i/x]$ iff $\structure
\models \psi_i'$. Note that by Theorem~\ref{thm:correctness:app}, this
also implies $\structure \models \psi_i$ iff $\structure \models
\psi_i'$. Call this fact~(D). We consider two subcases:\\

\subcase ~For every $i$, $\psi_i' = \top$. Clearly, we have
$\reduce(\structure, \varphi) = (\psi_1 \conj \ldots \conj \psi_n
\conj \psi') \rewrite^* \top$ (note: $\psi' \rewrite \top$). We must
show that $\structure \models \varphi$. We have by fact~(D) that
$\structure \models \psi_i$ for each $i$ and by fact~(C) that
$\structure \models \psi'$. Consequently, $\structure \models (\psi_1
\conj \ldots \conj \psi_n \conj \psi')$ and, hence, by
Theorem~\ref{thm:correctness:app}, $\structure \models \varphi$.\\

\subcase ~There is a $i$ such that $\psi_i' = \bot$. Clearly, we have
$\reduce(\structure, \varphi) = (\ldots \conj \psi_i \conj \ldots)
\rewrite^* \bot$. We must show that $\structure \not \models
\varphi$. Note that by fact~(D), $\structure \not\models
\psi_i$. Consequently, by definition of $\models$, $\structure \not
\models \reduce(\structure, \varphi)$ and, hence, by
Theorem~\ref{thm:correctness:app}, $\structure \not \models \varphi$,
as required.
\end{proof}


\begin{lem}[Duality of protection]\label{lem:protected:dual}
$\varphi$ is $T$-protected iff $\dual{\varphi}$ is $T$-protected.
\end{lem}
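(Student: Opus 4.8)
The plan is to prove the forward implication---if $\varphi$ is $T$-protected then $\dual{\varphi}$ is $T$-protected---by induction on the derivation that $\varphi$ is $T$-protected (equivalently, by structural induction on $\varphi$ with a case analysis on the clause of Definition~\ref{def:protected:formula} used at the root). The converse then comes for free: since dualization is an involution ($\dual{\dual{\varphi}} = \varphi$), applying the forward implication to $\dual{\varphi}$ and rewriting $\dual{\dual{\varphi}}$ as $\varphi$ yields ``if $\dual{\varphi}$ is $T$-protected then $\varphi$ is $T$-protected''. The single observation that makes every case go through is that, reading off the full definition of $\dual{\bullet}$ in Appendix~\ref{app:logic}, dualizing a quantifier leaves its restriction $c$ syntactically unchanged (e.g.\ $\dual{\forall \vec{x}.(c \imp \varphi')} = \exists \vec{x}.(c \conj \dual{\varphi'})$); hence no companion duality statement for \emph{restrictions} is needed, and the side conditions on $c$ in the protection clauses transfer verbatim.

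The cases are then routine and pairwise symmetric. For an atom $\varphi = p_O(t_1,\ldots,t_n,\ttime_0)$ with $\ttime_0 \in T$ (clause~1), the dual is $\dual{p_O}(t_1,\ldots,t_n,\ttime_0)$; because $\dual{p_O}$ is again an objective predicate and the last argument $\ttime_0 \in T$ is unchanged, clause~1 applies to $\dual{\varphi}$. The constants swap via clauses~2 and~3 ($\dual{\top} = \bot$, $\dual{\bot} = \top$), and conjunction/disjunction swap via clauses~4 and~5, using the induction hypothesis on each immediate subformula. For the unguarded quantifiers (clauses~6 and~8) dualization exchanges $\forall$ with $\exists$ and $\imp$ with $\conj$ while preserving $c$, so the hypothesis ``$c$ is $T$-protected'' is untouched and the induction hypothesis turns the $T$-protected body into a $T$-protected dual body. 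For the $\pred{in}$-guarded quantifiers (clauses~7 and~9) the same exchange applies; here the body $\varphi'$ is $(T \cup \{\ttime\})$-protected, so the induction hypothesis gives that $\dual{\varphi'}$ is $(T \cup \{\ttime\})$-protected, while the side conditions $\ttime_0 \in T$ and ``$c$ is $T$-protected'' carry over unchanged.

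I expect no genuine difficulty; the only point requiring care is bookkeeping about which protection clause the dual formula matches. In particular, one must check that clauses~7 and~9 dualize into \emph{each other} rather than collapsing into the generic clauses~6 and~8---this is immediate once the restriction is seen to survive dualization intact, so the guard $\pred{in}(\ttime,\ttime',\ttime_0) \conj c$ reappears verbatim in the dual and the body's index set $T \cup \{\ttime\}$ is preserved. The remaining fine print is that dualization preserves objectivity of predicate symbols and that it is an involution, both of which hold under the standing assumption of this appendix that there are no subjective predicates.
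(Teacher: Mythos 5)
Your proof is correct and takes essentially the same approach as the paper: the paper disposes of this lemma with exactly the ``straightforward induction on $\varphi$'' (equivalently, on the protection derivation) that you carry out, and your key observation---that dualization exchanges $\forall/\exists$ and $\imp/\conj$ while leaving restrictions $c$ syntactically untouched, so clauses~6/8 and~7/9 of Definition~\ref{def:protected:formula} map into each other---is precisely what makes that induction go through. Obtaining the converse direction from the involution $\dual{\dual{\varphi}} = \varphi$ is a fine way to package the ``iff''.
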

\begin{proof}
By a straightforward induction on $\varphi$.
\end{proof}


\begin{lem}[Past translation]\label{lem:past:translation}
The following hold:
\begin{enumerate}
\item If $c$ is a restriction in the temporal logic, then for any
  $\ttime \in T$, $\trans{\ttime}{c}$ is $T$-protected.
\item If $\alpha_p$ is a temporal logic formula without future
  operators, then for any $\ttime \in T$, $\trans{\ttime}{\alpha_p}$
  is $T$-protected.
\end{enumerate}
\end{lem}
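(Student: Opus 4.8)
The plan is to establish the two statements by separate structural inductions: I would prove (1) first by induction on the restriction $c$, and then (2) by induction on the formula $\alpha_p$, with (2) invoking (1) in the quantifier cases. A crucial point is that both statements must be read with $T$ \emph{universally quantified}; concretely, the induction hypothesis for (2) reads ``for every set $T$ and every $\ttime \in T$, if $\alpha'$ is a smaller future-operator-free formula then $\trans{\ttime}{\alpha'}$ is $T$-protected.'' This is exactly what lets me re-apply the hypothesis to a strictly larger time set $T \cup \{\ttime'\}$ once the translation introduces a fresh past time point $\ttime'$.

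For (1) the argument is routine: restrictions of the temporal logic contain no temporal operators, so each clause of the translation $\trans{\ttime}{\bullet}$ of restrictions maps syntactically onto the matching closure clause of Definition~\ref{def:protected:restriction}. In particular $\trans{\ttime}{p_O(t_1,\ldots,t_n)} = p_O(t_1,\ldots,t_n,\ttime)$ is $T$-protected because $\ttime \in T$, and the cases $\top$, $\bot$, $c_1 \conj c_2$, $c_1 \disj c_2$, and $\exists x.c$ follow directly from the induction hypothesis and the corresponding closure clauses.

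For (2) I would induct on $\alpha_p$, measuring size by number of connectives so that term substitution leaves the measure unchanged. The propositional and first-order cases are immediate: an (objective) atom goes to a time-$\ttime$ atom with $\ttime \in T$; conjunction and disjunction use the induction hypothesis at the same $T$; the quantifier cases $\forall \vec{x}.(c \imp \alpha)$ and $\exists \vec{x}.(c \conj \alpha)$ combine part (1) on $c$ with the induction hypothesis on $\alpha$; negation $\neg\alpha$ translates to $\dual{\trans{\ttime}{\alpha}}$ and is handled by the induction hypothesis together with Lemma~\ref{lem:protected:dual}; and the freeze case $\here x.\alpha$ translates to $\trans{\ttime}{\alpha[\ttime/x]}$, where $\alpha[\ttime/x]$ is a smaller future-operator-free formula, so the induction hypothesis applies directly.

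The main obstacle is the past-temporal cases $\boxm\alpha$ and $\alpha \since \beta$, which are exactly what the definition of $T$-protected formulas was engineered to accommodate. The translation of $\boxm\alpha$ is $\forall \ttime'.(\pred{in}(\ttime',0,\ttime) \imp \trans{\ttime'}{\alpha})$, whose body is evaluated at the fresh variable $\ttime' \notin T$, so the plain universal clause (clause~6 of Definition~\ref{def:protected:formula}) cannot apply. Instead I match clause~7, whose side condition requires only that the \emph{upper bound} of the $\pred{in}$-guard lie in $T$ --- which holds since that upper bound is $\ttime \in T$ --- and which then permits checking the body at the enlarged set $T \cup \{\ttime'\}$; since $\ttime' \in T \cup \{\ttime'\}$ and $\alpha$ is smaller and future-operator-free, the universally-$T$-quantified hypothesis gives that $\trans{\ttime'}{\alpha}$ is $(T\cup\{\ttime'\})$-protected. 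The case $\alpha \since \beta$ is the same idea one layer deeper: here the restriction extracted from the translation is just the guard $\pred{in}(\ttime',0,\ttime)$ (the remaining conjuncts are not restrictions), whose upper bound $\ttime$ lies in $T$, so clause~9 lets me pass to $T \cup \{\ttime'\}$, under which the body $\trans{\ttime'}{\beta} \conj \forall \ttime''.((\pred{in}(\ttime'',\ttime',\ttime) \conj \ttime' \neq \ttime'') \imp \trans{\ttime''}{\alpha})$ must be shown $(T\cup\{\ttime'\})$-protected. Its first conjunct uses the hypothesis at $\ttime' \in T \cup \{\ttime'\}$; its second conjunct is another clause-7 application whose guard's upper bound $\ttime$ still lies in $T \cup \{\ttime'\}$ and whose inner body $\trans{\ttime''}{\alpha}$ is handled by the hypothesis at $T \cup \{\ttime',\ttime''\}$. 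Finally, because $\alpha_p$ contains no future operators, the cases $\boxp$ and $\until$ never arise, which is precisely why no symmetric ``future-protected'' notion is required.
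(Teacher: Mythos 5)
Your proposal is correct and follows essentially the same route as the paper's proof: induction on $c$ for (1), then induction on $\alpha_p$ for (2) with the induction hypothesis generalized over $T$, using Lemma~\ref{lem:protected:dual} for negation, part (1) together with clauses (6)/(8) of Definition~\ref{def:protected:formula} for the restricted quantifiers, and clauses (7)/(9) with the enlarged set $T \cup \{\ttime'\}$ for $\boxm$ and $\since$. The two points you make explicit --- that the induction hypothesis must be universally quantified over $T$, and that size should be measured by connective count so the freeze case $\here x.\alpha$ (whose translation recurses on a substitution instance, not a subformula) stays within the induction --- are exactly the details the paper's proof leaves implicit.
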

\begin{proof}
(1)~follows by a straightforward induction on $c$. Then,~(2) follows
  by induction on $\alpha_p$. The case $\alpha_p =
  p_S(t_1,\ldots,t_n)$ does not arise because we assume that there are
  no subjective predicates. Similarly, the cases $\alpha_p = \boxp
  \beta$ and $\alpha_p = \beta_1 \until \beta_2$ do not arise because
  $\alpha_p$ does not contain future operators. We show some other
  representative cases below.\\

\case ~$\alpha_p = p_O(t_1,\ldots,t_n)$. Then,
$\trans{\ttime}{\alpha_p} = p_O(t_1,\ldots,t_n,\ttime)$, which is
$T$-protected because $\ttime \in T$ is given.\\

\case ~$\alpha_p = \neg \alpha_p'$. Then, $\trans{\ttime}{\alpha_p} =
\dual{\trans{\ttime}{\alpha_p'}}$. By the i.h.,
$\trans{\ttime}{\alpha_p'}$ is $T$-protected. Hence, by
Lemma~\ref{lem:protected:dual}, $\dual{\trans{\ttime}{\alpha_p'}}$ is
also $T$-protected.\\

\case ~$\alpha_p = \forall \vec{x}.(c \imp \beta_p)$. Then,
$\trans{\ttime}{\alpha_p} = \forall \vec{x}. (\trans{\ttime}{c} \imp
\trans{\ttime}{\beta_p})$. By statement~(1) of the theorem,
$\trans{\ttime}{c}$ is $T$-protected, and by the i.h.,
$\trans{\ttime}{\beta_p}$ is $T$-protected. Hence,
$\trans{\ttime}{\alpha_p}$ is $T$-protected by clause~(6) of
Defn~\ref{def:protected:formula}.\\

\case ~$\alpha_p = \here x. \beta_p$. Then, $\trans{\ttime}{\alpha_p}
= \trans{\ttime}{\beta_p[\ttime/x]}$. By the i.h.\ on the smaller
formula $\beta_p[\ttime/x]$, we get that
$\trans{\ttime}{\beta_p[\ttime/x]}$ is $T$-protected.\\

\case ~$\alpha_p = \beta_1 \since \beta_2$. Then,
$\trans{\ttime}{\alpha_p} = \exists \ttime'. (\pred{in}(\ttime', 0,
\ttime) \conj \trans{\ttime'}{\beta_2} \conj (\forall \ttime''. (({\tt
  in}(\ttime'',\ttime',\ttime) \conj \ttime' \not= \ttime'') \imp
\trans{\ttime''}{\beta_1})))$. First, by the i.h.,
$\trans{\ttime''}{\beta_1}$ is ($T \cup
\{\ttime''\}$)-protected. Consequently, by clause~(7) of
Defn~\ref{def:protected:formula}, $(\forall \ttime''. (({\tt
  in}(\ttime'',\ttime',\ttime) \conj \ttime' \not= \ttime'') \imp
\trans{\ttime''}{\beta_1}))$ is $T$-protected. Hence, it is also ($T
\cup \{\tau'\}$)-protected. Call this fact~(A).  Next, by the i.h.,
$\trans{\tau'}{\beta_2}$ is ($T \cup \{\tau'\}$)-protected. Combining
this and fact~(A), we have that $\trans{\ttime'}{\beta_2} \conj
(\forall \ttime''. (({\tt in}(\ttime'',\ttime',\ttime) \conj \ttime'
\not= \ttime'') \imp \trans{\ttime''}{\beta_1}))$ is ($T \cup
\{\tau'\}$)-protected. By clause~(9) of
Defn~\ref{def:protected:formula}, $\trans{\ttime}{\alpha_p}$ is
$T$-protected, as required.\\

\case ~$\alpha_p = \boxm \beta_p$. Then, $\trans{\ttime}{\alpha_p} =
\forall \ttime'. (\pred{in}(\ttime', \ttime,\infty) \imp
\trans{\ttime'}{\beta_p})$. By the i.h., $\trans{\ttime'}{\beta_p}$ is
($T \cup \{\ttime'\}$)-protected. Hence, by clause~(7) of
Defn~\ref{def:protected:formula}, $\trans{\ttime}{\alpha_p}$ is
$T$-protected.
\end{proof}


\begin{lem}[Reduction of past formulas]\label{lem:past:reduce}
Let $\alpha_p$ be a temporal logic formula without future operators,
and suppose that $\ttime$ is a ground time point such that $\vdash
\trans{\ttime}{\alpha_p}$. Let $\structure$ be $\ttime_0$-complete and
$\ttime_0 \geq \ttime$. Then, either (1)~$\reduce(\structure,
\trans{\ttime}{\alpha_p}) \rewrite^* \top$ and $\structure \models
\trans{\ttime}{\alpha_p}$, or (2)~$\reduce(\structure,
\trans{\ttime}{\alpha_p}) \rewrite^* \bot$ and $\structure \models
\dual{\trans{\ttime}{\alpha_p}}$.
\end{lem}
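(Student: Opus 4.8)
The plan is to apply the two key lemmas about protected formulas, specialized to the singleton set $T = \{\ttime\}$. First I would set $T = \{\ttime\}$ and observe that $\ttime \in T$, so Lemma~\ref{lem:past:translation}(2) immediately yields that $\trans{\ttime}{\alpha_p}$ is $\{\ttime\}$-protected (this is precisely where the hypothesis that $\alpha_p$ contains no future operators is used). Since the hypothesis $\vdash \trans{\ttime}{\alpha_p}$ abbreviates $\{\} \vdash \trans{\ttime}{\alpha_p}$, the earlier remark that $\chi \vdash \varphi$ implies $\fv(\varphi) \subseteq \chi$ gives $\fv(\trans{\ttime}{\alpha_p}) = \{\}$, so $\trans{\ttime}{\alpha_p}$ is ground. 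The remaining hypotheses needed downstream reduce to a single numeric check: the side-condition ``for all $\ttime' \in T$, $\ttime' \leq \ttime_0$'' collapses, for $T = \{\ttime\}$, to $\ttime \leq \ttime_0$, which is exactly the assumption $\ttime_0 \geq \ttime$.

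Next I would invoke Lemma~\ref{lem:protected:reduction} with $\varphi = \trans{\ttime}{\alpha_p}$ and $T = \{\ttime\}$; all four of its hypotheses ($T$-protectedness, well-modedness, $\ttime_0$-completeness of $\structure$, and the bound $\ttime \leq \ttime_0$) have just been discharged. This produces a $\psi \in \{\top, \bot\}$ with $\reduce(\structure, \trans{\ttime}{\alpha_p}) \rewrite^* \psi$ and $\structure \models \trans{\ttime}{\alpha_p}$ iff $\structure \models \psi$. In parallel I would apply the excluded-middle statement for protected formulas, Lemma~\ref{lem:protected:em}(2), to the same ground, $\{\ttime\}$-protected formula, obtaining that either $\structure \models \trans{\ttime}{\alpha_p}$ or $\structure \models \dual{\trans{\ttime}{\alpha_p}}$.

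Finally I would case-split on the value of $\psi$. If $\psi = \top$, then $\structure \models \psi$ holds trivially, so the equivalence from Lemma~\ref{lem:protected:reduction} gives $\structure \models \trans{\ttime}{\alpha_p}$, establishing conclusion~(1). If $\psi = \bot$, then $\structure \not\models \psi$, so the same equivalence gives $\structure \not\models \trans{\ttime}{\alpha_p}$; the excluded-middle disjunction then forces $\structure \models \dual{\trans{\ttime}{\alpha_p}}$, establishing conclusion~(2). Since these two cases are exhaustive, the lemma follows.

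The argument is essentially a bookkeeping composition of three prior lemmas, so there is no genuine computational obstacle at this stage; the substantive inductive work has already been front-loaded into Lemmas~\ref{lem:protected:reduction} and~\ref{lem:protected:em}, whose inductions over the protected-formula grammar handle the temporal $\since$ and $\boxm$ cases that make $\alpha_p$ nontrivial. The only point demanding care is the choice $T = \{\ttime\}$ together with the verification that this singleton simultaneously satisfies the membership side-condition of Lemma~\ref{lem:past:translation} and the bound side-condition required by the other two lemmas; getting one set to play both roles is what makes the short reductions go through cleanly.
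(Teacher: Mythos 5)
Your proposal is correct and follows essentially the same route as the paper's proof: apply Lemma~\ref{lem:past:translation}(2) with $T = \{\ttime\}$, then Lemma~\ref{lem:protected:reduction} to obtain $\psi \in \{\top,\bot\}$ with the equivalence, and finally Lemma~\ref{lem:protected:em}(2) to convert $\structure \not\models \trans{\ttime}{\alpha_p}$ into $\structure \models \dual{\trans{\ttime}{\alpha_p}}$ in the $\bot$ case. The only differences are cosmetic — you invoke excluded middle before the case split rather than inside it, and you explicitly verify groundness from well-modedness, which the paper leaves implicit.
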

\begin{proof}
By Lemma~\ref{lem:past:translation}(2), $\trans{\ttime}{\alpha_p}$ is
$\{\ttime\}$-protected. Because $\ttime \leq \ttime_0$ and $\vdash
\trans{\ttime}{\alpha_p}$, by Lemma~\ref{lem:protected:reduction},
$\reduce(\structure, \trans{\ttime}{\alpha_p}) \rewrite^* \psi$, where
$\psi = \top$ or $\psi = \bot$ and $\structure \models
\trans{\ttime}{\alpha_p}$ iff $\structure \models \psi$. Call the
latter fact~(A). We consider two cases:\\

\case ~$\psi = \top$. In this case, fact~(A) means that $\structure
\models \trans{\ttime}{\alpha_p}$ iff $\structure \models \top$, which
implies that $\structure \models \trans{\ttime}{\alpha_p}$. So~(1)
holds.\\

\case ~$\psi = \bot$. In this case, fact~(A) yields that $\structure
\not \models \trans{\ttime}{\alpha_p}$. Since
$\trans{\ttime}{\alpha_p}$ is $\{\ttime\}$-protected (already proved) and
$\ttime \leq \ttime_0$, Lemma~\ref{lem:protected:em}(2) yields
$\structure \models \dual{\trans{\ttime}{\alpha_p}}$. So~(2) holds.
\end{proof}


\begin{thm}[Enforcement of safety properties; Theorem~\ref{thm:safety}]
  \label{thm:safety:app}
Suppose ${\globally \alpha_p}$ is a safety property, $\vdash
{\globally \alpha_p}$, $\structure$ is $\ttime_0$-complete, and for
all $\ttime$, $(\rho_\structure(\pred{in}(\ttime,0,\infty)) = \symt)
\Rightarrow \ttime \leq \ttime_0$. Then,\linebreak[6]
$\reduce(\structure, \globally \alpha_p) \rewrite^* \bot$ iff there is
a $\ttime$ such that $\structure \models \pred{in}(\ttime,0,\ttime_0)$
and $\structure \models \dual{\trans{\ttime}{\alpha_p}}$.
\end{thm}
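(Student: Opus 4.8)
The plan is to peel off the top-level universal quantifier of ${\globally \alpha_p} = \forall \ttime.\,(\pred{in}(\ttime,0,\infty) \imp \trans{\ttime}{\alpha_p})$ by unfolding the definition of $\reduce$ on a universal quantifier, to reduce each resulting instance to a Boolean constant using the machinery already developed for past formulas, and finally to read off the biconditional from the unique normal form of the conjunction. First I would unfold $\reduce(\structure, \globally \alpha_p)$: by the $\forall$ clause of Figure~\ref{fig:reduce} it equals $\psi_1 \conj \cdots \conj \psi_n \conj \psi'$, where $\{\sigma_1,\dots,\sigma_n\} = \lift{\sat}(\structure, \pred{in}(\ttime,0,\infty))$, $\ttime_i = \sigma_i(\ttime)$, $S = \{\ttime_1,\dots,\ttime_n\}$, $\psi_i = \reduce(\structure, \trans{\ttime_i}{\alpha_p})$, and $\psi' = \forall \ttime.\,((\pred{in}(\ttime,0,\infty)\conj \ttime\notin S)\imp \trans{\ttime}{\alpha_p})$. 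By correctness of $\lift{\sat}$ (Theorem~\ref{thm:sat:correct:app}), $S$ is exactly $\{\ttime \mid \structure\models\pred{in}(\ttime,0,\infty)\}$; the hypothesis forces each $\ttime_i\le\ttime_0$, so (using $\ttime_0$-completeness) each $\ttime_i$ is a fully determined observed point and $S$ is finite.

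Next I would reduce each conjunct to $\top$ or $\bot$. The residual $\psi'$ rewrites to $\top$ by the quantifier-elimination rule $\forall\vec{x}.(c\imp\varphi)\rewrite\top$, and the Boolean simplification rules ($\psi\conj\top\rewrite\psi$, etc.) then absorb it, giving $\psi_1 \conj \cdots \conj \psi_n \conj \psi' \rewrite^* \psi_1 \conj \cdots \conj \psi_n$. For each $\psi_i$ I first need $\vdash\trans{\ttime_i}{\alpha_p}$: inverting the moding derivation of $\vdash\globally\alpha_p$ gives $\{\ttime\}\vdash\trans{\ttime}{\alpha_p}$, and Lemma~\ref{lem:subst:mode} then yields $\vdash\trans{\ttime_i}{\alpha_p}$ after substituting $\ttime_i$ for $\ttime$. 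Since $\alpha_p$ has no future operators and $\ttime_i\le\ttime_0$, Lemma~\ref{lem:past:reduce} applies and gives, for each $i$, either $\psi_i\rewrite^*\top$ together with $\structure\models\trans{\ttime_i}{\alpha_p}$, or $\psi_i\rewrite^*\bot$ together with $\structure\models\dual{\trans{\ttime_i}{\alpha_p}}$.

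Because $\rewrite$ is confluent and terminating, $\psi_1 \conj \cdots \conj \psi_n$ has a unique normal form, which is $\bot$ exactly when at least one $\psi_i$ normalizes to $\bot$ (via $\psi\conj\bot\rewrite\bot$ and $\bot\conj\psi\rewrite\bot$) and is $\top$ otherwise. Combining this with the previous paragraph, $\reduce(\structure,\globally\alpha_p)\rewrite^*\bot$ holds iff there is some $\ttime_i\in S$ with $\structure\models\dual{\trans{\ttime_i}{\alpha_p}}$.

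It remains to match the index set $S$ that $\reduce$ instantiates over against the quantifier range $\pred{in}(\ttime,0,\ttime_0)$ appearing in the statement, and this is the step I expect to be the main obstacle, since the proof is otherwise mechanical once Lemma~\ref{lem:past:reduce} is in hand. Concretely I must establish $\{\ttime \mid \structure\models\pred{in}(\ttime,0,\infty)\} = \{\ttime \mid \structure\models\pred{in}(\ttime,0,\ttime_0)\}$. For the forward inclusion, any $\ttime$ with $\structure\models\pred{in}(\ttime,0,\infty)$ is an observed point satisfying $\ttime\le\ttime_0$ by hypothesis; $\ttime_0$-completeness (condition~2) then decides $\pred{in}(\ttime,0,\ttime_0)$, and the intended meaning of $\pred{in}$ (that $\ttime$ is observed and lies in the stated interval) forces this value to be $\symt$. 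The reverse inclusion is immediate from the same meaning, since $0\le\ttime\le\ttime_0$ implies $0\le\ttime\le\infty$. Identifying $S$ with $\{\ttime \mid \structure\models\pred{in}(\ttime,0,\ttime_0)\}$ turns the normal-form characterization above into exactly the claimed biconditional: $\reduce(\structure,\globally\alpha_p)\rewrite^*\bot$ iff there is a $\ttime$ with $\structure\models\pred{in}(\ttime,0,\ttime_0)$ and $\structure\models\dual{\trans{\ttime}{\alpha_p}}$.
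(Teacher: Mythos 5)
Your proof is correct and follows essentially the same route as the paper's: unfold $\reduce$ on the top-level universal quantifier, obtain $\vdash \trans{\ttime_i}{\alpha_p}$ by inversion of the moding derivation plus Lemma~\ref{lem:subst:mode}, apply Lemma~\ref{lem:past:reduce} to each ground instance $\trans{\ttime_i}{\alpha_p}$, and identify the instantiation set $S$ with the observed time points at or below $\ttime_0$. The only differences are organizational: you assemble the biconditional at once via confluence and uniqueness of normal forms (and note $\psi' \rewrite \top$), whereas the paper argues the two directions separately (observing that $\psi'$, having a top-level $\forall$, can never rewrite to $\bot$), and you are in fact more explicit than the paper about the semantic step linking $\structure \models \pred{in}(\ttime,0,\infty)$, the hypothesis $\ttime \leq \ttime_0$, and $\structure \models \pred{in}(\ttime,0,\ttime_0)$, which the paper dispatches with ``follows from fact~(B).''
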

\begin{proof}
We have $\globally \alpha_p = \forall
\ttime.(\pred{in}(\ttime,0,\infty) \imp \trans{\ttime}{\alpha_p})$. Let $\reduce(\structure, {\globally \alpha_p}) = \psi$. Then, 
\[
\psi = \reduce(\structure, \forall \ttime.(\pred{in}(\ttime,0,\infty)
\imp \trans{\ttime}{\alpha_p})) ~ = ~
\begin{array}[t]{@{}l}
\mbox{let}\\
~\{\sigma_1,\ldots,\sigma_n\} \leftarrow \lift{\sat}(\structure, \pred{in}(\ttime,0,\infty))\\
~\{ \ttime_i \leftarrow \sigma_i(\ttime) \}_{i=1}^n\\
~S \leftarrow \{\ttime_1,\ldots,\ttime_n\}\\
~\{ \psi_i \leftarrow \reduce(\structure,
\trans{\ttime_i}{\alpha_p}) \}_{i=1}^n\\
~\psi' \leftarrow \forall \ttime.((\pred{in}(\ttime,0,\infty) \conj \ttime \not \in S) \imp \trans{\ttime}{\alpha_p})\\
\mbox{return } \\
~~~ \psi_1 \conj \ldots \conj \psi_n \conj \psi'
\end{array} 
\]

By inversion on $\vdash {\globally \alpha_p}$, we obtain a $\chi_O$
such that $\vdash \pred{in}(\ttime,0,\infty): \chi_O$ and $\chi_O
\vdash \trans{\ttime}{\alpha_p}$. The first of these forces $\chi_O =
\{\ttime\}$, so from the second one we have that $\ttime \vdash
\trans{\ttime}{\alpha_p}$. Using Lemma~\ref{lem:subst:mode}(2), we get
$\vdash \trans{\ttime_i}{\alpha_p}$. Call this fact~(A). Next, observe
that by Theorem~\ref{thm:sat:correct:app}, for each $\ttime_i$,
$\structure \models \pred{in}(\ttime_i, 0, \infty)$, i.e.,
$\rho_{\structure}(\pred{in}(\ttime_i, 0, \infty)) = \symt$. This
forces $\ttime_i \leq \ttime_0$ from the assumptions of the theorem we
are trying to prove. Call this fact~(B). We now prove the two
directions of the conclusion of the theorem.\\

\noindent \textbf{Direction ``if''.} Suppose there is a $\ttime$ with
$\structure \models \pred{in}(\ttime,0,\ttime_0)$ and $\structure
\models \dual{\trans{\ttime}{\alpha_p}}$. We prove that $\psi
\rewrite^* \bot$. By Theorem~\ref{thm:sat:correct:app} applied to
$\structure \models \pred{in}(\ttime,0,\ttime_0)$, $\ttime = \ttime_i$
for some $i$. Hence by Lemma~\ref{lem:past:reduce}, using facts~(A)
and~(B) and $\structure \models \dual{\trans{\ttime}{\alpha_p}}$, we
have that $\reduce(\structure, \trans{\ttime_i}{\alpha_p}) \rewrite^*
\bot$, i.e., $\psi_i \rewrite^* \bot$. Clearly, $\psi = (\ldots \conj
\psi_i \conj \ldots) \rewrite^* \bot$, as required. \\

\noindent \textbf{Direction ``only if''.} Suppose that
$\reduce(\structure, {\globally \alpha_p}) \rewrite^* \bot$, i.e.,
$\psi \rewrite^* \bot$. We show that there is a $\ttime$ such that
$\pred{in}(\ttime,0,\ttime_0)$ and $\structure \models
\dual{\trans{\ttime}{\alpha_p}}$. By definition of $\rewrite$, we
obtain that either for some $i$, $\psi_i \rewrite^* \bot$ or $\psi'
\rewrite^* \bot$. The latter is impossible because $\psi'$ has a
top-level $\forall$, which can only be rewritten to $\top$. Hence,
there is an $i$ such that $\psi_i \rewrite^* \bot$, i.e.,
$\reduce(\structure, \trans{\ttime_i}{\alpha_p}) \rewrite^* \bot$.
Choose $\ttime = \ttime_i$. By Lemma~\ref{lem:past:reduce}, using
facts~(A) and~(B) and $\reduce(\structure, \trans{\ttime_i}{\alpha_p})
\rewrite^* \bot$, we obtain that $\structure \models
\dual{\trans{\ttime_i}{\alpha_p}}$. The remaining requirement,
$\structure \models \pred{in}(\ttime_i,0,\ttime_0)$ follows from
fact~(B).
\end{proof}


\begin{thm}[Enforcement of co-safety properties; Theorem~\ref{thm:cosafety}]
  \label{thm:cosafety:app}
Suppose ${\eventually \alpha_p}$ is a co-safety property, $\vdash
{\eventually \alpha_p}$, $\structure$ is $\ttime_0$-complete, and for
all $\ttime$, $(\rho_\structure(\pred{in}(\ttime,0,\infty)) = \symt)
\Rightarrow \ttime \leq \ttime_0$. Then, $\reduce(\structure,
\eventually \alpha_p) \rewrite^* \top$ if and only if there is a
$\ttime$ such that $\structure \models \pred{in}(\ttime,0,\ttime_0)$
and $\structure \models \trans{\ttime}{\alpha_p}$.
\end{thm}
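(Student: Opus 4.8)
The plan is to follow the proof of Theorem~\ref{thm:safety:app} almost verbatim, dualizing $\conj \leftrightarrow \disj$, $\top \leftrightarrow \bot$, and $\forall \leftrightarrow \exists$ throughout. Writing $\eventually \alpha_p = \exists \ttime.(\pred{in}(\ttime,0,\infty) \conj \trans{\ttime}{\alpha_p})$, I would first unfold the existential clause of $\reduce$ (Figure~\ref{fig:reduce}) to obtain $\reduce(\structure, \eventually \alpha_p) = \psi_1 \disj \ldots \disj \psi_n \disj \psi'$, where $\{\sigma_1,\ldots,\sigma_n\} = \lift{\sat}(\structure,\pred{in}(\ttime,0,\infty))$, each $\ttime_i = \sigma_i(\ttime)$, $S = \{\ttime_1,\ldots,\ttime_n\}$, $\psi_i = \reduce(\structure, \trans{\ttime_i}{\alpha_p})$, and $\psi' = \exists \ttime.((\pred{in}(\ttime,0,\infty) \conj \ttime \not\in S) \conj \trans{\ttime}{\alpha_p})$ is the residual existential over the as-yet-unconsidered observed points.

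Next, exactly as in the safety proof, I would extract two bookkeeping facts. Inverting $\vdash \eventually \alpha_p$ forces the mode output of the guard $\pred{in}(\ttime,0,\infty)$ to be $\{\ttime\}$ and gives $\{\ttime\} \vdash \trans{\ttime}{\alpha_p}$; applying Lemma~\ref{lem:subst:mode}(2) then yields $\vdash \trans{\ttime_i}{\alpha_p}$ (fact~A). By Theorem~\ref{thm:sat:correct:app} each $\ttime_i$ satisfies $\structure \models \pred{in}(\ttime_i,0,\infty)$, i.e.\ $\rho_\structure(\pred{in}(\ttime_i,0,\infty)) = \symt$, which by the theorem's hypothesis forces $\ttime_i \leq \ttime_0$ (fact~B); combined with $\ttime_0$-completeness this also gives $\structure \models \pred{in}(\ttime_i,0,\ttime_0)$. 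With facts~A and~B in hand, Lemma~\ref{lem:past:reduce} becomes applicable to each $\trans{\ttime_i}{\alpha_p}$ — here the no-future-operator restriction on $\alpha_p$ and the bound $\ttime_i \leq \ttime_0$ are essential — and guarantees that $\psi_i = \reduce(\structure, \trans{\ttime_i}{\alpha_p})$ rewrites to exactly one of $\top$ or $\bot$, with $\psi_i \rewrite^* \top$ coinciding with $\structure \models \trans{\ttime_i}{\alpha_p}$.

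The two directions then fall out. For ``if'', given a witness $\ttime$ with $\structure \models \pred{in}(\ttime,0,\ttime_0)$ and $\structure \models \trans{\ttime}{\alpha_p}$, I would note that $\ttime$ is an observed point, so $\structure \models \pred{in}(\ttime,0,\infty)$ and hence $\ttime = \ttime_i$ for some $i$ by Theorem~\ref{thm:sat:correct:app}; Lemma~\ref{lem:past:reduce} (ruling out its $\bot$ branch via Lemma~\ref{lem:consistency}) then gives $\psi_i \rewrite^* \top$, so the whole disjunction rewrites to $\top$. For ``only if'', given $\psi \rewrite^* \top$, the crucial observation is that the residual disjunct $\psi'$ has a top-level $\exists$ and can therefore only rewrite to $\bot$, never $\top$; hence some $\psi_i \rewrite^* \top$, and Lemma~\ref{lem:past:reduce} supplies $\structure \models \trans{\ttime_i}{\alpha_p}$, while $\structure \models \pred{in}(\ttime_i,0,\ttime_0)$ is fact~B.

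I expect the only genuinely delicate point to be this last asymmetry: everything hinges on the residual existential $\psi'$ being unable to contribute a $\top$ to the disjunction (dual to the safety case, where the residual universal cannot contribute a $\bot$ to a conjunction). Making that precise requires only the structural observation that $\rewrite$ sends a top-level $\exists$ to $\bot$, together with the fact that a disjunction rewrites to $\top$ only if one of its disjuncts does — so I would lean on the confluence/normal-form remarks already recorded after the rewrite rules rather than introduce any new semantic content.
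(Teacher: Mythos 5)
Your proposal is correct and is exactly what the paper intends: its entire proof of Theorem~\ref{thm:cosafety:app} is the single line ``Similar to that of Theorem~\ref{thm:safety:app},'' and you have carried out precisely that dualization, including the two bookkeeping facts~(A) and~(B), the appeal to Lemma~\ref{lem:past:reduce}, and the key asymmetry that the residual quantified disjunct $\psi'$ (top-level $\exists$) can only rewrite to $\bot$, mirroring the paper's observation that the residual $\forall$ can only rewrite to $\top$ in the safety case.
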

\begin{proof}
Similar to that of Theorem~\ref{thm:safety:app}.
\end{proof}

\onecolumn
\section{HIPAA Case study}
\label{app:hipaa}

This appendix lists the number of subjective and objective atoms in
each transmission-related clause in the HIPAA Privacy Rule. \#S
denotes the number of subjective atoms; \#O' denotes the number of
such subjective atoms that can be mechanized by a small amount of
design effort; and \#O denotes the number of objective atoms. The
table is sorted by the last column (\#O' + \#O) / (\#S + \#O).

\begin{longtable}{|l|l|l|l|c|}
\hline
Clause No. & \#S & \#O' & \#O & (\#O' + \#O) / (\#S + \#O) \\
\hline 
164.502(e)(1)(ii)(B) & 0 & 0 & 5 & 1.00 \\
164.502(a)(1)(i) & 1 & 1 & 3 & 1.00 \\
164.502(a)(1)(iv) & 37 & 37 & 4 & 1.00 \\
164.502(d)(1) & 2 & 2 & 2 & 1.00 \\
164.502(e)(1)(i) & 1 & 1 & 2 & 1.00 \\
164.508(a)(2) & 37 & 37 & 4 & 1.00 \\
164.508(a)(3)(i) & 38 & 38 & 4 & 1.00 \\
164.508(a)(3)(i)(A) & 2 & 2 & 3 & 1.00 \\
164.510(a)(1)(ii) & 2 & 2 & 3 & 1.00 \\
164.510(a)(2) & 2 & 2 & 2 & 1.00 \\
164.512(c)(2) & 1 & 1 & 0 & 1.00 \\
164.512(e)(1)(i) & 3 & 3 & 4 & 1.00 \\
164.512(e)(1)(ii) & 9 & 9 & 4 & 1.00 \\
164.512(e)(1)(vi) & 4 & 4 & 2 & 1.00 \\
164.512(f)(2) & 10 & 10 & 3 & 1.00 \\
164.512(f)(3)(i) & 6 & 6 & 4 & 1.00 \\
164.514(e)(1) & 25 & 25 & 1 & 1.00 \\
164.512(j)(3) & 11 & 10 & 1 & 0.92 \\
164.524(b)(2)(i) & 54 & 43 & 41 & 0.88 \\
164.524(b)(2)(ii) & 53 & 42 & 42 & 0.88 \\
164.512(g)(1) & 4 & 3 & 4 & 0.88 \\
164.510(b)(1)(i) & 2 & 1 & 5 & 0.86 \\
164.502(e)(1)(ii)(C) & 3 & 2 & 3 & 0.83 \\
164.506(c)(5) & 8 & 6 & 4 & 0.83 \\
164.512(b)(1)(v) & 5 & 3 & 7 & 0.83 \\
164.512(k)(1)(iii) & 3 & 2 & 3 & 0.83 \\
164.514(f)(1) & 3 & 2 & 3 & 0.83 \\
164.502(g)(3)(ii)(A) & 2 & 1 & 4 & 0.83 \\
164.502(g)(3)(ii)(B) & 2 & 1 & 4 & 0.83 \\
164.502(j)(2) & 2 & 1 & 4 & 0.83 \\
164.512(b)(1)(ii) & 3 & 2 & 3 & 0.83 \\
164.512(f)(5) & 4 & 3 & 2 & 0.83 \\
164.512(k)(1)(i) & 2 & 1 & 4 & 0.83 \\
164.512(k)(1)(iv) & 2 & 1 & 4 & 0.83 \\
164.512(k)(6)(i) & 3 & 2 & 3 & 0.83 \\
164.512(k)(6)(ii) & 7 & 5 & 4 & 0.82 \\
164.512(i)(1) & 20 & 15 & 6 & 0.81 \\
164.506(c)(3) & 2 & 1 & 3 & 0.80 \\
164.512(b)(1)(iii) & 3 & 2 & 2 & 0.80 \\
164.512(h) & 2 & 1 & 3 & 0.80 \\
164.512(k)(1)(ii) & 4 & 3 & 1 & 0.80 \\
164.512(g)(2) & 4 & 2 & 5 & 0.78 \\
164.512(d)(1) & 6 & 4 & 3 & 0.78 \\
164.502(a)(2)(ii) & 2 & 1 & 2 & 0.75 \\
164.506(c)(2) & 2 & 1 & 2 & 0.75 \\
164.510(b)(1)(ii) & 4 & 2 & 4 & 0.75 \\ 
164.512(b)(1)(iv) & 3 & 2 & 1 & 0.75 \\
164.510(b)(2) & 5 & 3 & 3 & 0.75 \\
164.512(f)(1)(i) & 17 & 10 & 10 & 0.74 \\
164.506(c)(4) & 6 & 3 & 4 & 0.70 \\
164.502(e)(1)(ii)(A)  & 1 & 0 & 2 & 0.67 \\
164.506(b)(1) & 1 & 0 & 2 & 0.67 \\
164.506(c)(1) & 4 & 2 & 2 & 0.67 \\
164.512(f)(6)(i) & 4 & 2 & 2 & 0.67 \\
164.502(b)(1) & 2 & 1 & 1 & 0.67 \\
164.502(j)(1) & 5 & 1 & 7 & 0.67 \\
164.512(a)(1) & 2 & 1 & 1 & 0.67 \\ 
164.512(f)(1)(ii) & 7 & 4 & 2 & 0.67 \\
164.512(f)(4) & 3 & 1 & 3 & 0.67 \\
164.512(j)(1)(ii)(A) & 18 & 11 & 3 & 0.67 \\
164.512(l) & 2 & 1 & 1 & 0.67 \\
164.512(k)(4) & 4 & 1 & 3 & 0.57 \\
164.512(k)(3) & 5 & 2 & 2 & 0.57 \\
164.512(b)(1)(i) & 6 & 1 & 5 & 0.55 \\
164.502(b)(2)(i) & 1 & 0 & 1 & 0.50 \\
164.508(a)(2)(i)(B) & 1 & 0 & 1 & 0.50 \\
164.508(a)(2)(i)(C) & 1 & 0 & 1 & 0.50 \\
164.508(a)(3)(i)(B) & 1 & 0 & 1 & 0.50 \\ 
164.510(a)(3)(ii) & 3 & 1 & 1 & 0.50 \\
164.512(j)(1)(ii)(B) & 4 & 1 & 2 & 0.50 \\
164.512(k)(5)(i) & 8 & 2 & 3 & 0.45 \\
164.512(k)(2) & 4 & 1 & 1 & 0.40 \\
164.510(b)(4) & 12 & 3 & 2 & 0.36 \\
164.512(c)(1) & 10 & 1 & 4 & 0.36 \\
164.512(f)(3)(ii) & 9 & 1 & 3 & 0.33 \\
164.512(j)(1)(i) & 5 & 1 & 1 & 0.33 \\
164.514(g) & 9 & 1 & 2 & 0.27 \\
164.510(b)(3) & 4 & 1 & 0 & 0.25 \\
164.502(a)(1)(iii) & 1 & 0 & 0 & 0.00 \\
164.510(a)(3)(i) & 4 & 0 & 0 & 0.00 \\
164.512(c)(2)(i) & 1 & 0 & 0 & 0.00 \\
164.512(f)(6)(ii) & 1 & 0 & 0 & 0.00 \\
164.512(j)(2)(i) & 1 & 0 & 0 & 0.00 \\
164.512(j)(2)(ii) & 1 & 0 & 0 & 0.00 \\
\hline
Total & 578 & 402 & 303 & 0.80 \\
\hline
Clause No. & \#S & \#S' & \#O & (\#S' + \#O) / (\#S + \#O) \\
\hline
\end{longtable}

\end{document}